\newif\iffullversion
\newcommand{\ifFull}[1]{\iffullversion {#1}\else {the full version \cite{full}}\fi}
\theoremstyle{definition}
\newtheorem{definition}{Definition}[section]
\newtheorem{example}[definition]{Example}
\theoremstyle{plain}
\newtheorem{theorem}{Theorem}[section]
\newtheorem{lemma}[theorem]{Lemma}
\newtheorem{corollary}[theorem]{Corollary}
\newcommand{\tr}{\boldsymbol{s}}
\newcommand{\ivtr}{\boldsymbol{t}}
\newcommand{\boundbox}{{\boldsymbol{t}}}
\newcommand{\toSym}{\leadsto} % symbolic reduction
\newcommand{\refines}{\mathrel{\triangleleft}}
\newcommand{\traces}{\mathbb{T}}
\newcommand{\traceset}{\mathcal{T}}
\newcommand{\tracesin}[1]{\llparenthesis #1 \rrparenthesis}
\newcommand{\valueSem}[1]{\mathsf{val}_#1} % value (semantics)
\newcommand{\weightSem}[1]{\mathsf{wt}_#1} % weight (semantics)
\newcommand{\measureSem}[1]{\llbracket #1 \rrbracket}
\newcommand{\posterior}{\mathsf{posterior}}
\newcommand{\Sat}{\mathsf{Sat}}
\newcommand{\exType}[2]{\left\{\begin{matrix}
	#1\\
	#2
\end{matrix} \right\}}
\newcommand{\exTypeS}[2]{\scalebox{0.8}{$\left\{\begin{matrix}
		#1\\
		#2
	\end{matrix} \right\}$}}
\newcommand{\iv}{\mathbb{I}}
\newcommand{\lowerBound}[1]{\mathsf{lowerBd}_{#1}}
\newcommand{\upperBound}[1]{\mathsf{upperBd}_{#1}}
\newcommand{\lit}[1]{\underline{#1}} % literal
\newcommand{\letIn}[1]{\mathsf{let} \,{#1}\, \allowbreak \mathsf{in}\,}
\newcommand{\fixLam}[2]{\mu^{#1}_{#2} \ldotp} % fix (λ g x. ...)
\newcommand{\ifCond}[3]{\mathsf{if} (#1) \, \allowbreak {#2} \, \allowbreak \mathsf{else}\, {#3}}
\newcommand{\sample}{\mathsf{sample}}
\newcommand{\score}{\mathsf{score}}
\newcommand{\ifSimple}[3]{\mathsf{if}(#1, #2, #3)}
\newcommand{\ifElse}[3]{\ifSimple{#1}{#2}{#3}}
\newcommand{\typeReal}{\textbf{\textsf{R}}}
\newcommand\defn[1]{{\bf \em #1}}
\newcommand{\RR}{\ensuremath{\mathbb R}}
\newcommand{\NN}{\ensuremath{\mathbb N}}
\newcommand{\D}{\mathop{}\!\mathrm{d}} % Leibniz symbol
\newcommand{\Normal}{\ensuremath{\mathrm{Normal}}}
\DeclareMathOperator{\pdf}{pdf}
\DeclareMathOperator{\volume}{vol}
\newcommand{\tightsubset}{\Subset}
\def\metalambda{\mathop{\scalerel*{\stackengine{1.9pt}{$\lambda$}{%
				\kern3.4pt\smash{\clipbox{2pt -.5pt 0pt -.5pt}{$\lambda$}}}{O}{l}{F}{F}{L}}{X}\mkern1mu}}
\newcommand{\real}{\mathbb{R}}
\newcommand{\natnum}{\mathbb{N}}
\newcommand{\posreal}{\mathbb{R}_{\geq 0}}
\newcommand{\calM}{\mathcal{M}}
\newcommand{\calN}{\mathcal{N}}
\newcommand{\calP}{\mathcal{P}}
\newcommand{\calV}{\mathcal{V}}
\newcommand{\calW}{\mathcal{W}}
\newcommand{\calZ}{\mathcal{Z}}
\newcommand{\calE}{\mathcal{E}}
\newcommand{\calR}{\mathcal{R}}
\newcommand{\calA}{\mathcal{A}}
\newcommand{\calB}{\mathcal{B}}
\newcommand{\calC}{\mathcal{C}}
\definecolor{intcolor}{RGB}{30,76,135}
\newcommand{\myint}[1]{ [#1]}
\newcommand{\stdConf}[1]{(#1)}
\newcommand{\intConf}[1]{(#1)}
\newcommand\symPath[1]{(#1)}
\newcommand\symConf[1]{(#1)}
\newcommand{\rmFix}{\mathit{approxFix}}
\begin{document}

%% Title information
\title[{Guaranteed Bounds for Posterior Inference in Universal Probabilistic Programming}]{Guaranteed Bounds for Posterior Inference\\ in Universal Probabilistic Programming}

\author{Raven Beutner}
\authornote{Member of the Saarbrücken Graduate School of Computer Science.}          %% \authornote is optional;
%% can be repeated if necessary
\orcid{0000-0001-6234-5651}             %% \orcid is optional
\affiliation{
	\institution{CISPA Helmholtz Center for Information Security}            %% \institution is required
	\country{Germany}                    %% \country is recommended
}
%\email{first1.last1@inst1.edu}          %% \email is recommended

\author{C.-H.~Luke Ong}
%\authornote{with author1 note}          %% \authornote is optional;
%% can be repeated if necessary
\orcid{0000-0001-7509-680X}             %% \orcid is optional
\affiliation{
	\institution{University of Oxford}            %% \institution is required
	\country{United Kingdom}                    %% \country is recommended
}
%\email{first1.last1@inst1.edu}          %% \email is recommended

\author{Fabian Zaiser}
%\authornote{with author1 note}          %% \authornote is optional;
%% can be repeated if necessary
\orcid{0000-0001-5158-2002}             %% \orcid is optional
\affiliation{
	\institution{University of Oxford}            %% \institution is required
	\country{United Kingdom}                    %% \country is recommended
}
%\email{first1.last1@inst1.edu}          %% \email is recommended

%% Abstract
%% Note: \begin{abstract}...\end{abstract} environment must come
%% before \maketitle command
\begin{abstract}
We propose a new method to approximate the posterior distribution of probabilistic programs by means of computing \emph{guaranteed} bounds.
The starting point of our work is an interval-based trace semantics for a recursive, higher-order probabilistic programming language with continuous distributions.
Taking the form of (super-/subadditive) measures, these lower/upper bounds are non-stochastic and provably correct:
using the semantics, we prove that the actual posterior of a given program is sandwiched between the lower and upper bounds (soundness);
moreover, the bounds converge to the posterior (completeness).
As a practical and sound approximation, we introduce a weight-aware interval type system, which automatically infers interval bounds on not just the return value but also the weight of program executions, simultaneously.
We have built a tool implementation, called GuBPI, which automatically computes these posterior lower/upper bounds.
Our evaluation on examples from the literature shows that the bounds are useful, and can even be used to recognise wrong outputs from stochastic posterior inference procedures.
\end{abstract}

%% 2012 ACM Computing Classification System (CSS) concepts
%% Generate at 'http://dl.acm.org/ccs/ccs.cfm'.
\begin{CCSXML}
<ccs2012>
   <concept>
       <concept_id>10002950.10003648.10003662</concept_id>
       <concept_desc>Mathematics of computing~Probabilistic inference problems</concept_desc>
       <concept_significance>500</concept_significance>
       </concept>
   <concept>
       <concept_id>10003752.10010124.10010138.10010143</concept_id>
       <concept_desc>Theory of computation~Program analysis</concept_desc>
       <concept_significance>300</concept_significance>
       </concept>
   <concept>
       <concept_id>10011007.10010940.10010992.10010998</concept_id>
       <concept_desc>Software and its engineering~Formal methods</concept_desc>
       <concept_significance>500</concept_significance>
       </concept>
 </ccs2012>
\end{CCSXML}

\ccsdesc[500]{Mathematics of computing~Probabilistic inference problems}
\ccsdesc[300]{Theory of computation~Program analysis}
\ccsdesc[500]{Software and its engineering~Formal methods}
%% End of generated code

\keywords{probabilistic programming, Bayesian inference, verification, abstract interpretation, operational semantics, interval arithmetic, type system, symbolic execution}

%% \maketitle
%% Note: \maketitle command must come after title commands, author
%% commands, abstract environment, Computing Classification System
%% environment and commands, and keywords command.
\maketitle

% !TEX root = ./main.tex

\section{Introduction}

Probabilistic programming is a rapidly developing discipline at the interface of programming and Bayesian statistics \cite{GordonHNR14,Goodman2014,VandeMeent2018}.
The idea is to express probabilistic models (incorporating the prior distributions) and the observed data as programs,
and to use a general-purpose Bayesian inference engine, which acts directly on these programs, to find the posterior distribution given the observations.

Some of the most influential probabilistic programming languages (PPLs) used in practice are \emph{universal} (i.e.~the underlying language is Turing-complete); e.g.~Church \cite{GoodmanMRBT08}, Anglican \cite{TolpinMW15}, Gen \cite{Cusumano-Towner19}, Pyro \cite{pyro19}, and Turing \cite{ge2018Turing}.
Using stochastic branching, recursion, and higher-order features, universal PPLs can express arbitrarily complex models.
For instance, these language constructs can be used to incorporate probabilistic context free grammars \cite{Manning99}, statistical phylogenetics \cite{Ronquist2020.06.16.154443}, and even physics simulations \cite{BaydinSBHM0MNGL19} into probabilistic models.
However, expressivity of the PPL comes at the cost of complicating the posterior inference.
Consider, for example, the following problem from \cite{MakOPW21,MakZO21}.

\begin{example}[Pedestrian]
    \label{ex:pedestrian}

    A pedestrian has gotten lost on a long road and only knows that they are a random distance between 0 and 3 km from their home. They repeatedly walk a uniform random distance of at most 1 km in either direction, until they find their home. When they arrive, a step counter tells them that they have traveled a distance of 1.1 km in total.
    Assuming that the measured distance is normally distributed around the true distance with standard deviation 0.1 km, what is the posterior distribution of the starting point?
    We can model this with a probabilistic program:

	\vspace{2mm}
    \algdef{SE}[SUBALG]{Indent}{EndIndent}{}{\algorithmicend\ }%
    \algtext*{Indent}
    \algtext*{EndIndent}

	\parbox{\linewidth-10pt}{
		\begin{algorithmic}
			\State \textbf{let} $\mathit{start}$ = $3 \times {}$\textbf{sample} uniform$(0, 1)$\textbf{ in}
			\State \textbf{letrec} $\mathit{walk}$ $x$ = \textbf{if} $x \leq 0$ \textbf{then} 0 \textbf{else}
			\Indent
			\State \textbf{let} $\mathit{step}$ = \textbf{sample} uniform$(0, 1)$\textbf{ in}
			\State $\mathit{step} + \mathit{walk}\big((x + \mathit{step}) \oplus_{0.5} (x - \mathit{step})\big)$
			\EndIndent
			\State \textbf{let} $\mathit{distance} = \mathit{walk}$ $\mathit{start}$\textbf{ in}
			\State \textbf{observe} $\mathit{distance}$  \textbf{from} $\Normal(1.1, 0.1)$;
			\State $\mathit{start}$
		\end{algorithmic}
	}

	\vspace{2mm}

    \noindent
    Here \textbf{sample} \!\! uniform$(a, b)$ samples a uniformly distributed value in $[a, b]$, $\oplus_{0.5}$ is probabilistic branching, and \textbf{observe} $M$ \textbf{from} $D$ observes the value of $M$ from distribution $D$.
\end{example}

\cref{ex:pedestrian} is a challenging model for inference algorithms in several regards:
not only does the program use stochastic branching and recursion, but the number of random variables generated is unbounded -- it's \emph{nonparametric} \cite{HjortHMW10,Ghahramani2013,MakZO21}.
To approximate the posterior distribution of the program, we apply two standard inference algorithms:
likelihood-weighted importance sampling (IS), a simple algorithm that works well on low-dimensional models with few observations \cite{mcbook}; and Hamiltonian Monte Carlo (HMC) \cite{DuaneKPR87}, a successful MCMC algorithm that uses gradient information to efficiently explore the parameter space of high-dimensional models.
\Cref{fig:pedestrian-stochastic} shows the results of the two inference methods as implemented in Anglican \cite{TolpinMW15} (for IS) and Pyro \cite{pyro19} (for HMC): they clearly disagree!
But how is the user supposed to know which (if any) of the two results is correct?

Note that \emph{exact} inference methods (i.e.~methods that try to compute a closed-form solution of the posterior inference problem using computer algebra and other forms of symbolic computation) such as PSI \cite{GehrMV16,GehrSV20}, Hakaru \cite{NarayananCRSZ16}, Dice \cite{HoltzenBM20}, and SPPL \cite{SaadRM21} are only applicable to non-recursive models, and so they don't work for \cref{ex:pedestrian}.

\begin{figure}
    \centering
    \includegraphics[width=\columnwidth]{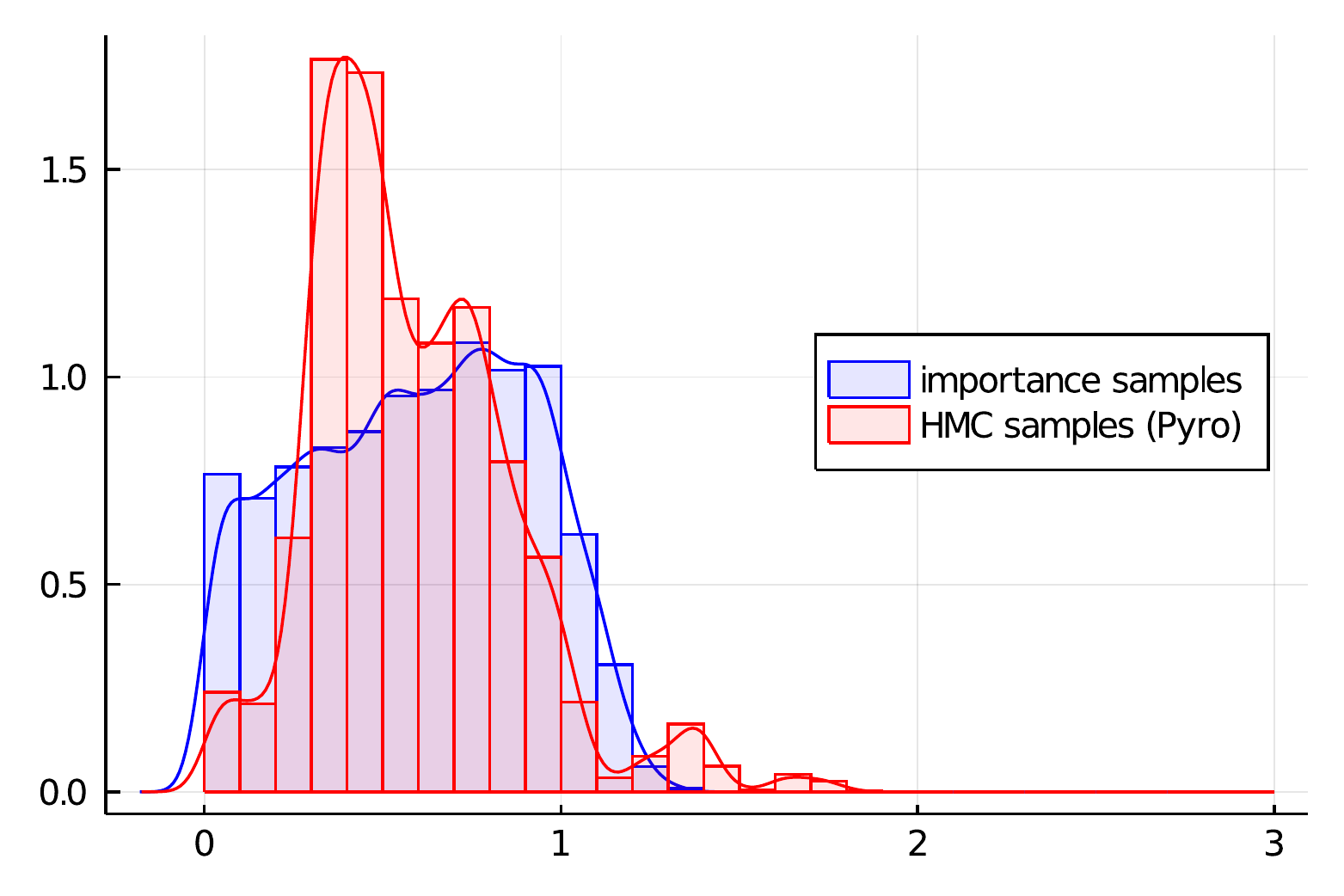}
    \vspace{-8mm}
    \caption{Histogram of samples from the posterior distribution of \cref{ex:pedestrian} and wrong samples produced by the probabilistic programming system Pyro.}
    \label{fig:pedestrian-stochastic}
\end{figure}

\subsection{Guaranteed Bounds}

The above example illustrates central problems with both approximate stochastic and exact inference methods.
For approximate methods, there are no guarantees for the results they output after a finite amount of time, leading to unclear inference results (as seen in \cref{fig:pedestrian-stochastic}).\footnote{Take MCMC sampling algorithms.
Even though the Markov chain will eventually converge to the target distribution,
we do not know how long to iterate the chain to ensure convergence \cite{Roy2020,mcbook}.
Likewise for variational inference \cite{Zhang2019}:
given a variational family, there is no guarantee that a given value for the KL-divergence (from the approximating to the posterior distribution) is attainable by the minimising distribution.}
For exact methods, the symbolic engine may fail to find a closed-form description of the posterior distribution and, more importantly, they are only applicable to very restricted classes of programs (most notably, non-recursive models).

Instead of computing approximate or exact results, this work is concerned with computing \emph{guaranteed} bounds on the posterior distribution of a probabilistic program.
Concretely, given a probabilistic program $P$ and a measurable set $U \subseteq \real$ (given as an interval), we infer upper and lower bounds on $\measureSem{P}(U)$ (formally defined in \cref{sec:2background}), i.e.~the posterior probability of $P$ on $U$.%
\footnote{By repeated application of our method on a discretisation of the domain we can compute histogram-like bounds.}
Such bounds provide a ground truth to compare approximate inference results with: if the approximate results violate the bounds, the inference algorithm has not converged yet or is even ill-suited to the program in question.
Crucially, our method is applicable to arbitrary (and in particular recursive) programs of a universal PPL.
For \cref{ex:pedestrian}, the bounds computed by our method (which we give in \cref{sec:7practical-evaluation}) are tight enough to separate the IS and HMC output.
In this case, our method infers that the results given by HMC are wrong (i.e.~violate the guaranteed bounds) whereas the IS results are plausible (i.e.~lie within the guaranteed bounds).
To the best of our knowledge, no existing methods can provide such definite answers for programs of a universal PPL.

\subsection{Contributions}

The starting point of our work is an interval-based operational semantics \cite{BeutnerO21}.
In our semantics, we evaluate a program on \emph{interval traces} (i.e.~sequences of intervals of reals with endpoints between 0 and 1) to approximate the outcomes of sampling, and use interval arithmetic \cite{Dawood2011} to approximate numerical operations (\cref{sec:3intervals}).
Our semantics is sound in the sense that any (compatible and exhaustive) set of interval traces yields lower and upper bounds on the posterior distribution of a program.
These lower/upper bounds are themselves super-/subadditive measures.
Moreover, under mild conditions (mostly restrictions on primitive operations), our semantics is also complete, i.e.~for any $\epsilon > 0$ there exists a countable set of interval traces that provides $\epsilon$-tight bounds on the posterior.
Our proofs hinge on a combination of stochastic symbolic execution and the convergence of Riemann sums, providing a natural correspondence between our interval trace semantics and the theory of (Riemann) integration (\cref{sec:4intervals-theory}).

Based on our interval trace semantics, we present a practical algorithm to automate the computation of guaranteed bounds.
It employs an interval type system (together with constraint-based type inference) that bounds both the value of an expression in a refinement-type fashion \emph{and} the score weight of any evaluation thereof.
The (interval) bounds inferred by our type system fit naturally in the domain of our semantics.
This enables a sound approximation of the behaviour of a program with finitely many interval traces (\cref{sec:5interval-analysis}).

We implemented our approach in a tool called GuBPI\footnote{GuBPI (pronounced ``guppy'') is available at \href{https://gubpi-tool.github.io/}{gubpi-tool.github.io}.} (\textbf{Gu}aranteed \textbf{B}ounds for \textbf{P}osterior \textbf{I}nference), described in \cref{sec:6linear}, and evaluate it  on a suite of benchmark programs from the literature.
We find that the bounds computed by GuBPI are competitive in many cases where the posterior could already be inferred exactly.
Moreover, GuBPI's bounds are useful (in the sense that they are precise enough to rule out erroneous approximate results as in \cref{fig:pedestrian-stochastic}, for instance) for recursive models that could not be handled rigorously by any method before (\cref{sec:7practical-evaluation}).

\subsection{Scope and Limitations}

The contributions of this paper are of both theoretical and practical interest.
On the theoretical side, our novel semantics underpins a sound and deterministic method to compute guaranteed bounds on program denotations.
As shown by our completeness theorem, this analysis is applicable---in the sense that it computes arbitrarily tight bounds---to a very broad class of programs.
On the practical side, our analyser GuBPI implements (an optimised version of) our semantics.
As is usual for exact/guaranteed%
\footnote{By ``exact/guaranteed methods'', we mean inference algorithms that compute deterministic (non-stochastic) results about the mathematical denotation of a program. In particular, they are correct with probability 1, contrary to stochastic methods.}
methods, our semantics considers an exponential number of program paths, and partitions each sampled value into a finite number of interval approximations.
Consequently, GuBPI generally struggles with high-dimensional models.
We believe GuBPI to be most useful for unit-testing of implementations of Bayesian inference algorithms such as \Cref{ex:pedestrian}, or to compute results on (recursive) programs when non-stochastic, guaranteed bounds are needed.

% !TEX root = ./main.tex
\section{Background}
\label{sec:2background}

\subsection{Basic Probability Theory and Notation}

We assume familiarity with basic probability theory, and refer to \cite{Pollard2002} for details.
Here we just fix the notation.
A \emph{measurable space} is a pair $(\Omega, \Sigma_\Omega)$ where $\Omega$ is a set (of outcomes) and $\Sigma_\Omega \subseteq 2^\Omega$ is a $\sigma$-algebra defining the measurable subsets of $\Omega$.
A \emph{measure} on $(\Omega, \Sigma_\Omega)$ is a function $\mu : \Sigma_\Omega \to \posreal \cup \{\infty\}$ that satisfies $\mu(\emptyset) = 0$ and is $\sigma$-additive.
For $\real^n$, we write $\Sigma_{\real^n}$ for the Borel $\sigma$-algebra and $\lambda_n$ for the Lebesgue measure on $(\real^n, \Sigma_{\real^n})$.
The Lebesgue integral of a measurable function $f$ with respect to a measure $\mu$ is written $\int f \D \mu$ or $\int f(x) \, \mu(\D x)$.
Given a predicate $\psi$ on $\Omega$, we define the Iverson brackets $[\psi] : \Omega \to \real$ by mapping all elements that satisfy $\psi$ to $1$ and all others to $0$.
For $A \in \Sigma_{\Omega}$ we define the bounded integral $\int_A f \D \mu := \int f(x) \cdot [x \in A]  \mu(\D x)$.

\subsection{Statistical PCF (SPCF)}

As our probabilistic programming language of study, we use \emph{statistical PCF} (SPCF) \cite{MakOPW21}, a typed variant of \cite{BorgstromLGS16}.
SPCF includes primitive operations which are measurable functions $f : \real^{|f|} \to \real$, where $|f| \geq 0$ denotes the arity of the function.
\emph{Values} and \emph{terms} of SPCF are defined as follows:
\begin{align*}
    V &:= x \mid \lit r \mid \lambda x. M \mid \fixLam \varphi x M\\
    M, N, P &:= V \mid M N \mid \ifElse M N P \mid \lit f(M_1, \dots, M_{|f|})\\
    &\quad\quad\mid \sample \mid \score(M)
\end{align*}%
where $x$ and $\varphi$ are variables, $f$ is a primitive operation, and $\lit r$ a constant with $r \in \RR$.
Note that we write $\fixLam \varphi x M$ instead of $\mathsf{Y}(\lambda \varphi x. M)$ for the fixpoint construct.
The branching construct is $\ifSimple M N P$, which evaluates to $N$ if $M \le 0$ and $P$ otherwise.
In SPCF,~$\sample$ draws a random value from the uniform distribution on $[0, 1]$, and $\score(M)$ weights the current execution with the value of $M$.
Samples from a different real-valued distribution $D$ can be obtained by applying the inverse of the cumulative distribution function for $D$ to a uniform sample \cite{RubinsteinK17}.
Most PPLs feature an \textbf{observe} statement instead of manipulating the likelihood weight directly with~$\score$, but they are equally expressive \cite{Staton17}.%
\footnote{\label{fnote:score observe}
In Bayesian terms, an \textbf{observe} statement multiplies the likelihood function by the probability (density) of the observation \cite{GordonHNR14} (as we have used in \cref{ex:pedestrian}).
Scoring makes this explicit by keeping a weight for each program execution \cite{BorgstromLGS16}.
Observing a value $v$ from a distribution $D$ then simply multiplies the current weight by $\pdf_D(v)$ where $\pdf_D$ is the probability density function of $D$ (for continuous distributions) or the probability mass function of $D$ (for discrete distributions).
}
As usual, we write $\letIn{x = M} N$ for $(\lambda x. N) M$, $M;N$ for $\letIn{\_ = M} N$ and $M \oplus_p N$ for $\ifSimple{\sample - p}{M}{N}$.
The type system of our language is as expected, with simple types being generated by $\alpha, \beta := \typeReal \mid \alpha \to \beta$.
Selected rules are given below:

\vspace{2mm}
\noindent
\begin{minipage}{0.5\linewidth}
    \begin{prooftree}
        \AxiomC{}
        \UnaryInfC{$\Gamma \vdash \sample : \typeReal $}
    \end{prooftree}
\end{minipage}%
\begin{minipage}{0.5\linewidth}
    \begin{prooftree}
        \AxiomC{$\Gamma \vdash M : \typeReal$}
        \UnaryInfC{$\Gamma \vdash \score(M) : \typeReal $}
    \end{prooftree}
\end{minipage}

\noindent
\begin{minipage}{0.5\linewidth}
    \begin{prooftree}
        \AxiomC{$\Gamma, \varphi: \alpha \to \beta, x:\alpha \vdash M : \beta$}
        \UnaryInfC{$\Gamma \vdash \fixLam \varphi x M : \alpha \to \beta $}
    \end{prooftree}
\end{minipage}%
\begin{minipage}{0.5\linewidth}
    \begin{prooftree}
        \AxiomC{$\{\Gamma \vdash M_i : \typeReal\}_{i=1}^{|f|}$}
        \UnaryInfC{$\Gamma \vdash \lit f(M_1, \dots, M_{|f|}) : \typeReal $}
    \end{prooftree}
\end{minipage}

\begin{figure}
    \small
        \begin{minipage}{0.55\columnwidth}
            \begin{prooftree}
                 \def\ScoreOverhang{1pt}
                \AxiomC{}
                \UnaryInfC{$\stdConf{ (\lambda x. M) V, \tr, w } \to \stdConf{ M[V/x], \tr, w}$}
            \end{prooftree}
        \end{minipage}%
        \begin{minipage}{0.45\columnwidth}
        	\vspace{1.3mm}
             \begin{prooftree}
                  \def\ScoreOverhang{1pt}
                \AxiomC{}
                \UnaryInfC{$\stdConf{\sample, r \, \tr, w} \to \stdConf{ \lit{r}, \tr, w} $}
            \end{prooftree}
        \end{minipage}

        \begin{prooftree}
        	\def\ScoreOverhang{1pt}
            \AxiomC{}
            \UnaryInfC{$\stdConf {(\fixLam \varphi x M) V, \tr, w} \to \stdConf {M[V/x, (\fixLam \varphi x M)/\varphi], \tr, w} $}
        \end{prooftree}

        \begin{prooftree}
        	\def\ScoreOverhang{1pt}
            \AxiomC{}
            \UnaryInfC{$\stdConf {f(\lit{r_1}, \dots, \lit{r_{|f|}}), \tr, w} \to \stdConf{ \lit{f(r_1, \dots, r_{|f|})}, \tr, w} $}
        \end{prooftree}

         \begin{minipage}{0.5\columnwidth}
            \begin{prooftree}
            	\def\ScoreOverhang{1pt}
               \AxiomC{$r \leq 0$}
               \UnaryInfC{$\stdConf {\ifSimple{\lit{r}}{N}{P},  \tr, w} \to \stdConf {N, \tr, w} $}
           \end{prooftree}
        \end{minipage}%
        \begin{minipage}{0.5\columnwidth}
            \begin{prooftree}
            	\def\ScoreOverhang{1pt}
                \AxiomC{$r > 0$}
                \UnaryInfC{$\stdConf {\ifSimple{\lit{r}}{N}{P},  \tr, w} \to \stdConf {P, \tr, w} $}
            \end{prooftree}
        \end{minipage}

        \begin{minipage}{0.5\columnwidth}
        	\vspace{1mm}
            \begin{prooftree}
            	\def\ScoreOverhang{1pt}
                \AxiomC{$r \geq 0$}
                \UnaryInfC{$\stdConf {\score(\lit{r}), \tr, w} \to \stdConf{ \lit{r}, \tr, w \cdot r} $}
            \end{prooftree}
        \end{minipage}%
        \begin{minipage}{0.5\columnwidth}
            \begin{prooftree}
            	\def\ScoreOverhang{1pt}
                \AxiomC{$\stdConf {R, \tr, w} \to \stdConf {M, \tr', w'}$}
                \UnaryInfC{$\stdConf {E[R], \tr, w} \to \stdConf {E[M], \tr', w'} $}
            \end{prooftree}
        \end{minipage}
    \caption{Standard (CbV) reduction rules for SPCF ($\to$).} \label{fig:reductionRules}
\end{figure}

\subsection{Trace Semantics}

Following \cite{BorgstromLGS16}, we endow SPCF with a trace-based operational semantics.
We evaluate a probabilistic program $P$ on a fixed \defn{trace} $\tr = \langle r_1, \dots, r_n \rangle \in \traces := \bigcup_{n \in \NN} [0,1]^n$, which \emph{predetermines} the probabilistic choices made during the evaluation.
Our semantics therefore operates on configurations of the form $\stdConf{M, \tr, w}$ where $M$ is an SPCF term, $\tr$ is a trace and $w \in \real_{\geq 0}$ a weight.
The call-by-value (CbV) reduction is given by the rules in \cref{fig:reductionRules}, where $E[\cdot]$ denotes a CbV evaluation context.
The definition is standard \cite{BorgstromLGS16, MakOPW21,BeutnerO21}.
Given a program $\vdash P : \typeReal$, we call a trace $\tr$ \defn{terminating} just if $\stdConf {P, \tr, 1} \to^* \stdConf {V, \langle\rangle, w}$ for some value $V$ and weight $w$, i.e.~if the samples drawn are as specified by $\tr$, the program $P$ terminates.
Note that we require the trace $\tr$ to be completely used up.
As $P$ is of type $\typeReal$ we can assume that $V = \lit{r}$ for some $r \in \real$.
Each terminating trace $\tr$ therefore uniquely determines the returned \emph{value} $\lit{r}$ where $r =: \valueSem P(\tr) \in \RR$, and the \emph{weight} $w =: \weightSem P(\tr) \in  \RR_{\ge 0}$, of the execution.
For a nonterminating trace $\tr$, $\valueSem P(\tr)$ is undefined and $\weightSem P(\tr) := 0$.

\begin{example}
\label{ex:pedestrian2}
    Consider \cref{ex:pedestrian}.
    On the trace $\tr = \langle 0.1,\allowbreak 0.2,\allowbreak 0.4,\allowbreak 0.7,\allowbreak 0.8\rangle \in [0,1]^5 \subseteq \traces$, the pedestrian walks $0.2$ away from their home (taking the left branch of $\oplus_{0.5}$ as $0.4 \le 0.5$) and $0.7$ towards their home (as $0.8 > 0.5$), hence:
    \[ \valueSem P(\tr) = 3 \times 0.1 = 0.3, \; \weightSem P(\tr) = \pdf_{\Normal(1.1, 0.1)}(0.9). \]
\end{example}

In order to do measure theory, we need to turn our set of traces into a measurable space.
The trace space $\traces$ is equipped with the $\sigma$-algebra $\Sigma_\traces := \{ \bigcup_{n \in \NN} U_n \mid U_n \in \Sigma_{[0,1]^n} \}$ where $\Sigma_{[0,1]^n}$ is the Borel $\sigma$-algebra on $[0,1]^n$ . We define a measure $\mu_\traces$ by $\mu_\traces(U) := \sum_{n\in \NN} \lambda_n(U \cap [0,1]^n)$, as in \cite{BorgstromLGS16}.

We can now define the semantics of an SPCF program $\vdash P : \typeReal$ by using the weight and returned value of (executions of $P$ determined by) individual traces.
Given $U \in \Sigma_\real$, we need to define the likelihood of $P$ evaluating to a value in $U$.
To this end, we set $\valueSem P^{-1}(U) := \{ \tr \in \traces \mid \stdConf{P, \tr, 1} \to^* \stdConf{\lit{r}, \langle\rangle, w}, r \in U  \}$, i.e.~the set of traces on which the program $P$ reduces to a value in $U$.
As shown in \citep[Lem.~9]{BorgstromLGS16}, $\valueSem P^{-1}(U)$ is measurable.
Thus, we can define (cf. \cite{BorgstromLGS16,MakOPW21})
\[
\textstyle\measureSem P(U) := \int_{\valueSem P^{-1}(U)} \weightSem P(\tr)  \,\mu_\traces(\D \tr).
\]
That is, the integral takes all traces $\tr$ on which $P$ evaluates to a value in $U$, weighting each $\tr$ with the weight $\weightSem P(\tr)$ of the corresponding execution.
{A program $P$ is called \defn{almost surely terminating (AST)} if it terminates with probability 1, i.e.~$\mu_\traces(\valueSem P^{-1}(\RR)) = 1$.
This is a necessary assumption for approximate inference algorithms (since they execute the program).
See \cite{BorgstromLGS16} for a more in-depth discussion of this (standard) sampling-style semantics.

\paragraph{Normalizing constant and integrability}

In Bayesian statistics, one is usually interested in the \emph{normalised} posterior, which is a conditional probability distribution.
We obtain the normalised denotation as
$ \posterior_P := \frac{\measureSem P}{Z_P}$
where $Z_P := \measureSem P(\RR)$ is the \defn{normalising constant}.
We call $P$ \defn{integrable} if $0 < Z_P < \infty$.
The bounds computed in this paper (on the unnormalised denotation $\measureSem{P}$) allow us to compute bounds on the normalizing constant $Z_P$, and thereby also on the normalised denotation.
All bounds reported in this paper (in particular in \cref{sec:7practical-evaluation}) refer to the \emph{normalised} denotation.

% !TEX root = ./main.tex

\section{Interval Trace Semantics}
\label{sec:3intervals}

In order to obtain guaranteed bounds on the distribution denotation $\measureSem P$ (and also on $\posterior_P$) of a program $P$, we present an interval-based semantics.
In our semantics, we approximate the outcomes of $\sample$ with intervals and handle arithmetic operations by means of interval arithmetic (which is similar to the approach by \citet{BeutnerO21} in the context of termination analysis).
Our semantics enables us to reason about the denotation of a program \emph{without} considering the uncountable space of traces explicitly.

\subsection{Interval Arithmetic}\label{sec:intervalArith}

For our purposes, an \defn{interval} has the form $[a, b]$ which denotes the set $\{ x \in \RR \mid a \le x \le b \}$, where $a \in \RR \cup \{-\infty\}$, $b \in \RR \cup \{\infty\}$, and $a \leq b$.
For consistency, we write $[0, \infty]$ instead of the more typical $[0,\infty)$.
For $X \subseteq \RR \cup \{-\infty,\infty\}$, we denote by $\iv_X$ the set of intervals with endpoints in $X$, and simply write $\iv$ for $\iv_{\RR \cup \{-\infty,\infty\}}$.
An $n$-dimensional \defn{box} is the Cartesian product of $n$ intervals.

We can lift functions on real numbers to intervals as  follows: for each $f: \RR^n \to \RR$ we define $f^\iv: \iv^n \to \iv$ by
\[ f^\iv([a_1,b_1], \dots, [a_n,b_n]) := [\inf F, \sup F] \]
where $F := f([a_1,b_1], \dots, [a_n,b_n])$.
For common functions like $+$, $-$, $\times$, $|\cdot|$, $\min$, $\max$, and monotonically increasing or decreasing functions $f: \RR \to \RR$, their interval-lifted counterparts can easily be computed, from the values of the original function on just the endpoints of the input interval.
For example, addition lifts to $[a_1,b_1] +^\iv [a_2,b_2] = [a_1 + a_2, b_1 + b_2]$;
similarly for multiplication $\times^\iv$.

\subsection{Interval Traces and Interval SPCF}

In our interval interpretation, probabilistic programs are run on \emph{interval traces}.
An \defn{interval trace}, $\langle I_1, \dots, I_n \rangle \in \traces_\iv := \bigcup_{n\in \NN} (\iv_{[0,1]})^n$, is a finite sequence of intervals $I_1, \dots, I_n$, each with endpoints between $0$ and $1$.
To distinguish ordinary traces $\tr \in \traces$ from interval traces $\ivtr \in \traces_\iv$, we call the former \emph{concrete} traces.

We define the \defn{refinement} relation $\refines$ between concrete and interval traces as follows:
for $\tr = \langle r_1, \dots, r_n \rangle \in \traces$ and $\ivtr = \langle I_1, \dots, I_m \rangle\in \traces_\iv$, we define $\tr \refines \ivtr$ just if $n = m$ and for all $i$, $r_i \in I_i$.
For each interval trace $\ivtr$, we denote by $\tracesin{\ivtr} := \{ \tr \in \traces \mid \tr \refines \ivtr \}$ the set of all refinements of $\ivtr$.

To define a reduction of a term on an interval trace, we extend SPCF with \emph{interval literals} $\lit{[a,b]}$, which replace the literals $\lit{r}$ but are still considered values of type $\typeReal$.
In fact, $\lit r$ can be read as an abbreviation for $\lit{[r,r]}$.
We call such terms \defn{interval terms}, and the resulting language \defn{Interval SPCF}.

\begin{figure}
    \small
    \begin{minipage}{0.55\columnwidth}
        \begin{prooftree}
            \def\ScoreOverhang{0pt}
            \AxiomC{}
            \UnaryInfC{$\intConf {(\lambda x. M) V, \ivtr, w} \to_\iv \intConf {M[V/x], \ivtr, w}$}
        \end{prooftree}
    \end{minipage}%
    \begin{minipage}{0.45\columnwidth}
    	\vspace{1.5mm}
        \begin{prooftree}
            \def\ScoreOverhang{0pt}
            \AxiomC{}
            \UnaryInfC{$\intConf{ \sample, I \, \ivtr, w}  \to_\iv \intConf{ \lit{I}, \ivtr, w} $}
        \end{prooftree}
    \end{minipage}

    \begin{prooftree}
        \AxiomC{}
        \UnaryInfC{$\intConf {(\fixLam \varphi x M) V, \ivtr, w} \to_\iv \intConf {M[V/x, (\fixLam \varphi x M)/\varphi], \ivtr, w} $}
    \end{prooftree}

    \begin{minipage}{1\columnwidth}
          \begin{prooftree}
              \def\ScoreOverhang{-2pt}
             \AxiomC{$b \leq 0$}
             \UnaryInfC{$\intConf{\ifSimple{\lit{[a, b]}}{N}{P}, \ivtr, w} \to_\iv \intConf {N, \ivtr, w} $}
         \end{prooftree}
    \end{minipage}

    \begin{minipage}{1\columnwidth}
    \begin{prooftree}
        \def\ScoreOverhang{-2pt}
        \AxiomC{$a > 0$}
        \UnaryInfC{$\intConf {\ifSimple{\lit{[a, b]}}{N}{P},\ivtr, w} \to_\iv \intConf {P, \ivtr, w} $}
    \end{prooftree}
    \end{minipage}

    \begin{prooftree}
        \AxiomC{}
        \UnaryInfC{$\intConf{ f(\lit{I_1}, \dots, \lit{I_{|f|}}), \ivtr, w} \to_\iv \intConf{ \lit{f^\iv(I_1, \dots, I_{|f|})}, \ivtr, w} $}
    \end{prooftree}

    \begin{prooftree}
        \AxiomC{$a \geq 0$}
        \UnaryInfC{$\intConf {\score(\lit{[a, b]}), \ivtr, w} \to_\iv \intConf {\lit{[a, b]}, \ivtr, w \times^\iv [a, b]} $}
    \end{prooftree}

    \begin{prooftree}
        \AxiomC{$\intConf {R, \ivtr, w} \to_\iv \intConf {M, \ivtr', w'}$}
        \UnaryInfC{$\intConf {E[R], \ivtr, w} \to_\iv \intConf {E[M], \ivtr', w'} $}
    \end{prooftree}
    \caption{Interval reduction rules for (interval) SPCF ($\to_\iv$).}
    \label{fig:interval-semantics}
\end{figure}

\paragraph{Reduction}

The interval-based reduction $\to_\iv$ now operates on configurations $\intConf{M, \ivtr, w}$ of interval terms $M$, interval traces $\ivtr \in \traces_\iv$, and interval weights $w \in \iv_{\RR_{\ge 0} \cup \{\infty\}}$.
The redexes and evaluation contexts of SPCF extend naturally to interval terms.
The reduction rules are given in \cref{fig:interval-semantics}.%
\footnote{
For conditionals, the interval bound is not always precise enough to decide which branch to take, so the reduction can get stuck if $a \le 0 < b$.
We could include additional rules to overapproximate the branching behaviour (see \ifFull{\cref{app:sec-additional-reduction-rules}}).
But the rules given here simplify the presentation and are sufficient to prove soundness and completeness.
}

Given a program $\vdash P : \typeReal$, the reduction relation $\to_\iv$ allows us to define the \emph{interval weight} function ($ \weightSem P^\iv : \traces_\iv \to \iv_{\RR_{\ge 0} \cup \{\infty\}}$) and \emph{interval value} function ($\valueSem P^\iv : \traces_\iv \to \iv$) by:
\begin{align*}
  \weightSem P^\iv(\ivtr) &:= \begin{cases}
    w &\text{if } \intConf{P, \ivtr, 1} \to^*_\iv \intConf{\lit I, \langle\rangle, w} \\
    [0, \infty] &\text{otherwise,}
  \end{cases} \\
  \valueSem P^\iv(\ivtr) &:= \begin{cases}
    I &\text{if } \intConf{P, \ivtr, 1} \to^*_\iv \intConf{\lit{I}, \langle\rangle, w} \\
    [-\infty, \infty] &\text{otherwise.}
  \end{cases}
\end{align*}

It is not difficult to prove the following relationship between standard and interval reduction.
\begin{restatable}{lemma}{lemIntervalApproximation}
  Let $ \vdash P : \typeReal$ be a program.
  For any interval trace $\ivtr$ and concrete trace $\tr \refines \ivtr$, we have $\weightSem P(\tr) \in \weightSem P^\iv(\ivtr)$ and $\valueSem P(\tr) \in \valueSem P^\iv(\ivtr)$ (provided $\valueSem P(\tr)$ is defined).
  \label{lem:interval-approximation}
\end{restatable}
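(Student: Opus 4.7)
The plan is to prove the lemma by a step-by-step simulation between the standard reduction $\to$ and the interval reduction $\to_\iv$, combined with an induction on the length of the standard reduction. To state the simulation cleanly I would first extend the trace refinement $\refines$ to configurations: a standard term $M$ refines an interval term $M^\iv$ iff they coincide syntactically except at leaves, where every literal $\lit r$ in $M$ is matched by some interval literal $\lit{[a,b]}$ in $M^\iv$ with $r \in [a,b]$; a configuration refinement $\stdConf{M, \tr, w} \refines \intConf{M^\iv, \ivtr, w^\iv}$ additionally requires $\tr \refines \ivtr$ and $w \in w^\iv$. The trivially refining initial pair is $\stdConf{P, \tr, 1} \refines \intConf{P, \ivtr, [1,1]}$ whenever $\tr \refines \ivtr$, since $P$ itself contains no interval literals.

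The core simulation lemma to prove is: if $\stdConf{M, \tr, w} \to \stdConf{M', \tr', w'}$ and $\stdConf{M, \tr, w} \refines \intConf{M^\iv, \ivtr, w^\iv}$, then either (a) there is a matching interval step $\intConf{M^\iv, \ivtr, w^\iv} \to_\iv \intConf{M'^\iv, \ivtr', w'^\iv}$ with $\stdConf{M', \tr', w'} \refines \intConf{M'^\iv, \ivtr', w'^\iv}$, or (b) the interval configuration is stuck. The argument proceeds by case analysis on the applied reduction rule. For $\sample$, consuming the head interval $I$ from $\ivtr$ yields a literal $\lit I$ which by assumption contains the sampled $r$. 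For primitive operations, the needed containment $f(r_1, \ldots, r_{|f|}) \in f^\iv(I_1, \ldots, I_{|f|})$ follows directly from the definition of $f^\iv$ as the hull of its image. For $\score(\lit r)$ with $r \in [a,b]$, soundness of interval multiplication gives $w \cdot r \in w^\iv \times^\iv [a,b]$. For $\beta$-reduction and fixpoint unfolding, a routine substitution lemma shows refinement is preserved under the syntactic substitution, and the evaluation-context closure rule follows by a straightforward induction on the context. The only source of stuckness is the conditional: if the concrete redex $\ifSimple{\lit r}{N}{P}$ takes the then-branch (so $r \le 0$), then the matching interval redex $\ifSimple{\lit{[a,b]}}{N}{P}$ with $r \in [a,b]$ has $a \le 0$; it can apply the corresponding interval then-rule precisely when $b \le 0$, and is otherwise stuck (dually for the else-branch).

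Lifting this simulation by induction on the length of the reduction $\stdConf{P, \tr, 1} \to^* \stdConf{\lit r, \langle\rangle, w}$ then yields the lemma: either the interval reduction terminates in a matching configuration $\intConf{\lit I, \langle\rangle, w^\iv}$ with $r \in I$ and $w \in w^\iv$, giving the required containments for $\valueSem P^\iv(\ivtr)$ and $\weightSem P^\iv(\ivtr)$ directly; or the interval reduction gets stuck at some intermediate step, in which case $\valueSem P^\iv(\ivtr) = [-\infty, \infty]$ and $\weightSem P^\iv(\ivtr) = [0, \infty]$ by definition, and both containments hold trivially. The main obstacle I anticipate is exactly this handling of the stuck case for conditionals: the interval semantics carries strictly less information than the concrete semantics, so it cannot always commit to a branch, and the proof relies crucially on the ``otherwise'' fallback in the definitions of $\valueSem P^\iv$ and $\weightSem P^\iv$ to absorb that imprecision without breaking the invariant.
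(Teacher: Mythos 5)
Your proof is correct and follows essentially the same route as the paper's: a lockstep simulation between $\to$ and $\to_\iv$ that preserves a refinement relation lifted from traces to whole configurations (literals contained in interval literals, weight contained in the interval weight), with the stuck case absorbed by the $[0,\infty]$ and $[-\infty,\infty]$ fallbacks in the definitions of $\weightSem P^\iv$ and $\valueSem P^\iv$. The only, immaterial, difference is the direction of the simulation --- the paper matches each interval step by a concrete step, whereas you match each concrete step by an interval step or detect stuckness at a conditional --- and both versions close the argument identically via determinism of the reductions.
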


\subsection{Bounds from Interval Traces}
\label{sec:boundsFromIntervalTraces}

\paragraph{Lower bounds}
How can we use this interval trace semantics to obtain lower bounds on $\measureSem P$?
We need a few definitions.
Two intervals $[a_1,b_1],\allowbreak [a_2,b_2] \in \iv$ are called \defn{almost disjoint} if $b_1 \le a_2$ or $b_2 \le a_1$.
Interval traces $\langle I_1,\dots,I_m \rangle$ and $\langle J_1,\dots,J_n \rangle \in \traces_\iv$ are called \defn{compatible} if there is an index $i \in \{1,\dots,\min(m,n)\}$ such that $I_i$ and $J_i$ are almost disjoint.
A set of interval traces is called  \emph{compatible} if its elements are pairwise compatible.
We define the \defn{volume} of an interval trace $\ivtr = \langle[a_1,b_1],\allowbreak\dots,\allowbreak[a_n,b_n]\rangle$ as $\volume (\ivtr) := \prod_{i=1}^n (b_i - a_i)$.

Let $\traceset \subseteq \traces_\iv$ be a countable and compatible set of interval traces.
Define the \emph{lower bound} on $\measureSem P$ by
\begin{align*}
	\lowerBound P^\traceset (U) \!:=\! \sum_{\ivtr \in \traceset} \volume(\ivtr) \! \cdot \! (\min \weightSem P^\iv(\ivtr)) \!\cdot\! \big[\valueSem P^\iv(\ivtr) \subseteq U\big]
\end{align*}
for $U \in \Sigma_\RR$.
That is, we sum over each interval trace in $\traceset$ whose value is \emph{guaranteed} to be in $U$, weighted by its volume and the lower bound of its weight interval.
Note that, in general, $\lowerBound P^\traceset$ is not a measure, but merely a \emph{superadditive measure}.%
\footnote{A \emph{superadditive measure} $\mu$ on $(\Omega, \Sigma_\Omega)$ is a measure, except that $\sigma$-additivity is replaced by $\sigma$-superadditivity: $\mu(\bigcup_{i\in\NN} U_i) \ge \sum_{i\in\NN} \mu(U_i)$ for a countable, pairwise disjoint family $(U_i)_{i \in \NN} \in \Sigma_\Omega$.}

\paragraph{Upper bounds}

For upper bounds, we require the notion of a set of interval traces being \emph{exhaustive}, which is easiest to express in terms of infinite traces.
Let $\traces_\infty := [0,1]^\omega$ be the set of infinite traces.
Every interval trace $\ivtr$ \emph{covers} the set of infinite traces with a prefix contained in $\ivtr$, i.e.~$\mathit{cover}(\ivtr) := \tracesin{\ivtr} \times \traces_\infty$ (where the Cartesian product $\times$ can be viewed as trace concatenation).
A countable set of (finite) interval traces $\traceset \subseteq \traces_\iv$ is called \defn{exhaustive} if $\bigcup_{\ivtr \in \traceset} \mathit{cover}(\ivtr)$ covers almost all of $\traces_\infty$, i.e.~$\mu_{\traces_\infty}(\traces_\infty \setminus \bigcup_{\ivtr \in \traceset} \mathit{cover}(\ivtr)) = 0$.%
\footnote{The $\sigma$-algebra on $\traces_\infty$ is defined as the smallest $\sigma$-algebra that contains all sets $U \times \traces_\infty$ where $U \in \Sigma_{[0, 1]^n}$ for some $n \in \NN$. The measure $\mu_{\traces_\infty}$ is the unique measure with $\mu_{\traces_\infty}(U \times \traces_\infty) = \lambda_n(U)$ when $U \in \Sigma_{[0, 1]^n}$.}
Phrased differently, almost all concrete traces must have a finite prefix that is contained in some interval trace in $\traceset$.
Therefore, the analysis in the interval semantics on $\traceset$ covers the behaviour on almost all concrete traces (in the original semantics).

\begin{example}
	(i) The singleton set $\{\langle [0,1], [0,0.6] \rangle\}$ is not exhaustive as, e.g.~all infinite traces $\langle r_1, r_2, \dots \rangle$ with $r_2 > 0.6$ are not covered.
	(ii) The set $\{\langle [0,0.6] \rangle , \langle [0.3, 1] \rangle\}$ is exhaustive, but not compatible.
	(iii) Define
        $\traceset_1 := \{ \langle [\tfrac12, 1]^{\dots n}, [0,\tfrac13] \rangle \mid n \in \NN \}$ and $\traceset_2 := \{ \langle [\tfrac12, 1]^{\dots n},\allowbreak [0,\tfrac12] \rangle \mid n \in \NN \}$ where $x^{\dots n}$ denotes $n$-fold repetition of $x$.
	$\traceset_1$ is compatible but not exhaustive. For example, it doesn't cover the set $[\tfrac12, 1] \times (\tfrac13, \tfrac12) \times \traces_\infty$, i.e.~all traces $\langle r_1, r_2, \dots \rangle$ where $r_1 \in [\tfrac12, 1]$ and $r_2 \in (\tfrac{1}{3}, \tfrac{1}{2})$.
	$\traceset_2$ is compatible and exhaustive (the set of non-covered traces $(\tfrac12, 1]^\omega$ has measure $0$).
\end{example}

Let $\traceset \subseteq \traces_\iv$ be a countable and exhaustive set of interval traces.
Define the \emph{upper bound} on $\measureSem P$ by
\begin{align*}
	\upperBound P^\traceset(U) \!:=\!\! \sum_{\ivtr \in \traceset} \volume(\ivtr) \! \cdot \! (\sup \weightSem P^\iv(\ivtr)) \! \cdot \! \big[\valueSem P^\iv(\ivtr) \cap U \ne \emptyset\big]
\end{align*}
for $U \in \Sigma_\RR$.
That is, we sum over each interval trace in $\traceset$ whose value \emph{may} be in $U$, weighted by its volume and the upper bound of its weight interval.
Note that {$\upperBound P^\traceset$} is not a measure but only a \emph{subadditive measure}.%
\footnote{A \emph{subadditive measure} $\mu$ on $(\Omega, \Sigma_\Omega)$ is a measure, except that $\sigma$-additivity is replaced by $\sigma$-subadditivity: $\mu(\bigcup_{i\in\NN} U_i) \le \sum_{i\in\NN} \mu(U_i)$ for a countable, pairwise disjoint family $(U_i)_{i \in \NN} \in \Sigma_\Omega$.}

% !TEX root = ./main.tex

\section{Soundness and Completeness}
\label{sec:4intervals-theory}

\subsection{Soundness}

We show that the two bounds described above are \emph{sound}, in the following sense.

\begin{theorem}[Sound lower bounds]\label{thm:lowerBoundsSound}
  Let $\traceset$ be a countable and compatible set of interval traces and $\vdash P: \typeReal$ a program.
  Then
  \( \lowerBound P^\traceset \le \measureSem P. \)
\end{theorem}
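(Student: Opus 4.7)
The plan is to bound $\lowerBound P^\traceset(U)$ from above by $\measureSem P(U)$ by rewriting the defining sum as an integral over a union of refinement sets. Fix $U \in \Sigma_\RR$ and restrict attention to the subfamily $\traceset_U := \{\ivtr \in \traceset : \valueSem P^\iv(\ivtr) \subseteq U\}$ of interval traces certified to yield a value in $U$; for interval traces on which $\to_\iv$ fails to terminate we have $\min \weightSem P^\iv(\ivtr) = \min[0,\infty] = 0$, so these contribute nothing and can be dropped from the sum.

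Next, for each $\ivtr \in \traceset_U$, apply \cref{lem:interval-approximation} fibrewise: every refinement $\tr \refines \ivtr$ satisfies $\valueSem P(\tr) \in \valueSem P^\iv(\ivtr) \subseteq U$ (so $\tr \in \valueSem P^{-1}(U)$) and $\weightSem P(\tr) \in \weightSem P^\iv(\ivtr)$, hence $\weightSem P(\tr) \ge \min \weightSem P^\iv(\ivtr)$. Since $\mu_\traces(\tracesin{\ivtr}) = \volume(\ivtr)$ by construction of $\mu_\traces$ as a sum of Lebesgue measures on the strata $[0,1]^n$, integrating the pointwise constant lower bound yields
\[
\volume(\ivtr) \cdot \min \weightSem P^\iv(\ivtr) \;\le\; \int_{\tracesin{\ivtr}} \weightSem P(\tr) \, \mu_\traces(\D\tr).
\]

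The compatibility hypothesis is what lets us combine these per-trace inequalities additively. Any two distinct $\ivtr_1, \ivtr_2 \in \traceset$ either live in different length-strata of $\traces$, in which case their refinement sets are literally disjoint, or agree in length and contain almost disjoint intervals in some coordinate, whose intersection is at most a singleton and hence has one-dimensional Lebesgue measure zero; by the product structure this forces $\mu_\traces(\tracesin{\ivtr_1} \cap \tracesin{\ivtr_2}) = 0$. Summing over the countable family $\traceset_U$ and using this essential disjointness together with $\bigcup_{\ivtr \in \traceset_U} \tracesin{\ivtr} \subseteq \valueSem P^{-1}(U)$ and $\weightSem P \ge 0$, we obtain
\begin{align*}
\lowerBound P^\traceset(U)
&= \sum_{\ivtr \in \traceset_U} \volume(\ivtr) \cdot \min \weightSem P^\iv(\ivtr) \\
&\le \sum_{\ivtr \in \traceset_U} \int_{\tracesin{\ivtr}} \weightSem P(\tr) \, \mu_\traces(\D\tr) \\
&= \int_{\bigcup_{\ivtr \in \traceset_U} \tracesin{\ivtr}} \weightSem P(\tr) \, \mu_\traces(\D\tr) \\
&\le \int_{\valueSem P^{-1}(U)} \weightSem P(\tr) \, \mu_\traces(\D\tr) \;=\; \measureSem P(U).
\end{align*}

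The main obstacle will be handling compatibility cleanly: the definition yields only \emph{almost} disjoint intervals, so set-theoretic disjointness of refinement sets fails on a boundary of endpoints, and one must pass through $\mu_\traces$ to rejustify collapsing a sum of integrals into a single integral. A secondary subtlety, easily absorbed into the measure-theoretic formulation, is that $\traceset$ may mix interval traces of different lengths, forcing the argument to respect the stratified structure $\traces = \bigsqcup_n [0,1]^n$ rather than being carried out on a single fixed-dimensional cube.
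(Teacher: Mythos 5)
Your proof is correct and follows essentially the same route as the paper's: rewrite each volume-weighted summand as an integral over the refinement set $\tracesin{\ivtr}$, apply \cref{lem:interval-approximation} pointwise to lower-bound the weight, use compatibility (almost-disjoint intervals give measure-zero overlaps) to merge the countable sum of integrals into a single integral over the union, and conclude by monotonicity of the integral. The only imprecision is the parenthetical claim that every refinement of $\ivtr \in \traceset_U$ lies in $\valueSem P^{-1}(U)$ --- nonterminating refinements do not, but they carry weight zero so the final inequality is unaffected; the paper sidesteps this by keeping the Iverson bracket $[\valueSem P(\tr) \in U]$ inside the integrand throughout rather than restricting the domain of integration.
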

\begin{proof}
For any $U \in \Sigma_\RR$, we have:
\begin{align}
\lowerBound P^\traceset(U)
  &= \sum_{\ivtr \in \traceset} \volume(\ivtr) (\min \weightSem P^\iv(\ivtr)) \big[\valueSem P^\iv(\ivtr) \subseteq U\big] \nonumber\\
  &= \sum_{\ivtr \in \traceset} \int_{\tracesin{\ivtr}} (\min \weightSem P^\iv(\ivtr)) \big[\valueSem P^\iv(\ivtr) \subseteq U\big] \D \tr \nonumber\\
  &\le \sum_{\ivtr \in \traceset} \int_{\tracesin{\ivtr}} \weightSem P(\tr) \big[\valueSem P(\tr) \in U\big] \D \tr \label{eq:fourth soundness}\\
  &= \int_{\bigcup_{\ivtr \in \traceset} \tracesin{\ivtr}} \weightSem P(\tr) \big[\valueSem P(\tr) \in U\big] \D \tr \label{eq:fifth soundness}\\
  &\le \int_{\traces} \weightSem P(\tr) \big[\valueSem P(\tr) \in U\big] \D \tr = \measureSem P(U) \label{eq:sixth soundness}
\end{align}
where \cref{eq:fourth soundness} follows from \cref{lem:interval-approximation}, \cref{eq:fifth soundness} from compatibility, and \cref{eq:sixth soundness} from $\bigcup_{\ivtr \in \traceset} \tracesin{\ivtr} \subseteq \traces$.
\end{proof}

\begin{restatable}[Sound upper bounds]{theorem}{thmSoundUpper}\label{thm:upperBoundsSound}
  Let $\traceset$ be a countable and exhaustive set of interval traces and $\vdash P: \typeReal$ a program.
  Then
  \( \measureSem P \le \upperBound P^\traceset. \)
\end{restatable}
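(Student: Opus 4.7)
The plan is to adapt the structure of the lower-bound proof, but to lift the argument to the infinite trace space $\traces_\infty$, since exhaustiveness is naturally phrased there. First, I would recast $\measureSem P(U)$ as an integral over $\traces_\infty$: define $\hat g : \traces_\infty \to \posreal$ by $\hat g(\sigma) := \weightSem P(\tr) \cdot [\valueSem P(\tr) \in U]$ when $P$ terminates on the (unique) finite prefix $\tr$ of $\sigma$, and $\hat g(\sigma) := 0$ otherwise. The identity $\measureSem P(U) = \int_{\traces_\infty} \hat g \, \D \mu_{\traces_\infty}$ then follows from a standard product-measure/Tonelli argument: each summand $\int_{[0,1]^n} \weightSem P(\tr)[\valueSem P(\tr) \in U] \, \D \lambda_n$ in the definition of $\measureSem P(U)$ equals the integral of the same function (viewed as depending only on the first $n$ coordinates) over $\traces_\infty$, since the unused tail coordinates contribute a factor of $1$.

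Next, exhaustiveness yields $\mu_{\traces_\infty}(\traces_\infty \setminus \bigcup_{\ivtr \in \traceset} \mathit{cover}(\ivtr)) = 0$, so $\measureSem P(U) = \int_{\bigcup_{\ivtr} \mathit{cover}(\ivtr)} \hat g \, \D \mu_{\traces_\infty}$. Countable subadditivity of the integral over a union -- which here plays the role that compatibility-based additivity played in the lower-bound proof -- then gives $\measureSem P(U) \le \sum_{\ivtr \in \traceset} \int_{\mathit{cover}(\ivtr)} \hat g \, \D \mu_{\traces_\infty}$. Because $\mu_{\traces_\infty}(\mathit{cover}(\ivtr)) = \volume(\ivtr)$, the theorem reduces to proving, for each $\ivtr$, the pointwise bound $\hat g(\sigma) \le \sup \weightSem P^\iv(\ivtr) \cdot [\valueSem P^\iv(\ivtr) \cap U \ne \emptyset]$ on $\sigma \in \mathit{cover}(\ivtr)$, and then integrating and summing.

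This pointwise bound is immediate from \cref{lem:interval-approximation} in the easy case that $P$ terminates on $\sigma$ after exactly $m := |\ivtr|$ samples: the consumed prefix then refines $\ivtr$, so its weight and value are bracketed by the interval counterparts. The main obstacle -- which has no analogue in the lower-bound proof -- is the case where $P$ terminates after $n \ne m$ samples; now the consumed prefix is \emph{not} a refinement of $\ivtr$, and \cref{lem:interval-approximation} does not apply directly. My plan is to strengthen that lemma (by a routine induction on the interval reduction) to the statement that a clean interval reduction $\intConf{P, \ivtr, 1} \to_\iv^* \intConf{\lit I, \langle\rangle, w}$ forces every concrete refinement $\tr \refines \ivtr$ to terminate using exactly $m$ samples. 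Combined with the uniqueness of termination length on any given infinite trace, this contradicts the assumption that $P$ terminates at $n \ne m$, so the interval reduction on $\ivtr$ cannot terminate cleanly. Consequently $\weightSem P^\iv(\ivtr) = [0, \infty]$ and $\valueSem P^\iv(\ivtr) = [-\infty, \infty]$, which makes the pointwise bound vacuously $\infty$ (and for $U = \emptyset$ the statement is trivial, as $\hat g \equiv 0$). Summing over $\ivtr \in \traceset$ then yields $\upperBound P^\traceset(U)$, completing the proof.
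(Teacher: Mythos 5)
Your proposal is correct and follows essentially the same route as the paper: recast $\measureSem P$ via the infinite trace semantics, use exhaustiveness plus countable subadditivity to reduce to a per-trace bound of the form $\volume(\ivtr)\cdot(\sup \weightSem P^\iv(\ivtr))\cdot[\valueSem P^\iv(\ivtr)\cap U \ne \emptyset]$, and discharge that bound with the interval-approximation lemma lifted to infinite traces. If anything, you are more careful than the paper, whose corresponding lemma (\cref{lem:interval-approximation-infinite}) is asserted to follow ``directly'' from \cref{lem:interval-approximation}; your explicit treatment of the case where the terminating prefix length differs from $|\ivtr|$ (showing the interval reduction must then fail to terminate cleanly, yielding the vacuous bounds $[0,\infty]$ and $[-\infty,\infty]$) is exactly the argument needed to close that gap.
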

\begin{proof}[Proof sketch]
The formal proof is similar to the soundness proof for the lower bound in \Cref{thm:lowerBoundsSound}, but needs an infinite trace semantics \cite{CulpepperC17} for probabilistic programs and is given in \ifFull{\cref{sec:inf-trace-sem}}.
The idea is that each interval trace $\ivtr$ summarises all infinite traces starting with $\tracesin{\ivtr}$, i.e.~all traces in $\mathit{cover}(\ivtr)$.
Exhaustivity ensures that almost all infinite traces are ``covered''.
\end{proof}

\subsection{Completeness}
\label{sec:completeness}

The soundness results for upper and lower bounds allow us to derive bounds on the denotation of a program.
One would expect that a finer partition of interval traces will yield more precise bounds.
In this section, we show that for a program $P$ and an interval $I \in \iv$, the approximations $\lowerBound P^\traceset(I)$ and $\upperBound P^\traceset(I)$ can in fact come arbitrarily close to $\measureSem P(I)$ for suitable $\traceset$.
However, this is only possible under certain assumptions.

\paragraph{Assumption 1: use of sampled values}
Interval arithmetic is imprecise if the same value is used more than once: consider, for instance, $\letIn{s = \sample} \ifElse{s - s}{0}{1}$ which deterministically evaluates to $0$.
However, in interval arithmetic, if $x$ is approximated by an interval $[a,b]$ with $a < b$, the difference $x - x$ is approximated as $[a - b, b - a]$, which always contains both positive and negative values.
So no non-trivial interval trace can separate the two branches.

To avoid this, we could consider a call-by-name semantics (as done in \cite{BeutnerO21}) where sample values can only be used once by definition.
However, many of our examples cannot be expressed in the call-by-name setting, so we instead propose a less restrictive criterion to guarantee completeness for call-by-value:
we allow sample values to be used more than once, but at most once in the guard of each conditional, at most once in each score expression, and at most once in the return value.
While this prohibits terms like the one above, it allows, e.g.~$\letIn{s = \sample} \ifElse{s}{\lit f(s)}{\lit g(s)}$.
This sufficient condition is formalised in \ifFull{\cref{sec:qtt}}.
Most examples we encountered in the literature satisfy this assumption.

\paragraph{Assumption 2: primitive functions}

In addition, we require mild assumptions on the primitive functions, called \emph{boxwise continuity} and \emph{interval separability}.

We need to be able to approximate a program's weight function by step functions in order to obtain tight bounds on its integral.
A function $f: \RR^n \to \RR$ is \defn{boxwise continuous} if it can be written as the countable union of continuous functions on boxes, i.e.~if there is a countable union of pairwise almost disjoint boxes $B_i$ such that $\bigcup B_i = \RR^n$ and the restriction $f|_{B_i}$ is continuous for each $B_i$.

Furthermore, we need to approximate preimages.
Formally, we say that $A$ is a \defn{tight subset} of $B$ (written $A \tightsubset B$) if $A \subseteq B$ and $B \setminus A$ is a null set.
A function $f: \RR^n \to \RR$ is called \defn{interval separable} if for every interval $[a,b] \in \iv$, there is a countable set $\mathcal B$ of boxes in $\RR^n$ that tightly approximates the preimage, i.e.~$\bigcup \mathcal B \tightsubset f^{-1}([a,b])$.
A sufficient condition for checking this is the following.
If $f$ is boxwise continuous and preimages of points have measure zero, then $f$ is already interval separable (cf.~\ifFull{\cref{lem:continuous-preimage-null-interval-separable}}).

We assume the set $\mathcal F$ of primitive functions is closed under composition and each $f\in \mathcal F$ is boxwise continuous and interval separable.

\paragraph{The completeness theorem}

Using these two assumptions, we can state completeness of our interval semantics.

\begin{restatable}[Completeness of interval approximations]{theorem}{thmCompleteness}\label{thm:completeness}
\label{thm:Completeness of interval approximations}
Let $I \in \iv$ and $\vdash P : \typeReal$ be an almost surely terminating program satisfying the two assumptions discussed above.
Then, for all $\epsilon > 0$, there is a countable set of interval traces $\traceset \subseteq \traces_\iv$ that is compatible and exhaustive such that
\begin{align*}
	\textstyle\upperBound P^{\traceset}(I) - \epsilon \le \measureSem P(I) \le \lowerBound P^{\traceset}(I) + \epsilon.
\end{align*}
\end{restatable}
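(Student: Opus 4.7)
The plan is to construct the witness $\traceset$ by decomposing the trace space along execution paths and then refining each path's cell into small boxes, before appealing to Riemann-sum convergence. First I would observe that an SPCF program determines countably many execution paths: each path $\pi$ is associated with a length $n_\pi \in \NN$ and a Borel cell $B_\pi \subseteq [0,1]^{n_\pi}$ of concrete traces whose reduction follows $\pi$. The $B_\pi$ are pairwise disjoint, and by almost-sure termination they cover $\traces$ up to a null set. Since $\measureSem P(\RR) < \infty$, absolute continuity of the Lebesgue integral lets me pick a finite subfamily $\Pi_0$ with $\int_{\traces \setminus \bigcup_{\pi \in \Pi_0} B_\pi} \weightSem P \D \mu_\traces \le \epsilon/2$.

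Next, I would fix $\pi \in \Pi_0$ and analyse $B_\pi$. On $B_\pi$, both $\valueSem P$ and $\weightSem P$ arise as compositions of primitive operations and interval literals applied to the sample coordinates. Assumption 1 (each sample coordinate occurring at most once in each conditional guard, each score, and in the return value) guarantees that interval-lifted evaluation on a narrow box is tight, so the widths of $\valueSem P^\iv(\ivtr)$ and $\weightSem P^\iv(\ivtr)$ shrink with the box $\ivtr$. Assumption 2 (boxwise continuity and interval separability, closed under composition) then lets me partition $B_\pi$ into a countable almost-disjoint family of boxes on which both $\valueSem P$ and $\weightSem P$ are continuous, and such that $\valueSem P^{-1}(I) \cap B_\pi$ is tightly approximated by a sub-family. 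Refining further, I can make each box small enough that the interval-lifted weight $\weightSem P^\iv$ agrees with $\weightSem P$ to within an arbitrarily small tolerance on the box.

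Combining the boxes produced for every $\pi \in \Pi_0$ yields a compatible countable family. Standard Riemann-sum convergence on each compact box, together with uniform continuity on the box, gives that the contributions to $\lowerBound P^\traceset(I)$ and $\upperBound P^\traceset(I)$ coming from boxes in $\bigcup_{\pi \in \Pi_0} B_\pi$ sandwich $\int_{\bigcup_{\pi \in \Pi_0} B_\pi} [\valueSem P \in I]\,\weightSem P \D \mu_\traces$ up to $\epsilon/2$. Finally, to obtain exhaustiveness I would add a compatible countable cover of $\traces \setminus \bigcup_{\pi \in \Pi_0} B_\pi$, e.g.~by expanding $\Pi_0$ to a larger path family whose total-volume residue has negligible weight integral; by construction these extra interval traces contribute at most $\epsilon/2$ to the upper bound and nothing to the lower bound. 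Adding the two estimates yields the desired inequality.

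The main obstacle is controlling the contribution from the boundary of $\valueSem P^{-1}(I) \cap B_\pi$: such boxes intersect $I$ in their value interval (so they appear in $\upperBound P^\traceset$) but are not contained in $I$ (so they do not contribute to $\lowerBound P^\traceset$), producing a gap between the two bounds. This is exactly where interval separability is essential: it guarantees that preimages can be tightly approximated by boxes, so that in the limit only a null set of traces ``straddles'' the boundary; integrability of $\weightSem P$ then bounds their contribution via absolute continuity. A subsidiary challenge is coordinating the refinements across $\Pi_0$ and the exhaustiveness-covering family so that compatibility is preserved throughout, but this is a bookkeeping matter once the per-path construction is in place.
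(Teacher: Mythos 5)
Your overall strategy --- symbolic paths, interval separability for the guards and the value preimage, boxwise continuity to get continuity on boxes, and Riemann sums --- is the same as the paper's. The genuine gap is in how you obtain exhaustiveness. You pass to a \emph{finite} subfamily $\Pi_0$ of paths using absolute continuity of the weight integral, and then claim that a compatible cover of the residual $\traces \setminus \bigcup_{\pi \in \Pi_0} B_\pi$ ``contributes at most $\epsilon/2$ to the upper bound''. This does not follow. The contribution of an interval trace $\ivtr$ to $\upperBound P^\traceset(I)$ is $\volume(\ivtr)\cdot \sup \weightSem P^\iv(\ivtr)$, an \emph{upper Riemann-type sum}, not the integral of $\weightSem P$ over $\tracesin{\ivtr}$. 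A small value of $\int_{\text{residual}} \weightSem P\,\D\mu_\traces$ gives no control over this sum: any box on which the interval reduction gets stuck (which happens whenever a box straddles a branching decision, and in particular on any coarse cover of the dropped paths) has $\weightSem P^\iv(\ivtr) = [0,\infty]$, so its supremum is $\infty$ and the upper bound becomes $\infty$ no matter how small the residual's weight integral is. Even where the reduction does not get stuck, the weight function may be unbounded or highly oscillatory on the residual, and an upper sum over an unrefined cover can exceed the integral by an arbitrary amount.

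The only way to control the residual is therefore to perform the same constraint-approximation, continuity-refinement and Riemann-sum argument on \emph{every} symbolic path, not just finitely many --- which is what the paper does: it keeps the full countable family of symbolic paths, builds for each path $\Psi = \symPath{\calV,n,\Delta,\Xi}$ the two families $\traceset_{\Psi,I}$ and $\traceset_{\Psi,I^c}$ tightly covering $\{\tr \in \Sat_n(\Delta) \mid \calV[\tr/\overline\alpha] \in I\}$ and its complement within $\Sat_n(\Delta)$ (exhaustiveness then follows from almost-sure termination, since the union of the $\Sat_n(\Delta)$ over all paths has measure $1$), and distributes the error budget over the countably many boxes via a summable series $\epsilon_{\ivtr^{(i)}} = 2^{-i}\epsilon$. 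Your per-path construction, and your identification of the role of interval separability at the boundary of $\valueSem P^{-1}(I)$, are correct and match the paper's Steps 1--3; only the finite-truncation step needs to be replaced by this countable-budget bookkeeping.
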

\begin{proof}[Proof sketch.]
We consider each branching path through the program separately.
The set of relevant traces for a given path is a preimage of intervals under compositions of interval separable functions, hence can essentially be partitioned into boxes.
By boxwise continuity, we can refine this partition such that the weight function is continuous on each box.
To approximate the integral, we pass to a refined partition again, essentially computing Riemann sums.
The latter converge to the Riemann integral, which agrees with the Lebesgue integral under our conditions, as desired.
\end{proof}

For the lower bound, we can actually derive $\epsilon$-close bounds using only finitely many interval traces:

\begin{restatable}{corollary}{corollaryCompleteness}
	Let $I \in \iv$ and $\vdash P : \typeReal$ be as in \cref{thm:Completeness of interval approximations}.
    There is a sequence of finite, compatible sets of interval traces $\traceset_1, \traceset_2, \ldots \subseteq \traces_\iv$ s.t.~\( \lim_{n \to \infty} \lowerBound P^{\traceset_n}(I) = \measureSem P(I). \)
\end{restatable}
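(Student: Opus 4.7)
The plan is to obtain the sequence $\traceset_n$ by invoking \cref{thm:completeness} with $\epsilon_n := \tfrac{1}{2n}$ and then \emph{truncating} each resulting countable set to a finite subset. The key observation is that the corollary only needs compatibility (not exhaustivity) and only concerns the lower bound, so shrinking the interval trace set is harmless in both respects.

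First, I would apply \cref{thm:completeness} for each $n \in \NN$ with $\epsilon = \tfrac{1}{2n}$ to obtain a countable, compatible, exhaustive set $\traceset'_n \subseteq \traces_\iv$ with
\[ \measureSem P(I) \le \lowerBound P^{\traceset'_n}(I) + \tfrac{1}{2n}. \]
By \cref{thm:lowerBoundsSound}, the reverse inequality $\lowerBound P^{\traceset'_n}(I) \le \measureSem P(I)$ also holds, which in particular (using integrability of $P$, so that $\measureSem P(I) \le Z_P < \infty$) tells us that the nonnegative series
\[ \lowerBound P^{\traceset'_n}(I) = \sum_{\ivtr \in \traceset'_n} \volume(\ivtr) \cdot (\min \weightSem P^\iv(\ivtr)) \cdot \big[\valueSem P^\iv(\ivtr) \subseteq I\big] \]
converges. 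Any convergent series of nonnegative terms can be approximated within $\tfrac{1}{2n}$ by one of its finite partial sums, so I can pick a finite $\traceset_n \subseteq \traceset'_n$ with $\lowerBound P^{\traceset_n}(I) \ge \lowerBound P^{\traceset'_n}(I) - \tfrac{1}{2n}$. Since compatibility is clearly inherited by subsets, $\traceset_n$ is again compatible (exhaustivity is typically destroyed, but it plays no role in the lower bound).

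Combining the two estimates, $\measureSem P(I) - \tfrac{1}{n} \le \lowerBound P^{\traceset_n}(I) \le \measureSem P(I)$ for every $n$, where the upper inequality is a second application of \cref{thm:lowerBoundsSound}. Taking $n \to \infty$ gives the desired limit.

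There is no substantial obstacle: the deep work has been done in \cref{thm:completeness}, and the corollary is essentially the remark that its (countable) set $\traceset$ can be safely truncated to a finite set for lower bounds. The one subtle point is ensuring $\measureSem P(I) < \infty$, which is why appealing to integrability (or at least to the fact that $I$ is an interval and $P$ has finite normalising constant) is needed before one may approximate the convergent series by a finite partial sum.
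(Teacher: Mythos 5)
Your proposal is correct and matches the paper's own proof essentially verbatim: invoke \cref{thm:completeness} to get an $\epsilon$-close countable compatible set, truncate the (nonnegative) defining sum to a finite partial sum losing at most another $\epsilon$, and sandwich with \cref{thm:lowerBoundsSound}. The only cosmetic difference is your explicit appeal to integrability to justify finiteness of the series; the paper leaves this implicit (and the degenerate case $\measureSem P(I) = \infty$ could anyway be handled by choosing finite partial sums exceeding $n$).
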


For the upper bound, a restriction to finite sets $\traceset$ of interval traces is, in general, not possible:
if the weight function for a program is unbounded, it is also unbounded on some $\ivtr \in \traceset$.
Then $\weightSem P^\iv(\ivtr)$ is an infinite interval, implying $\upperBound P^\traceset(I) = \infty$ (see \ifFull{\cref{rem:countable-traces}} for details).
Despite the (theoretical) need for countably infinite many interval traces, we can, in many cases, compute finite upper bounds by making use of an interval-based static approximation, formalised as a type system in the next section.

% !TEX root = ./main.tex

\section{Weight-aware Interval Type System}
\label{sec:5interval-analysis}

To obtain sound bounds on the denotation with only finitely many interval traces, we present an interval-based type system that can derive static bounds on a program.
Crucially, our type-system is \emph{weight-aware}: we bound not only the return value of a program but also the weight of an execution.
Our analyzer GuBPI uses it for two purposes.
First, it allows us to derive upper bounds even for areas of the sample space not covered with interval traces.
Second, we can use our analysis to derive a \emph{finite} (and sound) approximation of the infinite number of symbolic execution paths of a program (more details are given in \cref{sec:6linear}).
Note that the bounds inferred by our system are \emph{interval bounds}, which allow for seamless integration with our interval trace semantics.
In this section, we present the interval type system and sketch a constraint-based type inference method.

\begin{figure}[!t]
	\footnotesize
	\vspace{-0.3cm}
	\begin{minipage}{0.2\columnwidth}
		\vspace{6mm}
		\begin{prooftree}
			\def\ScoreOverhang{1pt}
			\AxiomC{$x:\sigma \in \Gamma$}
			\UnaryInfC{$\Gamma \vdash x: \exType{\sigma}{\mathbf{1}} $}
		\end{prooftree}
	\end{minipage}\hfill
	\begin{minipage}{0.32\columnwidth}
		\begin{prooftree}
			\def\ScoreOverhang{1pt}
			\def\defaultHypSeparation{\hskip .1in}
			\AxiomC{$\Gamma \vdash M : \calA$}
			\AxiomC{$\calA \sqsubseteq_\calA \calB$}
			\BinaryInfC{$\Gamma \vdash M : \calB$}
		\end{prooftree}
	\end{minipage}%
	\begin{minipage}{0.45\columnwidth}
		\vspace{5mm}
		\begin{prooftree}
			\def\ScoreOverhang{1pt}
			\AxiomC{$\Gamma; \varphi: \sigma \to \calA ; x:\sigma \vdash M : \calA$}
			\UnaryInfC{$\Gamma \vdash \fixLam{\varphi}{x} M : \exType{\sigma \to \calA}{\mathbf{1}} $}
		\end{prooftree}
	\end{minipage}

	\vspace{0.0cm}

	\begin{minipage}{0.3\columnwidth}
		\vspace{8mm}
		\begin{prooftree}
			\def\ScoreOverhang{1pt}
			\AxiomC{$\Gamma; x:\sigma \vdash M : \calA$}
			\UnaryInfC{$\Gamma \vdash \lambda x. M : \exType{\sigma \to \calA}{\mathbf{1}} $}
		\end{prooftree}
	\end{minipage}%
	\begin{minipage}{0.7\columnwidth}
		\begin{prooftree}
			\def\ScoreOverhang{1pt}
			\def\defaultHypSeparation{\hskip .15in}
			\AxiomC{$\Gamma\vdash M : \exType{\sigma_1 \to \exType{\sigma_2}{\myint{e, f}}}{\myint{a, b}}$}
			\AxiomC{$\Gamma \vdash N : \exType{\sigma_1}{\myint{c, d}}$}
			\BinaryInfC{$\Gamma \vdash M N : \exType{\sigma_2}{\myint{a, b} \times^\iv \myint{c, d} \times^\iv \myint{e, f}}$ }
		\end{prooftree}
	\end{minipage}

	\vspace{0.1cm}

	\begin{minipage}{0.2\columnwidth}
		\vspace{6.5mm}
		\begin{prooftree}
			\def\ScoreOverhang{1pt}
			\AxiomC{}
			\UnaryInfC{$\Gamma \vdash \lit{r} : \exType{\myint{r, r}}{\mathbf{1}} $}
		\end{prooftree}
	\end{minipage}%
	\begin{minipage}{0.8\columnwidth}
		\begin{prooftree}
			\def\defaultHypSeparation{\hskip .15in}
			\def\ScoreOverhang{1pt}
			\AxiomC{$\Gamma \vdash M : \exType{\myint{\_, \_}}{\myint{a, b}}$}
			\AxiomC{$\Gamma \vdash N :\exType{ \sigma}{\myint{c, d}}$}
			\AxiomC{$\Gamma \vdash P : \exType{\sigma}{\myint{c, d}}$}
			\TrinaryInfC{$\Gamma \vdash \ifSimple M N P : \exType{\sigma}{\myint{a,b} \times^\iv \myint{c, d}} $}
		\end{prooftree}
	\end{minipage}

	\vspace{0.1cm}

	\begin{minipage}{0.25\columnwidth}
		\vspace{6.5mm}
		\begin{prooftree}
			\def\ScoreOverhang{1pt}
			\AxiomC{}
			\UnaryInfC{$\Gamma \vdash \sample : \exType{\myint{0, 1}}{\mathbf{1}} $}
		\end{prooftree}
	\end{minipage}%
	\begin{minipage}{0.75\columnwidth}
		\begin{prooftree}
			\def\ScoreOverhang{1pt}
			\AxiomC{$\Gamma \vdash M : \exType{\myint{a, b}}{\myint{c, d}}$}
			\UnaryInfC{$\Gamma \vdash \score(M) : \exType{\myint{a, b} \sqcap \myint{0, \infty}}{\myint{c, d} \times^\iv \big(\myint{a, b} \sqcap \myint{0, \infty}\big) } $}
		\end{prooftree}
	\end{minipage}

	\vspace{0.1cm}

	\begin{minipage}{1\columnwidth}
		\begin{prooftree}
			\def\ScoreOverhang{1pt}
			\AxiomC{$\Gamma \vdash M_1 : \exType{\myint{a_1,b_1}}{\myint{c_1, d_1}}$}
			\AxiomC{$\cdots$}
			\AxiomC{$\Gamma \vdash M_{|f|} : \exType{\myint{a_{|f|}, b_{|f|}}}{\myint{c_{|f|}, d_{|f|}}}$}
			\TrinaryInfC{$\Gamma \vdash f(M_1, \dots, M_{|f|}) : \exType{f^\iv(\myint{a_1,b_1}, \dots, \myint{a_{|f|}, b_{|f|}})}{(\times^\iv)_{i=1}^{|f|} \myint{c_i, d_i}} $}
		\end{prooftree}
	\end{minipage}

	\caption{Weight-aware interval type system for SPCF. We abbreviate $\mathbf{1} := [1, 1]$.} \label{fig:typeSystemSelection}
\end{figure}

\subsection{Interval Types}

We define interval types by the following grammar:
\begin{align*}
	\sigma := I \mid \sigma \to \mathcal{A} \quad\quad	\mathcal{A} := \exType{\sigma}{I}
\end{align*}%
where $I \in \iv$ is an interval.
For readers familiar with refinement types, it is easiest to view the type $\sigma = I$ as the refinement type $\{x : \RR \mid x \in I\}$.
The definition of the syntactic category $\calA$ by mutual recursion with $\sigma$ gives a bound on the weight of the execution.
We call a type $\sigma$ \emph{weightless} and a type $\calA$ \emph{weighted}.
The following examples should give some intuition about the types.

\begin{example}\label{ex:typeExample}
    Consider the example term
    \begin{align*}
        \big(\fixLam{\varphi}{x} 5 \times x \oplus_{0.5} \mathit{sigm}(\varphi\, x + \score \,\sample) \big) (4 \times \sample)
    \end{align*}
	where $\mathit{sigm} : \real \to [0, 1]$ is the sigmoid function.
    In our type system, this term can be typed with the weighted type \scalebox{0.5}{$\exType{[0, 20]}{[0,1]}$}, which indicates that any terminating execution of the term reduces to a value (a number) within $[0, 20]$ and the weight of any such execution lies within $[0, 1]$.
\end{example}

\begin{example}\label{ex:pedestrianType}
	We consider the fixpoint subexpression of the pedestrian example in \cref{ex:pedestrian} which is
	\begin{center}
		\vspace{-2mm}
		\scalebox{0.98}{\parbox{\linewidth}{
				\begin{align*}
					\mu^\varphi_x. \ifElse{x}{0}{\big(\lambda \mathit{step}. \mathit{step} + \varphi( (x \!+\! \mathit{step})\oplus_{0.5} (x \!-\! \mathit{step}) )\big) \sample}.
				\end{align*}
		}}
	\end{center}
	Using the typing rules (defined below), we can infer the type
	\scalebox{0.5}{$\exType{[a, b] \to \exType{[0, \infty]}{[1, 1]}}{[1,1]}$}
	for any $a, b$.
    This type indicates that any terminating execution reduces to a function value (of simple type $\typeReal \to \typeReal$) with weight within $[1, 1]$.
    If this function value is then called on a value within $[a, b]$, any terminating execution reduces to a value within $[0, \infty]$ with a weight within $[1, 1]$.
\end{example}

\paragraph{Subtyping}

The partial order on intervals naturally extends to our type system.
For base types $I_1$ and $I_2$, we define $I_1 \sqsubseteq_\sigma I_2$ just if $I_1 \sqsubseteq I_2$, where $\sqsubseteq$ is interval inclusion.
We then extend this via:\\[-3mm]
\begin{minipage}{0.5\columnwidth}
    \begin{prooftree}
        \AxiomC{$\sigma_2 \sqsubseteq_\sigma \sigma_1$}
        \AxiomC{$\calA_1 \sqsubseteq_\calA \calA_2$}
        \BinaryInfC{$\sigma_1 \to \calA_1 \sqsubseteq_\sigma \sigma_2 \to \calA_2$}
    \end{prooftree}
\end{minipage}%
\begin{minipage}{0.5\columnwidth}
	\vspace{5mm}
   \begin{prooftree}
       \AxiomC{$\sigma_1 \sqsubseteq_\sigma \sigma_2$}
       \AxiomC{$I_1 \sqsubseteq I_2$}
       \BinaryInfC{$\exType{\sigma_1}{I_1} \sqsubseteq_\calA \exType{\sigma_2}{I_2}$}
   \end{prooftree}
\end{minipage}

\vspace{2mm}
\noindent
Note that in the case of weighted types, the subtyping requires not only that the weightless types be subtype-related ($\sigma_1 \sqsubseteq_\sigma \sigma_2$) but also that the weight bound be refined $I_1 \sqsubseteq I_2$.
It is easy to see that both $\sqsubseteq_\calA$ and $\sqsubseteq_\sigma$ are partial orders on types with the same underlying base type.

\subsection{Type System}

As for the interval-based semantics, we assume that every primitive operation $f : \real^n \to \real$ has an overapproximating interval abstraction $f^\iv : \mathbb{I}^n \to \mathbb{I}$ (cf. \cref{sec:intervalArith}).
Interval typing judgments have the form $\Gamma \vdash M : \calA$ where $\Gamma$ is a typing context mapping variables to types $\sigma$. They are given via the rules in \cref{fig:typeSystemSelection}.
Our system is sound in the following sense (which we here only state for first-order programs).

\begin{restatable}{theorem}{typeSystemSoundness}\label{thm:staticRes}
   Let $\vdash P : \typeReal$ be a simply-typed program.
   If $\vdash P : \text{\scalebox{0.7}{$\exType{\myint{a, b}}{\myint{c, d}}$}}$ and $\stdConf{P, \tr, 1}\to^* \stdConf{\lit{r}, \langle\rangle, w}$ for some $\tr \in \traces$ and $r, w \in \RR$, then $r \in \myint{a, b}$ and $w \in \myint{c, d}$.
\end{restatable}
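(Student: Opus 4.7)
My plan is to prove the theorem as the first-order specialization of a general type soundness result established via a logical relation. I would define, by mutual induction on interval types, a value interpretation $\calV\llbracket \sigma \rrbracket$ of closed values and an expression interpretation $\calE\llbracket \calA \rrbracket$ of closed terms: $\calV\llbracket \myint{a, b} \rrbracket := \{\lit{r} \mid r \in \myint{a,b}\}$; $\calV\llbracket \sigma \to \calA \rrbracket$ contains values $V$ such that $V\,W \in \calE\llbracket \calA \rrbracket$ for every $W \in \calV\llbracket \sigma \rrbracket$; and $\calE\llbracket \exType{\sigma}{I} \rrbracket$ contains closed terms $M$ such that every terminating reduction $\stdConf{M, \tr, 1} \to^* \stdConf{V, \langle\rangle, w}$ satisfies $V \in \calV\llbracket \sigma \rrbracket$ and $w \in I$. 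Both interpretations are monotone with respect to the subtyping orders $\sqsubseteq_\sigma$ and $\sqsubseteq_\calA$, which justifies the subsumption rule.

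The core of the proof is then the fundamental lemma: whenever $\Gamma \vdash M : \calA$ and $\gamma$ is a closing substitution with $\gamma(x) \in \calV\llbracket \Gamma(x) \rrbracket$ for every $x$, one has $\gamma(M) \in \calE\llbracket \calA \rrbracket$. I would proceed by induction on the typing derivation. Variables, literals, abstractions, and $\sample$ are immediate, and subsumption follows from monotonicity. For application, I would decompose a terminating reduction of $\gamma(M\,N)$ into three stages — reducing $\gamma(M)$ to a value $V_M$ with some weight $w_1$, reducing $\gamma(N)$ to $V_N$ with weight $w_2$, and finally reducing $V_M\,V_N$ to the returned value with weight $w_3$ — and observe that since $\score$ multiplies weights, the total accumulated weight is $w_1 \cdot w_2 \cdot w_3$. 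Applying the induction hypothesis to each sub-reduction and using soundness of $\times^\iv$ places this product in $\myint{a,b} \times^\iv \myint{c,d} \times^\iv \myint{e,f}$. The conditional, $\score$, and primitive-operation cases are handled analogously, using the defining property $r_i \in I_i \Rightarrow f(r_1,\dots,r_n) \in f^\iv(I_1,\dots,I_n)$ for primitives and intersection with $\myint{0,\infty}$ for $\score$.

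The main obstacle is the fixpoint case, where showing $\gamma(\fixLam{\varphi}{x} M) \in \calV\llbracket \sigma \to \calA \rrbracket$ seems to refer to the interpretation of the fixpoint in terms of itself, so an induction on the typing derivation alone does not close. I plan to break this circularity by nesting the argument: the outer induction is on the length of the reduction sequence being analysed, and the inner induction is on the typing derivation. Since the theorem's hypothesis demands a terminating reduction to $\lit r$, the relevant reduction length is always finite, and each unfolding step of a fixpoint strictly decreases it, so the outer hypothesis places the unfolded body in the correct interpretation. Finally, applying the fundamental lemma to the empty substitution and the closed program $\vdash P : \exType{\myint{a, b}}{\myint{c, d}}$ yields $P \in \calE\llbracket \exType{\myint{a, b}}{\myint{c, d}} \rrbracket$, so any reduction $\stdConf{P, \tr, 1} \to^* \stdConf{\lit r, \langle\rangle, w}$ gives $\lit r \in \calV\llbracket \myint{a, b} \rrbracket$ and $w \in \myint{c, d}$, i.e.\ $r \in \myint{a, b}$ and $w \in \myint{c, d}$, as required.
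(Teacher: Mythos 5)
Your proof is correct in outline but takes a genuinely different route from the paper's. The paper proves the theorem by a \emph{weighted subject reduction} argument: a substitution lemma, then a per-step preservation lemma stating that if $\vdash P : \exTypeS{\sigma}{J}$ and $P$ reduces in one step with weight $w > 0$ then the reduct has type $\exTypeS{\sigma}{\tfrac{1}{w}\cdot J}$, plus a separate lemma for zero-weight steps; iterating along the terminating reduction gives $\vdash \lit r : \exTypeS{\myint{a,b}}{\tfrac{1}{w}\cdot\myint{c,d}}$ and hence the claim. Your logical-relations argument replaces this with a fundamental lemma proved by induction on typing derivations. Each approach buys something: subject reduction dispatches the fixpoint case for free (after one unfolding the fixpoint value retypes itself via the substitution lemma, so there is no circularity), but pays with the backwards $\tfrac{1}{w}$-bookkeeping and a special case for $w=0$; your version tracks weights forward multiplicatively, handles zero weights uniformly, and yields a statement at higher types, but must (i) decompose a terminating reduction of a compound term such as $\gamma(M)\,\gamma(N)$ into sub-reductions of its parts (a standardization/evaluation-context argument the paper never needs) and (ii) break the circularity at $\fixLam{\varphi}{x}M$. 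On point (ii), your nested induction is the right idea but as stated it is slightly too loose: since $\calV\llbracket \sigma\to\calA\rrbracket$ itself quantifies over arbitrary terminating reductions of $V\,W$, an outer induction on reduction length only closes if the value and expression relations are themselves indexed by the step bound and shown downward-closed in it, i.e.\ you need genuinely step-indexed relations $\calV_n,\calE_n$, recovering the unindexed statement by intersecting over all $n$. With that made explicit the proof goes through; it is more machinery than the paper's two-lemma argument, but it is a standard and correct alternative.
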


Note that the bounds derived by our type system only refer to terminating executions, i.e.~they are partial correctness statements.
\Cref{thm:staticRes} formalises the intuition of an interval type, i.e.~every type derivation in our system bounds \emph{both} the returned value (in typical refinement-type fashion \cite{FreemanP91}) and the weight of this derivation.
Our type system also comes with a weak completeness statement: for each term, we can derive some bounds in our system.

\begin{restatable}{proposition}{typeSystemComp}
	Let $\vdash P : \alpha$ be a simply-typed program.
    There exists a weighted interval type $\calA$ such that $\vdash P : \calA$.\label{prop:compltnessType}
\end{restatable}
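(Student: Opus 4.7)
The plan is to prove the proposition by induction on the simply-typed term, after a small strengthening that makes the induction go through. Concretely, for each simple type $\alpha$ I would define a canonical ``top'' weightless interval type $\top_\alpha$ and a canonical weighted type $\mathcal{T}_\alpha$ by mutual recursion: $\top_\typeReal := [-\infty,\infty]$, $\top_{\alpha\to\beta} := \top_\alpha \to \mathcal{T}_\beta$, and $\mathcal{T}_\alpha := \exType{\top_\alpha}{[0,\infty]}$. Similarly, lift a simple context $\Gamma$ to an interval context $\Gamma^\top$ by mapping each $x:\alpha$ to $x:\top_\alpha$. The strengthened claim I would prove is: if $\Gamma \vdash M : \alpha$ in the simple type system, then $\Gamma^\top \vdash M : \mathcal{T}_\alpha$ in the interval type system. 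The proposition then follows from the special case $\Gamma = \emptyset$.

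The induction is a routine case analysis, in each case producing some raw type from the rules of \cref{fig:typeSystemSelection} and then widening to $\mathcal{T}_\alpha$ using subsumption (noting that $\top_{\alpha\to\beta} = \top_\alpha \to \mathcal{T}_\beta$ matches exactly the shape required by the application and fixpoint rules, so the contravariant direction of subtyping never gets in the way). For variables and literals one obtains $[1,1]$ as the weight and widens to $[0,\infty]$. For $\lambda x.M$ and $\mu^\varphi_x.M$, the induction hypothesis gives $M$ the type $\mathcal{T}_\beta$ under the appropriately extended top context, and the abstraction/fixpoint rules then deliver $\exType{\top_{\alpha\to\beta}}{[1,1]}$, which widens to $\mathcal{T}_{\alpha\to\beta}$. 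For application, the two subterms obtain $\mathcal{T}_{\alpha\to\beta}$ and $\mathcal{T}_\alpha$, which match the premises of the application rule on the nose, producing a type whose weight interval is a product of three copies of $[0,\infty]$. For primitive operations, conditionals, $\sample$, and $\score$, the interval arithmetic in the conclusion of each rule again produces a result that can be widened to $\mathcal{T}_\typeReal$; in particular the $\sqcap [0,\infty]$ in the $\score$ rule turns $[-\infty,\infty]$ into $[0,\infty]$, so no side condition fails.

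The only mildly subtle points are two. First, one must fix an interpretation of $[0,\infty] \times^\iv [0,\infty]$ (and more generally of products involving $0$ and $\infty$) so that $[0,\infty]$ is closed under $\times^\iv$ and the widening step always succeeds; the natural convention $0 \cdot \infty := \infty$ (or $0$, either works) makes $[0,\infty] \times^\iv [0,\infty] = [0,\infty]$, which is what the argument needs. Second, the conditional rule demands that the two branches have identical types, not merely subtype-related ones; this is handled by first applying the inductive hypothesis to obtain the same canonical type $\mathcal{T}_\alpha$ for both branches, which is why phrasing the strengthening in terms of the canonical $\mathcal{T}_\alpha$ (rather than ``some type'') is important.

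I expect the main obstacle to be purely bureaucratic: verifying that the composition of interval operations appearing in the side conditions (in particular for $\score$, for applications, and for the fixpoint rule with its co/contravariant matching) really does yield a type that is a subtype of $\mathcal{T}_\alpha$ in every case. Once the convention on extended-real interval arithmetic is fixed, each case reduces to a one-line calculation, and the existence of \emph{some} weighted type for the closed program $P$ follows immediately by specialising the strengthened statement.
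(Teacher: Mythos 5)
Your proposal is correct and follows essentially the same route as the paper's proof: the paper likewise defines canonical weightless and weighted types $\sigma_\alpha$ and $\calA_\alpha$ by mutual recursion (inserting $[-\infty,\infty]$ for values and $[0,\infty]$ for weights), strengthens the claim to open terms under the lifted context, and concludes by a routine induction on the simple-typing derivation with subsumption. Your additional remarks on the $0\cdot\infty$ convention and on forcing both conditional branches into the same canonical type are sensible elaborations of details the paper leaves implicit.
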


\subsection{Constraint-based Type Inference}

In this section, we briefly discuss the automated type \emph{inference} in our system, as needed in our tool GuBPI.
For space reasons, we restrict ourselves to an informal overview (see \ifFull{\cref{app:sec5}} for a full account).

Given a program $P$, we can derive the symbolic skeleton of a type derivation (the structure of which is determined by $P$), where each concrete interval is replaced by a placeholder variable.
The validity of a typing judgment within this skeleton can then be encoded as constraints.
Crucially, as we work in the fixed interval domain and the subtyping structure $\sqsubseteq_\calA$ is compositional, they are simple constraints over the placeholder variables in the abstract interval domain.
Solving the resulting constraints na\"{i}vely might not terminate since the interval abstract domain is not chain-complete.
Instead, we approximate the least fixpoint (where the fixpoint denotes a solution to the constraints) using \emph{widening}, a standard approach to ensure termination of static analysis on domains with infinite chains \cite{CousotCousot76,CousotC77}.
This is computationally much cheaper compared to, say, types with general first-order refinements where constraints are typically phrased as constrained Horn clauses (see e.g.~\cite{ChampionCKS20}).
This gain in efficiency is crucial to making our GuBPI tool practical.

% !TEX root = ./main.tex

\section{Symbolic Execution and GuBPI}
\label{sec:6linear}

In this section, we describe the overall structure of our tool GuBPI (\href{https://gubpi-tool.github.io/}{gubpi-tool.github.io}), which builds upon symbolic execution.
We also outline how the interval-based semantics can be accelerated for programs containing linear subexpressions.

\subsection{Symbolic Execution}

The starting point of our analysis is a \emph{symbolic exploration} of the term in question \cite{MakOPW21,GeldenhuysDV12,ChagantyNR13}.
For space reasons we only give an informal overview of the approach.
A detailed and formal discussion can be found in \ifFull{\cref{app:sec-symbolic}}.

The idea of symbolic execution is to treat outcomes of $\sample$ expressions fully symbolically: each $\sample$ evaluates to a fresh variable ($\alpha_1, \alpha_2, \dots$), called \emph{sample variable}.
The result of symbolic execution is thus a symbolic value: a term consisting of sample variables and delayed primitive function applications.
We postpone branching decisions and the weighting with $\score$ expressions because the value in question is symbolic.
During execution, we therefore explore both branches of a conditional and keep track of the (symbolic) conditions on the sample variables that need to hold in the current branch.
Similarly, we record the (symbolic) values of $\score$ expressions.
Formally, our symbolic execution operates on \defn{symbolic configurations} of the form $\psi = \symConf{\calM, n, {\Delta, \Xi}}$ where $\calM$ is a symbolic term containing sample variables instead of sample outcomes;
$n \in \natnum$ is a natural number used to obtain fresh sample variables;
$\Delta$ is a list of symbolic constraints of the form $\calV \bowtie r$, where $\calV$ is a symbolic value, $r \in \RR$ and ${\bowtie} \in {\{\leq, <, >, \geq\}}$, to keep track of the conditions for the current execution path;
and $\Xi$ is a set of values that records all symbolic values of $\score$ expressions encountered along the current path.
The symbolic reduction relation $\toSym$ includes the following key rules.
\begin{align*}
    &\symConf{\sample, n, \Delta, \Xi} \toSym \symConf{\alpha_{n+1}, n+1, \Delta, \Xi}\\
    &\symConf{\ifElse{\calV}{\calN}{\calP}), n, \Delta, \Xi} \toSym \symConf{\calN, n, \Delta \cup \{\calV \leq 0\}, \Xi}\\
    &\symConf{\ifElse{\calV}{\calN}{\calP}), n, \Delta, \Xi} \toSym \symConf{\calP, n, \Delta \cup \{\calV > 0\}, \Xi }\\
    &\symConf{\score(\calV), n, \Delta, \Xi} \toSym \symConf{\calV, n, \Delta \cup \{\calV \geq 0 \}, \Xi \cup \{\calV\}}
\end{align*}%
That is, we replace sample outcomes with fresh sample variables (first rule), explore both paths of a conditional (second and third rule), and record all score values (fourth rule).

\begin{example}\label{ex:pedestrianSymExec}
	Consider the symbolic execution of \cref{ex:pedestrian} where the first step moves the pedestrian towards their home (taking the right branch of $\oplus_{0.5}$) and the second step moves away from their home (the left branch of $\oplus_{0.5}$).
	We reach a configuration $(\calM, 5, \Delta, \Xi)$ where $\calM$ is
	\begin{center}
		\vspace{-2mm}
		\scalebox{0.99}{\parbox{\linewidth}{
				\begin{align*}
					\score \big(\pdf_{\Normal(1.1, 0.1)}\big(\alpha_2\! + \!\alpha_4\! +\! (\mu^\varphi_x. \calN) (3\alpha_1 \!-\! \alpha_2 \!+ \!\alpha_4)\big)\big) ; 3\alpha_1.
				\end{align*}
		}}
	\end{center}
	Here $\alpha_1$ is the initial sample for $\mathit{start}$; $\alpha_2, \alpha_4$ the two samples of $\mathit{step}$; and $\alpha_3, \alpha_5$ the samples involved in the $\oplus_{0.5}$ operator.
	The fixpoint $\mu^\varphi_x. \calN$ is already given in \cref{ex:pedestrianType}, $\Xi = \emptyset$ and $\Delta = \{3\alpha_1 > 0, \alpha_3 > \tfrac{1}{2}, 3\alpha_1 \!-\! \alpha_2 > 0, \alpha_5 \leq \tfrac{1}{2}\}$.
\end{example}

For a symbolic value $\calV$ using sample variables $\overline{\alpha} = \alpha_1, \allowbreak \dots, \allowbreak \alpha_n$ and $\tr \in [0,1]^n$, we write  $\calV[\tr/\overline \alpha] \in \real$ for the substitution of concrete values in $\tr$ for the sample variables.
Call a symbolic configuration of the form $\Psi = \symPath{\calV, n,\Delta, \Xi}$ (i.e.~a configuration that has reached a symbolic value $\calV$) a \defn{symbolic path}.
We write $\mathit{symPaths}(\psi)$ for the (countable) set of symbolic paths reached when evaluating from configuration $\psi$.
Given a symbolic path $\Psi = \symPath{\calV, n,\Delta, \Xi}$ and a set $U \in \Sigma_\real$, we define the denotation along $\Psi$, written $\llbracket \Psi \rrbracket (U)$, as\\
\scalebox{1}{\parbox{\linewidth}{
		\begin{align*}
			\int_{[0,1]^n} \!
			\big[\calV[\tr/\overline \alpha] \in U\big] \!\!\! \prod_{\calC \bowtie r \in \Delta} \!\!\!\! \big[\calC[\tr/\overline \alpha] \bowtie r\big]
			\prod_{\calW \in \Xi} \!\!\calW[\tr/\overline \alpha]
			 \D\tr,
		\end{align*}%
}}\\
i.e.~the integral of the product of the  $\score$ weights $\Xi$ over the traces of length $n$ where the result value is in $U$ and all the constraints $\Delta$ are satisfied.
We can recover the denotation of a program $P$ (as defined in \cref{sec:2background}) from all its symbolic paths starting from the configuration $\symConf{P, 0, \emptyset, \emptyset}$.

\begin{restatable}{theorem}{symbolicExec}\label{thm:relation-semantics-symbolic}
	Let $\vdash P : \typeReal$ be a program and $U \in \Sigma_\real$. Then
\begin{align*}
	\textstyle\measureSem P(U) = \sum_{\Psi \in \mathit{symPaths}\symConf{P, 0, \emptyset, \emptyset}} \; \llbracket \Psi \rrbracket (U).
\end{align*}
\end{restatable}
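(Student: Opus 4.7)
The plan is to establish a tight correspondence between concrete reductions and symbolic executions, then turn that correspondence into an integral identity via countable additivity.

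The first step is to formalize simulation between the two reduction relations. I would prove, by induction on the length of reduction, that for every symbolic path $\Psi = \symPath{\calV, n, \Delta, \Xi} \in \mathit{symPaths}\symConf{P, 0, \emptyset, \emptyset}$ and every concrete trace $\tr \in [0,1]^n$ satisfying $\calC[\tr/\overline\alpha] \bowtie r$ for each $(\calC \bowtie r) \in \Delta$, the concrete reduction of $\stdConf{P, \tr, 1}$ terminates with value $\lit{\calV[\tr/\overline\alpha]}$ and weight $\prod_{\calW \in \Xi} \calW[\tr/\overline\alpha]$ (assuming all scored values are nonnegative, which the symbolic rule for $\score$ enforces via the augmented $\Delta$). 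The argument proceeds step-by-step: each symbolic reduction step either mirrors a concrete step, or in the case of $\sample$ instantiates a fresh variable with the corresponding entry of $\tr$, or in the case of $\mathsf{if}$ commits to a branch consistent with the constraints added to $\Delta$.

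The second step is the converse: for every terminating concrete trace $\tr \in [0,1]^n$ there is a unique symbolic path $\Psi_{\tr} = \symPath{\calV, n, \Delta, \Xi}$ with $\tr \refines$ its constraints, again by induction on the concrete reduction, tracking which branch is taken at each conditional and which value is sampled. Uniqueness follows because, at each conditional, the constraint sets $\{\calV \le 0\}$ and $\{\calV > 0\}$ are disjoint on concrete traces (up to the measure-zero boundary $\calV = 0$), so different symbolic paths of the same length $n$ correspond to disjoint measurable subsets of $[0,1]^n$ up to null sets.

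Combining these two steps, I obtain that the family $\{T_\Psi\}_{\Psi \in \mathit{symPaths}}$ where $T_\Psi := \{\tr \in [0,1]^n \mid \tr \text{ satisfies the constraints of } \Psi\}$ partitions the terminating traces (of each length $n$) up to a null set, that $\valueSem P(\tr) = \calV[\tr/\overline\alpha]$ on $T_\Psi$, and $\weightSem P(\tr) = \prod_{\calW \in \Xi}\calW[\tr/\overline\alpha]$ on $T_\Psi$. Then
\begin{align*}
\measureSem P(U)
 &= \int_\traces \weightSem P(\tr) \,[\valueSem P(\tr) \in U]\, \mu_\traces(\D\tr) \\
 &= \sum_{\Psi} \int_{T_\Psi} \weightSem P(\tr) \,[\valueSem P(\tr) \in U]\, \mu_\traces(\D\tr) \\
 &= \sum_{\Psi} \measureSem{\Psi}(U),
\end{align*}
where the second equality uses the partition property and countable additivity (the set of symbolic paths is countable since each has finite length $n$ and each step has finite branching) and the third substitutes the symbolic expressions for value and weight on $T_\Psi$, recognizing the resulting integral as $\llbracket\Psi\rrbracket(U)$ by extending the domain from $T_\Psi$ to $[0,1]^n$ using the indicator $\prod_{\calC\bowtie r \in \Delta}[\calC[\tr/\overline\alpha] \bowtie r]$.

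The main obstacle I anticipate is the bookkeeping around the simulation lemma in the presence of higher-order values and fixpoints: one must argue that the symbolic reduction strategy faithfully mirrors the concrete CbV evaluation order, so that the $n$-th sample encountered in the concrete trace corresponds to $\alpha_n$ in the symbolic path. A secondary subtlety is measurability and the null-set issue at conditional boundaries $\calV = 0$: one needs that $\{\tr : \calV[\tr/\overline\alpha] = 0\}$ has Lebesgue measure zero, which under the interval separability assumption on primitive functions (or a weaker pointwise preimage-null condition) holds for any nonconstant $\calV$; boundary traces where the concrete reduction gets stuck at $\mathsf{if}(0,\_,\_)$ form a null set and can safely be discarded from both sides of the equation.
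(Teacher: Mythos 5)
Your proposal is correct and follows essentially the same route as the paper: a forward simulation lemma (symbolic path plus satisfying trace yields the concrete reduction with matching value and weight), a converse with uniqueness of the symbolic path for each terminating trace, and then a partition of the trace space combined with Tonelli/countable additivity to split the integral over paths. The only remark is that your null-set caution at conditionals is unnecessary: the concrete rules split on $r\le 0$ versus $r>0$ and the symbolic constraints mirror this exactly, so the sets $\Sat_n(\Delta)$ for distinct paths are genuinely disjoint and no boundary traces need to be discarded.
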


Analogously to interval SPCF (\cref{sec:3intervals}), we define \defn{symbolic interval terms} as symbolic terms that may contain intervals (and similarly for symbolic interval values, symbolic interval configurations, and symbolic interval paths).

\begin{algorithm}[!t]
	\begin{algorithmic}[1]
		\State \textbf{Input:} Program $\vdash P : \typeReal$, depth limit $D \in \NN$, and $I \in \iv$
		\State $\psi_{\mathsf{init}} := (P, 0, \emptyset, \emptyset); S := \{ (\psi_{\mathsf{init}}, 0) \}; T := \emptyset$
		\While{$\exists (\psi, \mathit{depth}) \in S$}
		\If{$\psi$ has terminated}
			\State $T := T \cup \{\psi\}; S := S \setminus \{(\psi, \mathit{depth})\}$
		\ElsIf{$\psi$ contains no fixpoints or $\mathit{depth} \leq D$}
			\State $S := S \setminus \{(\psi, \mathit{depth})\}$
			\For{$\psi'$ with $\psi \toSym \psi'$}
			\State $S := S \cup \{(\psi', \mathit{depth} + 1)\}$ \vspace{-1mm}
			\EndFor
		\Else
			\State $S := (S \setminus \{(\psi, \mathit{depth})\}) \cup  \{(\rmFix(\psi), \mathit{depth})\}$\label{line:approx}
		\EndIf
		\EndWhile
		\State \textbf{return} $\big[\sum_{\Psi \in T} \; \llbracket \Psi \rrbracket_\mathit{lb}(I), \sum_{\Psi \in T} \;  \llbracket \Psi \rrbracket_\mathit{ub}(I)\big]$\label{line:final}
	\end{algorithmic}
	\caption{Symbolic Analysis in GuBPI.}\label{alg:symRiBo}
\end{algorithm}

\subsection{GuBPI}

With symbolic execution at hand, we can outline the structure of our analysis tool GuBPI (sketched in \cref{alg:symRiBo}).
GuBPI's analysis begins with symbolic execution of the input term to accumulate a set of symbolic \emph{interval} paths $T$.
If a symbolic configuration $\psi$ has exceeded the user-defined depth limit $D$ and still contains a fixpoint, we overapproximate all paths that extend $\psi$ to ensure a finite set $T$.
We accomplish this by using the interval type system (\cref{sec:5interval-analysis}) to overapproximate all fixpoint subexpressions, thereby obtaining strongly normalizing terms (in line \ref{line:approx}).
Formally, given a symbolic configuration $\psi = \symConf{\calM, n, \Delta, \Xi}$ we derive a typing judgment for the term $\calM$ in the system from \cref{sec:5interval-analysis}.
Each first-order fixpoint subterm is thus given a (weightless) type of the form \scalebox{0.7}{$[a, b] \to \exType{[c, d]}{[e, f]}$}.
We replace this fixpoint with $\lambda \_.\big( \score([e, f])\mathbin{;} [c, d]\big)$.
We denote this operation on configurations by $\rmFix(\psi)$ (it extends to higher-order fixpoints as expected).
Note that $\rmFix(\psi)$ is a symbolic \emph{interval} configuration.

\begin{example}\label{ex:pedestrianSymbolicPath}
	Consider the symbolic configuration given in \cref{ex:pedestrianSymExec}.
	As in \cref{ex:pedestrianType} we infer the type of $\mu^\varphi_x. \calN$ to be
	\scalebox{0.65}{$[-1, 4] \to \exType{[0, \infty]}{[1, 1]}$}.
	The function $\rmFix$ replaces $\mu^\varphi_x. \calN$ with $\lambda \_. \score([1, 1]); [0, \infty]$.
	By evaluating  the resulting symbolic interval configuration further, we obtain the symbolic interval path  $\symPath{3\alpha_1, 5, \Delta, \Xi}$ where $\Delta$ is as in \cref{ex:pedestrianSymExec} and $\Xi = \{\pdf_{\Normal(1.1, 0.1)}(\alpha_2 + \alpha_4 + [0, \infty]) \}$.
	Note that, in general, the further evaluation of $\rmFix(\psi)$ can result in multiple symbolic interval paths.
\end{example}

Afterwards, we're left with a finite set $T$ of symbolic interval paths.
Due to the presence of intervals, we cannot define a denotation of such paths directly and instead define lower and upper bounds.
For a symbolic interval value $\calV$ that contains \emph{no} sample variables, we define $\ulcorner \calV \urcorner \subseteq \real$ as the set of all values that the term can evaluate to by replacing every interval $[a, b]$ with some value $r \in [a, b]$.
Given a symbolic interval path $\Psi = \symPath{\calV, n, \Delta, \Xi}$ and $U \in \Sigma_\RR$ we define $\llbracket \Psi \rrbracket_\mathit{lb}(U)$ by considering only those concrete traces that fulfill the constraints in $\Psi$ for \emph{all} concrete values in the intervals and take the infimum over all scoring expressions:\\
\scalebox{0.88}{\parbox{\linewidth}{
		\begin{align*}
			\int\!\!\!
			\big[\ulcorner\calV[\tr/\overline \alpha]\urcorner\subseteq U\big] \!\!\!\!\prod_{\calC \bowtie r \in \Delta} \!\!\!\!\big[\forall t \!\in\! \ulcorner\calC[\tr/\overline \alpha]\urcorner. t\bowtie r\big]  \!\!
			\prod_{\calW \in \Xi} \!\!\inf \ulcorner\calW[\tr/\overline \alpha]\urcorner
			\D\tr.
		\end{align*}
}}\\
Similarly, we define $\llbracket \Psi \rrbracket_\mathit{ub}(U)$ as\\
\scalebox{0.82}{\parbox{\linewidth}{
		\begin{align*}
			\int \!\!\!
			\big[\ulcorner\calV[\tr/\overline \alpha]\urcorner\cap  U \neq \emptyset \big] \!\!\!\!\prod_{\calC \bowtie r \in \Delta} \!\!\!\!\big[\exists t \!\in\! \ulcorner\calC[\tr/\overline \alpha]\urcorner.t \bowtie r\big]  \!\!
			\prod_{\calW \in \Xi} \!\!\sup \ulcorner\calW[\tr/\overline \alpha]\urcorner
			\D\tr.
		\end{align*}
}}\\
We note that, if $\Psi$ contains no intervals, $\llbracket \Psi \rrbracket$ is defined and we have $\llbracket \Psi \rrbracket_\mathit{lb} = \llbracket \Psi\rrbracket_\mathit{ub}  = \llbracket \Psi \rrbracket$.
We can now state the following theorem that formalises the observation that $\rmFix(\psi)$ soundly approximates all symbolic paths that result from $\psi$.

\begin{theorem}\label{thm:lowerUpperSymbolicDenotation}
	Let $\psi$ be a symbolic (interval-free) configuration and $U \in \Sigma_{\RR}$.
	Define $A = \mathit{symPaths}(\psi)$ as the (possibly infinite) set of all symbolic paths reached when evaluating $\psi$ and $B = \mathit{symPaths}(\rmFix(\psi))$ as the (finite) set of symbolic \emph{interval} paths reached when evaluating $\rmFix(\psi)$.
	Then
	\begin{align*}
		\textstyle\sum_{\Psi \in B} \; \llbracket \Psi \rrbracket_\mathit{lb}(U) \leq \sum_{\Psi \in A} \; \llbracket \Psi \rrbracket(U) \leq \sum_{\Psi \in B} \; \llbracket \Psi \rrbracket_\mathit{ub}(U).
	\end{align*}
\end{theorem}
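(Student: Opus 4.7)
The plan is to exhibit a partition $A = \bigsqcup_{\Psi \in B} A_\Psi$ of the symbolic paths of $\psi$, indexed by the symbolic interval paths of $\rmFix(\psi)$, and then to prove the per-$\Psi$ sandwich
$$\llbracket \Psi \rrbracket_\mathit{lb}(U) \;\leq\; \sum_{\Phi \in A_\Psi} \llbracket \Phi \rrbracket(U) \;\leq\; \llbracket \Psi \rrbracket_\mathit{ub}(U).$$
Summing over $\Psi \in B$ then yields the theorem. This path-by-path reduction is natural because $\rmFix$ only alters fixpoint subterms, so reductions outside fixpoints commute with it, while fixpoint unrollings collapse into a single interval-emitting step.

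To build the partition, I would proceed by strong induction on the length of the symbolic reduction sequence from $\psi$. The key observation is that every rule of $\toSym$ other than unrolling a fixpoint commutes (up to renaming of sample variables) with the $\rmFix$-rewrite of the configuration's term. A fixpoint call that $\psi$ would unroll is, in $\rmFix(\psi)$, replaced by the single deterministic step that emits the weight interval $[e,f]$ into $\Xi$ and returns the value interval $[c,d]$. Following the finite reduction of $\rmFix(\psi)$ thus assigns each $\Phi \in A$ a unique $\Psi \in B$, namely the one whose non-fixpoint branching history matches $\Phi$'s and whose interval abstractions appear exactly where $\Phi$ chose to unroll. This defines $\pi \colon A \to B$ and the partition $A_\Psi := \pi^{-1}(\Psi)$.

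For the per-$\Psi$ sandwich, fix $\Psi = (\calV_\Psi, n_\Psi, \Delta_\Psi, \Xi_\Psi) \in B$. Each $\Phi \in A_\Psi$ agrees with $\Psi$ on the ``outer'' sample variables $\alpha_1,\dots,\alpha_{n_\Psi}$ and introduces additional ``inner'' sample variables from its fixpoint unrollings. By Tonelli and by the almost-disjointness of the branches chosen inside each fixpoint body across different $\Phi$'s, the sum $\sum_{\Phi \in A_\Psi} \llbracket \Phi \rrbracket(U)$ can be rewritten as a single integral over the outer variables whose integrand is, for fixed outer trace $\tr$, a weighted preimage of $U$ under the concrete executions of the fixpoint bodies. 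Applying \cref{thm:relation-semantics-symbolic} to each fixpoint body and \cref{thm:staticRes} to its interval type $[a,b] \to \exType{[c,d]}{[e,f]}$, this integrand is bounded below by $\inf \ulcorner [e,f]\urcorner \cdot [\ulcorner [c,d] \urcorner \subseteq U]$ and above by $\sup \ulcorner [e,f] \urcorner \cdot [\ulcorner [c,d] \urcorner \cap U \neq \emptyset]$; multiplying by the outer constraint product gives exactly $\llbracket \Psi \rrbracket_\mathit{lb}$ and $\llbracket \Psi \rrbracket_\mathit{ub}$, respectively.

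The main obstacle will be bookkeeping through higher-order and nested fixpoints: a fixpoint returning a function may itself be invoked inside another fixpoint body, so the outer/inner split does not cleanly factor in a single pass. I would address this with a nested induction on the fixpoint-application tree of $\Phi$, applying the type-soundness step \cref{thm:staticRes} at each level and using that the replacement term $\lambda \_.(\score([e,f]);[c,d])$ itself contains no fixpoints, so that the residual symbolic execution of $\rmFix(\psi)$ is guaranteed finite. A secondary subtlety is that the $\forall$/$\exists$ quantifiers in the constraint indicators align with $\inf$/$\sup$ of the interval scores in the correct direction precisely because the interval abstraction covers every concrete execution of the fixpoint---which is exactly the content of \cref{thm:staticRes}.
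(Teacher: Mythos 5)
The paper never actually proves \cref{thm:lowerUpperSymbolicDenotation} --- the appendix for \cref{sec:6linear} only develops the linear-splitting extension --- so there is no official argument to compare against. Your architecture is nonetheless the natural one and, I believe, the only reasonable one: partition $A$ by the finitely many interval paths of $B$ according to the shared outer branching history, integrate out the inner sample variables introduced by unrolling the replaced fixpoints, and sandwich the resulting inner mass using the inferred type $[a,b]\to\exTypeS{[c,d]}{[e,f]}$. The upper-bound half goes through as you sketch: every terminating inner execution has value in $[c,d]$ and weight in $[e,f]$ by \cref{thm:staticRes}, the terminating inner traces carry mass at most $1$, and the $\exists$/$\sup$ combination in $\llbracket\cdot\rrbracket_\mathit{ub}$ absorbs the resulting over-approximation.

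The lower-bound half has a genuine gap. For a fixed outer trace, the inner contribution of a replaced fixpoint call is $\int [\mathrm{val}\in S]\cdot w$ over its terminating inner traces, and you bound this below by $\inf[e,f]\cdot\big[[c,d]\subseteq S\big]$. But \cref{thm:staticRes} is explicitly a \emph{partial correctness} statement: it constrains the value and weight of each terminating execution and says nothing about how much measure the terminating inner traces carry. Your inequality needs that measure to equal $1$, i.e., almost-sure termination of every replaced fixpoint on the arguments it actually receives. Without that the claim is false: if $\psi$ contains $(\fixLam{\varphi}{x} \ifElse{x}{\lit{0}}{\varphi\, x})\,\lit{1}$, then $A=\emptyset$, so the middle term is $0$, yet the (Fix) rule forces $1\in[e,f]$ and admits $[e,f]=[1,1]$, so the single path in $B$ has $\llbracket\cdot\rrbracket_\mathit{lb}(U)=1$ whenever $[c,d]\subseteq U$. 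You must therefore either import the paper's standing AST assumption explicitly at this step (the statement omits it, but the surrounding development assumes it) or show that the inferred weight interval contains $0$ for possibly-diverging fixpoints --- which the type system does not guarantee. A secondary, more routine issue: \cref{thm:staticRes} and \cref{thm:relation-semantics-symbolic} are stated for closed ground programs run from the initial configuration, whereas you invoke them for open fixpoint subterms applied to arguments depending on the outer trace; you need compositional versions, including the fact that each actual argument of a replaced fixpoint lands in the domain interval $[a,b]$ of its inferred arrow type, which must come from soundness of the typing of the surrounding context rather than from the theorem as stated.
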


The correctness of \cref{alg:symRiBo} is then a direct consequence of \cref{thm:relation-semantics-symbolic,thm:lowerUpperSymbolicDenotation}:

\begin{corollary}
	Let $T$ be the set of symbolic interval paths computed when at line \ref{line:final} of \cref{alg:symRiBo} and $U \in \Sigma_{\RR}$.
	Then
	\begin{align*}
		\textstyle\sum_{\Psi \in T} \; \llbracket \Psi \rrbracket_\mathit{lb}(U) \leq \llbracket P \rrbracket(U) \leq \sum_{\Psi \in T} \; \llbracket \Psi \rrbracket_\mathit{ub}(U).
	\end{align*}
\end{corollary}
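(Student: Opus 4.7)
The plan is to reduce the claim to a direct application of \cref{thm:relation-semantics-symbolic,thm:lowerUpperSymbolicDenotation} by carefully partitioning both $T$ and $\mathit{symPaths}(\psi_{\mathsf{init}})$ according to where, if anywhere, the algorithm invoked the $\rmFix$ approximation on line~\ref{line:approx}. First I would observe that the collection $\{\psi_i\}_{i \in J}$ of configurations to which $\rmFix$ was applied during the run of \cref{alg:symRiBo} forms an antichain in the symbolic reduction tree rooted at $\psi_{\mathsf{init}}$: because $\rmFix(\psi)$ removes every fixpoint subterm and the else-branch of the algorithm only triggers on configurations still containing a fixpoint, no descendant of an approximation point can itself be approximated.

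Given this antichain, the set $T$ of terminated symbolic interval paths decomposes as $T = \Psi_{\mathrm{nat}} \sqcup \bigsqcup_{i \in J} T_i$, where $\Psi_{\mathrm{nat}}$ collects paths that terminated via $\toSym$ without ever being approximated (hence genuine elements of $\mathit{symPaths}(\psi_{\mathsf{init}})$ and in particular interval-free) and each $T_i \subseteq \mathit{symPaths}(\rmFix(\psi_i))$ collects paths obtained by continued symbolic reduction from the approximated configuration $\psi_i$. Because $\rmFix(\psi_i)$ contains no fixpoints, reduction from it terminates along every branch, so in fact $T_i = \mathit{symPaths}(\rmFix(\psi_i))$. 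Dually, the (possibly infinite) set of symbolic paths of $P$ partitions as $\mathit{symPaths}(\psi_{\mathsf{init}}) = \Psi_{\mathrm{nat}} \sqcup \bigsqcup_{i \in J} \mathit{symPaths}(\psi_i)$, since each symbolic path either avoids every approximation point (landing in $\Psi_{\mathrm{nat}}$) or passes through a unique earliest $\psi_i$ and then continues as some element of $\mathit{symPaths}(\psi_i)$.

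Putting these pieces together, and using that $\llbracket \Psi \rrbracket_\mathit{lb} = \llbracket \Psi \rrbracket$ on interval-free $\Psi \in \Psi_{\mathrm{nat}}$, I would chain
\begin{align*}
\sum_{\Psi \in T} \llbracket \Psi \rrbracket_\mathit{lb}(U)
  &= \sum_{\Psi \in \Psi_{\mathrm{nat}}} \llbracket \Psi \rrbracket(U) + \sum_{i \in J} \sum_{\Psi \in \mathit{symPaths}(\rmFix(\psi_i))} \llbracket \Psi \rrbracket_\mathit{lb}(U) \\
  &\leq \sum_{\Psi \in \Psi_{\mathrm{nat}}} \llbracket \Psi \rrbracket(U) + \sum_{i \in J} \sum_{\Psi \in \mathit{symPaths}(\psi_i)} \llbracket \Psi \rrbracket(U) \\
  &= \sum_{\Psi \in \mathit{symPaths}(\psi_{\mathsf{init}})} \llbracket \Psi \rrbracket(U) \;=\; \llbracket P \rrbracket(U),
\end{align*}
invoking \cref{thm:lowerUpperSymbolicDenotation} for the middle inequality and \cref{thm:relation-semantics-symbolic} for the final equality. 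The upper bound is obtained by the symmetric argument, replacing $\llbracket \cdot \rrbracket_\mathit{lb}$ by $\llbracket \cdot \rrbracket_\mathit{ub}$ and reversing the inequality.

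The main obstacle is not the inequality chain, which is mechanical once the theorems are in hand, but justifying the two partitions: one must verify that the algorithm's set of approximation points really is an antichain, that the descendants of an approximation point $\psi_i$ indeed enumerate all of $\mathit{symPaths}(\rmFix(\psi_i))$, and that the naturally terminated paths $\Psi_{\mathrm{nat}}$ are common to both decompositions. These are combinatorial bookkeeping details about the exploration order of \cref{alg:symRiBo} rather than any deep probabilistic reasoning.
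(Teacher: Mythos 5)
Your proposal is correct and follows the same route the paper intends: the paper states the corollary as a ``direct consequence'' of \cref{thm:relation-semantics-symbolic,thm:lowerUpperSymbolicDenotation} without spelling out the details, and your decomposition of $T$ and of $\mathit{symPaths}(\psi_{\mathsf{init}})$ along the antichain of approximation points is precisely the bookkeeping that justification leaves implicit. The antichain property, the identity $T_i = \mathit{symPaths}(\rmFix(\psi_i))$ (via strong normalisation of fixpoint-free terms), and the use of $\llbracket \Psi \rrbracket_\mathit{lb} = \llbracket \Psi \rrbracket$ on interval-free paths are all exactly as needed.
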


What remains is to compute (lower bounds on) $\llbracket \Psi \rrbracket_\mathit{lb}(I)$ and (upper bounds on)  $\llbracket \Psi \rrbracket_\mathit{ub}(I)$ for a symbolic interval path $\Psi \in T$ and $I \in \iv$.
We first present the standard interval trace semantics (\cref{sec:standardInt}) and then a more efficient analysis for the case that $\Psi$ contains only linear functions (\cref{sec:linOpt}).

\subsection{Standard Interval Trace Semantics}
\label{sec:standardInt}

For any symbolic interval path $\Psi$, we can employ the semantics as introduced in \cref{sec:3intervals}.
Instead of applying the analysis to the entire program, we can restrict to the current path $\Psi$ (intuitively, by adding a $\score(0)$ to all other program paths).
The interval traces split the domain of each sample variable in $\Psi$ into intervals.
It is easy to see that for any compatible and exhaustive set of interval traces $\traceset$, we have $\lowerBound \Psi^\traceset (U) \leq \llbracket \Psi \rrbracket_\mathit{lb}(U)$ and $\llbracket \Psi \rrbracket_\mathit{ub}(U) \leq \upperBound \Psi^\traceset (U)$ for any $U \in \Sigma_\RR$ (see \cref{thm:lowerBoundsSound} and \ref{thm:upperBoundsSound}).
Applying the interval-based semantics on the level of symbolic interval paths maintains the attractive features, namely soundness and completeness (relative to the current path) of the semantics.
Note that the intervals occurring in $\Psi$ seamlessly integrate with our interval-based semantics.

\subsection{Linear Interval Trace Semantics}
\label{sec:linOpt}

In case the score values and the guards of all conditionals are linear, we can improve and speed up the interval-based semantics.

Assume all symbolic interval values appearing in $\Psi$ are interval linear functions of $\overline{\alpha}$ (i.e.~functions $\overline{\alpha} \mapsto \mathbf{w}^\intercal \overline{\alpha} +^\iv [a, b]$ for some $\mathbf{w} \in \RR^n$ and $[a, b] \in \iv$).
We assume, for now, that each symbolic value $\calW \in \Xi$ denotes an interval-free linear function (i.e.~a function $\overline{\alpha} \mapsto \mathbf{w}^\intercal \overline{\alpha} + r$).
Fix some interval $I \in \iv$.
We first note that both $\llbracket \Psi \rrbracket_\mathit{lb}(I)$ and $\llbracket \Psi \rrbracket_\mathit{ub}(I)$ are the integral of a polynomial over a convex polytope:
define\\
\scalebox{1}{\parbox{\linewidth}{
		\begin{align*}
		\mathfrak{P}_\mathit{lb} \! := \! \big\{\tr \in \real^n \mid \! \ulcorner\calV[\tr/\overline \alpha]\urcorner \!\subseteq \!I  \land \!\!\!\! \bigwedge_{\calC\bowtie r \in \Delta}\!\!\!\! \forall t \! \in \!\ulcorner\calC[\tr/\overline \alpha]\urcorner. t \bowtie r \big\}
	\end{align*}
}}\\
which is a polytope.%
\footnote{For example, if $\calC$ denotes the function $\overline{\alpha} \mapsto \mathbf{w}^\intercal \overline{\alpha} + [a, b]$ we can transform a constraint $\forall t \! \in \!\ulcorner\calC[\tr/\overline \alpha]\urcorner. t \leq r $ into the linear constraint $\mathbf{w}^\intercal \overline{\alpha} + b \leq r$.}
Then $\llbracket \Psi \rrbracket_\mathit{lb}(I)$ is the integral of the polynomial $\overline{\alpha} \mapsto \prod_{\calW \in \Xi} \calW$ over $\mathfrak{P}_\mathit{lb}$.
The definition of $\mathfrak{P}_\mathit{ub}$ (as the region of integration for $\llbracket \Psi \rrbracket_\mathit{ub}(I)$) is similar.
Such integrals can be computed exactly \cite{baldoni2011integrate}, e.g. with the LattE tool \cite{de2013software}.
Unfortunately, our experiments showed that this does not scale to interesting probabilistic programs.

Instead, we derive guaranteed bounds on the denotation by means of iterated volume computations.
This has the additional benefit that we can handle non-uniform samples and non-liner expressions in $\Xi$.
We follow an approach similar to that of the interval-based semantics in \cref{sec:4intervals-theory} but do not split/bound \emph{individual sample variables} and instead directly bound \emph{linear functions} over the sample variables.
Let $\Xi = \{\calW_1, \dots, \calW_k\}$.
We define a \defn{box} (by abuse of language) as an element $\boundbox = \langle [a_1,b_1], \dots, [a_k, b_k] \rangle$, where $[a_i, b_i]$ gives a bound on $\calW_i$.%
\footnote{Note the similarity to the interval trace  semantics. While the $i$th position in an interval trace bounds the value of the $i$th sample variable, the $i$th entry of a box bounds the $i$th score value.}
We define $\mathit{lb}(\boundbox) := \prod_{i=1}^{k} a_i$ and $\mathit{ub}(\boundbox) := \prod_{i=1}^{k} b_i$.
The box $\boundbox$ naturally defines a subset of $\mathfrak{P}_\mathit{lb}$ given by $\mathfrak{P}_\mathit{lb}^\boundbox = \big\{\tr \in \mathfrak{P}_\mathit{lb} \mid \bigwedge_{i=1}^k   \calW_i[\tr/\overline \alpha] \in [a_i, b_i] \big\}$.
Then $\mathfrak{P}_\mathit{lb}^\boundbox$ is again a polytope and we write $\volume(\mathfrak{P}_\mathit{lb}^\boundbox)$ for its volume.
The definition of $\mathfrak{P}_\mathit{ub}^\boundbox$ and $\volume(\mathfrak{P}_\mathit{ub}^\boundbox)$ is analogous.
As for interval traces, we call two boxes $\boundbox_1$, $\boundbox_2$ \emph{compatible} if the intervals are almost disjoint in at least one position.
A set of boxes $B$ is \emph{compatible} if its elements are pairwise compatible and \emph{exhaustive} if $\bigcup_{\boundbox \in B} \mathfrak{P}_\mathit{lb}^\boundbox = \mathfrak{P}_\mathit{lb}$ and $\bigcup_{\boundbox \in B} \mathfrak{P}_\mathit{ub}^\boundbox = \mathfrak{P}_\mathit{ub}$ (cf.~\cref{sec:boundsFromIntervalTraces}).

\begin{proposition}\label{prop:linearSplitting}
	Let $B$ be a compatible and exhaustive set of boxes.
	Then $\sum_{\boundbox \in B} \, \mathit{lb}(\boundbox) \cdot \volume\big(\mathfrak{P}_\mathit{lb}^\boundbox\big) \leq \llbracket \Psi \rrbracket_\mathit{lb}(I)$ and $ \llbracket \Psi \rrbracket_\mathit{ub}(I) \leq \sum_{\boundbox \in B} \, \mathit{ub}(\boundbox) \cdot \volume\big(\mathfrak{P}_\mathit{ub}^\boundbox\big)$.
\end{proposition}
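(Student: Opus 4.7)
The plan is to derive the bound pointwise on each box, then sum over the exhaustive, pairwise almost-disjoint collection $B$, using that the integrand $\prod_{\calW \in \Xi} \calW$ is well-behaved under the linearity assumption.

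First I would unfold the definition of $\llbracket \Psi \rrbracket_\mathit{lb}(I)$ from \cref{sec:6linear}. Since each $\calW \in \Xi$ is an interval-free linear function of $\overline{\alpha}$, the set $\ulcorner \calW[\tr/\overline\alpha]\urcorner$ reduces to the singleton $\{\calW[\tr/\overline\alpha]\}$, so $\inf \ulcorner \calW[\tr/\overline\alpha]\urcorner = \calW[\tr/\overline\alpha]$. Moreover, the conditions $\ulcorner\calV[\tr/\overline\alpha]\urcorner \subseteq I$ and $\forall t \in \ulcorner \calC[\tr/\overline\alpha]\urcorner.\ t \bowtie r$ become exactly the linear (interval) constraints defining $\mathfrak{P}_\mathit{lb}$. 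Therefore
\begin{align*}
  \llbracket \Psi \rrbracket_\mathit{lb}(I) \;=\; \int_{\mathfrak{P}_\mathit{lb}} \prod_{\calW \in \Xi} \calW(\tr) \, \D\tr,
\end{align*}
and an analogous identity holds for $\llbracket \Psi \rrbracket_\mathit{ub}(I)$ over $\mathfrak{P}_\mathit{ub}$.

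Next I would use the hypotheses on $B$. Compatibility means the boxes are pairwise almost disjoint in at least one coordinate, hence the corresponding preimages $\mathfrak{P}_\mathit{lb}^{\boundbox}$ (resp.~$\mathfrak{P}_\mathit{ub}^{\boundbox}$) are pairwise Lebesgue-null-intersecting polytopes, and exhaustivity gives $\mathfrak{P}_\mathit{lb} = \bigcup_{\boundbox \in B}\mathfrak{P}_\mathit{lb}^{\boundbox}$ (resp.~for $\mathfrak{P}_\mathit{ub}$). By $\sigma$-additivity of the Lebesgue integral, the integral of $\prod_{\calW \in \Xi} \calW$ over $\mathfrak{P}_\mathit{lb}$ equals the sum of the integrals over the pieces $\mathfrak{P}_\mathit{lb}^{\boundbox}$, and similarly for the upper side.

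The key pointwise step is then to bound the integrand on each piece by a constant. Fix $\boundbox = \langle [a_1, b_1], \dots, [a_k, b_k]\rangle$. By definition of $\mathfrak{P}_\mathit{lb}^{\boundbox}$, every $\tr \in \mathfrak{P}_\mathit{lb}^{\boundbox}$ satisfies $\calW_i(\tr) \in [a_i, b_i]$ for each $i$. Here I would invoke the fact that scores are non-negative: the interval reduction rule for $\score$ only fires when $a \geq 0$, and the constraint $\calV \geq 0$ is recorded into $\Delta$ during symbolic execution, so we may assume $0 \le a_i \le b_i$ throughout. Under this non-negativity, the product is monotone in each factor, yielding
\begin{align*}
  \mathit{lb}(\boundbox) \;=\; \prod_{i=1}^k a_i \;\le\; \prod_{i=1}^k \calW_i(\tr) \;\le\; \prod_{i=1}^k b_i \;=\; \mathit{ub}(\boundbox)
\end{align*}
for every $\tr \in \mathfrak{P}_\mathit{lb}^{\boundbox}$ (resp.~$\mathfrak{P}_\mathit{ub}^{\boundbox}$ for the second inequality). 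Integrating the left inequality over $\mathfrak{P}_\mathit{lb}^{\boundbox}$ gives $\mathit{lb}(\boundbox)\cdot \volume(\mathfrak{P}_\mathit{lb}^{\boundbox}) \le \int_{\mathfrak{P}_\mathit{lb}^{\boundbox}} \prod_i \calW_i\,\D\tr$; symmetrically for the upper side. Summing over $\boundbox \in B$ and using the additive decomposition from the previous step delivers both inequalities.

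I expect the main obstacle to be a bookkeeping matter rather than a mathematical one: carefully justifying that the non-negativity of every $\calW_i$ (and hence of the lower endpoints $a_i$ in any useful box) is enforced by the symbolic execution, so that the monotonicity argument on products actually applies. A secondary subtlety is that compatibility is defined only as pairwise almost-disjointness of the interval coordinates of boxes, which does not a priori imply almost-disjointness of the polytopes $\mathfrak{P}_\mathit{lb}^{\boundbox}$; I would handle this by noting that if two box coordinates $[a_i,b_i]$ and $[a'_i,b'_i]$ are almost disjoint, then the corresponding constraint slabs $\calW_i(\tr) \in [a_i,b_i]$ and $\calW_i(\tr) \in [a'_i,b'_i]$ overlap only on the null set $\{\tr \mid \calW_i(\tr) = c\}$ for a single value $c$, which has Lebesgue measure zero since $\calW_i$ is a non-constant affine function (the constant case being trivial). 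With these observations, $\sigma$-additivity of the integral goes through unchanged.
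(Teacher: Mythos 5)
Your proof is correct and follows the route the paper intends: the paper states \cref{prop:linearSplitting} without proof, deferring via ``cf.~\cref{sec:boundsFromIntervalTraces}'' to the same pattern as \cref{thm:lowerBoundsSound,thm:upperBoundsSound}, namely rewrite $\llbracket\Psi\rrbracket_{\mathit{lb}/\mathit{ub}}(I)$ as an integral of $\prod_{\calW\in\Xi}\calW$ over $\mathfrak{P}_{\mathit{lb}/\mathit{ub}}$, decompose by the (almost-disjoint, exhaustive) pieces $\mathfrak{P}^{\boundbox}$, and bound the non-negative integrand pointwise by $\mathit{lb}(\boundbox)$ and $\mathit{ub}(\boundbox)$, with non-negativity of the scores coming from the $\calV \geq 0$ constraint recorded by the symbolic $\score$ rule. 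One caveat: your dismissal of the constant case as ``trivial'' in the almost-disjointness argument is slightly too quick --- if some $\calW_i$ is constant equal to a shared box endpoint, two compatible boxes can induce polytopes $\mathfrak{P}^{\boundbox}_{\mathit{lb}}$ of overlapping positive measure and the lower-bound sum would double-count; this degenerate situation is excluded in practice because GuBPI builds the boxes by partitioning the LP-computed range of each $\calW_i$, which collapses to a single chunk when $\calW_i$ is constant, but it is worth stating rather than waving away.
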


As in the standard interval semantics, a finer partition into boxes yields more precise bounds.
While the volume computation involved in \cref{prop:linearSplitting} is expensive \cite{DyerF88}, the number of splits on the linear functions is much smaller than that needed in the standard interval-based semantics.
Our experiments empirically demonstrate that the direct splitting of linear functions (if applicable) is usually superior to the standard splitting.
In GuBPI we compute a set of exhaustive boxes by first computing a lower and upper bound on each $\calW_i \in \Xi$ over $\mathfrak{P}_\mathit{lb}$ (or $\mathfrak{P}_\mathit{ub}$) by solving a linear program (LP) and splitting the resulting range in evenly sized chunks.

\paragraph{Beyond uniform samples and linear scores}

We can extend our linear optimization to non-uniform samples and arbitrary symbolic values in $\Xi$.
We accomplish the former by \emph{combining} the optimised semantics (where we bound linear expressions) with the standard interval-trace semantics (where we bound individual sample variables).
For the latter, we can identify linear sub-expressions of the expressions in $\Xi$, use boxes to bound each linear sub-expression, and use interval arithmetic to infer bounds on the entire expression from bounds on its linear sub-expressions.
More details can be found in \ifFull{\cref{sec:ext-lin-splitting}}.

\begin{example}
	Consider the path $\Psi = \symPath{3\alpha_1, 5, \Delta, \Xi}$ derived in \cref{ex:pedestrianSymbolicPath}.
	We use 1-dimensional boxes to bound $\alpha_2 + \alpha_4$ (the single linear sub-expression of the symbolic values in $\Xi$).
	To obtain a lower bound on $\llbracket \Psi \rrbracket_\mathit{lb}(I)$, we sum over all boxes $\boundbox = \langle [a_1, b_1] \rangle$ and take the product of $\volume\big(\mathfrak{P}_\mathit{lb}^\boundbox\big)$ with the lower interval bound of $\pdf_{\Normal(1.1, 0.1)}([a_1,b_1] + [0, \infty])$ (evaluated in interval arithmetic).
	Analogously, for the upper bound we take the product of $\volume\big(\mathfrak{P}_\mathit{ub}^\boundbox\big)$ with the upper interval bound of $\pdf_{\Normal(1.1, 0.1)}([a_1,b_1]+ [0, \infty])$.
\end{example}

% !TEX root = ./main.tex

\section{Practical Evaluation}
\label{sec:7practical-evaluation}

\begin{table}[!t]
	\caption{Evaluation on selected benchmarks from \cite{SankaranarayananCG13}. We give the times (in seconds) and bounds computed by \cite{SankaranarayananCG13} and GuBPI. Details on the exact queries (the Q column) can be found in \ifFull{\cref{fig:pldi13resultsApp} in the appendix}. }
	\label{fig:pldi13results}
	\vspace{-3mm}
	\setlength\dashlinedash{0.5pt}
	\setlength\dashlinegap{1pt}
	\setlength\arrayrulewidth{0.5pt}
	\centering
	\footnotesize
	\def\arraystretch{1.2}
	\begin{tabular}{lccccc}
		\toprule
		\multicolumn{2}{@{}c@{\hspace{0mm}}}{} &\multicolumn{2}{@{}c@{}}{\textbf{Tool from \cite{SankaranarayananCG13}}} & \multicolumn{2}{@{}c@{\hspace{0mm}}}{\textbf{GuBPI}}\\
		\cmidrule[0.7pt](l{1mm}r{1mm}){3-4}
		\cmidrule[0.7pt](l{1mm}){5-6}
		 \textbf{Program}& \textbf{Q}& $\boldsymbol{t}$ & \textbf{Result} & $\boldsymbol{t}$ & \textbf{Result} \\
		\cmidrule[0.7pt](r{1mm}){1-1}
		\cmidrule[0.7pt](l{1mm}r{1mm}){2-2}
		\cmidrule[0.7pt](l{1mm}r{1mm}){3-3}
		\cmidrule[0.7pt](l{1mm}r{1mm}){4-4}
		\cmidrule[0.7pt](l{1mm}r{1mm}){5-5}
		\cmidrule[0.7pt](l{1mm}r{0mm}){6-6}
	 	tug-of-war & Q1  & 1.29  & $[0.6126,   0.6227]$  & 0.72  & $[0.6134, 0.6135]$  \\
		\hdashline
		tug-of-war & Q2 & 1.09  & $[0.5973,   0.6266]$  & 0.79  & $[0.6134, 0.6135]$  \\
		\hdashline
		beauquier-3 & Q1 & 1.15 & $[0.5000,   0.5261]$ & 22.5 & $[0.4999, 0.5001]$  \\
		\hdashline
		ex-book-s & Q1 & 8.48 & $[0.6633,   0.7234]$ & 6.52 & $[0.7417, 0.7418]$   \\
		\hdashline
		ex-book-s \!\!\!& Q2$^\star$ \!\!\!  & 10.3 & $[0.3365,   0.3848]$ & 8.01 & $[0.4137, 0.4138]$   \\
		\hdashline
		ex-cart & Q1  & 2.41 & $[0.8980,   1.1573]$ & 67.3 & $[0.9999, 1.0001]$   \\
		\hdashline
		ex-cart & Q2    & 2.40 & $[0.8897,   1.1573]$ & 68.5 & $[0.9999, 1.0001]$  \\
		\hdashline
		ex-cart & Q3  & 0.15 & $[0.0000,   0.1150]$ & 67.4 & $[0.0000, 0.0001]$   \\
		\hdashline
		ex-ckd-epi-s\!\!\! & Q1$^\star$ \!\!\! & 0.15 & $[0.5515,   0.5632]$ & 0.86 & $[0.0003, 0.0004]$   \\
		\hdashline
		ex-ckd-epi-s\!\!\! & Q2$^\star$  \!\!\! & 0.08 & $[0.3019,   0.3149]$ & 0.84 & $[0.0003, 0.0004]$   \\
		\hdashline
		ex-fig6 & Q1 & 1.31 & $[0.1619,   0.7956]$ & 21.2 & $[0.1899, 0.1903]$   \\
		\hdashline
		ex-fig6 & Q2  & 1.80 & $[0.2916,   1.0571]$ & 21.4 & $[0.3705, 0.3720]$   \\
		\hdashline
		ex-fig6 & Q3  & 1.51 & $[0.4314,   2.0155]$ & 24.7 & $[0.7438, 0.7668]$   \\
		\hdashline
		ex-fig6 & Q4  & 3.96 & $[0.4400,   3.0956]$ & 27.4 & $[0.8682, 0.9666]$   \\
		\hdashline
		ex-fig7 & Q1  &  0.04& $[0.9921, 1.0000]$ & 0.18 & $[0.9980, 0.9981]$  \\
		\hdashline
		example4 & Q1 & 0.02 & $[0.1910,   0.1966]$ & 0.31 & $[0.1918, 0.1919]$   \\
		\hdashline
		example5 & Q1  & 0.06 & $[0.4478,   0.4708]$ & 0.27 & $[0.4540, 0.4541]$   \\
		\hdashline
		herman-3 & Q1  & 0.47 & $[0.3750,   0.4091]$ & 124 & $[0.3749, 0.3751]$ \\
		\bottomrule
	\end{tabular}

\end{table}

We have implemented our semantics in the prototype GuBPI (\href{https://gubpi-tool.github.io/}{gubpi-tool.github.io}), written in F\#.
In cases where we apply the linear optimisation of our semantics, we use Vinci \cite{BuelerEF00} to discharge volume computations of convex polytopes.
We set out to answer the following questions:
\begin{enumerate}
    \item How does GuBPI perform on instances that could already be solved (e.g.~by PSI \cite{GehrMV16})?
    \item Is GuBPI able to infer useful bounds on recursive programs that could not be handled rigorously before?
\end{enumerate}

\subsection{Probability Estimation}
\label{sec:evalEsti}

We collected a suite of 18 benchmarks from \cite{SankaranarayananCG13}.
Each benchmark consists of a program $P$ and a query $\phi$ over the variables of $P$.
We bound the probability of the event described by $\phi$ using the tool from \cite{SankaranarayananCG13} and GuBPI (\cref{fig:pldi13results}).
While our tool is generally slower than the one in \cite{SankaranarayananCG13}, the completion times are still reasonable.
Moreover, in many cases, the bounds returned by GuBPI are tighter than those of \cite{SankaranarayananCG13}.
In addition, for benchmarks marked with a $\star$, the two pairs of bounds contradict each other.%
\footnote{A stochastic simulation using $10^6$ samples in Anglican \cite{TolpinMW15} yielded results that fall within GuBPI's bounds but violate those computed by \cite{SankaranarayananCG13}.}
We should also remark that GuBPI cannot handle all benchmarks proposed in \cite{SankaranarayananCG13} because the heavy use of conditionals causes our precise symbolic analysis to suffer from the well-documented path explosion problem \cite{BoonstoppelCE08,Godefroid07,CadarGPDE08}.
Perhaps unsurprisingly, \cite{SankaranarayananCG13} can handle such examples much better, as one of their core contributions is a stochastic method to reduce the number of paths considered (see \cref{sec:related}).
Also note that \cite{SankaranarayananCG13} is restricted to uniform samples, linear guards and score-free programs, whereas we tackle a much more general problem.

\begin{table}
	\caption{Probabilistic programs with discrete domains from PSI \cite{GehrMV16}. The times for PSI and GuBPI are given in seconds. }\label{tab:psi-discrete}
	\vspace{-3mm}
	\small
	\def\arraystretch{1.1}
	\begin{tabular}{lcclcc}
		\cmidrule[1pt](r{1.5mm}){1-3}
		\cmidrule[1pt](l{0mm}r{1mm}){4-6}
		\textbf{Instance} & $\boldsymbol{t}_\mathit{PSI}$ & $\boldsymbol{t}_\mathit{GuBPI}$ & \textbf{Instance} & $\boldsymbol{t}_\mathit{PSI}$ & $\boldsymbol{t}_\mathit{GuBPI}$  \\
		\cmidrule[0.7pt](r{1mm}){1-1}
		\cmidrule[0.7pt](l{0.5mm}r{1mm}){2-2}
		\cmidrule[0.7pt](l{0.5mm}r{1.5mm}){3-3}
		\cmidrule[0.7pt](l{0mm}r{1mm}){4-4}
		\cmidrule[0.7pt](l{0.5mm}r{1mm}){5-5}
		\cmidrule[0.7pt](l{0.5mm}r{1mm}){6-6}
		burglarAlarm \!\!\!\!& 0.06 & 0.21 & coins & 0.04 & 0.18   \\
		twoCoins & 0.04 & 0.21 & ev-model1 & 0.04 & 0.21 \\
		grass & 0.06 & 0.37 & ev-model2 & 0.04 & 0.20 \\
		noisyOr & 0.14 & 0.72 & murderMystery  \!\!\!\! & 0.04 & 0.19 \\
		bertrand & 0.04 & 0.22 & coinBiasSmall  & 0.13 & 1.92 \\
		coinPattern & 0.04 & 0.19 & gossip  & 0.08 & 0.24 \\
		\cmidrule[1pt](r{1.5mm}){1-3}
		\cmidrule[1pt](l{0mm}r{1mm}){4-6}
	\end{tabular}
\end{table}

\begin{figure}
	\begin{subfigure}[t]{0.23\textwidth}
		\centering
		\includegraphics[width=\textwidth]{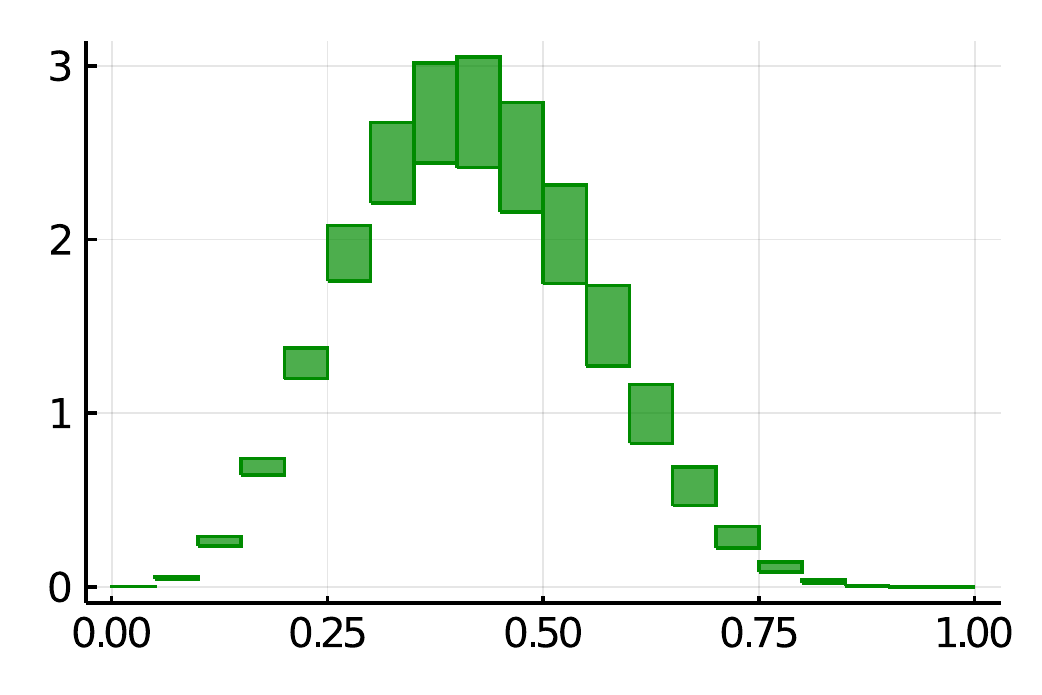}
		\vspace{-6mm}
		\subcaption{\texttt{coinBias} example from \cite{GehrMV16}. The program samples a beta prior on the bias of a coin and observes repeated coin flips (26 seconds).}
	\end{subfigure}\hfill%
	\begin{subfigure}[t]{0.23\textwidth}
		\centering
		\includegraphics[width=\textwidth]{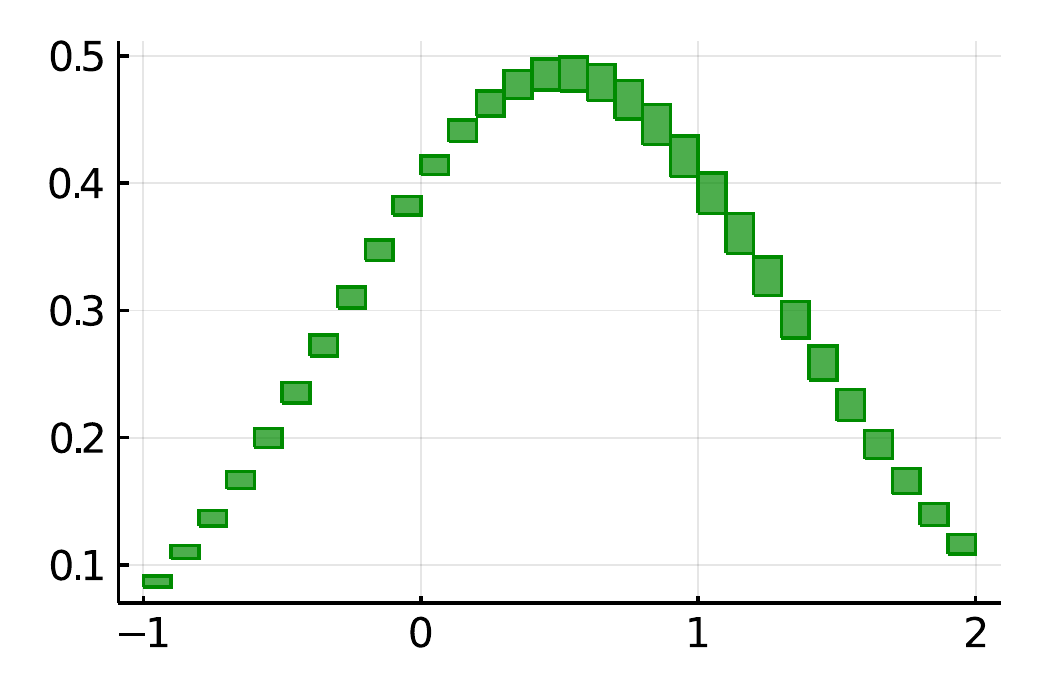}
		\vspace{-6mm}
		\subcaption{\texttt{max} example from \cite{GehrMV16}. The program compute the maximum of two i.i.d. normal samples (31 seconds).}
	\end{subfigure}

	\begin{subfigure}[t]{0.23\textwidth}
		\centering
		\includegraphics[width=\textwidth]{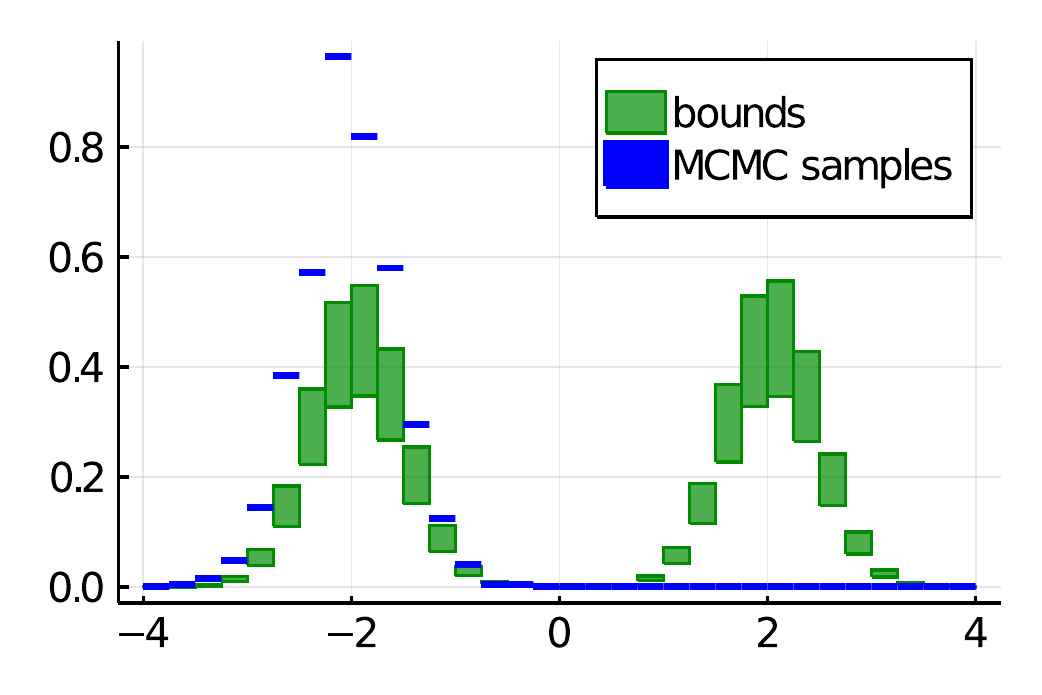}
		\vspace{-6mm}
		\subcaption{Binary Gaussian Mixture Model from \cite{ZhouGKRYW19} (39 seconds). MCMC methods usually find only one mode.}\label{fig:gmm}
	\end{subfigure}\hfill%
	\begin{subfigure}[t]{0.23\textwidth}
		\centering
		\includegraphics[width=\textwidth]{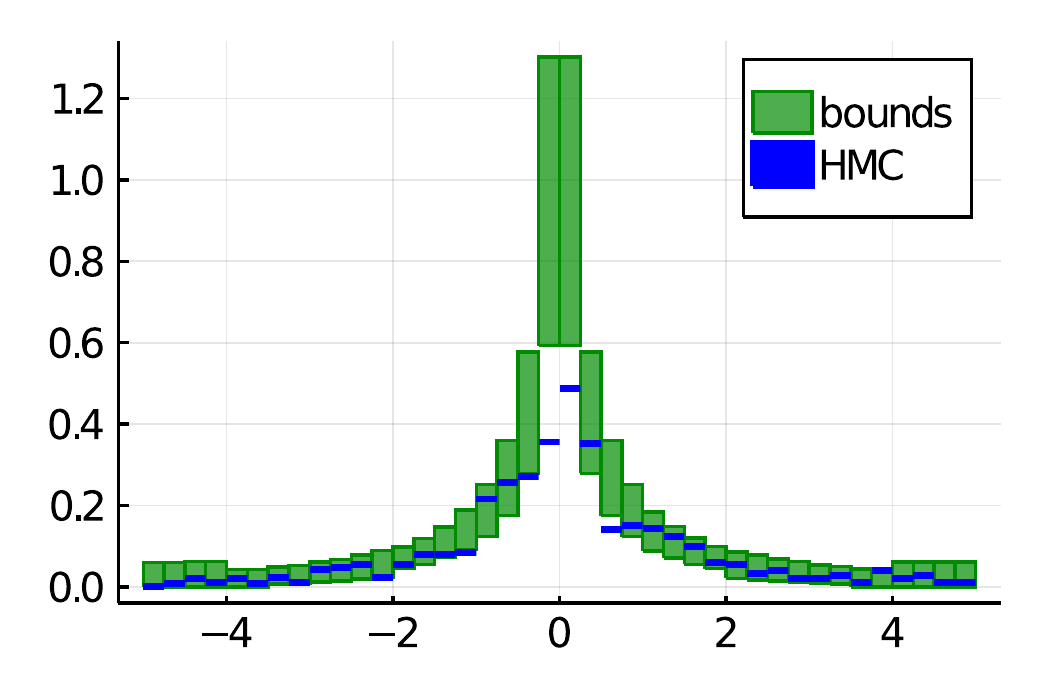}
		\vspace{-6mm}
		\subcaption{Neal's funnel from \cite{Neal03,GorinovaMH20} (2.8 seconds). HMC misses some probability mass around 0.}
	\end{subfigure}%
	\caption{Guaranteed Bounds computed by GuBPI for a selection of non-recursive models from \cite{GehrMV16,GehrSV20,ZhouGKRYW19, Neal03}.  }\label{fig:psi-examples}
\end{figure}

\begin{figure*}[!t]
	\begin{subfigure}[t]{0.30\textwidth}
		\centering
		\includegraphics[width=\textwidth]{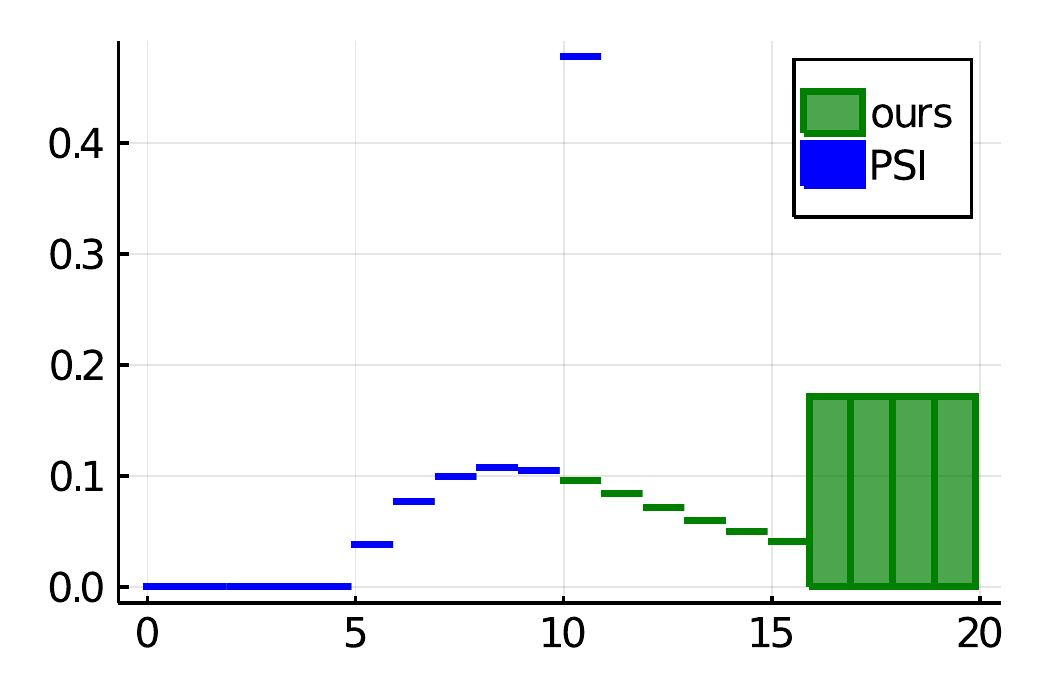}
		\vspace{-7mm}
		\subcaption{\texttt{cav-example-7}. Program taken from the PSI repository. PSI bounds the depth resulting in a spike at 10, whereas GuBPI can compute bounds on the denotation of the unbounded program (112 seconds).}\label{fig:psi-bounded-1}
	\end{subfigure}\hfill%
	\begin{subfigure}[t]{0.30\textwidth}
		\centering
		\includegraphics[width=\textwidth]{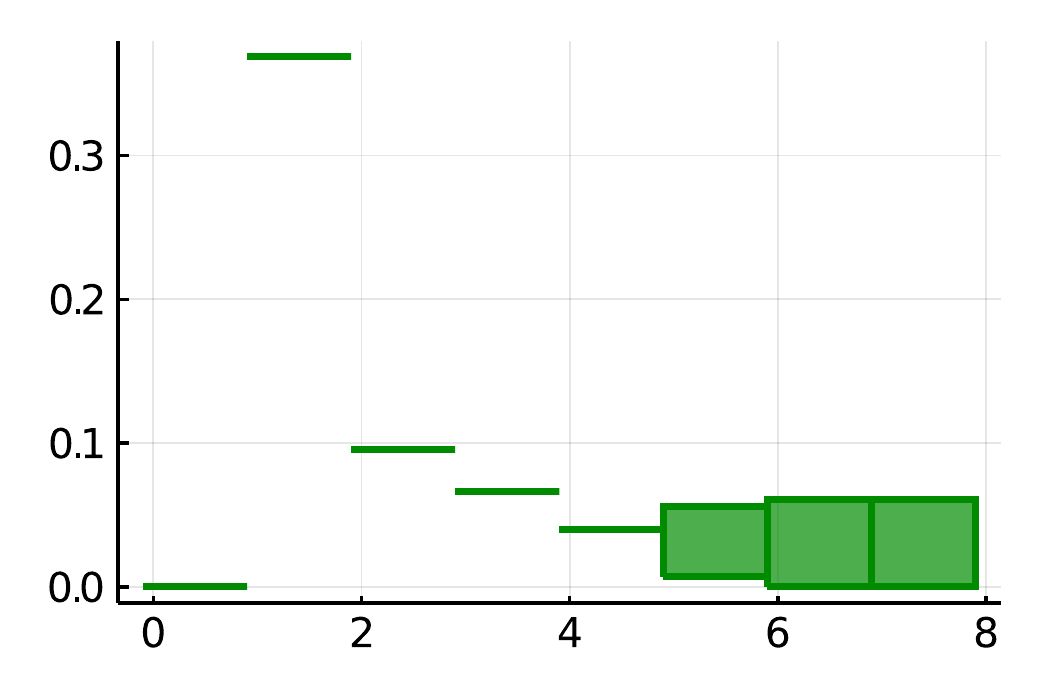}
		\vspace{-7mm}
		\subcaption{\texttt{cav-example-5}. Program taken from the PSI repository. PSI cannot handle this program  due to the unbounded loops (192 seconds).}\label{fig:psi-bounded-2}
	\end{subfigure}\hfill%
	\begin{subfigure}[t]{0.30\textwidth}
		\centering
		\includegraphics[width=\textwidth]{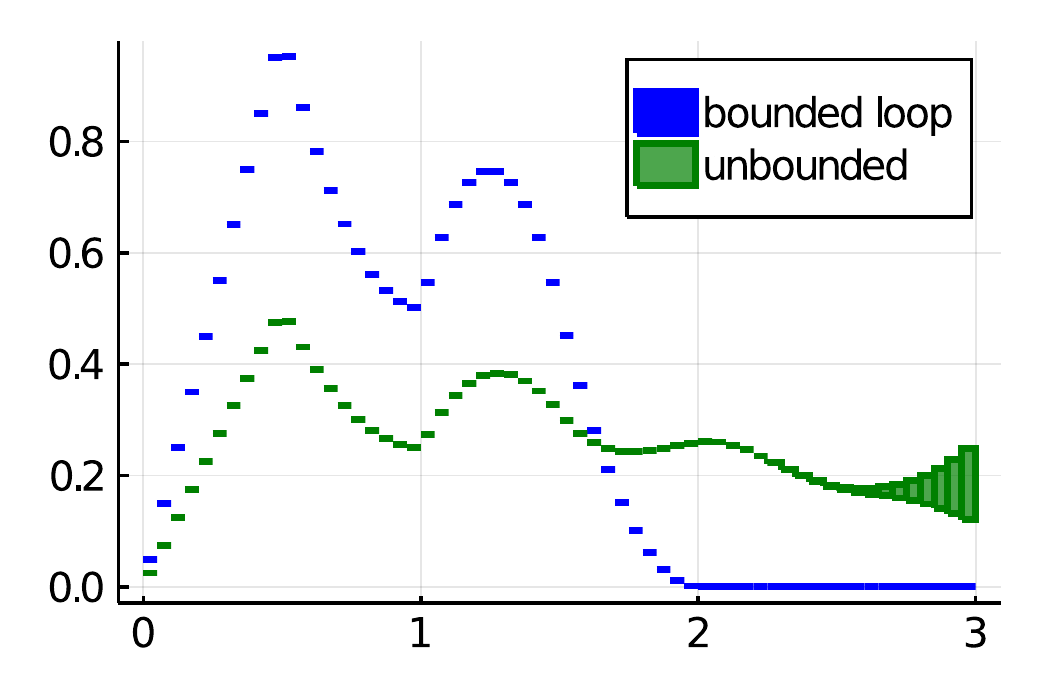}
		\vspace{-7mm}
		\subcaption{\texttt{add\_uniform\_with\_counter\_large}. Program taken from the PSI repository.
			GuBPI can handle the unbounded loop, whereas PSI unrolls to a fixed depth (21 seconds).}\label{fig:adduniforms}
	\end{subfigure}

	\begin{subfigure}[t]{0.30\textwidth}
		\centering
		\includegraphics[width=\textwidth]{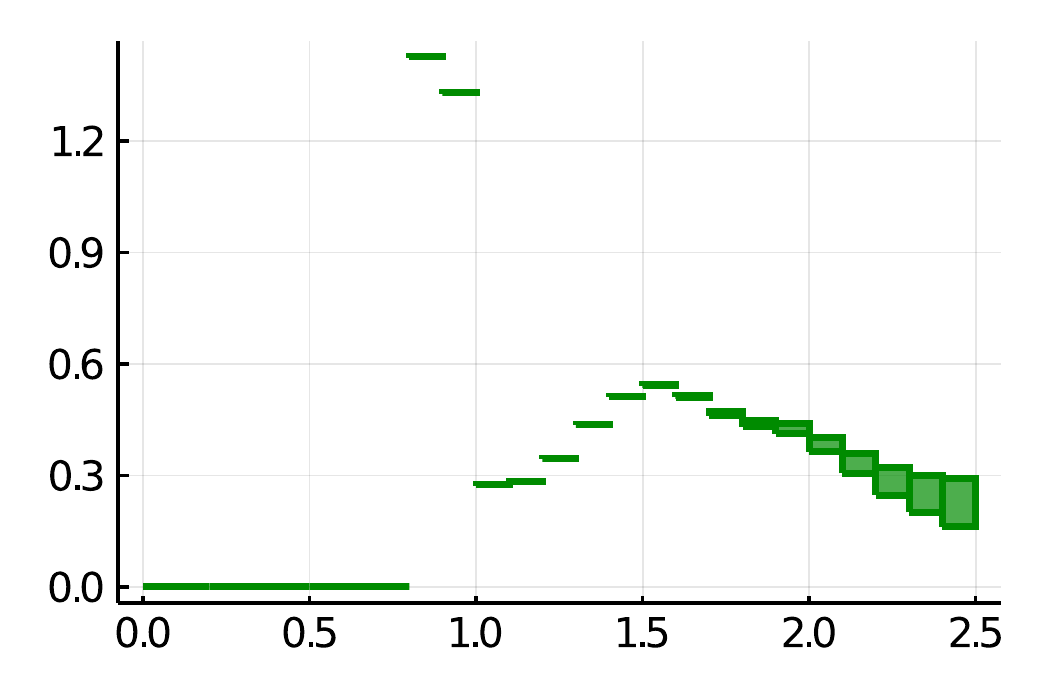}
		\vspace{-7mm}
		\subcaption{\texttt{random-box-walk}. The program models the cumulative distance traveled by a biased random walk.
			If a uniformly sampled step $s$ has length less than $\tfrac{1}{2}$, we move $s$ to the left, otherwise $s$ to the right.
			The walk stops when it crosses a threshold (167 seconds).}\label{fig:extra-recursive1}
	\end{subfigure}\hfill%
	\begin{subfigure}[t]{0.30\textwidth}
		\centering
		\includegraphics[width=\textwidth]{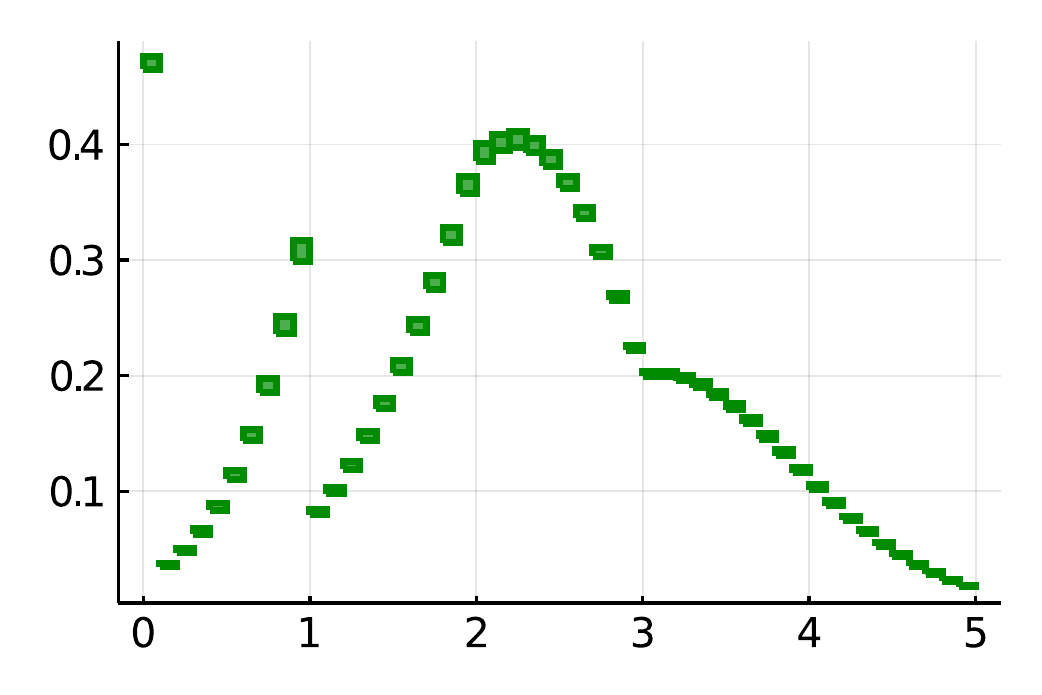}
		\vspace{-7mm}
		\subcaption{\texttt{growing-walk}. The program models a geometric random walk where (with increasing distance) the step size of the walk is increased.
			The cumulative distance is observed from a normal distribution centered at 3 (67 seconds).}\label{fig:extra-recursive2}
	\end{subfigure}\hfill%
	\begin{subfigure}[t]{0.30\textwidth}
		\centering
		\includegraphics[width=\textwidth]{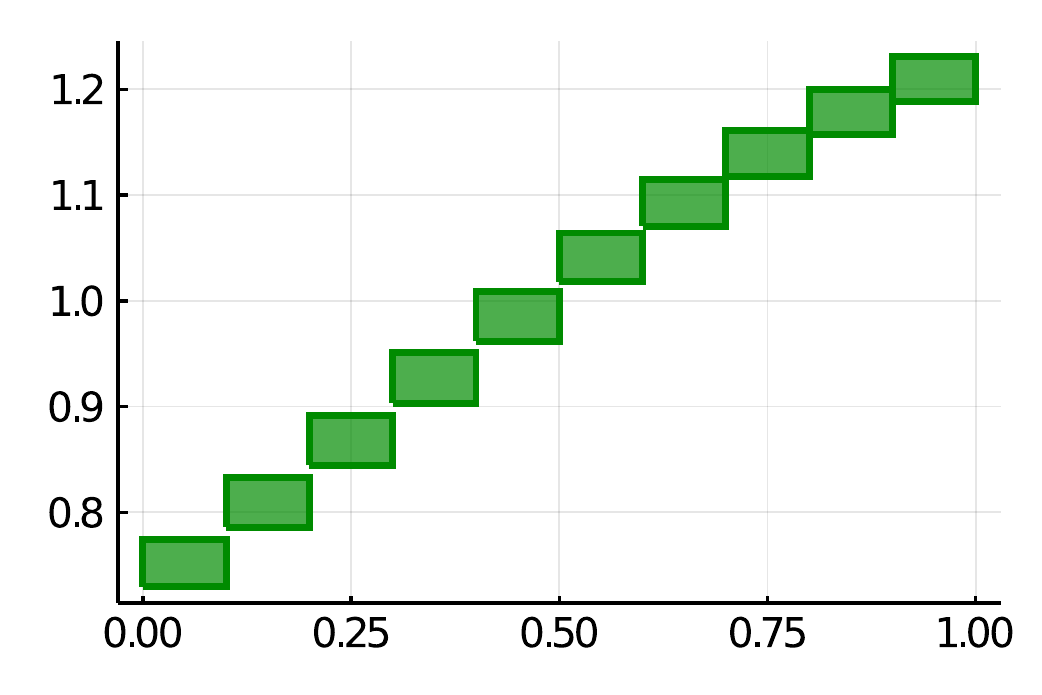}
		\vspace{-7mm}
		\subcaption{\texttt{param-estimation-recursive}. We sample a uniform prior $p$ and (in each step) travel to the left with probability $p$ and to the right with probability $(1-p)$. We observe the walk to come to a halt at location $1$  (observed from a normal) and wish to find the posterior on $p$ (162 seconds).
		}\label{fig:extra-recursive3}
	\end{subfigure}
	\vspace{-1mm}
	\caption{Guaranteed bounds computed by GuBPI for a selection of recursive models.}
	\label{fig:recursive-models}
\end{figure*}

\subsection{Exact Inference}

To evaluate our tool on instances that can be solved exactly, we compared it with PSI \cite{GehrMV16,GehrSV20}, a symbolic solver which can, in certain cases, compute a closed-form solution of the posterior.
We note that whenever exact inference is possible, exact solutions will always be superior to mere bounds and, due to the overhead of our semantics, will often be found faster.
Because of the different output formats (i.e.~exact results vs.~bounds), a direct comparison between exact methods and GuBPI is challenging.
As a consistency check, we collected benchmarks from the PSI repository where the output domain is finite and GuBPI can therefore compute \emph{exact} results (tight bounds).
They agree with PSI in all cases, which includes 8 of the 21 benchmarks from \cite{GehrMV16}.
We report the computation times in \cref{tab:psi-discrete}.

We then considered examples where GuBPI computes non-tight bounds.
For space reasons, we can only include a selection of examples in this paper.
The bounds computed by GuBPI and a short description of each example are shown in \cref{fig:psi-examples}.
We can see that, despite the relatively loose bounds, they are still useful and provide the user with a rough---and most importantly, \emph{guaranteed to be correct}---idea of the denotation.

The success of exact solvers such as PSI depends on the underlying symbolic solver (and the optimisations implemented).
Consequently, there are instances where the symbolic solver cannot compute a closed-form (integral-free) solution.
Conversely, while our method is (theoretically) applicable to a very broad class of programs, there exist programs where the symbolic solver finds solutions but the analysis in GuBPI becomes infeasible due to the large number of interval traces required.

\subsection{Recursive Models}

We also evaluated our tool on complex models that \emph{cannot} be handled by any of the existing methods.
For space reasons, we only give an overview of some examples.
Unexpectedly, we found recursive models in the PSI repository:
there are examples that are created by unrolling loops to a fixed depth.
This fixed unrolling changes the posterior of the model.
Using our method we can handle those examples \emph{without} bounding the loop.
Three such examples are shown in \cref{fig:psi-bounded-1,fig:psi-bounded-2,fig:adduniforms}. In \cref{fig:psi-bounded-1}, PSI bounds the iterations resulting in a spike at 10 (the unrolling bound).
For \cref{fig:psi-bounded-2}, PSI does not provide any solution whereas GuBPI provides useful bounds.
For \cref{fig:adduniforms}, PSI bounds the loop to compute results (displayed in blue) whereas GuBPI computes the green bounds on the unbounded program.
It is obvious that the bounds differ significantly, highlighting the impact that unrolling to a fixed depth can have on the denotation.
This again strengthens the claim that rigorous methods that can handle unbounded loops/recursion are needed.
There also exist unbounded discrete examples where PSI computes results for the bounded version that differ from the denotation of the unbounded program.
\cref{fig:extra-recursive1,fig:extra-recursive2,fig:extra-recursive3} depict further recursive examples (alongside a small description).

Lastly, as a \emph{very} challenging example, we consider the pedestrian example (\cref{ex:pedestrian}) again.
The bounds computed by GuBPI are given in \cref{fig:pedBounds} together with the two stochastic results from \cref{fig:pedestrian-stochastic}.
The bounds are clearly precise enough to rule out the HMC samples.
Since this example has infinite expected running time, it is very challenging and GuBPI takes about 1.5h (84 min).%
\footnote{While the running time seems high, we note that Pyro HMC took about an hour to generate $10^4$ samples and produce the (wrong) histogram. Diagnostic methods like simulation-based calibration took even longer (>30h) and delivered inconclusive results (see \cref{sec:sbc} for details).}
In fact, guaranteed bounds are the only method to recognise the wrong samples with certainty (see the next section for statistical methods).

\begin{figure}
    \centering
    \includegraphics[width=\columnwidth]{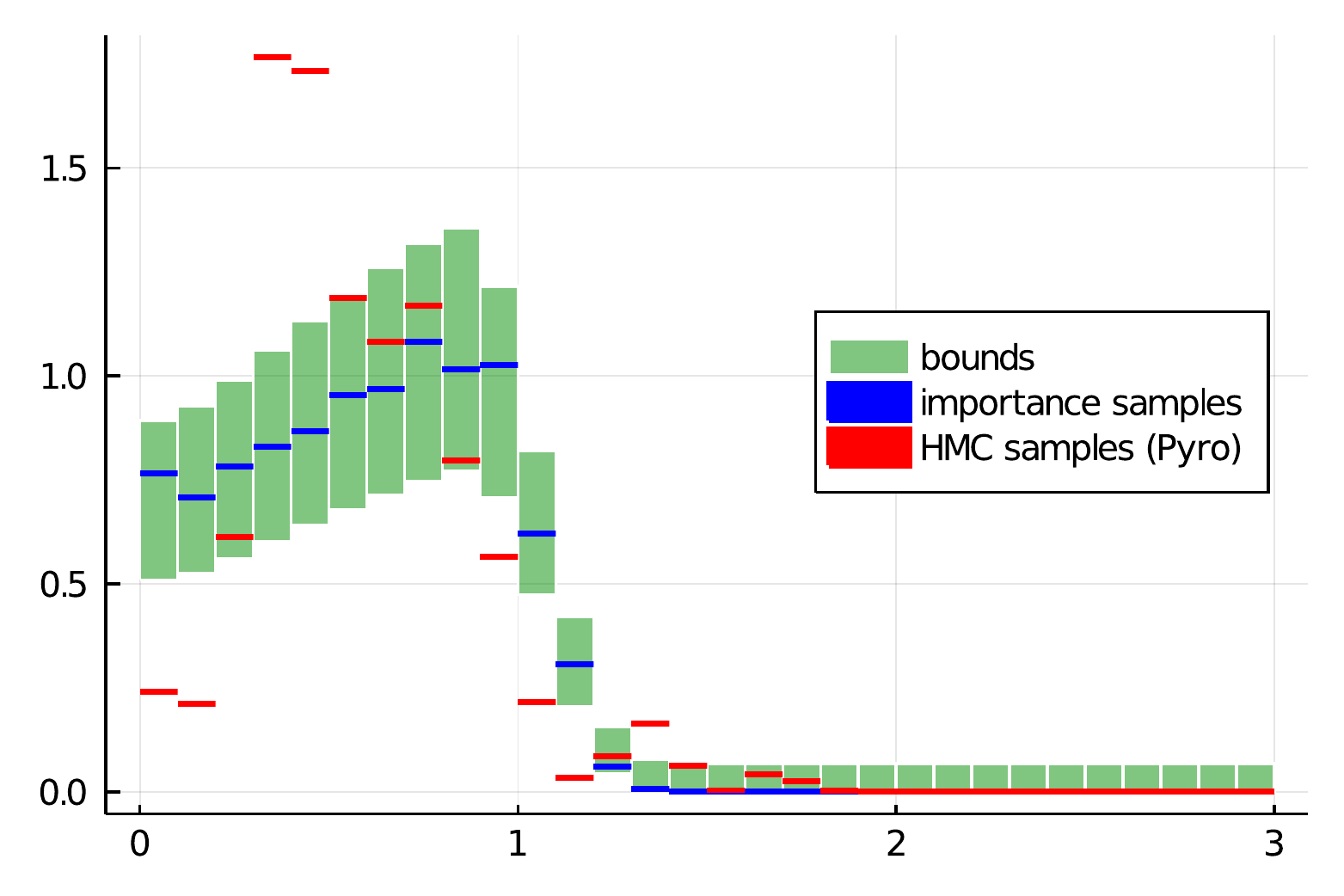}
    \vspace{-8mm}
    \caption{Bounds for the pedestrian example (\cref{ex:pedestrian}). }\label{fig:pedBounds}
\end{figure}

\begin{table}
	\caption{Running times of GuBPI and SBC for (Pyro's) HMC. Times are given in seconds (s) and hours (h).}\label{tab:SBC}
	\vspace{-3mm}
	\def\arraystretch{1.1}
	\begin{tabular}{lcc}
		\toprule
		\textbf{Instance} & $\boldsymbol{t}_\mathit{GuBPI}$  & $\boldsymbol{t}_\mathit{SBC}$  \\

		\cmidrule[0.7pt](r{1mm}){1-1}
		\cmidrule[0.7pt](l{0.5mm}r{1mm}){2-2}
		\cmidrule[0.7pt](l{0.5mm}){3-3}

		Binary GMM (1-dimensional) (\cref{fig:gmm}) & 39s & 1h \\
		Binary GMM (2-dimensional)  & 4h  & 1.5h   \\
		Pedestrian Example (\cref{fig:pedBounds})& 1.5h & >300h  \\
		\bottomrule
	\end{tabular}
\end{table}

\subsection{Comparison with Statistical Validation}
\label{sec:sbc}

A general approach to validating inference algorithms for a generative Bayesian model is \emph{simulation-based calibration} (SBC) \cite{SBC,cook2006validation}.
SBC draws a sample $\theta$ from the prior distribution of the parameters, generates data $y$ for these parameters, and runs the inference algorithm to produce posterior samples $\theta_1, \dots, \theta_L$ given $y$.
If the posterior samples follow the true posterior distribution, the rank statistic of the prior sample $\theta$ relative to the posterior samples will be uniformly distributed.
If the empirical distribution of the rank statistic after many such simulations is non-uniform, this indicates a problem with the inference.
While SBC is very general, it is computationally expensive because it performs inference in every simulation.
Moreover, as SBC is a stochastic validation approach, any fixed number of samples may fail to diagnose inference errors that only occur on a very low probability region.

We compare the running times of GuBPI and SBC for three examples where Pyro's HMC yields wrong results (\cref{tab:SBC}).
Running SBC on the pedestrian example (with a reduced sample size and using the parameters recommended in \cite{SBC}) took 32 hours and was still inconclusive because of strong autocorrelation.
Reducing the latter via thinning requires more samples, and would increase the running time to >300 hours.
Similarly, GuBPI diagnoses the problem with the mixture model in \cref{fig:gmm} in significantly less time than SBC.
However, for higher-dimensional versions of this mixture model, SBC clearly outperforms GuBPI.
We give a more detailed discussion of SBC for these examples in \ifFull{\cref{sec:sbc-experiments}}.

\subsection{Limitations and Future Improvements}

The theoretical foundations of our interval-based semantics ensure that GuBPI is applicable to a very broad class of programs (cf. \cref{sec:4intervals-theory}).
In practice, as usual for exact methods, GuBPI does not handle all examples equally well.

Firstly, as we already saw in \cref{sec:evalEsti}, the symbolic execution---which forms the entry point of the analysis---suffers from path explosion.
On some extreme loop/recursion-free programs (such as example-ckd-epi from \cite{SankaranarayananCG13}), our tool cannot compute all (finitely many) symbolic paths in reasonable time, let alone analyse them in our semantics.
Extending the approach from \cite{SankaranarayananCG13}, to sample representative program paths (in the presence of conditioning), is an interesting future direction that we can combine with the rigorous analysis provided by our interval type system.

Secondly, our interval-based semantics imposes bounds on each sampled variable and thus scales exponentially with the dimension of the model; this is amplified in the case where the optimised semantics (\cref{sec:linOpt}) is not applicable.
It would be interesting to explore whether this can be alleviated using different trace splitting techniques.

Lastly, the bounds inferred by our interval type system take the form of a single interval with no information about the exact distribution on that interval.
For example, the most precise bound derivable for the term $\mu^\varphi_x. x \oplus \big[\varphi (x + \sample) \oplus \varphi (x - \sample)\big]$ is \scalebox{0.6}{$[a, b]\to \exType{[-\infty, \infty]}{[1, 1]}$} for any $a, b$.
After unrolling to a fixed depth, the approximation of the paths not terminating within the fixed depth is therefore imprecise.
For future work, it would be interesting to improve the bounds in our type system to provide more information about the distribution by means of rigorous approximations of the denotation of the fixpoint in question (i.e.~a probabilistic summary of the fixpoint \cite{WangHR18,Muller-OlmS04,PodelskiSW05}).

% !TEX root = ./main.tex

\section{Related Work}
\label{sec:related}

\paragraph{Interval trace semantics and Interval SPCF}
Our interval trace semantics to compute bounds on the denotation is similar to the semantics introduced by \citet{BeutnerO21}, who study an interval approximation to obtain \emph{lower} bounds on the termination probability.
By contrast, we study the more challenging problem of bounding the program denotation which requires us to track the weight of an execution, and to prove that the denotation approximates a Lebesgue integral, which requires novel proof ideas.
Moreover, whereas the termination probability of a program is always upper bounded by $1$, here we derive both lower and \emph{upper} bounds.

\paragraph{Probability estimation}

\citet{SankaranarayananCG13} introduced a static analysis framework to infer bounds on a class of definable events in (\emph{score-free}) probabilistic programs.
The idea of their approach is that if we find a finite set $\traceset$ of symbolic traces with cumulative probability at least $1-c$, and a given event $\phi$ occurs with probability at most $b$ on the traces in $\traceset$, then $\phi$ occurs with probability at most $b + c$ on the entire program.
In the presence of conditioning, the problem becomes vastly more difficult, as the aggregate weight on the unexplored paths can be unbounded, giving $\infty$ as the only derivable upper bound (and therefore also $\infty$ as the best upper bound on the normalising constant).
In order to infer guaranteed bounds, it is necessary to analyse \emph{all} paths in the program, which we accomplish via static analysis and in particular our interval type system.
The approach from \cite{SankaranarayananCG13} was extended by \citet{AlbarghouthiDDN17} to compute the probability of events defined by arbitrary SMT constraints but is restricted to score-free and non-recursive programs.
Our interval-based approach, which may be seen as a variant of theirs, is founded on a complete semantics (\cref{thm:Completeness of interval approximations}), can handle recursive program with (soft) scoring, and is applicable to a broad class of primitive functions.

Note that we consider programs with \emph{soft} conditioning in which scoring cannot be reduced to volume computation directly.%
\footnote{For programs including only hard-conditioning (i.e.~scoring is only possible with $0$ or $1$), the posterior probability of an event $\varphi$ can be computed by dividing the probability of all traces with weight $1$ on which $\varphi$ holds by the probability of all traces with weight $1$.}
Intuitively, soft conditioning performs a (global) re-weighting of the set of traces, which cannot be captured by (local) volume computations.
In our interval trace semantics, we instead track an approximation of the weight along each interval trace.

\paragraph{Exact inference}
There are numerous approaches to inferring the exact denotation of a probabilistic program.
\citet{HoltzenBM20} introduced an inference method to efficiently compute the denotation of programs with discrete distributions.
By exploiting program structure to factorise inference, their system Dice can perform exact inference on programs with hundreds of thousands of random variables.
\citet{GehrMV16} introduced PSI, an exact inference system that uses symbolic manipulation and integration.
A later extension, $\lambda$PSI \cite{GehrSV20}, adds support for higher-order functions and nested inference.
The PPL Hakaru \cite{NarayananCRSZ16} supports a variety of inference algorithms on programs with both  discrete and continuous distributions.
Using program transformation and partial evaluation, Hakaru can perform exact inference via symbolic disintegration \cite{ShanR17} on a limited class of programs.
\citet{SaadRM21} introduced SPPL, a system that can compute exact answers to a range of probabilistic inference queries, by translating a restricted class of programs to sum-product expressions, which are highly effective representations for inference.

While exact results are obviously desirable, this kind of inference only works for a restricted family of programs: none of the above exact inference systems allow (unbounded) recursion.
Unlike our tool, they are therefore unable to handle, for instance, the challenging \cref{ex:pedestrian} or the programs in \cref{fig:recursive-models}.

\paragraph{Abstract interpretation}
There are various approaches to probabilistic abstract interpretation, so we can only discuss a selection.
\citet{Monniaux00, Monniaux01} developed an abstract domain for (score-free) probabilistic programs given by a weighted sum of abstract regions.
\citet{Smith08} considered truncated normal distributions as an abstract domain and developed analyses restricted to score-free programs with only linear expressions.
Extending both approaches to support soft conditioning is non-trivial as it requires the computation of integrals on the abstract regions.
In our interval-based semantics, we abstract the concrete traces (by means of interval traces) and not the denotation.
This allows us to derive bounds on the weight along the abstracted paths.

\citet{HuangDM21} discretise the domain of continuous samples into interval cubes and derive posterior approximations on each cube.
The resulting approximation converges to the true posterior (similarly to approximate/stochastic methods) but does not provide exact/guaranteed bounds and is not applicable to recursive programs.

\paragraph{Refinement types}
Our interval type system (\cref{sec:5interval-analysis}) may be viewed as a type system that refines not just the value of an expression but also its weight \cite{FreemanP91}.
To our knowledge, no existing type refinement system can bound the weight of program executions.
Moreover, the seamless integration with our interval trace semantics by design allows for much cheaper type inference, without resorting to an SMT or Horn constraint solver.
This is a crucial advantage since a typical GuBPI execution queries the analysis numerous times.

\paragraph{Stochastic methods}
A general approach to validating inference algorithms for a generative Bayesian model is \emph{simulation-based calibration} (SBC) \cite{SBC,cook2006validation}, discussed in \cref{sec:sbc}.
\citet{Grosse2015} introduced a method to estimate the log marginal likelihood of a model by constructing stochastic lower/upper bounds.
They show that the true value can be sandwiched between these two stochastic bounds with high probability.
In closely related work \cite{Cusumano-Towner2017b,Grosse2016a}, this was applied to measure the accuracy of approximate probabilistic inference algorithms on a specified dataset.
By contrast, our bounds are non-stochastic and our method is applicable to arbitrary programs of a universal PPL.

\section{Conclusion}

We have studied the problem of inferring guaranteed bounds on the posterior of programs written in a universal PPL.
Our work is based on the interval trace semantics, and our weight-aware interval type system gives rise to a tool that can infer useful bounds on the posterior of interesting recursive programs.
This is a capability beyond the reach of existing methods, such as exact inference.
As a method of Bayesian inference for statistical probabilistic programs, we can view our framework as occupying a useful middle ground between approximate stochastic inference and exact inference.

\balance

%% Bibliography
{
\interlinepenalty=10000 % to prevent page breaks in citations leading to problems with hyperref
\bibliography{references}
}

%% Appendix
\iffullversion
\clearpage
\appendix
\allowdisplaybreaks
% !TEX root = ./main.tex

\section{Supplementary Material for \texorpdfstring{\cref{sec:3intervals}}{Section 3}}
\label{app:sec4}

\subsection{Intervals as a lattice}

Intervals $\iv$ form a partially ordered set under interval inclusion ($\sqsubseteq$).
We will sometimes need the meet $\sqcap$ and join $\sqcup$ operations, corresponding to the greatest lower bound and the least upper bound of two intervals.
Note that the meet of two intervals does not exist if the two intervals are disjoint.
Concretely, these two operations are given by $[a,b] \sqcap [c,d] := [\max(a,c), \min(b,d)]$ (if the two intervals overlap) and $[a,b] \sqcup [c,d] := [\min(a,c), \max(b,d)]$.

For some applications (e.g. the interval type system), we need the interval domain to be a true lattice.
To turn $\iv$ into a lattice, we add a bottom element $\bot$ (signifying an empty interval).
The definition of the meet $\sqcap$ and join $\sqcup$ is extended in the natural way.
The meet $\sqcap$ is extended by defining $I_1 \sqcap \bot = \bot \sqcap I_2 = \bot$ and $I_1 \sqcap I_2 = \bot$ if the two intervals $I_1, I_2 \in \iv$ are disjoint.
The join $\sqcup$ satisfies $I \sqcup \bot = \bot \sqcup I = I$.

\subsection{Lifting Functions to Intervals}

For constants $c \in \RR$ (i.e.~nullary functions), for common functions like $+$, $-$, $\times$, $|\cdot|$, $\min$, $\max$, for monotonically increasing functions $f_\nearrow: \RR \to \RR$, and for monotonically decreasing functions $f_\searrow: \RR \to \RR$, it is easy to describe the interval-lifted functions $+^\iv$, $-^\iv$, $\times^\iv$, ${|\cdot|}^\iv$, $\min^\iv$, $\max^\iv$, $f_\nearrow^\iv$, and $f_\searrow^\iv$:\\
\scalebox{0.951}{\parbox{\linewidth}{
\begin{align*}
c^\iv &= [c, c] \\
-^\iv [x_1,y_1] &= [-y_1, -x_1] \\
|[x_1, y_1]|^\iv &= \begin{cases}
[0, \max(|x_1|,|y_1|)] \text{ if $x_1 \le 0 \le y_1$} \\
[\min(|x_1|,|y_1|), \max(|x_1|,|y_1|)] \text{ else}
\end{cases} \\
[x_1,y_1] +^\iv [x_2,y_2] &= [x_1 + x_2, y_1 + y_2] \\
[x_1,y_1] -^\iv [x_2,y_2] &= [x_1 - y_2, y_1 - x_2] \\
[x_1,y_1] \times^\iv [x_2,y_2] &= [\min(x_1x_2, x_1y_2,y_1x_2,y_1y_2), \\
&\qquad \max(x_1x_2, x_1y_2,y_1x_2,y_1y_2)] \\
{\min}^\iv([x_1,y_1],[x_2,y_2]) &= [\min(x_1,x_2), \min(y_1,y_2)] \\
{\max}^\iv([x_1,y_1],[x_n,y_n]) &= [\max(x_1,x_2), \max(y_1,y_2)] \\
f_\nearrow^\iv([x_1,y_1]) &= [f_\nearrow(x_1), f_\nearrow(y_1)] \\
f_\searrow^\iv([x_1,y_1]) &= [f_\searrow(y_1), f_\searrow(x_1)]
\end{align*}
}}\\
where we write $f(\pm\infty)$ for $\lim_{x\to\pm\infty} f(x) \in \RR_\infty$, respectively.

\subsection{Properties of Interval Reduction}

We can define a refinement relation $M \refines M'$ (``$M$ refines $M'$'') between a standard term $M$ and an interval term $M'$, if $M$ is obtained from $M'$ by replacing every occurrence of $\lit{[a,b]}$ with some $r \in [a,b]$.

\lemIntervalApproximation*
\begin{proof}
	If the interval reduction $\to_\iv$ gets stuck, $\weightSem P^\iv$ is $[0,\infty]$ and $\valueSem P^\iv$ is $[-\infty,\infty]$, so the claim is certainly true.
	Otherwise, for each $(M_\iv, \ivtr, w_\iv) \to_\iv (M'_\iv, \ivtr', w'_\iv)$ reduction step, we can do a reduction step $(M, \tr, w) \to (M', \tr', w')$ where $M' \refines M'_\iv$ and $w' \in w'_\iv$, and $\tr' \refines \ivtr'$ if $M \refines M_\iv$, $w \in w_\iv$, and $\tr \refines \ivtr$.
	Since the reduction doesn't get stuck, we end up with a value $\lit r \refines \lit{[a,b]}$, so $\valueSem P(\tr) = r \in [a,b] = \valueSem P^\iv(\ivtr)$.
\end{proof}

\subsection{Additional Possible Reduction Rules}

\label{app:sec-additional-reduction-rules}

The interval semantics as presented has the unfortunate property that even a simple program like
$$\ifElse{\sample}{\allowbreak\score(0)}{\score(1)}$$
requires infinitely many interval traces to achieve a finite upper bound.
The reason is that the right branch $\score(1)$ is taken if the sampled value is in the open interval $(0,1]$.
To approximate this using closed intervals $[a,b]$ that our analysis supports, we need infinitely many intervals, e.g.~$\{ [2^{-n-1}, 2^{-n}] \mid n \in \NN \}$.
Adding (half\nobreakdash-)open intervals to the semantics would solve this specific problem, but not more general ones, where the guard condition is for example
$$\sample- \sample \le 0.$$
In that case, we have to approximate the set $\{ (x,y) \in [0,1]^2
 \mid x \le y \}$.
For the lower bounds, that is not an issue, but for the upper bounds, we need an infinite number of interval traces again.
We would like to use the interval traces $\langle [0,\frac12], [0, \frac12] \rangle$ and $\langle [\frac12, 1], [0, 1] \rangle$ to cover this set, but the reduction gets stuck on them because it is not clear which branch should be taken.

To remedy this, we could add the following two rules.
\begin{prooftree}
\AxiomC{$a \le 0 < b$}
\UnaryInfC{$\intConf{ \ifElse{\lit{[a, b]}}{N}{P}, \ivtr, w} \to_\iv \intConf {N, \ivtr, w \times^\iv [0,1]}$}
\end{prooftree}
\begin{prooftree}
\AxiomC{$a \le 0 < b$}
\UnaryInfC{$\intConf{ \ifElse{\lit{[a, b]}}{N}{P}, \ivtr, w} \to_\iv \intConf {P, \ivtr, w \times^\iv [0,1]}$}
\end{prooftree}
They basically express that if the interval bounds are not precise enough to decide what branch to take, we can take both, but have to allow the weight to be zero because it's not guaranteed that the taken branch can actually happen.
This change can only improve the upper bounds, not the lower bounds because the lower bound on each weight is zero if the additional rules are used.
Then the definition of upper bound can be modified in the following way:
\begin{center}
	\scalebox{0.95}{\parbox{\linewidth}{
			\begin{align*}
				\upperBound P^\traceset(U) := \sum_{\ivtr \in \traceset} \sum_{\substack{(P, \ivtr, [1,1]) \to_\iv \\ (\lit{[a,b]}, \langle\rangle, [w_1,w_2])}}\!\!\!\!\!\!\!\volume(\ivtr) \cdot w_2 \cdot \big[[a,b] \cap U \ne \emptyset\big]
			\end{align*}
	}}
\end{center}

This is the strategy we use for our implementation and is a natural extension of the existing semantics: it requires very few changes to the soundness and completeness proofs.

A downside of the previous approach is that the bounds are not always very tight: for the term $\ifCond{\dots}{\score(50)}{\score(100)}$, it returns bounds $[0, 150]$ instead of $[50, 100]$.
To improve this, we could omit the multiplication with $[0,1]$.
\begin{prooftree}
\AxiomC{$a \le 0 < b$}
\UnaryInfC{$\intConf{ \ifElse{\lit{[a, b]}}{N}{P}, \ivtr, w} \to_\iv \intConf {N, \ivtr, w}$}
\end{prooftree}
\begin{prooftree}
\AxiomC{$a \le 0 < b$}
\UnaryInfC{$\intConf{ \ifElse{\lit{[a, b]}}{N}{P}, \ivtr, w} \to_\iv \intConf {P, \ivtr, w}$}
\end{prooftree}
However, this complicates the equations of our bounds.
With this semantics, we have to compute minima and suprema instead of a simple sum:
\begin{center}
	\scalebox{0.95}{\parbox{\linewidth}{
			\begin{align*}
				\lowerBound P^\traceset (U) &:= \sum_{\ivtr \in \traceset} \min_{\substack{(P, \ivtr, [1,1]) \to_\iv \\ (\lit{[a,b]}, \langle\rangle, [w_1,w_2])}} \!\!\!\!\!\! \volume(\ivtr) \cdot w_1 \cdot \big[[a,b] \subseteq U\big] \\
				\upperBound P^\traceset(U) &:= \sum_{\ivtr \in \traceset} \sup_{\substack{(P, \ivtr, [1,1]) \to_\iv \\ (\lit{[a,b]}, \langle\rangle, [w_1,w_2])}}\!\!\!\!\!\! \volume(\ivtr) \cdot w_2 \cdot \big[[a,b] \cap U \ne \emptyset\big]
			\end{align*}
	}}
\end{center}
This is harder to implement because the sums cannot be computed incrementally, but many temporary results have to be kept in memory to compute the minima and suprema.
Proving soundness and completeness for this would require more substantial changes to the proofs.

% !TEX root = ./main.tex

\section{Symbolic Execution}
\label{app:sec-symbolic}

In this section we formally introduce stochastic symbolic execution.
We make use of this form of symbolic execution in two separate ways.
First, our completeness proof hinges on guarantees provided by the symbolic execution in order to identify a suitable set of interval traces.
Second, our tool GuBPI relies on the symbolic execution as a first step in the program analysis, in order to identify relevant paths and independent subexpressions, and to avoid repeated evaluation in a small-step semantics.

\paragraph{High-level idea}

The overarching idea of symbolic execution is to postpone the evaluation of $\sample$ expressions and instead use a \emph{sample variable} to symbolically represent its outcome.
As a consequence, branching and scoring steps cannot be executed concretely, so we record them symbolically instead.

\paragraph{Symbolic terms}

To postpone concrete sample decisions we introduce \emph{sample variables} $\alpha_1, \alpha_2, \dots$ into our language.
We then define \emph{symbolic terms} and \emph{symbolic values} by extending interval terms and values by adding two new constructs:
every sample variable $\alpha_j$ is a symbolic value and for every primitive function $f$ and symbolic values $\calV_1, \dots, \calV_{|f|}$, the symbolic term $\lit f(\calV_1, \dots, \calV_{|f|})$ is a symbolic value, denoting a function application that is postponed until all sample variables are instantiated.
We denote symbolic terms by $\calM, \calN, \calP$ and symbolic values by $\calV, \calW$.
Formally we define
\begin{align*}
    \calV &:= x \mid \lit r \mid \lambda x. \calM \mid \fixLam \varphi x \calM \mid \alpha_i \mid \lit f(\calV_1, \dots, \calV_{|f|})\\
    \calM, \calN, \calP &:= \calV \mid \calM \calN \mid \ifSimple \calM \calN \calP \mid \lit f(\calM_1, \dots, \calM_{|f|})\\
    &\quad\quad\mid \sample \mid \score(\calM)
\end{align*}%
The definition of redex and evaluation context extends naturally (recall that we regard $\alpha_j$ as a value).

%The idea of symbolic is now to extend a program but substitute a fresh sample variable for each sample during execution.
%As we leave sample-outcomes unresolved, we can now longer evolute primitive function applications or resolve branch and score constructs.
%Whenever we evolute a primitive operation, we instead introduce a symbolic function $\lit{f}(\calV_1, \dots, \calV_{|f|})$. For branch and conditionals, we record the corresponding branching and scoring expression as symbolic values.

\begin{figure}[!t]
    \small
    \begin{prooftree}
        \AxiomC{}
        \UnaryInfC{$\symConf{(\lambda x.  \calM) \calV, n, \Delta, \Xi} \toSym \symConf{\calM[\calV/x], n, \Delta, \Xi}$}
    \end{prooftree}

    \begin{prooftree}
        \AxiomC{}
        \UnaryInfC{$\symConf{\score(\calV), n, \Delta, \Xi} \toSym \symConf{\calV, n, \Delta \cup \{\calV \geq 0 \}, \Xi \cup \{\calV\}}$}
    \end{prooftree}

    \begin{prooftree}
        \AxiomC{}
        \UnaryInfC{$\symConf{(\fixLam{\varphi}{x} \calM) \calV, n, \Delta, \Xi} \toSym \symConf{\calM[\calV/x, (\fixLam{\varphi}{x} \calM)/\varphi], n, \Delta, \Xi}$}
    \end{prooftree}

    \begin{prooftree}
        \AxiomC{}
        \UnaryInfC{$\symConf{\sample, n, \Delta, \Xi} \toSym \symConf{\alpha_{n+1}, n+1, \Delta, \Xi}$}
    \end{prooftree}

    \begin{prooftree}
        \AxiomC{}
        \UnaryInfC{$\symConf{\ifElse{\calV}{\calN}{\calP}), n, \Delta, \Xi} \toSym \symConf{\calN, n, \Delta \cup \{\calV \leq 0\}, \Xi}$}
    \end{prooftree}

    \begin{prooftree}
        \AxiomC{}
        \UnaryInfC{$\symConf{\ifElse{\calV}{\calN}{\calP}), n, \Delta, \Xi} \toSym \symConf{\calP, n, \Delta \cup \{\calV > 0\}, \Xi }$}
    \end{prooftree}

    \begin{prooftree}
        \AxiomC{$\symConf{\calR, n, \Delta, \Xi} \toSym \symConf{\calM, n', \Delta', \Xi'}$}
        \UnaryInfC{$\symConf{\calE[\calR], n, \Delta, \Xi} \toSym \symConf{\calE[\calM], n', \Delta', \Xi'} $}
    \end{prooftree}
    \caption{Reduction rules for symbolic execution} \label{fig:symbolicSemantics}
\end{figure}

\paragraph{Symbolic execution}

A symbolic constraint is a pair $(\calV \bowtie r)$ where $\calV$ is a symbolic value, ${\bowtie} \in \{\leq, < , \geq, >\}$ and $r \in \real$.
A \defn{symbolic configuration} has the form $\psi = \symConf{\calM, n, {\Delta, \Xi}}$ where $\calM$ is a symbolic term,
$n \in \natnum$ a natural number (used to obtain fresh sample variables),
$\Delta$ a set of symbolic constraints (which track the symbolic conditions on the current execution path),
and $\Xi$ is a set of symbolic values (which records all symbolic values scored on the current path).
When executing symbolically:
(1) we evaluate each $\sample$ to a fresh sample variable, (2) we postpone function application, (3) for each conditional, we explore both branches (our reduction is nondeterministic) and record the symbolic inequalities that must hold along the current path, and (4) we record the symbolic values that we scored with.
We give the reduction rules in \cref{fig:symbolicSemantics}.

We call a tuple $\Psi = \symPath{\calV, n,\Delta, \Xi}$ (a symbolic configuration where the symbolic term is a value) a \defn{symbolic path}.
For a symbolic configuration $\psi$, we write $\mathit{symPaths}(\psi)$ for the set of symbolic paths reached when evaluating from $\psi$.
Note that $\mathit{symPaths}(\psi)$ is countable.

Let $\calV$ be a symbolic value of type $\typeReal$ (no $\lambda$-abstraction or fixed point) with sample variables within $\{\alpha_1, \dots, \alpha_n\}$.
For a trace $\tr = \langle r_1,\dots,r_n \rangle \in [0,1]^n$, we define $\calV[\tr/\overline \alpha]$ as the value (in $\RR$) obtained by replacing the sample variables in $\overline{\alpha}$ with $\tr$ and evaluate the postponed primitive function applications.
For a symbolic path $\Psi = \symPath{\calV, n,\Delta, \Xi}$, we define $\llbracket \Psi \rrbracket(U)$ as\\
\scalebox{1}{\parbox{\linewidth}{
		\begin{align*}
			\int_{[0,1]^n} \!
			\big[\calV[\tr/\overline \alpha] \in U\big] \!\!\! \prod_{\calC \bowtie r \in \Delta} \!\!\!\! \big[\calC[\tr/\overline \alpha] \bowtie r\big]
			\prod_{\calW \in \Xi} \!\!\calW[\tr/\overline \alpha]
			\D\tr.
		\end{align*}%
}}

\paragraph{Solution to symbolic constraints}

To simplify notation (and avoid extensive use of Iverson brackets) we introduce notation for the set of traces that satisfy a set of symbolic constraints.
Given a set of symbolic constraints $\Delta$ with sample variables contained in $\{ \alpha_1, \dots, \alpha_n \}$ we define
\[ \Sat_n(\Delta) := \bigcap_{(\calV \bowtie r) \in \Delta} \{ \tr \in [0,1]^n \mid \calV[\tr/\overline \alpha] \bowtie r \} \]
as the set of actual values for the sample variables that satisfy all constraints.
It follows immediately from the definitions that we can replace the Iverson brackets in $\llbracket \Psi \rrbracket$ by directly restricting the integral to the traces in $\Sat_n(\Delta)$.

\begin{lemma}
	For any symbolic path $\symPath{\calV, n,\Delta, \Xi}$ and any $U \in \Sigma_\RR$ we have
	\begin{align*}
		\llbracket \symPath{\calV, n,\Delta, \Xi} \rrbracket(U) = \int_{\Sat_n(\Delta)}
		[\calV[\tr/\overline \alpha] \in U]
		\prod_{\calW \in \Xi} \calW[\tr/\overline \alpha]
		\D \tr.
	\end{align*}
	\label{lem:path-denotation-vs-satn}
\end{lemma}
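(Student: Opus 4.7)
The plan is a direct unfolding of definitions, using two elementary identities.

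First, I would observe the basic identity for Iverson brackets: a product of indicators equals the indicator of the conjunction. Concretely, for the finite set $\Delta$ of symbolic constraints and a trace $\tr \in [0,1]^n$,
\begin{align*}
\prod_{(\calC \bowtie r) \in \Delta} \big[\calC[\tr/\overline\alpha] \bowtie r\big]
  &= \big[\forall (\calC \bowtie r) \in \Delta.\; \calC[\tr/\overline\alpha] \bowtie r\big]\\
  &= \big[\tr \in \Sat_n(\Delta)\big],
\end{align*}
where the last equality is precisely the definition of $\Sat_n(\Delta)$ as the intersection $\bigcap_{(\calC \bowtie r) \in \Delta} \{\tr \in [0,1]^n \mid \calC[\tr/\overline\alpha] \bowtie r\}$.

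Second, I would apply the standard relationship between bounded integrals and indicator-weighted integrals: for any measurable function $g$ on $[0,1]^n$ and measurable $A \subseteq [0,1]^n$,
\begin{align*}
\int_{[0,1]^n} [\tr \in A] \cdot g(\tr) \D \tr = \int_A g(\tr) \D \tr.
\end{align*}
This is exactly the definition of the bounded integral recalled in the probability-theory preliminaries of the paper. Taking $A := \Sat_n(\Delta)$ and $g(\tr) := [\calV[\tr/\overline\alpha] \in U] \prod_{\calW \in \Xi} \calW[\tr/\overline\alpha]$, and substituting the first identity into the definition of $\llbracket \Psi \rrbracket(U)$ given just above the lemma, yields the claim.

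The only point deserving a brief check (rather than an obstacle) is measurability: $\Sat_n(\Delta)$ must be a Borel subset of $[0,1]^n$ so that the bounded integral is well-defined. This holds because each symbolic value $\calC$ built from primitive (measurable) functions and sample variables defines a measurable function of $\tr$, so each set $\{\tr \mid \calC[\tr/\overline\alpha] \bowtie r\}$ is measurable, and $\Sat_n(\Delta)$ is the intersection of finitely many such sets (the set $\Delta$ accumulated along a single symbolic reduction sequence is finite by construction). Thus no nontrivial analytic step is needed; the lemma is pure bookkeeping, which is also why the paper states it as an immediate consequence of the definitions.
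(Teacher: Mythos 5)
Your proposal is correct and matches the paper's treatment: the paper states this lemma without a displayed proof, remarking only that it ``follows immediately from the definitions'' by replacing the product of Iverson brackets with a restriction of the integral to $\Sat_n(\Delta)$, which is exactly the two-step unfolding you give. Your added measurability remark is a reasonable (and harmless) extra check that the paper leaves implicit.
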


\paragraph{Correctness of symbolic execution}

We can now establish a correspondence between symbolic execution and the ordinary reduction.
If we wish to symbolically analyse a term $P$, we consider the (symbolic) reductions starting from  $\symConf{P, 0,\emptyset, \emptyset}$, resulting in the symbolic paths $\mathit{symPaths}\symConf{P, 0,\emptyset, \emptyset}$.

\begin{lemma}\label{lem:helper1}
Let $\vdash P : \typeReal$ and suppose we have $(\calV, n, \Delta, \Xi) \in \mathit{symPaths}\symConf{P, 0,\emptyset, \emptyset}$, where $P$ is interpreted as a symbolic term.
  Then for any $\tr \in Sat_n(\Delta)$, we have
  \[ (P, \tr, 1) \to^* (\calV[\tr/\overline \alpha], \langle \rangle, \prod_{\calW \in \Xi} \calW[\tr/\overline \alpha]). \]
\end{lemma}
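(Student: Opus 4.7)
The plan is to prove a strengthened statement by induction on the length of the symbolic reduction sequence, and then recover the lemma as the special case where the symbolic reduction has terminated in a value. The strengthened invariant we would prove is: whenever $\symConf{P, 0, \emptyset, \emptyset} \toSym^k \symConf{\calM, n, \Delta, \Xi}$, for every $\tr \in \Sat_n(\Delta)$ and every suffix $\tr_{\mathrm{rest}} \in \traces$, the concrete reduction satisfies
\[ \stdConf{P,\; \tr\,\tr_{\mathrm{rest}},\; 1} \;\to^*\; \stdConf{\calM[\tr/\overline\alpha],\; \tr_{\mathrm{rest}},\; \textstyle\prod_{\calW \in \Xi} \calW[\tr/\overline\alpha]}. \]
Taking $\calM = \calV$ (a value) and $\tr_{\mathrm{rest}} = \langle\rangle$ then gives exactly the lemma.

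The induction is on $k$. The base case $k=0$ is trivial: $\calM = P$, $n=0$, $\Delta=\Xi=\emptyset$, and the empty reduction suffices. For the inductive step, I would consider the final symbolic step $\symConf{\calM, n, \Delta, \Xi} \toSym \symConf{\calM', n', \Delta', \Xi'}$ and do a case analysis on which symbolic rule was applied. For purely syntactic rules ($\beta$-reduction, fixpoint unfolding, primitive-function reduction on fully concrete arguments, and congruence under an evaluation context), the concrete semantics mirrors the symbolic step exactly, and substitution of $\tr$ commutes with the syntactic manipulation. For $\sample$, the symbolic rule increments $n$ and introduces $\alpha_{n+1}$; in the concrete semantics, the additional trace position becomes the first entry of $\tr_{\mathrm{rest}}$ in the previous inductive statement, and we re-apportion it as the last entry of the new $\tr$, noting that $\alpha_{n+1}[\tr/\overline\alpha]$ is exactly that consumed value. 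For the two conditional rules, the symbolic step adds a constraint $\calV \le 0$ (resp.\ $\calV > 0$) to $\Delta'$; since $\tr \in \Sat_{n'}(\Delta')$ by assumption, the corresponding concrete guard $\calV[\tr/\overline\alpha]$ satisfies the same inequality, so the concrete branch rule fires with the matching premise. For $\score(\calV)$, symbolic execution adds $\calV \ge 0$ to $\Delta'$ and $\calV$ to $\Xi'$; membership in $\Sat_{n'}(\Delta')$ gives $\calV[\tr/\overline\alpha] \ge 0$, so the side condition of the concrete $\score$ rule is met and the weight is multiplied by exactly this factor, extending the product over $\Xi$ to the product over $\Xi'$.

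The main obstacle, as usual for such simulation lemmas, is bookkeeping rather than conceptual: one has to keep the three separate indices (the number of symbolic steps, the length of the consumed trace prefix, and the cardinality of $\Xi$) synchronised, and to verify that the substitution $[\tr/\overline\alpha]$ commutes with every syntactic operation performed in the symbolic rules. In particular, for primitive function applications that remain postponed inside a symbolic value, one must check that later substitution and evaluation produces the same real number as the concrete step taken at the corresponding point. A minor additional point is the congruence rule: one needs that the evaluation context $\calE[\cdot]$ in the symbolic world, after substitution of $\tr$, yields the standard evaluation context $E[\cdot]$ used by the concrete rules, which follows by a straightforward structural induction on contexts. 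With these pieces in place, the inductive step goes through case-by-case, and the statement of the lemma follows by instantiating the invariant at the terminal symbolic path.
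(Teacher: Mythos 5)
Your proof is correct and takes essentially the approach the paper relies on: the paper defers this lemma to \cite{MakOPW21}, and the invariant it sketches for the companion uniqueness lemma is exactly your strengthened statement $\stdConf{P,\tr\,\tr_{\mathrm{rest}},1}\to^*\stdConf{\calM[\tr/\overline\alpha],\tr_{\mathrm{rest}},\prod_{\calW\in\Xi}\calW[\tr/\overline\alpha]}$, established by the same lock-step case analysis on the symbolic rules. Your attention to the two genuine bookkeeping points---that $[\tr/\overline\alpha]$ on a symbolic term must also evaluate the postponed primitive applications, and that each $\sample$ step moves one entry from the residual trace into the consumed prefix---covers the only places where the simulation is not purely syntactic.
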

\begin{proof}
	A similar proof can be found in \citep[Theorem~1]{MakOPW21}.
\end{proof}

\begin{lemma}\label{lem:helper2}
  Let $\vdash P : \typeReal$ and suppose $(P, \tr, 1) \to^* (\lit{r}, \langle\rangle, w)$ for some $r \in \RR$.
  Then there exists a \emph{unique} $(\calV, n, \Delta, \Xi) \in \mathit{symPaths}\symConf{P, 0,\emptyset, \emptyset}$ such that $\tr \in Sat_n(\Delta)$.
  For this unique symbolic path we have $w = \prod_{\calW \in \Xi} \calW[\tr/\overline \alpha]$ and $r = \calV[\tr/\overline \alpha]$.
\end{lemma}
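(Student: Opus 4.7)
The proof splits into three parts: existence of a symbolic path whose satisfaction set contains $\tr$, uniqueness of such a path, and agreement of the value/weight. The last part is immediate: given existence of some symbolic path $\Psi = (\calV, n, \Delta, \Xi)$ with $\tr \in \Sat_n(\Delta)$, \cref{lem:helper1} yields $(P, \tr, 1) \to^* (\calV[\tr/\overline\alpha], \langle\rangle, \prod_{\calW \in \Xi}\calW[\tr/\overline\alpha])$. Determinism of the standard reduction $\to$ then forces $\calV[\tr/\overline\alpha] = r$ and $\prod_{\calW \in \Xi}\calW[\tr/\overline\alpha] = w$.

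For existence, I would proceed by induction on the length of the concrete reduction $(P, \tr, 1) \to^* (\lit r, \langle\rangle, w)$, maintaining the following simulation invariant: at each intermediate concrete configuration $(M, \tr', w')$ there is a symbolic configuration $(\calM, n, \Delta, \Xi)$ reachable from $\symConf{P, 0, \emptyset, \emptyset}$ such that, writing $\tr_0$ for the prefix of $\tr$ already consumed (so $\tr = \tr_0 \cdot \tr'$ with $|\tr_0| = n$), we have $M = \calM[\tr_0/\overline\alpha]$, $w' = \prod_{\calW \in \Xi}\calW[\tr_0/\overline\alpha]$, and $\tr_0 \in \Sat_n(\Delta)$. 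The base case is trivial. For the step, I would case-split on the concrete rule fired: $\beta$-reduction, fixpoint unrolling, and primitive function evaluation are mirrored by the analogous symbolic rules without changing $n$, $\Delta$, or $\Xi$; a $\sample$ step consumes the next element $r_{n+1}$ of $\tr'$ and is matched by the symbolic rule that introduces $\alpha_{n+1}$ (and the invariant extends with this new value for the substitution); a $\score(\lit s)$ step is matched by appending $\calV$ to $\Xi$, and the invariant on weights is preserved because $\calV[\tr_0/\overline\alpha] = \lit s$; a conditional step is matched by choosing the symbolic branch consistent with the sign of the concrete guard value, which preserves $\tr_0 \in \Sat_{n}(\Delta \cup \{\calV \bowtie 0\})$ because $\calV[\tr_0/\overline\alpha]$ has exactly that sign.

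For uniqueness, suppose $\Psi_1 = (\calV_1, n_1, \Delta_1, \Xi_1)$ and $\Psi_2 = (\calV_2, n_2, \Delta_2, \Xi_2)$ are both in $\mathit{symPaths}\symConf{P, 0, \emptyset, \emptyset}$ with $\tr \in \Sat_{n_1}(\Delta_1) \cap \Sat_{n_2}(\Delta_2)$. Inspection of the rules in \cref{fig:symbolicSemantics} shows that the symbolic reduction is deterministic on every redex apart from conditionals, for which the two alternative steps record the mutually exclusive constraints $\calV \le 0$ and $\calV > 0$. I would argue by induction on the length of the common initial segment of the two symbolic derivations: if $\Psi_1 \neq \Psi_2$, the first point of divergence must be a conditional, after which the two $\Delta$'s contain contradictory constraints $\calV \le 0$ and $\calV > 0$ on the same symbolic value $\calV$ (whose sample variables are already instantiated by $\tr$), contradicting $\tr \in \Sat(\Delta_1) \cap \Sat(\Delta_2)$.

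The main obstacle is the bookkeeping in the existence proof: one must verify that the chosen symbolic step really does preserve every clause of the invariant simultaneously, especially that symbolic substitution commutes with standard reduction so that $M' = \calM'[\tr_0'/\overline\alpha]$ continues to hold after non-trivial rules like fixpoint unrolling and primitive function evaluation (the latter requiring that interval-free applications of $\lit f$ on concrete literals reduce identically in both semantics). Once this is in place, uniqueness and the weight/value agreement follow cleanly as sketched.
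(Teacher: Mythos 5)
Your proposal is correct and follows essentially the same route as the paper, whose proof is a brief sketch of exactly this simulation argument: choose the symbolic branches matching the concrete reduction and maintain the invariant that the symbolic configuration instantiated at the consumed prefix equals the concrete one, with $\Delta$ recording the satisfied guards and $\Xi$ the accumulated weight factors. Your write-up is more explicit than the paper's (notably the uniqueness argument via the first divergence point being a conditional with mutually exclusive constraints, and deriving the value/weight agreement from \cref{lem:helper1} plus determinism of $\to$), but it is the same proof.
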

\begin{proof}[Proof sketch]
  Choose the same branches in the $\toSym$-reduction of $P$ as in its $\to$-reduction.
  Then it is straightforward to see that this correspondence holds at every symbolic reduction step:
  if $(P, 0, \emptyset, \emptyset) \toSym^* (\calP', n, \Delta, \Xi)$ then the corresponding $\to$-reduction steps yield $(P, \tr\tr', 1) \to^* (P', \tr', w)$ where $\tr$ has length $n$, $P'$ is $\calP'[\tr/\overline{\alpha}]$ (after evaluating delayed primitve function applications), $\Delta$ records the guards $\calC[\tr/\overline{\alpha}] \le 0$ or $\calC[\tr/\overline{\alpha}] > 0$ that need to hold for the trace $\tr$, and finally, the weight $w$ is given by $\prod_{\calW \in \Xi} \calW[\tr/\overline \alpha]$ at any point.
\end{proof}

\symbolicExec*
\begin{proof}
  \begin{align*}
    \measureSem P(U) &= \sum_{n \in \NN} \int_{[0,1]^n} [\valueSem P (\tr) \in U] \weightSem P (\tr) \D \tr \\
    &= \sum_{n \in \NN}
      \int_{[0,1]^n}
      \sum_{\symPath{\calV, n,\Delta, \Xi}} \\
    &\quad
            \left(
            [\tr \in \Sat_n(\Delta)]
            [\calV[\tr/\overline \alpha] \in U]
            \prod_{\calW \in \Xi} \calW[\tr/\overline \alpha] \D \tr
            \right) \\
    &= \sum_{\symPath{\calV, n,\Delta, \Xi}}
    \int_{\Sat_n(\Delta)}
      [\calV[\tr/\overline \alpha] \in U]
      \prod_{\calW \in \Xi} \calW[\tr/\overline \alpha]
    \D \tr\\
    &= \sum_{\symPath{\calV, n,\Delta, \Xi}} \llbracket \symPath{\calV, n,\Delta, \Xi} \rrbracket(U)
  \end{align*}
	where the sum ranges over symbolic paths $\symPath{\calV, n,\Delta, \Xi} \in \mathit{symPaths}\symConf{P, 0,\emptyset, \emptyset}$.
  The first equality is by definition, the second one by \cref{lem:helper1,lem:helper2}, the third by noting that $\Sat_n(\Delta) \subseteq [0,1]^n$ and exchanging the infinite sum and integral (which is allowed because everything is nonnegative) and the fourth by \cref{lem:path-denotation-vs-satn}.
\end{proof}

% !TEX root = ./main.tex

\section{Supplementary Material for \texorpdfstring{\cref{sec:4intervals-theory}}{Section~4}}

\subsection{Infinite Trace Semantics}
\label{sec:inf-trace-sem}

\newcommand{\itr}{\boldsymbol{u}}

A convenient alternative to the (finite) trace semantics is using infinite traces $\traces_\infty := [0,1]^\NN$ with a suitable $\sigma$-algebra and measure $\mu_{\traces_\infty}$ \cite{CulpepperC17,Kenyon-RobertsO21}.
The $\sigma$-algebra on $\traces_\infty$ is defined as the smallest $\sigma$-algebra that contains all sets $U \times \traces_\infty$ where $U \in \Sigma_{[0, 1]^n}$ for some $n \in \NN$.
The measure $\mu_{\traces_\infty}$ is the unique measure with $\mu_{\traces_\infty}(U \times \traces_\infty) = \lambda_n(U)$ for $U \in \Sigma_{[0, 1]^n}$.
We use the symbol $\itr$ for an infinite trace in $\traces_\infty$.
For a finite trace $\tr$ and infinite trace $\itr$ we write $\tr\itr \in \traces_\infty$ for their concatenation.
For any infinite trace $\itr \in \traces_\infty$, there is at most one prefix $\tr \in \traces$ with $\weightSem P(\tr) > 0$ since the reduction is deterministic.
We can therefore define $\weightSem P^\infty(\itr) := \weightSem P(\tr)$ and $\valueSem P^\infty(\itr) := \valueSem P(\tr)$ if such a prefix $\tr$ exists, and $\weightSem P^\infty(\itr) := 0$ and $\valueSem P^\infty(\itr)$ is undefined otherwise.
The infinite trace semantics of a term is then defined as
\[
\measureSem P(U) := \int_{(\valueSem P^\infty)^{-1}(U)} \weightSem P^\infty(\itr) \, \mu_{\traces_\infty}(\D \itr).
\]

\begin{lemma}
  \label{lem:finite-infinite-trace-semantics-equal}
  The finite and infinite trace semantics agree, that is:
  \begin{align*}
  	\int_{(\valueSem P)^{-1}(U)} \weightSem P(\tr) \, \mu_{\traces}(\D \tr) = \int_{(\valueSem P^\infty)^{-1}(U)} \weightSem P^\infty(\itr) \, \mu_{\traces_\infty}(\D \itr).
  \end{align*}
\end{lemma}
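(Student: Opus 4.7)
The plan is to reduce both sides to the same countable sum of Lebesgue integrals over the cubes $[0,1]^n$. On the left, this is essentially by definition: since $\mu_\traces(A) = \sum_{n \in \NN} \lambda_n(A \cap [0,1]^n)$, monotone convergence gives
\[
\int_{(\valueSem P)^{-1}(U)} \!\!\!\!\weightSem P(\tr) \, \mu_\traces(\D \tr) = \sum_{n \in \NN} \int_{(\valueSem P)^{-1}(U) \cap [0,1]^n} \!\!\!\!\weightSem P(\tr) \, \lambda_n(\D \tr).
\]

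For the right-hand side, I would first partition $\traces_\infty$ according to the length of the (at most one) terminating prefix. Let $T_n \subseteq [0,1]^n$ be the set of terminating traces of length exactly $n$; measurability of $T_n$ follows from measurability of the reduction relation established in \cite{BorgstromLGS16}. Since a terminating trace uses up all of its samples, the cylinders $T_n \times \traces_\infty$ are pairwise disjoint, and $\weightSem P^\infty$ vanishes outside their union. On $T_n \times \traces_\infty$, both $\weightSem P^\infty(\tr\itr')$ and $\valueSem P^\infty(\tr\itr')$ depend only on the length-$n$ prefix $\tr$, where they coincide with $\weightSem P(\tr)$ and $\valueSem P(\tr)$ by definition of the infinite-trace versions. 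Using the defining property $\mu_{\traces_\infty}(A \times \traces_\infty) = \lambda_n(A)$ for $A \in \Sigma_{[0,1]^n}$, the standard pushforward/Fubini argument yields
\[
\int_{T_n \times \traces_\infty}\!\![\valueSem P(\tr) \!\in\! U]\,\weightSem P(\tr) \, \mu_{\traces_\infty}(\D \itr) = \int_{T_n}\!\![\valueSem P(\tr) \!\in\! U]\, \weightSem P(\tr) \, \lambda_n(\D \tr).
\]
Summing over $n$ (again by monotone convergence, since the integrand is nonnegative) and using that $(\valueSem P)^{-1}(U) \cap [0,1]^n \subseteq T_n$ (termination is required for $\valueSem P$ to be defined), the two countable sums agree.

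The main obstacle is really a bookkeeping one: one needs to verify measurability of $T_n$ and of $\weightSem P^\infty, \valueSem P^\infty$ so that the decomposition into cylinders and the application of Fubini are legitimate. These facts are implicit in the measurability results of \cite{BorgstromLGS16} for the finite-trace semantics and their lifting to infinite traces as in \cite{CulpepperC17}; once these are in place, the equality is a routine computation.
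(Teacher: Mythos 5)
Your argument is essentially the paper's: both proofs rest on the observation that the cylinders over terminating (equivalently, positive-weight) prefixes are pairwise disjoint because the reduction is deterministic, and then transfer the integral between $\mu_\traces$ and $\mu_{\traces_\infty}$ via the cylinder property $\mu_{\traces_\infty}(A \times \traces_\infty) = \lambda_n(A)$ together with Fubini. The only cosmetic difference is that you stratify by trace length $n$ and sum by monotone convergence, whereas the paper works with $(\weightSem P)^{-1}(\RR_{>0}) \subseteq \traces$ in one step using the fact that the product of $\mu_\traces$ and $\mu_{\traces_\infty}$ is again $\mu_{\traces_\infty}$.
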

\begin{proof}
Observe that $\valueSem P^\infty(\tr\itr) = \valueSem P(\tr)$ and $\weightSem P^\infty(\tr\itr) = \weightSem P(\tr)$ for all $\itr \in \traces_\infty$ if $\weightSem P(\tr) > 0$.
Then we get:
\begin{align*}
  &\int_{(\valueSem P)^{-1}(U)} \weightSem P(\tr) \, \mu_{\traces}(\D \tr) \\
  = &\int_{(\weightSem P)^{-1}(\RR_{> 0})} [\valueSem P(\tr) \in U]\weightSem P(\tr) \int_{\traces_\infty} \mu_{\traces_\infty}(\D \itr) \mu_{\traces}(\D \tr) \\
  = &\int_{(\weightSem P)^{-1}(\RR_{> 0})} \int_{\traces_\infty} [\valueSem P^\infty(\tr\itr) \in U] \weightSem P^\infty(\tr\itr) \mu_{\traces_\infty}(\D \itr) \mu_{\traces}(\D \tr) \\
  = &\int_{(\weightSem P)^{-1}(\RR_{> 0}) \times \traces_\infty} [\valueSem P^\infty(\itr) \in U] \weightSem P^\infty(\itr) \mu_{\traces_\infty}(\D \itr) \\
  = &\int_{(\valueSem P^\infty)^{-1}(U)} \weightSem P^\infty(\itr) \mu_{\traces_\infty}(\D \itr)
\end{align*}
where we used the fact that the sets $\{\tr\} \times \traces_\infty$ are disjoint for different $\tr \in \weightSem P^{-1}(\RR_{> 0})$ because otherwise we would be able to find a trace $\tr$ as a prefix of $\tr'$ and both having positive weight, which is impossible due to the deterministic reduction.
Therefore $(\weightSem P)^{-1}(\RR_{> 0}) \times \traces_\infty = (\weightSem P^\infty)^{-1}(\RR_{> 0})$ and everything works as desired.
Note that the second to last equality follows from Fubini's theorem and the fact that the product measure of $\mu_\traces$ and $\mu_{\traces_\infty}$ is $\mu_{\traces_\infty}$ again.
\end{proof}

\subsection{Exhaustivity and Soundness}

\begin{example}[more examples of exhaustive sets]
  Here are more examples and counterexamples for exhaustivity.
  \begin{asparaenum}[(i)]
    \item $\{ \langle \rangle \}$ is an (uninteresting) exhaustive set and only useful for deterministic programs.
    \item $\{ \langle [2^{-n - 1}, 2^{-n}] \rangle \mid n \in \NN \}$ is exhaustive because only the trace $\langle 0 \rangle$ (with measure 0) is not covered.
    \item Let $a_n \ge 0$ be a converging series, i.e.~$\sum_{i=1}^n a_n < \infty$, for example $a_n = n^{-2}$.
    Define
    \[ \traceset := \{ \langle [0, e^{-a_1}], \dots, [0, e^{-a_n}], [e^{-a_{n+1}}, 1] \rangle \mid n \in \NN \}. \]
    This is not an exhaustive set since it doesn't cover any of the traces in
    \[ [0, e^{-a_1}) \times [0, e^{-a_2}, 1) \times \cdots \]
    which has measure $\prod_{i=1}^\infty e^{-a_i} = \exp\left(-\sum_{i=1}^\infty a_i\right) > 0$.
  \end{asparaenum}
  \end{example}

We also note that exhaustivity can be expressed just in terms of finite traces as well, at the cost of a more complicated definition.

\begin{lemma}
A set of interval traces $\traceset$ is exhaustive if and only if
\[ \mu_\traces \left( [0,1]^n \setminus \left(\bigcup_{\langle I_1, \dots, I_m \rangle \in \traceset, m \le n} I_1 \times \cdots \times I_m \times [0,1]^{n-m}\right) \right) \to 0 \]
as $n \to \infty$.
\label{lem:exhaustive-infinite-traces}
\end{lemma}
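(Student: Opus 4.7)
The plan is to reduce the infinite-trace condition to a monotone limit of finite-trace conditions via continuity of measure. First I would introduce notation: for $n \in \NN$, let
\[ C_n := \bigcup_{\langle I_1, \dots, I_m \rangle \in \traceset,\, m \le n} I_1 \times \cdots \times I_m \times [0,1]^{n-m} \subseteq [0,1]^n \]
and set $D_n := C_n \times \traces_\infty$ and $D := \bigcup_{\ivtr \in \traceset} \mathit{cover}(\ivtr)$. By the definition of $\mu_\traces$ (which restricts to the Lebesgue measure on $[0,1]^n$) and of $\mu_{\traces_\infty}$ (which satisfies $\mu_{\traces_\infty}(U \times \traces_\infty) = \lambda_n(U)$ for $U \in \Sigma_{[0,1]^n}$), we have $\mu_{\traces_\infty}(D_n) = \lambda_n(C_n) = \mu_\traces(C_n)$ and $\mu_{\traces_\infty}(\traces_\infty \setminus D_n) = 1 - \mu_\traces(C_n) = \mu_\traces([0,1]^n \setminus C_n)$.

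The key step is to verify that $(D_n)_{n \in \NN}$ is an increasing chain whose union equals $D$. Monotonicity is immediate: if a finite sequence of length $\le n$ lies in the union defining $C_n$, then padding it with an extra free coordinate places it in $C_{n+1}$, so $D_n \subseteq D_{n+1}$. For the union equation, an infinite trace $\itr = (r_1, r_2, \dots)$ belongs to $D$ iff there is some $\ivtr = \langle I_1,\dots,I_m\rangle \in \traceset$ with $(r_1,\dots,r_m) \in I_1 \times \cdots \times I_m$, and this happens iff $\itr \in D_n$ for some (equivalently, all) $n \ge m$. Hence $D = \bigcup_n D_n$.

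Applying continuity of measure from below to the increasing sequence $(D_n)$ yields
\[ \mu_{\traces_\infty}(D) = \lim_{n \to \infty} \mu_{\traces_\infty}(D_n) = \lim_{n \to \infty} \mu_\traces(C_n). \]
Since $\mu_{\traces_\infty}(\traces_\infty) = 1$, this rearranges to
\[ \mu_{\traces_\infty}(\traces_\infty \setminus D) = \lim_{n \to \infty} \mu_\traces([0,1]^n \setminus C_n), \]
from which both implications of the stated equivalence follow at once: the left-hand side vanishes iff the limit on the right vanishes, which is exactly the claim.

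I do not expect any real obstacle: the only subtle points are correctly matching the $\sigma$-algebras and measures (so that $D$ and the $D_n$ are measurable and the identities $\mu_{\traces_\infty}(U \times \traces_\infty) = \lambda_{|U|}(U)$ hold), and verifying the union equation above. Both are routine once the countability of $\traceset$ is invoked to ensure $D$ is a countable union of measurable cylinders.
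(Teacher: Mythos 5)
Your proof is correct and is essentially the paper's own argument up to complementation: the paper applies continuity of measure from above to the decreasing sets $S_n \times \traces_\infty$ (the complements of your $D_n$), whereas you apply continuity from below to the increasing covers $D_n$ and then pass to complements, which is equivalent since $\mu_{\traces_\infty}$ is a probability measure. No gaps.
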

\begin{proof}
  Let $S := \traces_\infty \setminus \bigcup_{\ivtr \in \traceset} \mathit{cover}(\ivtr)$.
  By the definition of exhaustivity, $\mu_{\traces_\infty}(S) = 0$.
  Let
  \[ S_n = [0,1]^n \setminus \left(\bigcup_{\langle I_1, \dots, I_m \rangle \in \traceset, m \le n} I_1 \times \cdots \times I_m \times [0,1]^{n-m}\right). \]
  It's easy to see that $S = \bigcap_{n = 0}^\infty S_n \times \traces_\infty$ where $S_n \times \traces_\infty$ is a decreasing sequence of sets: $S_1 \times \traces_\infty \supseteq S_2 \times \traces_\infty \supseteq \cdots$.
  Since measures are continuous from above, we have
  $\lim_{n \to \infty} \mu_{\traces}(S_n) = \lim_{n \to \infty} \mu_{\traces_\infty}(S_n \times \traces_\infty) = \mu_{\traces_\infty}(S) = 0$, as desired.
\end{proof}

The following lemma establishes a correspondence between infinite trace semantics and interval trace semantics.

\begin{lemma}
  For any interval trace $\ivtr$ and infinite trace $\tr_\infty$ with a prefix $\tr$ such that $\tr \refines \ivtr$, we have $\weightSem P^\infty(\tr_\infty) \in \weightSem P^\iv(\ivtr)$ and $\valueSem P^\infty(\tr_\infty) \in \valueSem P^\iv(\ivtr)$.
  \label{lem:interval-approximation-infinite}
\end{lemma}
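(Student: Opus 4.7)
My plan is to prove this by a case split on whether the interval reduction from $(P, \ivtr, 1)$ reaches a value configuration, which will reduce the statement to \cref{lem:interval-approximation} via a simulation argument.

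First, I would handle the ``stuck'' case: if $(P, \ivtr, 1)$ does not reduce via $\to_\iv^*$ to a configuration of the form $(\lit I, \langle\rangle, w_I)$, then by definition $\weightSem P^\iv(\ivtr) = [0, \infty]$ and $\valueSem P^\iv(\ivtr) = [-\infty, \infty]$. The two containments then hold trivially, since infinite-trace weights are nonnegative reals and any defined value lies in $[-\infty, \infty]$.

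The substantive case is when $(P, \ivtr, 1) \to_\iv^* (\lit I, \langle\rangle, w_I)$. Here I would argue that the concrete reduction on the length-$|\ivtr|$ prefix $\tr$ of $\tr_\infty$ must also terminate, yielding a value $r \in I$ and weight $w \in w_I$. The simulation underlying \cref{lem:interval-approximation} gives a step-by-step correspondence between interval steps and concrete steps on any refinement, and inspecting \cref{fig:interval-semantics} against \cref{fig:reductionRules} shows that every interval rule's side-condition is strictly \emph{stronger} than the corresponding concrete one: the interval conditional requires $b \le 0$ or $a > 0$, forcing every refinement to take the same branch; the interval score requires the whole interval to be nonnegative; and all other rules are trace- and value-independent. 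Hence, replaying the terminating interval derivation on $\tr$ yields a terminating concrete derivation, so $\tr$ is a terminating prefix. Since the concrete reduction is deterministic, $\tr$ is the unique such prefix of $\tr_\infty$, and by the definition of the infinite-trace semantics, $\weightSem P^\infty(\tr_\infty) = \weightSem P(\tr) \in w_I$ and $\valueSem P^\infty(\tr_\infty) = \valueSem P(\tr) \in I$ by \cref{lem:interval-approximation}.

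The main obstacle I anticipate is making the implication ``interval termination on $\ivtr$ implies concrete termination on every length-matched refinement'' fully rigorous. \cref{lem:interval-approximation} is stated under the assumption that $\valueSem P(\tr)$ is defined, whereas here we need the converse direction: deducing termination of $\tr$ from success of the interval reduction. The argument is mechanical, but it requires a careful induction on the length of the interval derivation, and genuinely uses the fact that no interval rule is ever ``looser'' than its concrete counterpart — otherwise the interval reduction might commit to a branch that some refinement could not follow.
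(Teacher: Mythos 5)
Your proof is correct and follows essentially the same route as the paper, whose entire proof of this lemma is the one-liner ``follows directly from the definition of the infinite trace semantics and \cref{lem:interval-approximation}.'' What you supply is the detail that one-liner elides --- in particular the case split on whether the interval reduction succeeds, and the observation that a successful interval reduction forces termination of the concrete reduction on the length-matched prefix (which is exactly what the simulation in the paper's proof of \cref{lem:interval-approximation} establishes, even though the lemma's statement only records the ``provided $\valueSem P(\tr)$ is defined'' direction) --- so your anticipated obstacle is real but already resolved by that proof.
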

\begin{proof}
Follows directly from the definition of infinite trace semantics and \cref{lem:interval-approximation}.
\end{proof}

Using the previous results, we can prove soundness of upper bounds.

\thmSoundUpper*
\begin{proof}
For any $U \in \Sigma_\RR$, we have
\begin{align}
& \upperBound P^\traceset(U) \nonumber\\
  &= \sum_{\ivtr \in \traceset} \volume(\ivtr) (\sup \weightSem P^\iv(\ivtr)) [\valueSem P^\iv(\ivtr) \cap U \ne \emptyset] \nonumber\\
  &= \sum_{\ivtr \in \traceset} \int_{\tracesin{\ivtr}} (\sup \weightSem P^\iv(\ivtr)) [\valueSem P^\iv(\ivtr) \cap U \ne \emptyset] \D \tr \nonumber\\
  &\ge \sum_{\ivtr \in \traceset} \int_{\tracesin{\ivtr}}  \int_{\traces_\infty} \weightSem P^\infty(\tr\itr)[\valueSem P^\infty(\tr\itr) \in U] \D \itr \D \tr \label{eq:4th complete}\\
  &\ge \int_{\bigcup_{\ivtr \in \traceset} \tracesin{\ivtr} \times \traces_\infty} \weightSem P^\infty(\itr)[\valueSem P^\infty(\itr) \in U] \D \tr \nonumber\\
  &\ge \int_{\traces_\infty} \weightSem P^\infty(\itr)[\valueSem P^\infty(\itr) \in U] \D \itr  \label{eq:6th complete}\\
  &= \measureSem P(U)  \label{eq:7th complete}
\end{align}
where \cref{eq:4th complete} follows from \cref{lem:interval-approximation-infinite}, \cref{eq:6th complete} from exhaustivity and \cref{eq:7th complete} from \cref{lem:finite-infinite-trace-semantics-equal}.
\end{proof}

\subsection{Assumptions for Completeness}
\label{sec:qtt}

\paragraph{Remarks on Assumption 1}
We can formally express Assumption 1 from \cref{sec:4intervals-theory} about a given program $\vdash P: \typeReal$ as follows.
For each symbolic path $\Psi = \symPath{\calV, n,\Delta, \Xi}$, we require that $\calV$, each $\calC$ with $\calC \bowtie 0 \in \Delta$, and each $\calW \in \Xi$ contain each sample variable $\alpha_i$ at most once.

\begin{example}
The pedestrian example (\cref{ex:pedestrian}) satisfies Assumption 1 because the symbolic paths have the form $\Psi = \symPath{\calV, n,\Delta, \Xi}$ with:
\begin{align*}
  \calV &= 3\alpha_1  \\
  n &= 2k + 1 \\
  \Delta &= \{ \alpha_3 - \tfrac12 \bowtie 0, \alpha_5 - \tfrac12 \bowtie 0, \dots, \alpha_{2k+1} - \tfrac12 \bowtie 0 \} \\
    &\quad \begin{aligned}
      {} \cup \{ &3\alpha_1 > 0, \\
      &3\alpha_1 \pm \alpha_2 > 0, \\
      &\dots, \\
      &3\alpha_1 \pm \alpha_2 \pm \alpha_4 \cdots \pm \alpha_{2k-2} > 0, \\
      &3\alpha_1 \pm \alpha_2 \pm \alpha_4 \cdots \pm \alpha_{2k} \le 0 \}
    \end{aligned} \\
  \Xi &= \{ \pdf_{\Normal(1.1, 0.1)}(\alpha_2 + \alpha_4 + \cdots + \alpha_{2k}) \}
\end{align*}
As we can see, none of the symbolic values contains a sample variable twice, so the assumption is satisfied.
\end{example}

\paragraph{Remarks on Assumption 2}
We first prove the sufficient condition for interval separability from \cref{sec:completeness}.
\begin{lemma}
  If a function $f: \RR^n \to \RR$ is boxwise continuous and preimages of points are null sets then $f$ is interval separable.
  \label{lem:continuous-preimage-null-interval-separable}
\end{lemma}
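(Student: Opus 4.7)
The plan is to reduce to a continuity argument on each piece of the boxwise partition, then approximate the preimage of the open interval $(a,b)$ by boxes, losing only the (null) preimages of the endpoints $\{a,b\}$ and the (null) boundaries of the partition boxes.

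First I would invoke boxwise continuity to obtain a countable family of pairwise almost disjoint boxes $\{B_i\}_{i \in \NN}$ with $\bigcup_i B_i = \RR^n$ and each $f|_{B_i}$ continuous. For each $i$, set $U_i := (f|_{B_i})^{-1}((a,b)) \cap \interior{B_i} = f^{-1}((a,b)) \cap \interior{B_i}$. Since $(f|_{B_i})^{-1}((a,b))$ is open in the subspace topology on $B_i$ and $\interior{B_i}$ is open in $\RR^n$, the set $U_i$ is open in $\RR^n$; moreover $U_i \subseteq f^{-1}((a,b)) \subseteq f^{-1}([a,b])$.

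Second, I would apply the standard fact that every open subset of $\RR^n$ can be written as a countable union of almost disjoint closed boxes (e.g.\ via a dyadic decomposition). This gives, for each $i$, a countable collection $\mathcal B_i$ of boxes with $\bigcup \mathcal B_i = U_i$. Setting $\mathcal B := \bigcup_i \mathcal B_i$ then yields a countable set of boxes with $\bigcup \mathcal B = \bigcup_i U_i \subseteq f^{-1}([a,b])$.

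Finally I would verify tightness. For any $x \in f^{-1}([a,b]) \setminus \bigcup_i U_i$, either $f(x) \in \{a,b\}$, placing $x$ in the null set $f^{-1}(\{a\}) \cup f^{-1}(\{b\})$ by the point-preimage hypothesis, or $f(x) \in (a,b)$, in which case any $B_i$ containing $x$ must have $x \in B_i \setminus \interior{B_i} = \partial B_i$. Hence the exceptional set is contained in $f^{-1}(\{a\}) \cup f^{-1}(\{b\}) \cup \bigcup_i \partial B_i$, a countable union of null sets (the degenerate case $a=b$ reduces directly to the hypothesis with $\mathcal B = \emptyset$). The only real obstacle is careful bookkeeping: ensuring these three null contributions together account for everything lost when passing from $f^{-1}([a,b])$ to $\bigcup_i U_i$, and citing the open-set-as-almost-disjoint-boxes decomposition correctly; the null-preimage hypothesis is used precisely to dispose of the endpoints $a$ and $b$, which is where mere boxwise continuity would otherwise be insufficient.
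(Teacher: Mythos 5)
Your proof is correct and follows essentially the same route as the paper's: decompose $f^{-1}([a,b])$ into the preimage of the open interval (handled piecewise via continuity on each $B_i$ and the fact that open sets are countable unions of boxes) plus the null preimage of the endpoints. The only difference is that you pass to $\interior{B_i}$ and explicitly absorb the box boundaries $\bigcup_i \partial B_i$ into the null exceptional set, a bookkeeping step the paper elides by working with relatively open subsets of each $B_i$ directly.
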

\begin{proof}
We decompose $f^{-1}([a,b]) = f^{-1}((a,b)) \cup f^{-1}(\{a,b\})$ and deal with the former set first.
By boxwise continuity, $f = \bigcup_i f|_{B_i}$ where $\bigcup_i B_i = \RR^n$ and each $f|_{B_i}$ is continuous on $B_i$.
To show that the preimage $f^{-1}((a,b))$ can be tightly approximated by a countable set of boxes, it suffices to show this for each $(f|_{B_i})^{-1}((a,b))$.
This set is open in $B_i$ by continuity of $f|_{B_i}$, so it can be written as a countable union of boxes (e.g.~by taking a box within $B_i$ around each rational point, which exists because it's an open set).
By the assumption, the preimage $f^{-1}(\{a,b\})$ is a null set.
Hence $f^{-1}([a,b])$ can be approximated by a null set.
\end{proof}

Note that a composition of interval separable functions need \emph{not} be interval separable.
This is an incorrect assumption made in the completeness proof of \cite{BeutnerO21}.
(To fix their Theorem 3.8, one needs to make the additional assumption that the set of primitive functions be closed under composition.)
To see this, let $f, g: \RR \to \RR$ be interval separable functions and $I$ an interval.
By definition, there are intervals $B_i$ such that $\bigcup_i B_i \cup N = f^{-1}(I)$ where $N$ is a null set.
Then by interval separability, the preimage $g^{-1}(\bigcup_i B_i)$ can be tightly approximated by interals $B'_j$, but the preimage $g^{-1}(N)$ need not be a null set.
It is also not clear at all whether one can approximate the preimage $(f \circ g)^{-1}(I)$ tightly using intervals without further restrictions on $f$ and $g$.
For this reason, we require the assumption that the set of primitive functions be closed under composition.

It is not immediately obvious that such a set of functions exists.
One example is given by the following.
A function $f: \RR^n \to \RR$ is called a \defn{submersion} if it is continuously differentiable and its gradient is nonzero everywhere.
\begin{lemma}
  The set $\mathcal F_\mathsf{subm}$ of submersions is closed under composition and each of its functions is boxwise continuous and interval separable.
\end{lemma}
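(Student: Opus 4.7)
My plan is to verify each of the three assertions separately, with the first two being straightforward and the last requiring some care about what ``composition'' means.

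For \emph{boxwise continuity}, every submersion $f: \RR^n \to \RR$ is by definition $C^1$, hence continuous on all of $\RR^n$. Since $\RR^n = [-\infty,\infty]^n$ is itself a box in $\iv_{\RR \cup \{-\infty,\infty\}}^n$, the trivial ``decomposition'' $\{\RR^n\}$ satisfies the definition of boxwise continuity, and this step is essentially vacuous.

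For \emph{interval separability}, I would invoke the preceding sufficient condition (\cref{lem:continuous-preimage-null-interval-separable}): a boxwise continuous function whose point preimages are null sets is interval separable. Boxwise continuity is handled above. To show $f^{-1}(\{c\})$ is a null set for each $c \in \RR$, I would apply the regular value theorem (a direct consequence of the implicit function theorem): since $\nabla f \ne 0$ everywhere, every $c \in \RR$ is a regular value, so $f^{-1}(\{c\})$ is a smooth embedded $(n-1)$-submanifold of $\RR^n$. Second-countability of $\RR^n$ lets me cover it by countably many coordinate patches, each the graph of a $C^1$ function from an open subset of $\RR^{n-1}$, and each such graph is a Lebesgue null set in $\RR^n$; hence so is the countable union $f^{-1}(\{c\})$.

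For \emph{closure under composition}, the mechanism is the chain rule: if $g: \RR^n \to \RR$ is a submersion and $\mathbf{f} = (f_1,\ldots,f_n): \RR^m \to \RR^n$ is a submersion (i.e.~its Jacobian has full row rank $n$ everywhere), then $d(g \circ \mathbf{f})_x = dg_{\mathbf{f}(x)} \circ d\mathbf{f}_x$ is a composition of surjective linear maps $\RR^m \to \RR^n \to \RR$, hence surjective, so $g \circ \mathbf{f}$ is again a submersion (and $C^1$ by smoothness of the chain rule).

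The main subtlety, and the place I would be most careful in the write-up, is the interpretation of ``composition''. If one naively takes $f_1,\ldots,f_n$ to each be scalar submersions $\RR^m \to \RR$ and builds $\mathbf{f}$ componentwise, the composite Jacobian $d\mathbf{f}$ need not have full row rank (take $f_1 = f_2 = \mathrm{id}_\RR$, where $d\mathbf{f}$ has rank $1$), and then $g(y_1,y_2) = y_1 - y_2$ composed with $\mathbf{f}$ yields the zero map, which is not a submersion. So the correct reading is the differential-geometric one: compose with vector-valued submersions, so that the chain-rule argument applies, and the class $\mathcal F_\mathsf{subm}$ is closed under these compositions (which is the notion needed to iterate the interval-separability preimage argument through nested primitive calls).
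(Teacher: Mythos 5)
Your proof is correct, and for boxwise continuity and interval separability it coincides with the paper's argument (trivial decomposition by $C^1$-ness; then the regular value theorem, writing the $(n-1)$-submanifold $f^{-1}(\{c\})$ as a countable union of graphs, each null by Fubini, and invoking the sufficient condition for interval separability). The one place where you diverge is the closure-under-composition step, and it is worth comparing the two resolutions of the ambiguity you correctly flag. The paper does not pass to vector-valued submersions; instead it fixes the meaning of ``composition'' to be $f \circ (f_1 \times \cdots \times f_m)$ where each $f_i : \RR^{n_i} \to \RR$ acts on its \emph{own disjoint block} of variables, and then computes the gradient directly by the chain rule: it is the stacked vector with blocks $\partial_i f(\ldots) \cdot \nabla f_i(x_i)$, which is nonzero because some $\partial_i f \neq 0$ and every $\nabla f_i \neq 0$. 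This disjoint-block form is exactly the shape of compositions that arise from symbolic values under Assumption~1 (each sample variable occurs at most once), which is why your counterexample $g(y_1,y_2) = y_1 - y_2$ with $f_1 = f_2 = \mathrm{id}$ cannot occur in the intended application. Your reading subsumes the paper's: a product of scalar submersions over disjoint variable blocks has a block-diagonal Jacobian with nonzero rows, hence full row rank, so it is a vector-valued submersion and your chain-rule argument applies. What the paper's version buys is a statement of closure that matches, without further bookkeeping, the tree-shaped compositions actually produced by the symbolic execution; what your version buys is a cleaner differential-geometric formulation and an explicit demonstration (via the counterexample) of why the restriction on shared variables is not optional.
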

\begin{proof}
  Boxwise continuity is obvious given that the functions are even continuously differentiable.
  For interval separability, we use \cref{lem:continuous-preimage-null-interval-separable}.
  Let $f: \RR^n \to \RR \in \mathcal F_\mathsf{subm}$.
  Since $f$ is a submersion, the preimage $f^{-1}(x)$ of any point $x \in \RR$ is an $(n-1)$-dimensional submanifold of $\RR^n$ by the preimage theorem (a variation of the implicit function theorem).
  Submanifolds of codimension $> 1$ have measure zero.
  (This well-known fact can be shown by writing the submanifold as a countable union of graphs and applying Fubini's theorem to each of them.)
  Therefore, the lemma applies.

  For closure under composition, let $f: \RR^m \to \RR$ and $f_i: \RR^{n_i} \to \RR$ for $i \in \{ 1, \dots, m \}$, all in $\mathcal F_\mathsf{subm}$.
  The composition $g := f \circ (f_1 \times \dots \times f_m)$ is clearly $C^1$ again, so we just have to check the submersion property.
  By the chain rule, we find that the gradient of the composition
  \[ \nabla g(x_1, \dots, x_m) = \begin{pmatrix}
      \partial_1 f(f_1(x_1), \dots, f_m(x_m)) \cdot \nabla f_1(x_1) \\
      \vdots \\
      \partial_m f(f_1(x_1), \dots, f_m(x_m)) \cdot \nabla f_m(x_m)
    \end{pmatrix} \]
  is nonzero because at least one of the $\partial_i f$ is nonzero and $\nabla f_i(x_i) \ne 0$ by assumption.
  Hence the composition is a submersion again.
\end{proof}

Unfortunately, the set of submersions does not contain constant functions.
This is a problem because then it is not guaranteed that partially applying a primitive function to a constant is still an admissible primitive function.
(For example, this would break \cref{lem:constraint-set-interval-approx}.)
Hence we need to assume that \emph{all constant functions be primitive functions}.
Luckily, the set $\mathcal F_\mathsf{subm}$ of submersions can be easily extended to accommodate this.
\begin{lemma}
  Let $\mathcal F_\mathsf{subm}^*$ be the set of functions $f: \RR^n \to \RR$ (for all $n \in \NN$) such that whenever the partial derivative $\partial_i f(x)$ is zero for some $i \in \{1, \dots, n \}$ and $x \in \RR^n$ then $f$ is constant in its $i$-th argument, i.e. there is a function $f^*: \RR^{n-1} \to \RR$ such that $f(x_1, \dots, x_n) = f^*(x_1, \dots, x_{i-1}, x_{i+1}, \dots x_n)$.
  This set satisfies all the assumptions about sets of primitive functions: it is closed under composition, contains all constant functions, and all its functions are boxwise continuous and interval separable.
\end{lemma}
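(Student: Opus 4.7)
The proof breaks into four checks: containment of all constants, boxwise continuity, interval separability, and closure under composition. The first two are immediate; interval separability requires a small case split for truly constant coordinates; closure under composition is the essential content of the lemma and is where I expect the main work.

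First, every constant function $c: \RR^n \to \RR$ satisfies the defining condition vacuously well: every partial derivative is identically zero, and $c$ is trivially constant in each argument, so $c \in \mathcal F_\mathsf{subm}^*$. For boxwise continuity, the functions are $C^1$ (this is implicit in the definition, since the condition refers to partial derivatives), hence continuous on the single box $\RR^n$.

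For interval separability I would invoke \cref{lem:continuous-preimage-null-interval-separable} (boxwise continuous plus null preimages of points implies interval separable). Given $f \in \mathcal F_\mathsf{subm}^*$, let $k$ be the number of coordinates on which $f$ genuinely depends. If $k \geq 1$, then after permuting coordinates we may write $f(x_1, \ldots, x_n) = g(x_1, \ldots, x_k)$ where $g: \RR^k \to \RR$ has nowhere-vanishing gradient, i.e.\ $g$ is a submersion. By the preimage theorem, $g^{-1}(\{y\})$ is a codimension-one submanifold of $\RR^k$ and hence $\lambda_k$-null, and a Fubini/local-boundedness argument then shows $f^{-1}(\{y\}) \cong g^{-1}(\{y\}) \times \RR^{n-k}$ is $\lambda_n$-null, as needed. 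If $k = 0$, then $f$ is a true constant, and $f^{-1}([a,b])$ is either $\emptyset$ or $\RR^n$; both are trivially a countable union of boxes, so interval separability holds directly without going through the lemma.

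The main obstacle is closure under composition. Let $g = f \circ (f_1 \times \dots \times f_m)$ with $f, f_1, \ldots, f_m \in \mathcal F_\mathsf{subm}^*$. The $C^1$ property transfers by the chain rule, so I only need to verify the defining constancy condition. Suppose $\partial_j g(x) = 0$, where the $j$-th coordinate of $x$ corresponds to the $j'$-th coordinate of the block $x_i$. The chain rule gives
\[
\partial_j g(x) = \partial_i f\bigl(f_1(x_1), \ldots, f_m(x_m)\bigr) \cdot \partial_{j'} f_i(x_i),
\]
so at least one factor vanishes. If $\partial_i f$ vanishes, then $f \in \mathcal F_\mathsf{subm}^*$ forces $f$ to be \emph{globally} constant in its $i$-th argument, whence $g$ does not depend on $x_i$ at all and is in particular constant in its $j$-th coordinate. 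Otherwise $\partial_{j'} f_i$ vanishes at $x_i$, so $f_i \in \mathcal F_\mathsf{subm}^*$ forces $f_i$ to be globally constant in its $j'$-th argument; since the $j$-th coordinate of $x$ enters $g$ only through $f_i$, this constancy is inherited by $g$. Either way $g$ is constant in its $j$-th coordinate, proving $g \in \mathcal F_\mathsf{subm}^*$. The delicate point is exactly that the $\mathcal F_\mathsf{subm}^*$ hypothesis upgrades a \emph{single} vanishing partial derivative into \emph{global} constancy, which is precisely what is needed to propagate constancy through the chain rule—this is the reason the weaker submersion definition was insufficient and why $\mathcal F_\mathsf{subm}^*$ is formulated as it is.
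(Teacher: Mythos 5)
Your proof is correct and follows essentially the same route as the paper's: the closure-under-composition argument (chain rule, then upgrading a single vanishing partial derivative to global constancy in that argument) is identical, and the interval-separability argument reduces, as in the paper, to the submersion case on the coordinates $f$ genuinely depends on, with the constant case ($k=0$) handled trivially. The only cosmetic difference is that you verify the null-preimage criterion of \cref{lem:continuous-preimage-null-interval-separable} directly via the preimage theorem and Fubini, whereas the paper instead factors $f^{-1}(U)$ as $f_J^{-1}(U)\times\RR^{|J'|}$ and cites its preceding lemma on $\mathcal F_\mathsf{subm}$ — which establishes the same fact by the same argument.
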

\begin{proof}
  Boxwise continuity is obvious given that the functions are even continuously differentiable.
  Similarly, it is clear that $\mathcal{F}_\mathsf{subm}^*$ contains all constant functions.

  For interval separability, let $f: \RR^n \to \RR \in \mathcal F_\mathsf{subm}^*$ and $J \subseteq \{ 1, \dots, n \}$ be the set of indices in which $f$ is not constant, and $J'$ its complement.
  Hence there is a submersion $f_J: \RR^{|J|} \to \RR$ such that $f(x) = f_J(x_J)$ where $x_J$ stands for the vector of coordinates of $x$ with index in $J$.
  The preimage of $f^{-1}(U) \subseteq \RR^{|J|}$ of any set $U \subseteq \RR$ can be tightly approximated by boxes if and only if $f_J^{-1}(U)$ can because $f^{-1}(U)$ is a Cartesian product of $f_J^{-1}(U)$ and $\RR^{|J'|}$.
  Since $f_J$ is interval separable by the previous lemma, this shows that $f$ is as well.

  For closure under composition, let $f: \RR^m \to \RR$ and $f_i: \RR^{n_i} \to \RR$ for $i \in \{ 1, \dots, m \}$, all in $\mathcal F_\mathsf{subm}^*$.
  The composition $g := f \circ (f_1 \times \dots \times f_m)$ is clearly $C^1$ again, so we just have to check the property of the partial derivatives.
  By the chain rule, the partial derivatives of the composition are
  \[ \partial_i g(x_1, \dots, x_m) = \partial_j f(f_1(x_1), \dots, f_m(x_m)) \partial_k f_j(x_{j1}, \dots, x_{jn_j}) \]
  for some $j \in \{ 1, \dots m \}$ and $k \in \{ 1, \dots n_j \}$, and for all $x_1 \in \RR^{n_1}, \dots, x_m \in \RR^{n_m}$.
  So if this partial derivative is zero, there are two cases.
  First, if $\partial_j f(f_1(x_1), \dots, f_m(x_m)) = 0$ then $f$ must be constant in its $j$-th argument (because $f \in \mathcal{F}_\mathsf{subm}^*$) and thus $g$ is constant in $x_j$, and in particular the $i$-th argument (which is an entry of $x_j$).
  Second, if $\partial_k f_j(x_{j1}, \dots, x_{jn_j}) = 0$ then $f_j$ must be constant in its $k$-th argument (because $f_j \in \mathcal{F}_\mathsf{subm}^*$) and thus $g$ is constant in its corresponding $i$-th argument as well.
  This proves $g \in \mathcal{F}_\mathsf{subm}^*$, as desired.
\end{proof}
Note that $\exp$, $\sinh$, $\arctan$, $n$-th roots for $n$ odd, and all linear functions are in $\mathcal F_\mathsf{subm}^*$. So this is a useful set of primitive functions already.
Unfortunately, it does not include multiplication because the gradient of $(x,y) \mapsto xy$ is zero at $(0,0)$.
To fix this issue, we need to restrict the domain.%
\footnote{Handling these issues at the level of primitive functions directly (without restricting the domain) seems challenging: even if a function has only one point with zero gradient, e.g. multiplication, its preimage under other primitive functions can become very complicated.
We tried to handle this by allowing the primitive functions to be submersions except on a null set given by a union of lower-dimensional manifolds.
However, the preimages of such manifolds need not be manifolds again.
Hence it seems difficult to come up with a broader class of primitive functions satisfying the assumptions without restricting the domain.}

For simplicity, we required primitive functions to be defined on all of $\RR^n$.
Suppose we allow for primitive functions to be defined only on an open subset of $\RR^n$, and applying them to a value outside their domain is disallowed in SPCF programs.
Then we can also include multiplication (on $\RR^2 \setminus \{(0,0)\}$), logarithms (on $(0,\infty)$), non-constant univariate polynomials (on the complement of their stationary points), quantile functions of continuous distributions with nonzero density (on $(0,1)$), and probability density functions (on the complement of their stationary points).
The fact that some points in the domain are missing is inconvenient for functions that can be continuously extended to the these points, but one can work around this in a program by checking for the points that are not in the domain and returning the function values as constants for those cases.

\subsection{Completeness Proof}

This section uses the definitions of \emph{boxwise continuity} and \emph{interval separability} from \cref{sec:completeness}.
As discussed there, we assume that for each symbolic path $\Psi = \symPath{\calV, n,\Delta, \Xi}$, we have that $\calV$, each $\calC$ with $\calC \bowtie 0 \in \Delta$, and each $\calW \in \Xi$ contains each sample variable $\alpha_i$ at most once (Assumption 1).
We also assume that the primitive functions are boxwise continuous, interval separable and closed under composition (Assumption 2).
Furthermore, we say that $\traceset$ is a \defn{subdivision} of $\ivtr$ if $\traceset$ is compatible and $\bigcup_{\ivtr' \in \traceset} \tracesin{\ivtr'} = \tracesin{\ivtr}$.

\thmCompleteness*
\begin{proof}
  First, we give a brief outline of how the proof works.
  The idea is to cover $\{ \tr \in \traces \mid \valueSem P(\tr) \in I \}$ using boxes (interval traces).
  We can achieve this using symbolic execution: for a fixed path through the program, the result value is just a composition of primitive functions applied to the samples.
  Similarly, the weight function is a product of such functions, hence boxwise continuous.
  By passing to smaller boxes, we can assume that it is continuous on each box.
  In order to approximate the integral of the weight function, we use Riemann sums (as used in the definition of the Riemann integral).
  We partition the domain into smaller and smaller boxes such that the lower bound and the upper bound of the weight function come arbitrarily close (by continuity).
  Then by properties of the Riemann integral, the bounds arising from the interval traces representing the boxes in this partition converge to the desired integral of the weight function.
  The details of the proof are as follows.

  \textbf{Step 1:} approximating the branching inequalities.
  Let $\Psi = \symPath{\calV, n, \Delta, \Xi}$ a symbolic path of $P$.
  To find a countable set $\traceset_\Psi \subseteq \traces_\iv$ such that $\bigcup_{\ivtr \in \traceset_\Psi}\tracesin{\ivtr} \tightsubset \Sat_n(\Delta)$.
  Note that $\Sat_n(\Delta)$ is a finite intersection of sets of the form $\{ \tr \in [0,1]^n \mid \calC[\tr/\overline \alpha] \bowtie 0 \}$ where $\bowtie \in \{\leq, >\}$.
  In the $\leq$ case, we can write this constraint as $\calC[\tr/\overline \alpha] \in \bigcup_{n \in \NN} [-n, 0]$ and in the $>$ case as $\calC[\tr/\overline \alpha] \in \bigcup_{n \in \NN} [1/n, n]$.
  By applying \cref{lem:constraint-set-interval-approx} to each of the compact intervals in these unions, we obtain a countable union of boxes that is a tight subset of $\{ \tr \in [0,1]^n \mid \calC[\tr/\overline \alpha] \bowtie 0 \}$.
  Since the intersection of two boxes is a box and since $\Sat_n(\Delta)$ is a finite intersection of such countable unions of boxes, it can be rewritten as a countable union of boxes.
  This yields $\traceset_\Psi$, such that $\bigcup_{\ivtr \in \traceset_\Psi}\tracesin{\ivtr} \tightsubset \Sat_n(\Delta)$.

  \textbf{Step 2:} handling the result value.
  By applying \Cref{lem:constraint-set-interval-approx} and intersecting the obtained interval traces with $\traceset_\Psi$, we obtain a countable set $\traceset'_{\Psi,I} \subseteq \traces_\iv$ such that $\bigcup_{\ivtr \in \traceset'_{\Psi,I}} \tracesin{\ivtr} \tightsubset \{ \tr \in \Sat(\Delta) \mid \calV[\tr/\overline \alpha] \in I \}$.
  By the same lemma, we find a countable set $\traceset'_{\Psi,I^c} \subseteq \traces_\iv$ such that $\bigcup_{\ivtr \in \traceset'_{\Psi,I^c}} \tracesin{\ivtr} \tightsubset \{ \tr \in \Sat(\Delta) \mid \calV[\tr/\overline \alpha] \notin I \}$ because the complement of $I$ can be written as a countable union of intervals.
  By \cref{lem:relationship-symbolic-interval}, we find subdivisions $\traceset_{\Psi,I}$ and $\traceset_{\Psi,I^c}$ that even satisfy $\bigcup_{\ivtr \in \traceset_{\Psi,I}} \tracesin{\ivtr} \tightsubset \valueSem P^{-1}(I) \cap \Sat(\Delta)$ and $\bigcup_{\ivtr \in \traceset_{\Psi,I^c}} \tracesin{\ivtr} \tightsubset \valueSem P^{-1}(\RR \setminus I) \cap \Sat(\Delta)$.
  By \cref{lem:make-boxes-disjoint}, we can assume that the interval traces $\traceset_{\Psi, I}$ are almost disjoint.
  Because of the almost sure termination assumption, the set of traces $\bigcup_{\symPath{\calV,n,\Delta,\Xi}} \Sat(\Delta) = \valueSem P^{-1}(\RR)$ where the union ranges over all symbolic paths of $P$ has measure 1.
  As a consequence, $\bigcup_{\Psi \in \mathit{symPaths}\symConf{P, 0, \emptyset, \emptyset}} (\traceset_{\Psi,I} \cup \traceset_{\Psi,I^c})$ is a compatible and exhaustive set of interval traces.
  Now
  \begin{align*}
    \measureSem P(I) &= \sum_{\symPath{\calV,n,\Delta,\Xi}} \; \llbracket \symPath{\calV,n,\Delta,\Xi} \rrbracket (I)\\
    &= \sum_{ \symPath{\calV,n,\Delta,\Xi}}
    \int_{\Sat_n(\Delta)}
      [\calV[\tr/\overline \alpha] \in I]
      \prod_{\calW \in \Xi} \calW[\tr/\overline \alpha]
    \D \tr \\
    &= \sum_{\symPath{\calV,n,\Delta,\Xi}}
    \sum_{\ivtr \in \traceset_{\symPath{\calV,n,\Delta,\Xi},I}}
    \int_{\tracesin{\ivtr}}
      \prod_{\calW \in \Xi} \calW[\tr/\overline \alpha]
    \D \tr
  \end{align*}
	where the outer sum ranges over the symbolic paths $\symPath{\calV,\allowbreak n,\allowbreak \Delta,\allowbreak \Xi} \in \mathit{symPaths}\symConf{P, 0, \emptyset, \emptyset}$.
	The first equality holds by \cref{thm:relation-semantics-symbolic}, the second one by \cref{lem:path-denotation-vs-satn} and the last one by the construction of $\traceset_{\symPath{\calV,n,\Delta,\Xi},I}$.

  \textbf{Step 3:} approximating the weight function.
  Let
  \begin{align*}
  	\textstyle\traceset' := \bigcup_{\Psi \in \mathit{symPaths}(P, 0, \emptyset, \emptyset)} \traceset_{\Psi,I}.
  \end{align*}
  For each $\ivtr \in \traceset'$, fix some $\epsilon_{\ivtr} > 0$, such that $\sum_{\ivtr \in \traceset'} \epsilon_{\ivtr} = \epsilon$.
  This can be achieved, for example, by enumerating $\traceset'$ as $\ivtr^{(1)}, \ivtr^{(2)}, \dots $ and choosing $\epsilon_{\ivtr^{(i)}} = 2^{-i}\epsilon$.
  By \cref{lem:approximate-weight-integral-sum}, we can find for each $\ivtr \in \traceset'$ a countable set $\mathcal S_{\ivtr}$ of interval traces such that:
  \begin{align*}
    &\sum_{\ivtr' \in \mathcal S_{\ivtr}} \volume(\ivtr') \sup_{\tr \in \tracesin{\ivtr'}} \prod_{\calW \in \Xi} \calW[\tr/\overline \alpha] - \epsilon_{\ivtr} / 2 \\
    &\le \int_{\tracesin{\ivtr}} \prod_{\calW \in \Xi} \calW[\tr/\overline \alpha] \D \tr \\
    &\le \sum_{\ivtr' \in \mathcal S_{\ivtr}} \volume(\ivtr') \min_{\tr \in \tracesin{\ivtr'}} \prod_{\calW \in \Xi} \calW[\tr/\overline \alpha] + \epsilon_{\ivtr} / 2
  \end{align*}
  Next, choose $\epsilon_{\ivtr'} > 0$ for each $\ivtr'$ in such a way that $\sum_{\ivtr' \in \mathcal S'_{\ivtr}} \epsilon_{\ivtr'} < \epsilon_{\ivtr} / 2$.
  By \cref{lem:approximate-weight-intervals}, we can find for each $\ivtr'$ a finite subdivision $\mathcal S'_{\ivtr'}$ such that for all $\ivtr'' \in \mathcal S'_{\ivtr'}$, we have
  \begin{align*}
    \sup_{\tr \in \tracesin{\ivtr''}} \weightSem P(\tr) = \sup_{\tr \in \tracesin{\ivtr''}} \prod_{\calW \in \Xi} \calW[\tr/\overline \alpha] &\ge \sup \weightSem P^\iv(\ivtr'') - \epsilon_{\ivtr'} \\
    \min_{\tr \in \tracesin{\ivtr''}} \weightSem P(\tr) = \min_{\tr \in \tracesin{\ivtr''}} \prod_{\calW \in \Xi} \calW[\tr/\overline \alpha] &\le \min \weightSem P^\iv(\ivtr'') + \epsilon_{\ivtr'}.
  \end{align*}
  Multiplying by $\volume(\ivtr'')$ and summing over all $\ivtr''$, we find together with the previous inequality
  \begin{align*}
    &\sum_{\ivtr' \in \mathcal S_{\ivtr}} \sum_{\ivtr'' \in \mathcal S'_{\ivtr'}} \volume(\ivtr'') \sup \weightSem P^\iv(\ivtr'') -  \epsilon_{\ivtr} \\
    &\le \sum_{\ivtr' \in \mathcal S_{\ivtr}} \sum_{\ivtr'' \in \mathcal S'_{\ivtr'}} \volume(\ivtr'') (\sup \weightSem P^\iv(\ivtr'') - \epsilon_{\ivtr'}) -  \epsilon_{\ivtr} / 2 \\
    &\le \int_{\tracesin{\ivtr}} \prod_{\calW \in \Xi} \calW[\tr/\overline \alpha] \D \tr \\
    &\le \sum_{\ivtr' \in \mathcal S_{\ivtr}} \sum_{\ivtr'' \in \mathcal S'_{\ivtr'}} \volume(\ivtr'') (\min \weightSem P^\iv(\ivtr'') + \epsilon_{\ivtr'}) + \epsilon_{\ivtr} / 2 \\
    &\leq \sum_{\ivtr' \in \mathcal S_{\ivtr}} \sum_{\ivtr'' \in \mathcal S'_{\ivtr'}} \volume(\ivtr'') \min \weightSem P^\iv(\ivtr'') + \epsilon_{\ivtr}
  \end{align*}
  because $\sum_{\ivtr'' \in \mathcal S'_{\ivtr'}} \volume(\ivtr'') \le 1$ and thus the contribution of all the $\epsilon_{\ivtr'}$ is at most $\epsilon_{\ivtr} / 2$.

  Overall, the desired trace set is given by
  \[ \traceset := \bigcup_{\Psi = \symPath{\calV,n,\Delta,\Xi}} \left( \traceset_{\Psi,I^c} \cup \bigcup_{\ivtr \in \traceset_{\Psi, I}} \bigcup_{\ivtr' \in \mathcal S_{\ivtr}} \mathcal S'_{\ivtr'}\right), \]
  is compatible and exhaustive because it is a subdivision of $\traceset_{\Psi,I^c}$ and $\traceset_{\Psi,I}$.
  By construction, we have $\valueSem P^\iv(\ivtr) \subseteq I$ for $\ivtr \in \traceset_{\Psi,I}$ and $\valueSem P^\iv(\ivtr) \cap I = \emptyset$ for $\ivtr \in \traceset_{\Psi,I^c}$.
  Hence the $\traceset_{\Psi,I^c}$-summands vanish in the sum for the bounds and we obtain
  \begin{align*}
    &\upperBound P^\traceset(I) - \epsilon \\
    &= \sum_{\ivtr \in \traceset} \volume(\ivtr) (\sup \weightSem P^\iv(\ivtr)) [\valueSem P^\iv(\ivtr) \cap I \ne \emptyset] - \epsilon \\
    &= \sum_{\Psi = \symPath{\calV,n,\Delta,\Xi}}
    \sum_{\ivtr \in \traceset_{\Psi,I}} \\
    &\qquad
    \left(
    \sum_{\ivtr' \in \mathcal S_{\ivtr}}
    \sum_{\ivtr'' \in \mathcal S'_{\ivtr'}}
    \volume(\ivtr'') \sup \weightSem P^\iv(\ivtr'') - \epsilon_{\ivtr}
    \right) \\
    &\le \underbrace{\sum_{\Psi = \symPath{\calV,n,\Delta,\Xi}}
    \sum_{\ivtr \in \traceset_{\Psi,I}}
    \int_{\tracesin{\ivtr}} \prod_{\calW \in \Xi} \calW[\tr/\overline \alpha] \D \tr}_{=\measureSem P(I)} \\
    &\le \sum_{\Psi = \symPath{\calV,n,\Delta,\Xi}}
    \sum_{\ivtr \in \traceset_{\Psi,I}} \\
    &\qquad \left(
    \sum_{\ivtr' \in \mathcal S_{\ivtr}}
    \sum_{\ivtr'' \in \mathcal S'_{\ivtr'}}
    \volume(\ivtr'') \min \weightSem P^\iv(\ivtr'') + \epsilon_{\ivtr}
    \right) \\
    &= \sum_{\ivtr \in \traceset} \volume(\ivtr) (\min \weightSem P^\iv(\ivtr)) [\valueSem P^\iv(\ivtr) \subseteq I] + \epsilon \\
    &= \lowerBound P^\traceset(I) + \epsilon. \qedhere
  \end{align*}
\end{proof}

\begin{lemma}\label{lem:make-boxes-disjoint}
Given a countable set of interval traces $\traceset \subseteq \iv^n$, there is a countable set of interval traces $\traceset' \subseteq \iv^n$ that is compatible and satisfies $\bigcup_{\ivtr \in \traceset} \tracesin{\ivtr} = \bigcup_{\ivtr \in \traceset'} \tracesin{\ivtr}$.
\end{lemma}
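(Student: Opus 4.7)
The plan is to enumerate the interval traces in $\traceset$ and build a compatible refinement $\traceset'$ greedily: at each step we replace the new trace by the pieces of it that are not already covered by the interval traces kept so far. The key geometric building block is a one-step subtraction lemma: for any two interval traces $\ivtr, \ivtr' \in \iv^n$, one can find a finite set $\mathcal{B} \subseteq \iv^n$ of pairwise compatible interval traces, each compatible with $\ivtr'$, such that $\tracesin{\ivtr'} \cup \bigcup_{B \in \mathcal{B}} \tracesin{B} = \tracesin{\ivtr'} \cup \tracesin{\ivtr}$.

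To prove this one-step lemma, I would write $\ivtr = \langle [a_i, b_i] \rangle_{i=1}^n$ and $\ivtr' = \langle [c_i, d_i] \rangle_{i=1}^n$, and in each coordinate $i$ partition $[a_i, b_i]$ into the (at most three, possibly empty or degenerate) closed almost-disjoint sub-intervals $[a_i, \min(b_i, c_i)]$, $[\max(a_i, c_i), \min(b_i, d_i)]$, and $[\max(a_i, d_i), b_i]$. Taking the product over all coordinates yields at most $3^n$ pairwise almost-disjoint sub-boxes of $\tracesin{\ivtr}$ whose union is $\tracesin{\ivtr}$; exactly one is $\tracesin{\ivtr} \cap \tracesin{\ivtr'}$, and each of the remaining ones uses a non-middle sub-interval in some coordinate and is therefore compatible with $\ivtr'$. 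Taking $\mathcal{B}$ to be the collection of these remaining sub-boxes proves the claim. Iterating this construction against a finite list of previously-kept traces, I can subtract finitely many boxes at once and still stay inside $\iv^n$: sub-boxes of almost-disjoint pairs remain almost-disjoint from them, so the fresh pieces stay compatible with the earlier traces as well as with each other.

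With this tool in hand I would define $\traceset'$ by induction. Enumerate $\traceset = \{\ivtr^{(k)}\}_{k \in \NN}$, put $\traceset'_0 := \emptyset$, and given a finite pairwise compatible $\traceset'_{k-1}$ satisfying $\bigcup_{\ivtr' \in \traceset'_{k-1}} \tracesin{\ivtr'} = \bigcup_{j < k} \tracesin{\ivtr^{(j)}}$, apply the iterated subtraction to $\ivtr^{(k)}$ against the elements of $\traceset'_{k-1}$ to obtain finitely many new compatible interval traces, and let $\traceset'_k$ be the union with $\traceset'_{k-1}$. Then $\traceset' := \bigcup_{k \in \NN} \traceset'_k$ is a countable union of finite sets, pairwise compatible because any two of its members both lie in some $\traceset'_k$, and the desired union identity follows by passing to the limit in the inductive invariant. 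The only delicate point, which I expect to be the main thing to get right, is the bookkeeping with closed endpoints: the statement demands genuine set equality (not merely equality modulo a null set) between the two unions, which is why the three-piece sub-interval decomposition above is phrased using closed intervals meeting at common endpoints rather than a half-open partition.
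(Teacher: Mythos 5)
Your proof is correct and follows essentially the same route as the paper's: enumerate $\traceset$ and replace each trace by the portion not covered by its predecessors, expressed as finitely many pairwise almost-disjoint boxes. The paper simply asserts that each such residual set (it takes the closure of $A(m)\setminus\bigcup_{i<m}A(i)$) is a finite union of boxes, whereas you supply the explicit $3^n$ grid decomposition justifying that claim; the two arguments are otherwise the same.
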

\begin{proof}
  Let $A: \NN \to \traceset$ be an enumeration.
  Define $A': \NN \to \Sigma_{\RR^n}$ by $m \mapsto \overline{A(m) \setminus \bigcup_{i=0}^{m-1} A(i)}$ where $\overline{S}$ denotes the closure of $S$.
  Then the collection $\{ A'(m) \mid m \in \NN \}$ is pairwise almost disjoint, and each $A'(m)$ can be written as a finite union of boxes, proving the claim.
\end{proof}

\begin{lemma}\label{lem:constraint-set-interval-approx}
  Let $\calV$ a symbolic value of ground type containing each sample variable $\alpha_1,\dots,\alpha_n$ at most once and $[x,y]$ an interval.
  Then there is a countable set of pairwise disjoint interval traces $\traceset \subset \iv_{[0,1]}^n$ such that
  \[ \bigcup_{\ivtr \in \traceset} \tracesin{\ivtr} \tightsubset \{ \tr \in [0,1]^n \mid \calV[\tr/\overline \alpha] \in [x,y] \}. \]
\end{lemma}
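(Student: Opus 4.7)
The plan is to reduce the statement to the interval separability of primitive functions. The key observation is that the map $g : \RR^n \to \RR$ defined by $g(\tr) := \calV[\tr/\overline \alpha]$ is itself a primitive function. Indeed, since $\calV$ has ground type, it is built up from constants, sample variables, and primitive function applications (no $\lambda$-abstractions or fixed points). Because each sample variable $\alpha_i$ appears at most once in $\calV$, the induced substitution can be realised as an iterated composition of primitive functions, with constants inserted for variable slots that do not appear. By Assumption~2 (closure of $\mathcal F$ under composition), we conclude $g \in \mathcal F$, so $g$ is interval separable.

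Next, by interval separability of $g$, there is a countable family $\mathcal B \subseteq \iv^n$ of boxes with $\bigcup \mathcal B \tightsubset g^{-1}([x,y])$. Intersecting each $B \in \mathcal B$ with $[0,1]^n$ produces a countable collection $\mathcal B' \subseteq \iv_{[0,1]}^n$ of length-$n$ interval traces. Since intersecting with a fixed measurable set preserves inclusion and maps null sets to null sets, this yields
\[
\bigcup_{\ivtr \in \mathcal B'} \tracesin{\ivtr} \;=\; \Big(\bigcup \mathcal B\Big) \cap [0,1]^n \;\tightsubset\; \{\tr \in [0,1]^n \mid \calV[\tr/\overline \alpha] \in [x,y]\}.
\]

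Finally, to meet the pairwise (almost) disjointness requirement, I would apply \cref{lem:make-boxes-disjoint} to $\mathcal B'$. This rewrites the same union as a countable, pairwise almost-disjoint family of boxes, giving the desired $\traceset$. The main obstacle is the first step: justifying that $g$ really lies in $\mathcal F$. This crucially depends on the ``at most once'' hypothesis on $\calV$ --- if some $\alpha_i$ occurred twice, the substitution would implicitly perform a diagonal embedding, which need not preserve interval separability (cf.\ the counterexample $\sample - \sample$ discussed at the beginning of \cref{sec:completeness}). Once $g \in \mathcal F$ is established, the remaining steps are routine manipulations of boxes and null sets.
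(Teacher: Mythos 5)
Your proof is correct and follows essentially the same route as the paper's: both reduce the claim to interval separability of the composite function $\tr \mapsto \calV[\tr/\overline\alpha]$, using the ground-type structure of $\calV$, the ``at most once'' hypothesis, and closure of the primitive functions under composition, and then invoke the definition of interval separability to produce the countable family of boxes. Your additional steps (intersecting with $[0,1]^n$ and invoking \cref{lem:make-boxes-disjoint} for pairwise disjointness) are sensible finishing touches that the paper's terser proof leaves implicit.
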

\begin{proof}
  If $\calV$ is of ground type, then it is simply a composition of primitive functions applied to sample variables and literals.
  Since the set of primitive functions is closed under composition and since no sample variable occurs twice, this composition is still an interval separable function $f$ of the sample variables.
  By definition of interval separability, there is a countable set of interval traces $\mathcal J$ such that $\bigcup_{\ivtr \in \mathcal J} \tracesin{\ivtr} \tightsubset f^{-1}([x,y])$, as desired.
\end{proof}

\begin{lemma}\label{lem:approximate-weight-integral-sum}
  Let $\ivtr \in \iv^n$ be an interval trace and $\Xi$ a set of symbolic values with sample variables from $\overline \alpha = \alpha_1,\dots,\alpha_n$.
  Then for any $\epsilon > 0$, there is a countable subdivision $\traceset$ of $\ivtr$ such that
  \begin{align*}
    &\sum_{\ivtr' \in \traceset} \volume(\ivtr') \sup_{\tr \in \tracesin{\ivtr'}} \prod_{\calW \in \Xi} \calW[\tr/\overline \alpha] - \epsilon \\
    &\le \int_{\tracesin{\ivtr}} \prod_{\calW \in \Xi} \calW[\tr/\overline \alpha] \D \tr \\
    &\le \sum_{\ivtr' \in \traceset} \volume(\ivtr') \min_{\tr \in \tracesin{\ivtr'}} \prod_{\calW \in \Xi} \calW[\tr/\overline \alpha] + \epsilon.
  \end{align*}
\end{lemma}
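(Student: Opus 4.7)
The plan is to approximate the integral by Riemann-like upper and lower sums over a countable, sufficiently fine subdivision of $\tracesin{\ivtr}$, using that the integrand $g(\tr) := \prod_{\calW \in \Xi} \calW[\tr/\overline\alpha]$ is (boxwise) continuous. First, I would observe that each $\calW \in \Xi$, viewed as a function of $\overline\alpha$, is a composition of primitive functions applied to projections and constants. By Assumption~2 (closure of primitive functions under composition and boxwise continuity), each $\calW[\cdot/\overline\alpha]$ is boxwise continuous. Taking the common refinement of the witnessing box families shows that $g$, being a finite product of boxwise continuous functions, is boxwise continuous. Thus $\tracesin{\ivtr}$ admits a countable almost-disjoint decomposition into compact boxes $C_1, C_2, \ldots$ on each of which $g$ is continuous (any unbounded or half-open witnessing box can be further decomposed into countably many compact sub-boxes).

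Next, for each $C_i$ I would exploit that $g|_{C_i}$ is continuous on a compact set, hence uniformly continuous and bounded. Consequently, for any $\delta_i > 0$, a sufficiently fine finite subdivision of $C_i$ into almost-disjoint sub-boxes $C_{i,1},\ldots,C_{i,k_i}$ forces the oscillation of $g$ on each sub-box to be so small that the upper and lower Riemann sums both come within $\delta_i$ of $\int_{C_i} g \D \tr$:
\[ \sum_{j=1}^{k_i} \volume(C_{i,j}) \sup_{\tr \in C_{i,j}} g(\tr) - \delta_i \le \int_{C_i} g \D \tr \le \sum_{j=1}^{k_i} \volume(C_{i,j}) \min_{\tr \in C_{i,j}} g(\tr) + \delta_i. \]
This is the classical Darboux characterization for continuous functions on compact boxes, and the Riemann integral agrees with the Lebesgue integral in this setting. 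Note also that by compactness and continuity, the $\sup$ and $\min$ on each sub-box are actually attained.

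Finally, I would choose $\delta_i := \epsilon \cdot 2^{-i-1}$ so that $\sum_i \delta_i \le \epsilon$, and take $\traceset$ to consist of the interval traces corresponding to all $C_{i,j}$. Then $\traceset$ is a countable subdivision of $\ivtr$, and summing the per-$C_i$ inequalities, together with $\int_{\tracesin{\ivtr}} g \D\tr = \sum_i \int_{C_i} g \D\tr$ (by almost-disjointness of the $C_i$), yields the desired bounds. The main obstacle is the first step---namely, verifying that boxwise continuity is preserved under finite products and that the witnessing boxes may be taken compact---since the subsequent Riemann approximation is then standard. Preservation under products follows from intersecting witnessing box families, noting that finite products of continuous functions are continuous; compactness is achieved by exhausting each witnessing box by countably many closed bounded sub-boxes, whose boundaries form a null set and therefore do not affect the integral.
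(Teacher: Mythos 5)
Your proposal is correct and follows essentially the same route as the paper's proof: use boxwise continuity (preserved under finite products by intersecting the witnessing box families) to obtain a countable subdivision on whose compact boxes the integrand is continuous, then approximate each piece by Darboux/Riemann sums via uniform continuity, invoking the agreement of the Riemann and Lebesgue integrals. Your explicit allocation of the error budget $\delta_i = \epsilon\,2^{-i-1}$ across the countably many boxes is a detail the paper leaves implicit, but the argument is the same.
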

\begin{proof}
  Values are simply boxwise continuous functions applied to the sample variables.
  Intersecting the boxes for each $\calW \in \Xi$, we see that the function
  \[ f: \tracesin{\ivtr} \to \RR, \quad \tr \mapsto \prod_{\calW \in \Xi} \calW[\tr/\overline \alpha] \]
  is boxwise continuous.
  We can thus find a countable subdivision $\traceset_\mathsf{cont}$ of $\ivtr$ such that $f$ is continuous on each $\ivtr' \in \traceset_\mathsf{cont}$.

  Since we can sum over the $\ivtr' \in \traceset_\mathsf{cont}$, it suffices to prove that each integral $\int_{\tracesin{\ivtr'}} f(\tr) \D \tr$ can be approximated arbitrarily closely.
  Note that each such integral is finite because a continuous function is bounded on a compact set and the measure of $\tracesin{\ivtr'}$ is finite.
  But then such approximations are given by Riemann sums, i.e.~the sums that are used to define the Riemann integral.
  As a concrete example, one can consider the subdivision $\traceset_m$ of $\ivtr'$ in $m$ equidistant sections in each dimension (consisting of $m^n$ parts overall).
  Then $\sum_{\ivtr'' \in \traceset_m} \volume(\ivtr'') \min_{\tr \in \tracesin{\ivtr''}} f(\tr)$ converges to the Riemann integral $\int_{\tracesin{\ivtr'}} f(\tr) \D \tr$ as $m \to \infty$ (and similarly for the supremum).
  Since it is known that the Riemann integral and the Lebesgue integral have the same value for continuous functions on a Cartesian product of compact intervals, the claim follows immediately.
\end{proof}

\begin{lemma}[Relationship between symbolic execution and interval semantics]
  Let $\Psi = \symPath{\calV, n, \Delta, \Xi}$ be a symbolic path of $P$ and $\ivtr$ an interval trace with $\tracesin{\ivtr} \subseteq \Sat_n(\Delta)$.
  Suppose furthermore that all the symbolic values contain each of the sample variables $\overline \alpha = \alpha_1, \dots, \alpha_n$ at most once.
  Then there is a subdivision $\traceset$ of $\ivtr$ such that for all $\ivtr' \in \traceset$, the interval semantics for the value is precise:
  \[ \valueSem P^\iv(\ivtr') = \{ \valueSem P(\tr) \mid \tr \in \tracesin{\ivtr'} \}. \]

  For each symbolic score value $\calW \in \Xi$, let $[\calW_{\ivtr'}^{-}, \calW_{\ivtr'}^{+}]$ be its interval approximation, i.e. $\calW_{\ivtr'}^{-} = \min_{\tr \in \tracesin{\ivtr'}} \calW[\tr/\overline \alpha]$ and $\calW_{\ivtr'}^{+} = \sup_{\tr \in \tracesin{\ivtr'}} \calW[\tr/\overline \alpha]$. Then
  \[ \weightSem P^\iv(\ivtr') = \left[ \prod_{\calW \in \Xi} \calW_{\ivtr'}^{-}, \prod_{\calW \in \Xi} \calW_{\ivtr'}^{+} \right]. \]
  \label{lem:relationship-symbolic-interval}
\end{lemma}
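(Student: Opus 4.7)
The plan is to construct $\traceset$ by repeatedly subdividing $\ivtr$ until (i) every branching guard $\calC$ encountered in the symbolic reduction along $\Psi$ has an interval evaluation that is decisively on one side of $0$, and (ii) every primitive function composing any symbolic value occurring in $\Psi$ is continuous on each subdivision, which is possible by boxwise continuity of primitives. Since only finitely many such functions and guards appear along the single path $\Psi$, a common refinement always exists.

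The central technical ingredient is a form of the fundamental theorem of interval arithmetic: under the single-occurrence assumption, for any symbolic value $\calV'$ built from $\overline\alpha$ and any interval trace $\ivtr'$ refining $\ivtr$,
\[ (\calV')^\iv(\ivtr') = \big[\,\inf_{\tr \in \tracesin{\ivtr'}} \calV'[\tr/\overline\alpha],\; \sup_{\tr \in \tracesin{\ivtr'}} \calV'[\tr/\overline\alpha]\,\big]. \]
I would prove this by structural induction on $\calV'$: base cases are immediate; for $f(\calV_1', \dots, \calV_k')$, the key observation is that since no sample variable appears twice across the $\calV_i'$, the inductive image intervals live over \emph{disjoint} sets of sample variables, so their Cartesian product is exactly realised by $\tracesin{\ivtr'}$, and then the definition $f^\iv(I_1,\dots,I_k) = [\inf F,\sup F]$ yields the exact image of the whole expression.

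Using this, I would show by induction on the length of the symbolic reduction of $\Psi$ that the interval reduction of $\intConf{P, \ivtr', [1,1]}$ mirrors it step by step, arriving at a configuration $\intConf{\lit{I}, \langle\rangle, w}$ whose value and weight components reflect $\calV$ and the accumulated score intervals respectively. The nontrivial case is branching: for a guard $\calC$ with $\calC \leq 0 \in \Delta$, since $\tracesin{\ivtr'} \subseteq \tracesin{\ivtr} \subseteq \Sat_n(\Delta)$ the image of $\calC$ lies in $(-\infty,0]$, so the upper endpoint of $\calC^\iv(\ivtr')$ is $\le 0$; for $\calC > 0 \in \Delta$, the image lies in $(0,\infty)$, and continuity on the compact box $\tracesin{\ivtr'}$ forces the infimum to be attained, hence strictly positive. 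Once the path has been matched, the terminal value interval is $\calV^\iv(\ivtr')$, which equals the image of $\calV$ by the fundamental theorem, and the accumulated weight interval is $\prod_{\calW \in \Xi} \calW^\iv(\ivtr') = \prod_{\calW \in \Xi} [\calW_{\ivtr'}^-, \calW_{\ivtr'}^+]$ by the fundamental theorem applied to each $\calW$ separately.

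I expect the main obstacle to be the strict-inequality case at branching guards, where one must show $\inf \calC^\iv(\ivtr') > 0$ rather than merely $\ge 0$. This combines the single-occurrence assumption (which provides an exact image) with boxwise continuity (which ensures the infimum is attained on the compact box, and is therefore strictly positive once the image lies in $(0,\infty)$). A secondary, more bookkeeping-heavy issue is propagating the matched-path induction through evaluation contexts and ensuring the weight accumulates intervals in the right order; this requires a careful statement of the inductive invariant relating an intermediate symbolic configuration to the corresponding intermediate interval configuration.
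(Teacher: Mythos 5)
Your proposal is correct and follows essentially the same route as the paper's proof: subdivide $\ivtr$ using boxwise continuity so that each symbolic value is continuous on every sub-box, observe that the image of a box under a continuous function is a compact, connected set (hence an interval) so the interval lifting $[\inf F, \sup F]$ is the exact image, and use the single-occurrence assumption to propagate this exactness through the composition and through the product of the score intervals. You are in fact more explicit than the paper on two points its proof leaves implicit --- the structural induction showing that the \emph{nested} interval evaluation (rather than a single lifted composite) computes the exact image, and the argument that the interval reduction does not get stuck at conditionals (strict positivity of the guard's infimum via compactness) --- both of which are needed for $\valueSem P^\iv(\ivtr')$ to be anything other than $[-\infty,\infty]$.
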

\begin{proof}
  The symbolic value $\calV$ is a composition of primitive functions applied to $\alpha$'s.
  Hence the are boxwise continuous functions of the $\alpha$'s.
  We pick a suitable subdivision $\traceset$ such that all these functions are continuous when restricted to any $\ivtr' \in \traceset$.
  For any such function $f$, we have
  \[ f_\iv([x_1,y_1], \dots, [x_m,y_m]) = [\inf F, \sup F] \]
  where $F := f([x_1,y_1] \times \dots \times [x_m,y_m]) \subseteq \RR$, by definition.
  Then continuity  implies that the image of any box is a compact and path-connected subset of $\RR$, i.e. an interval.
  Hence we even have $f_\iv([x_1,y_1], \dots, [x_m,y_m]) = F$, i.e. the image of any box equals its interval approximation, proving the claim about the value semantics.

  For the interval semantics of the weight, note that the previous argument applies to every symbolic score value $\calW \in \Xi$, proving that
  \[ \{ \calW[\tr/\overline \alpha] \mid \tr \in \tracesin{\ivtr'} \} = [\calW_{\ivtr'}^{-}, \calW_{\ivtr'}^{+}]. \]
  Since the interval semantics multiplies the interval approximation of each score value in interval arithmetic, this implies the claim.
\end{proof}

Note that the interval approximation of the weight is imprecise in the following sense:
  \[ \weightSem P^\iv(\ivtr') \ne \left\{ \prod_{\calW \in \Xi} \calW[\tr/\overline \alpha] \,\middle\vert\, \tr \in \tracesin{\ivtr'} \right\}. \]
As an example, if $\Xi = \{ \alpha_1, 1 - \alpha_1 \}$ and $\ivtr'' = \langle [0,1] \rangle$ then the left-hand side is $[0,1]$ because each of the weights is approximated by $[0,1]$, but the right-hand side is $[0,1/4]$ because the function $\alpha_1(1-\alpha_1)$ attains its maximum at $1/4$, not $1$.

\begin{lemma}
  Let $\Psi = \symPath{\calV, n, \Delta, \Xi}$ be a symbolic path of $P$ and $\ivtr$ an interval trace with $\tracesin{\ivtr} \subseteq \Sat_n(\Delta)$.
  Suppose furthermore that all the symbolic values contain each of the sample variables $\overline \alpha = \alpha_1, \dots, \alpha_n$ at most once.
  Then for all $\epsilon > 0$, there is a subdivision $\traceset$ of $\ivtr$ such that for all $\ivtr' \in \traceset$, we have $\min_{\tr \in \tracesin{\ivtr'}} \weightSem P(\tr) \le \min \weightSem P^\iv(\ivtr') + \epsilon$ and $\sup_{\tr \in \tracesin{\ivtr'}} \weightSem P(\tr) \ge \sup \weightSem P^\iv(\ivtr') - \epsilon$.
  \label{lem:approximate-weight-intervals}
\end{lemma}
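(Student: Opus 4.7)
The plan is to first reduce to the continuous case via \cref{lem:relationship-symbolic-interval}, and then exploit uniform continuity on a compact box to make the interval-arithmetic overapproximation of the product of score values arbitrarily tight.

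First, I apply \cref{lem:relationship-symbolic-interval} to obtain a subdivision $\traceset_0$ of $\ivtr$ such that on each $\ivtr_0 \in \traceset_0$, every $\calW \in \Xi$ is continuous on the compact box $\tracesin{\ivtr_0}$, and $\weightSem P^\iv(\ivtr_0) = \big[\prod_{\calW \in \Xi}\calW_{\ivtr_0}^{-},\; \prod_{\calW \in \Xi}\calW_{\ivtr_0}^{+}\big]$. Since each $\calW$ is continuous on a compact set, it is uniformly continuous and bounded; moreover, the $\score$ reduction rule forces each $\calW[\tr/\overline\alpha]$ to be nonnegative. Let $k := |\Xi|$ and, for each $\ivtr_0$, let $M := \max_{\calW \in \Xi}\sup_{\tr \in \tracesin{\ivtr_0}} \calW[\tr/\overline\alpha]$.

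Second, for each $\ivtr_0 \in \traceset_0$, I choose $\delta > 0$ small enough that $(M+\delta)^k - M^k < \epsilon$, which is possible since $(M+\delta)^k \to M^k$ as $\delta \to 0$. By uniform continuity, there is a finite subdivision of $\ivtr_0$ into smaller boxes $\ivtr'$ on which $\calW_{\ivtr'}^{+} - \calW_{\ivtr'}^{-} < \delta$ for every $\calW \in \Xi$. Taking the union of these finite subdivisions over $\ivtr_0 \in \traceset_0$ yields the desired subdivision $\traceset$ of $\ivtr$.

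Third, to verify the two inequalities, fix $\ivtr' \in \traceset$ contained in some $\ivtr_0 \in \traceset_0$ and write $w_\calW := \calW_{\ivtr'}^{-}$. For every $\tr \in \tracesin{\ivtr'}$ and every $\calW \in \Xi$ we have $\calW[\tr/\overline\alpha] \in [w_\calW, w_\calW + \delta]$, so by nonnegativity
\begin{align*}
\prod_{\calW \in \Xi} w_\calW \;\leq\; \weightSem P(\tr) \;=\; \prod_{\calW \in \Xi}\calW[\tr/\overline\alpha] \;\leq\; \prod_{\calW \in \Xi}(w_\calW + \delta).
\end{align*}
By \cref{lem:relationship-symbolic-interval}, $\min\weightSem P^\iv(\ivtr') = \prod_{\calW} w_\calW$ and $\sup\weightSem P^\iv(\ivtr') = \prod_{\calW}\calW_{\ivtr'}^{+} \leq \prod_{\calW}(w_\calW + \delta)$. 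Expanding the product and using $w_\calW \leq M$ gives $\prod_{\calW}(w_\calW + \delta) - \prod_{\calW} w_\calW \leq (M+\delta)^k - M^k < \epsilon$. Since $\tracesin{\ivtr'}$ is nonempty, picking any $\tr^\star \in \tracesin{\ivtr'}$ yields $\min_{\tr}\weightSem P(\tr) \leq \weightSem P(\tr^\star) \leq \prod_{\calW}(w_\calW+\delta) \leq \min\weightSem P^\iv(\ivtr') + \epsilon$ and symmetrically $\sup_{\tr}\weightSem P(\tr) \geq \weightSem P(\tr^\star) \geq \prod_{\calW} w_\calW \geq \sup\weightSem P^\iv(\ivtr') - \epsilon$.

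The only real obstacle is controlling the interval-arithmetic ``dependence blow-up'' of the product: two separate subdivisions are needed, first to bring each $\calW$ into the continuous regime so that \cref{lem:relationship-symbolic-interval} makes $\weightSem P^\iv$ factor as a product of per-$\calW$ intervals, and second to shrink those per-$\calW$ intervals enough that the binomial estimate $(M+\delta)^k - M^k$ falls below $\epsilon$. Everything else is bookkeeping, since the at-most-once assumption on sample variables is consumed entirely by the invocation of \cref{lem:relationship-symbolic-interval}.
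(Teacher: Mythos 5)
Your proposal is correct and follows essentially the same route as the paper's proof: subdivide via boxwise continuity so that each $\calW$ is continuous on a compact box (and \cref{lem:relationship-symbolic-interval} makes $\weightSem P^\iv$ factor as a product of per-$\calW$ intervals), then use uniform continuity to shrink each factor's interval below a $\delta$ chosen so that the product's width falls below $\epsilon$. The only difference is cosmetic — you bound the product increment by $(M+\delta)^k - M^k$ whereas the paper fixes $\epsilon' = \epsilon / W^{|\Xi|-1}$ upfront; both are the same telescoping estimate.
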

\begin{proof}
  Since for each $\calW \in \Xi$, the function $f: \tracesin{\ivtr} \to \RR, \tr \mapsto \calW[\tr/\overline \alpha]$ is boxwise continuous (a property of primitive functions), we can find a countable subdivision $\traceset'$ of $\ivtr$, such that for all $\ivtr' \in \traceset'$, $f$ is continuous on $\tracesin{\ivtr'}$.
  Hence it suffices to prove the statement for each $\ivtr'$.

  Since $\tracesin{\ivtr'}$ is compact (because it's closed and bounded), $f$ attains a maximum $W < \infty$ on $\tracesin{\ivtr'}$ and is even uniformly continuous on $\ivtr'$.
  Hence there is a $\delta > 0$ such that whenever $||\tr - \tr'|| < \delta$ then $|f(\tr) - f(\tr')| < \epsilon' := \frac{\epsilon}{W^{|\Xi|-1}}$.

  Let $\traceset$ be a subdivision where each interval trace $\ivtr \in \traceset$ has diameter less than $\delta$.
  For $\ivtr \in \traceset$ and $\calW \in \Xi$, let $\calW_{\ivtr}^{-} := \min_{\tr \in \tracesin{\ivtr}} \calW[\tr/\alpha]$ and $\calW_{\ivtr}^{+} := \sup_{\tr \in \tracesin{\ivtr}} \calW[\tr/\alpha]$.
  By the choice of $\traceset$, we have $\calW_{\ivtr}^{+} \le \calW_{\ivtr}^{-} + \epsilon'$ for $\ivtr \in \traceset$.
  By \cref{lem:relationship-symbolic-interval}, we find that $\sup \weightSem P^\iv(\ivtr) = \prod_{\calW \in \Xi} \calW_{\ivtr}^{+}$ and $\min \weightSem P^\iv(\ivtr) = \prod_{\calW \in \Xi} \calW_{\ivtr}^{-}$.
  As a consequence, we have
  \begin{align*}
    \sup \weightSem P^\iv(\ivtr) - \min \weightSem P^\iv(\ivtr) &= \prod_{\calW \in \Xi} \calW_{\ivtr}^{+} - \prod_{\calW \in \Xi} \calW_{\ivtr}^{-} \\
    &< \prod_{\calW \in \Xi} (\calW_{\ivtr}^{-} + \epsilon') - \prod_{\calW \in \Xi} \calW_{\ivtr}^{-} \\
    &< \epsilon' W^{|\Xi| - 1} = \epsilon
  \end{align*}

  So the interval $\weightSem P^\iv(\ivtr)$ has diameter less than $\epsilon$.
  Since the interval $\{ \weightSem P(\tr) \mid \tr \in \tracesin{\ivtr'} \}$ is contained in it (by soundness), the claim follows.
\end{proof}

\corollaryCompleteness*
\begin{proof}
By \cref{thm:completeness}, we can find for each $n \in \NN$ a set of interval traces $\traceset'_n$ such that $\lowerBound P^{\traceset'_n}(I) > \measureSem P(I) - 1/n$.
Since the lower bound is defined as a sum over $\traceset'_n$, there is a finite subset $\traceset_n$ such that $\lowerBound P^{\traceset_n}(I) > \measureSem P(I) - 2/n$.
Since $\traceset'_n$ is still compatible, the soundness result yields $\lowerBound P^{\traceset_n}(I) \le \measureSem P(I)$, implying the claim.
\end{proof}

Finitely many interval traces are not enough for complete upper bounds, if the weight function is unbounded.
This issue arises even if we can compute the tightest possible bounds on the weight function, as the following program illustrates.

\begin{example}
Consider the following probabilistic program expressed in pseudocode.

\algdef{SE}[SUBALG]{Indent}{EndIndent}{}{\algorithmicend\ }%
\algtext*{Indent}
\algtext*{EndIndent}

\parbox{\linewidth-10pt}{
	\begin{algorithmic}
		\State $\mathit{threshold}$ := $1$
		\State \textbf{while} $(\sample \leq \mathit{threshold})$ \textbf{do}
		\Indent
		\State $\mathit{threshold}$ := $\tfrac{\mathit{threshold}}{2}$
		\State \textbf{score}$(2)$
		\EndIndent
	\end{algorithmic}
}
\noindent
The program only requires addition and scalar multiplication.
It can even be implemented using call-by-name (CbN) semantics (which allows each sampled value to be used at most once).
For example in SPCF we can write
\[ P \equiv \big(\fixLam \varphi s \ifElse{\sample - s}{\score(2); \varphi (s / 2)} 1\big) \, 1 \]
The program $P$ has the weight function
\[
  \weightSem{P}(\langle t_0, \dots, t_n \rangle) = \begin{cases}
    2^n &\text{if } t_n > 2^{-n} \land \\
    &\quad \forall i \in \{0, \dots, n-1\}: t_i \le 2^{-i} \\
    0 &\text{otherwise.}
  \end{cases}
\]
$P$ is integrable because the normalizing constant is
\begin{align*}
  Z &= \int_\traces \weightSem{P}(\tr) \D \tr
  = \sum_{n = 1}^\infty 2^n \times (1-2^{-n}) \prod_{i=0}^{n-1} 2^{-i} \\
  &= \sum_{n = 1}^\infty 2^n(1-2^{-n}) 2^{-n(n-1)/2}
  < \infty.
\end{align*}
We claim that $P$ requires infinitely many interval traces for the upper bound to converge to the true denotation.
Define the sets of traces $T_n$ for $n \ge 1$ by
\[ T_n = [0, 2^0] \times [0, 2^{-1}] \times \dots \times [0, 2^{-n+1}] \times (2^{-n}, 1]. \]
Suppose we are given an arbitrary finite exhaustive set of interval traces.
This set needs to cover all of the $T_n$'s, so one interval trace, say $\ivtr$, must cover infinitely many $T_n$'s.
Since $\weightSem P(\tr) = 2^n$ for $\tr \in T_n$, the weight function on $\ivtr$ is unbounded.
Therefore, the only possible upper bound for $\ivtr$ is $\infty$, \emph{even if our semantics could compute the set $\{ \weightSem P(\tr) \mid \tr \in \tracesin{\ivtr} \}$ exactly}.
Hence any finite exhaustive interval trace has upper bound $\infty$, while the true denotation is finite.
As we have seen, this is not because of imprecision of interval analysis, but an inherent problem if the weight function is unbounded.
So we cannot hope for complete upper bounds with finitely many interval traces.
\label{rem:countable-traces}
\end{example}

% !TEX root = ./main.tex

\section{Supplementary Material for \texorpdfstring{\cref{sec:5interval-analysis}}{Section~5}}
\label{app:sec5}

\begin{figure*}[!t]
	\small
	\begin{minipage}{0.25\linewidth}
		\vspace{6mm}
		\begin{prooftree}
			\def\ScoreOverhang{1pt}
			\AxiomC{$x:\sigma \in \Gamma$}
			\RightLabel{\small(\textsc{Var})}
			\UnaryInfC{$\Gamma \vdash x: \exType{\sigma}{\mathbf{1}} $}
		\end{prooftree}
	\end{minipage}%
	\begin{minipage}{0.25\linewidth}
		\begin{prooftree}
			\def\ScoreOverhang{1pt}
			\def\defaultHypSeparation{\hskip .1in}
			\AxiomC{$\Gamma \vdash M : \calA$}
			\AxiomC{$\calA \sqsubseteq_\calA \calB$}
			\RightLabel{\small(\textsc{Sub})}
			\BinaryInfC{$\Gamma \vdash M : \calB$}
		\end{prooftree}
	\end{minipage}%
	\begin{minipage}{0.25\linewidth}
		\vspace{8mm}
		\begin{prooftree}
			\def\ScoreOverhang{1pt}
			\AxiomC{$\Gamma; x:\sigma \vdash M : \calA$}
			\RightLabel{\small(\textsc{Abs})}
			\UnaryInfC{$\Gamma \vdash \lambda x. M : \exType{\sigma \to \calA}{\mathbf{1}} $}
		\end{prooftree}
	\end{minipage}%
	\begin{minipage}{0.25\linewidth}
		\vspace{5mm}
		\begin{prooftree}
			\def\ScoreOverhang{1pt}
			\AxiomC{$\Gamma; \varphi: \sigma \to \calA ; x:\sigma \vdash M : \calA$}
			\RightLabel{\small(\textsc{Fix})}
			\UnaryInfC{$\Gamma \vdash \fixLam{\varphi}{x} M : \exType{\sigma \to \calA}{\mathbf{1}} $}
		\end{prooftree}
	\end{minipage}

	\vspace{0.0cm}

	\vspace{0.1cm}

	\begin{minipage}{0.4\linewidth}
		\vspace{6.5mm}
		\begin{prooftree}
			\def\ScoreOverhang{1pt}
			\AxiomC{}
			\RightLabel{\small(\textsc{Sample})}
			\UnaryInfC{$\Gamma \vdash \sample : \exType{\myint{0, 1}}{\mathbf{1}} $}
		\end{prooftree}
	\end{minipage}%
	\begin{minipage}{0.6\linewidth}
		\begin{prooftree}
			\def\ScoreOverhang{1pt}
			\def\defaultHypSeparation{\hskip .15in}
			\AxiomC{$\Gamma\vdash M : \exType{\sigma_1 \to \exType{\sigma_2}{\myint{e, f}}}{\myint{a, b}}$}
			\AxiomC{$\Gamma \vdash N : \exType{\sigma_1}{\myint{c, d}}$}
			\RightLabel{\small(\textsc{App})}
			\BinaryInfC{$\Gamma \vdash M N : \exType{\sigma_2}{\myint{a, b} \times^\iv \myint{c, d} \times^\iv \myint{e, f}}$ }
		\end{prooftree}
	\end{minipage}

	\begin{minipage}{0.4\linewidth}
		\vspace{6.5mm}
		\begin{prooftree}
			\def\ScoreOverhang{1pt}
			\AxiomC{}
			\RightLabel{\small(\textsc{Lit})}
			\UnaryInfC{$\Gamma \vdash \lit{r} : \exType{\myint{r, r}}{\mathbf{1}} $}
		\end{prooftree}
	\end{minipage}%
	\begin{minipage}{0.6\linewidth}
		\begin{prooftree}
			\def\defaultHypSeparation{\hskip .15in}
			\def\ScoreOverhang{1pt}
			\AxiomC{$\Gamma \vdash M : \exType{\myint{\_, \_}}{\myint{a, b}}$}
			\AxiomC{$\Gamma \vdash N :\exType{ \sigma}{\myint{c, d}}$}
			\AxiomC{$\Gamma \vdash P : \exType{\sigma}{\myint{c, d}}$}
			\RightLabel{\small(\textsc{If})}
			\TrinaryInfC{$\Gamma \vdash \ifSimple M N P : \exType{\sigma}{\myint{a,b} \times^\iv \myint{c, d}} $}
		\end{prooftree}
	\end{minipage}

	\vspace{0.1cm}

	\begin{minipage}{0.5\linewidth}
		\begin{prooftree}
			\def\ScoreOverhang{1pt}
			\AxiomC{$\Gamma \vdash M : \exType{\myint{a, b}}{\myint{c, d}}$}
			\RightLabel{\small(\textsc{Score})}
			\UnaryInfC{$\Gamma \vdash \score(M) : \exType{\myint{a, b} \sqcap \myint{0, \infty}}{\myint{c, d} \times^\iv \big(\myint{a, b} \sqcap \myint{0, \infty}\big) } $}
		\end{prooftree}
	\end{minipage}%
	\begin{minipage}{0.5\linewidth}
		\begin{prooftree}
			\def\ScoreOverhang{1pt}
			\AxiomC{$\Gamma \vdash M_1 : \exType{\myint{a_1,b_1}}{\myint{c_1, d_1}}$}
			\AxiomC{$\cdots$}
			\AxiomC{$\Gamma \vdash M_{|f|} : \exType{\myint{a_{|f|}, b_{|f|}}}{\myint{c_{|f|}, d_{|f|}}}$}
			\RightLabel{\small(\textsc{Prim})}
			\TrinaryInfC{$\Gamma \vdash f(M_1, \dots, M_{|f|}) : \exType{f^\iv(\myint{a_1,b_1}, \dots, \myint{a_{|f|}, b_{|f|}})}{(\times^\iv)_{i=1}^{|f|} \myint{c_i, d_i}} $}
		\end{prooftree}
	\end{minipage}

	\caption{Weight-aware interval type system for SPCF with typing rule names. The rules agree with those in \cref{fig:typeSystemSelection}. } \label{fig:typeSystemSelection2}
\end{figure*}

We provide additional proofs and material for \cref{sec:5interval-analysis}.
To have access to named rules, we give the type system in \cref{fig:typeSystemSelection2} which agrees with the one in \cref{fig:typeSystemSelection} in everything but the labels.

\subsection{Soundness}

To show soundness (\cref{thm:staticRes}), we establish a (weight-aware) subject reduction property for our type system as follows.
For an interval $[a, b] \in \iv$ and $r \in \RR_{\ge 0}$, we define $r \cdot [a, b] := [r \cdot a, r \cdot b]$.
To simplify notation, we use a modified transition relation that omits the concrete trace (which is irrelevant in \cref{thm:staticRes}).
We write $P \to_w P'$ if $\stdConf{P, \tr, 1} \to \stdConf{P', \tr', w}$ for some $w \in \RR$ and $\tr, \tr' \in \traces$.
Note that we could define $\to_w$ as a dedicated reduction system by adapting the rules from $\to$ in \cref{fig:reductionRules}.

\begin{lemma}[Substitution]\label{lem:type-system-subst}
	If $\Gamma; \{x_i : \sigma_i\}_{i=1}^n \vdash P : \calA$ for distinct variables $x_i$ and if $\Gamma \vdash M_i : \exTypeS{\sigma_i}{\mathbf{1}}$ for all $i \in \{1, \dots, n\}$ then $\Gamma \vdash P[M_i/x_i]_{i=1}^n : \calA$.
\end{lemma}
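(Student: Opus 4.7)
The plan is to prove this substitution lemma by a standard structural induction on the typing derivation of $\Gamma; \{x_i : \sigma_i\}_{i=1}^n \vdash P : \calA$. The only subtlety, compared with an ordinary substitution lemma for a simply-typed calculus, is that our types carry a weight component, so we have to check that the weight annotation is preserved by substitution. This is precisely why the hypothesis insists that each $M_i$ has weight annotation $\mathbf{1}$: variables get weight $\mathbf{1}$ from rule (\textsc{Var}), so replacing a variable by a term of weight $\mathbf{1}$ does not change the weight of any containing expression.

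The proof will proceed case by case on the last rule in the derivation. The key case is (\textsc{Var}). If $P = x_j$ for some $j \in \{1, \dots, n\}$, then $\calA = \exType{\sigma_j}{\mathbf{1}}$ and $P[M_i/x_i]_{i=1}^n = M_j$, so the conclusion is exactly the assumption $\Gamma \vdash M_j : \exType{\sigma_j}{\mathbf{1}}$. If $P$ is some other variable $y$ occurring in $\Gamma$, substitution leaves $P$ unchanged and the judgment $\Gamma \vdash y : \exType{\sigma_y}{\mathbf{1}}$ is immediate by (\textsc{Var}). Rule (\textsc{Sub}) is handled by applying the IH and then reapplying subsumption with the same $\calA \sqsubseteq_\calA \calB$. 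Rules (\textsc{Sample}) and (\textsc{Lit}) are trivial because they do not involve $P$'s free variables, so substitution acts as the identity on them.

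For the binding cases (\textsc{Abs}) and (\textsc{Fix}), one needs the statement of the lemma to be strong enough: the IH is used on the body in an extended context $\Gamma; x:\sigma$ (respectively $\Gamma; \varphi:\sigma\to\calA; x:\sigma$), after $\alpha$-renaming to ensure the bound variable is distinct from the $x_i$ and does not clash with the free variables of any $M_i$. Since weakening (namely, reading $\Gamma \vdash M_i : \exType{\sigma_i}{\mathbf{1}}$ in the larger context $\Gamma; x:\sigma$) preserves typing---easily verified by an auxiliary induction on the typing derivation of $M_i$---the IH applies and yields the desired judgment for the substituted body, after which (\textsc{Abs}) or (\textsc{Fix}) reconstructs the $\lambda$- or $\mu$-term with weight $\mathbf{1}$, as required. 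For the remaining compound rules (\textsc{App}), (\textsc{If}), (\textsc{Score}), (\textsc{Prim}), we apply the IH to each subterm and then reapply the same rule; the weight interval of the conclusion is a product of the weight intervals of the premises, so it is unaffected by the fact that substituted subterms may differ in shape (only their weight annotations matter, and those are preserved by the IH).

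I do not anticipate any serious obstacle: the proof is wholly routine once the invariants are stated carefully. The one point that must be verified at the start is a \emph{weakening lemma} for the type system (a straightforward induction), which is needed to push the typings $\Gamma \vdash M_i : \exType{\sigma_i}{\mathbf{1}}$ under $\lambda$- and $\mu$-binders. With that in hand, every case reduces to an immediate use of the induction hypothesis and the corresponding typing rule.
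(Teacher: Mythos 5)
Your proposal is correct and follows essentially the same route as the paper, which dispatches this lemma with ``a standard induction on $M$''; your expanded case analysis (in particular the observation that the weight annotation $\mathbf{1}$ on the $M_i$ matches the weight assigned to variables by the (\textsc{Var}) rule, and the need for a weakening lemma in the binder cases) fills in exactly the routine details the paper leaves implicit.
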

\begin{proof}
	By a standard induction on $M$.
\end{proof}

\begin{lemma}[Weighted Subject Reduction]\label{lem:type-system-subject-reduction}
	Let $P$ be any program such that $\vdash P : \exTypeS{\sigma}{J}$ and $P \to_w P'$ for some $w > 0$.
	Then $\vdash P' : \exTypeS{\sigma}{\tfrac{1}{w} \cdot J}$.
\end{lemma}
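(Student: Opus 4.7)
The proof will proceed by induction on the derivation of $P \to_w P'$, which amounts to a case analysis on which reduction rule from \cref{fig:reductionRules} was applied at the root, together with a congruence case for the evaluation-context rule. Before attacking cases I will establish three preliminaries. The first is a \emph{generation lemma} that inverts the typing judgment modulo \textsc{Sub}: for each syntactic constructor there is a canonical shape for the derivation, e.g.~$\vdash M N : \exType{\sigma_2}{J}$ forces the existence of $\sigma_1, J_1, J_2, J_3$ with $\vdash M : \exType{\sigma_1 \to \exType{\sigma_2}{J_3}}{J_1}$, $\vdash N : \exType{\sigma_1}{J_2}$ and $J_1 \times^\iv J_2 \times^\iv J_3 \sqsubseteq J$. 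The second is a \emph{value lemma}: if $V$ is a value and $\vdash V : \exType{\sigma}{J}$, then $1 \in J$, because every value rule (\textsc{Var}, \textsc{Lit}, \textsc{Abs}, \textsc{Fix}) assigns the weight $\mathbf{1}$ before any subsumption. The third is the substitution lemma (\cref{lem:type-system-subst}).

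The propagation-only cases are easy. For $(\lambda x. M)V \to_1 M[V/x]$, inversion yields $x{:}\sigma_1 \vdash M : \exType{\sigma}{J_3}$, $\vdash V : \exType{\sigma_1}{J_2}$ and $\mathbf{1} \times^\iv J_2 \times^\iv J_3 \sqsubseteq J$; the value lemma gives $1 \in J_2$ so $J_3 \sqsubseteq J$, and \cref{lem:type-system-subst} together with \textsc{Sub} closes the case. The \textsc{Fix} redex is analogous with two substitutions. For $\ifSimple{\lit{r}}{N}{P} \to_1 N$ (and symmetrically $P$), inversion gives $\vdash N : \exType{\sigma}{J'}$ with $J_g \times^\iv J' \sqsubseteq J$ and $1 \in J_g$, so $J' \sqsubseteq J$. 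For $f(\lit{r_1},\dots,\lit{r_n}) \to_1 \lit{f(r_1,\dots,r_n)}$, we apply \textsc{Lit} together with soundness of interval arithmetic, $f(r_1,\dots,r_n) \in f^\iv([r_1,r_1],\dots,[r_n,r_n])$. The $\sample$ case is similarly discharged by $r \in [0,1]$.

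The only genuinely weight-changing case is $\score(\lit{r}) \to_r \lit{r}$, with $r > 0$ by the reduction side-condition. By inversion of \textsc{Score} we have $\vdash \lit{r} : \exType{[a,b]}{[c,d]}$ with $[c,d] \times^\iv ([a,b] \sqcap [0,\infty]) \sqsubseteq J$. By the value lemma applied to $\lit{r}$, we may take $[a,b]=[r,r]$ and $[c,d]=\mathbf{1}$, hence $[r,r] \sqsubseteq J$, so $r \in J$. Therefore $1 \in \tfrac{1}{r} \cdot J$, giving $\mathbf{1} \sqsubseteq \tfrac{1}{r} \cdot J$, and \textsc{Sub} on $\vdash \lit{r} : \exType{[r,r]}{\mathbf{1}}$ yields $\vdash \lit{r} : \exType{[r,r]}{\tfrac{1}{r} \cdot J}$ as required. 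The result type $[r,r]$ matches the original $\sigma$ by the outer inversion.

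The main obstacle, and the step that will take the most care, is the congruence case $E[R] \to_w E[M]$ arising from $R \to_w M$. I will handle it with a \emph{context decomposition lemma}, proved by induction on the structure of $E$: if $\vdash E[R] : \exType{\sigma}{J}$ then there exist $\sigma'$ and $K$ with $\vdash R : \exType{\sigma'}{K}$ such that for every $K'$ and $M$ satisfying $\vdash M : \exType{\sigma'}{K'}$ with $K' \sqsubseteq \tfrac{1}{w} \cdot K$, we have $\vdash E[M] : \exType{\sigma}{\tfrac{1}{w} \cdot J}$. The inductive step rests on the observation that every rule forming evaluation contexts (\textsc{App}, \textsc{If}, \textsc{Score}, \textsc{Prim}) combines the weight intervals of its subterms by interval multiplication $\times^\iv$; since scalar scaling distributes over $\times^\iv$ (using $\tfrac{1}{w}>0$), rescaling the weight of a single subterm by $\tfrac{1}{w}$ rescales the combined weight by $\tfrac{1}{w}$, and subsumption transports the resulting inclusion through the outer subsumption captured in the generation lemma. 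The induction hypothesis of the main lemma, applied to $R \to_w M$, supplies the required $K'$ and the typing of $M$, completing the case.
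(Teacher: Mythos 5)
Your proof is correct and follows essentially the same route as the paper's: inversion of the typing derivation by collapsing consecutive applications of (\textsc{Sub}), the observation that values carry weight containing $1$, the substitution lemma for the $\beta$/fixpoint redexes, the $r \in J$ argument for the $\score$ case, and the multiplicativity of weight combination (so that rescaling one factor by $\tfrac{1}{w}$ rescales the whole product) for the congruence case. The only difference is organisational — you package the evaluation-context step as a separate context-decomposition lemma proved by induction on $E$, whereas the paper performs the same case analysis inline within a structural induction on $P$ — which does not change the substance of the argument.
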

\begin{proof}
	We prove this by induction on the structure of $P$.

	\textbf{Case $P = \sample$:} then $P \to_1 \lit r$ for some $r \in [0, 1]$.
	As multiple consecutive application of (Sub) can be replaced by a single one since (as $\sqsubseteq_\calA$ is transitive), we can assume w.l.o.g. that the last step in $\vdash P : \exTypeS{\sigma}{J}$ was:
	\begin{prooftree}
		\def\defaultHypSeparation{\hskip .15in}
		\def\ScoreOverhang{1pt}
		\AxiomC{}
		\RightLabel{\small(\textsc{Sample})}
		\UnaryInfC{$\vdash \sample : \exTypeS{[0, 1]}{\mathbf{1}}$}
		\RightLabel{\small(\textsc{Sub})}
		\UnaryInfC{$\vdash \sample : \exTypeS{I}{J}$}
	\end{prooftree}
	By subtyping we have $[0,1] \sqsubseteq I$ and $\mathbf{1} \sqsubseteq J$.
	So $[r, r] \sqsubseteq I$ and we can type $\vdash \lit r : \exTypeS{I}{J}$ using (\textsc{Lit}) and (\textsc{Sub}) as required.

	\textbf{Case $P = f(\lit{r_1}, \dots,\lit{ r_{|f|}})$:} then $P \to_1 \lit{f(r_1, \dots, r_{|f|})}$.
	W.l.o.g., we can assume that the last step in $\vdash P : \exTypeS{\sigma}{J}$ was
	\begin{prooftree}
		\def\defaultHypSeparation{\hskip .15in}
		\def\ScoreOverhang{1pt}
		\AxiomC{}
		\RightLabel{\small(\textsc{Lit})}
		\UnaryInfC{$\vdash \lit{r_1} : \exTypeS{[r_1, r_1]}{\mathbf{1}}$}
		\RightLabel{\small(\textsc{Sub})}
		\UnaryInfC{$\vdash \lit{r_1} : \exTypeS{I_1}{J_1}$}
		\AxiomC{$\cdots$}
		\AxiomC{}
		\RightLabel{\small(\textsc{Lit})}
		\UnaryInfC{$\vdash \lit{r_{|f|}} : \exTypeS{[r_{|f|}, r_{|f|}]}{\mathbf{1}}$}
		\RightLabel{\small(\textsc{Sub})}
		\UnaryInfC{$\vdash \lit{r_{|f|}} : \exTypeS{I_{|f|}}{J_{|f|}}$}
		\RightLabel{\small(\textsc{Prim})}
		\TrinaryInfC{$\vdash f(\lit{r_1}, \dots,\lit{ r_{|f|}}) : \exTypeS{f^\iv(I_1, \dots, I_{|f|})}{(\times^\iv)_{i=1}^{|f|} J_i}$}
		\RightLabel{\small(\textsc{Sub})}
		\UnaryInfC{$\vdash f(\lit{r_1}, \dots,\lit{ r_{|f|}}) : \exTypeS{I}{J}$}
	\end{prooftree}
	So by subtyping $r_i \in I_i$ and $1 \in J_i$ for all $i$.
	By definition of $f^\iv$ we thus have $f(r_1, \dots, r_{|f|}) \in f^\iv(I_1, \dots, I_{|f|})$ and (again by subtyping) we have $f(r_1, \dots, r_{|f|}) \in I$. Similarly, by definition of $\times^\iv$ we have $1 \in (\times^\iv)_{i=1}^{|f|} J_i$ and thus $1 \in J$.
	We can type $\vdash \lit{f(r_1, \dots, r_{|f|})} : \exTypeS{I}{J}$ using (\textsc{Lit}) and (\textsc{Sub}) as required.

	\textbf{Case $P = \ifElse{\lit r}{M}{N}$ and $r \leq 0$:} then $P \to_1 M$.
	W.l.o.g., we can assume that the last step in $\vdash P : \exTypeS{\sigma}{J}$ is
	\begin{prooftree}
		\def\defaultHypSeparation{\hskip .15in}
		\def\ScoreOverhang{1pt}
		\AxiomC{}
		\RightLabel{\small(\textsc{Lit})}
		\UnaryInfC{$\vdash \lit{r} : \exTypeS{[r, r]}{\mathbf{1}}$}
		\RightLabel{\small(\textsc{Sub})}
		\UnaryInfC{$\vdash \lit{r} : \exTypeS{I}{J''}$}
		\AxiomC{$\vdash M : \exTypeS{\sigma'}{J'}$}
		\AxiomC{$\vdash N : \exTypeS{\sigma'}{J'}$}
		\RightLabel{\small(\textsc{If})}
		\TrinaryInfC{$\vdash \ifElse{\lit r}{M}{N} : \exTypeS{\sigma'}{J' \times^\iv J''}$}
		\RightLabel{\small(\textsc{Sub})}
		\UnaryInfC{$\vdash \ifElse{\lit r}{M}{N} : \exTypeS{\sigma}{J}$}
	\end{prooftree}
	and we have $1 \in J''$, $\sigma' \sqsubseteq_\sigma \sigma$ and $J' \times^\iv J'' \sqsubseteq J$ by subtyping.
	As $1 \in J''$, we get $J' \sqsubseteq J' \times^\iv J''$.
	Thus we obtain $J' \sqsubseteq J$ and $\sigma' \sqsubseteq_\sigma \sigma$, and we can type $\vdash M : \exTypeS{\sigma}{J}$ using (\textsc{Sub}).

	\textbf{Case $P = \ifElse{\lit r}{M}{N}$ and $r > 0$:} then $P \to_1 N$. Analogous to the previous case.

	\textbf{Case $P = \score(\lit{r})$ and $r \geq 0$:} then $P \to_r \lit{r}$.
	W.l.o.g., we can assume that the last step in $\vdash P : \exTypeS{\sigma}{J}$ is
	\begin{prooftree}
		\def\defaultHypSeparation{\hskip .15in}
		\def\ScoreOverhang{1pt}
		\AxiomC{}
		\RightLabel{\small(\textsc{Lit})}
		\UnaryInfC{$\vdash \lit r : \exTypeS{[r,r]}{\mathbf{1}}$}
		\RightLabel{\small(\textsc{Sub})}
		\UnaryInfC{$\vdash \lit r : \exTypeS{I'}{J'}$}
		\RightLabel{\small(\textsc{Score})}
		\UnaryInfC{$\vdash \score(\lit r) : \exTypeS{I' \sqcap [0, \infty]}{J' \times^\iv (I' \sqcap [0, \infty])}$}
		\RightLabel{\small(\textsc{Sub})}
		\UnaryInfC{$\vdash \score(\lit r) : \exTypeS{I}{J}$}
	\end{prooftree}
	By subtyping we have $r \in I'$ and even $r \in I' \sqcap [0, \infty]$ because $r \geq 0$.
	Thus $r \in I$, again by subtyping.
	Similarly, $1 \in J'$ and by definition of $\times^\iv$, we have $r \in J' \times^\iv (I' \sqcap [0, \infty])$.
	Thus $r \in J$ by subtyping.
	This already implies $1 \in \tfrac{1}{r} \cdot J$ and we can thus type $\vdash \lit{r} : \exTypeS{I}{\tfrac{1}{r} \cdot J}$ by using (\textsc{Lit}) and (\textsc{Sub}).

	\textbf{Case $P = (\lambda x. M) V$:} then $P \to_1 M[V/x]$.
	W.l.o.g., the last step in $\vdash P : \exTypeS{\sigma}{J}$ is
	\begin{prooftree}
		\def\defaultHypSeparation{\hskip .15in}
		\def\ScoreOverhang{1pt}
		\AxiomC{$\{x : \tilde{\sigma}' \} \vdash M : \exTypeS{\tilde{\sigma}''}{\tilde{J}''}$}
		\RightLabel{\small(\textsc{Abs})}
		\UnaryInfC{$\vdash \lambda x. M : \exTypeS{\tilde{\sigma}' \to \exTypeS{\tilde{\sigma}''}{\tilde{J}''}}{\mathbf{1}}$}
		\RightLabel{\small(\textsc{Sub})}
		\UnaryInfC{$\vdash \lambda x. M : \exTypeS{\sigma' \to \exTypeS{\sigma''}{J''}}{J'}$}
		\AxiomC{$\vdash V : \exTypeS{\sigma'}{J'''}$}
		\RightLabel{\small(\textsc{App})}
		\BinaryInfC{$\vdash (\lambda x. M) V: \exTypeS{\sigma''}{J' \times^\iv J'' \times^\iv J'''}$}
		\RightLabel{\small(\textsc{Sub})}
		\UnaryInfC{$\vdash (\lambda x. M) V: \exTypeS{\sigma}{J}$}
	\end{prooftree}
	By subtyping we get that $\sigma' \sqsubseteq_\sigma \tilde{\sigma}'$.
	It is easy to see that we can also type $\vdash V : \exTypeS{\sigma'}{\mathbf{1}}$ because $V$ is a value and we get $\vdash V : \exTypeS{\tilde{\sigma}'}{\mathbf{1}}$ by (\textsc{Sub}).
	Using \cref{lem:type-system-subst}, we can thus type $\vdash M[V/x] : \exTypeS{\tilde{\sigma}''}{\tilde{J}''}$.

	We have $1 \in J'$ by subyptying and as $V$ is a value, it is easy to see that $1 \in J'''$.
	Hence $J'' \sqsubseteq J' \times^\iv J'' \times^\iv J'''$.
	Also by subtyping, we find $\tilde{\sigma}'' \sqsubseteq_\sigma \sigma'' \sqsubseteq_\sigma \sigma$, $\tilde{J}'' \sqsubseteq J''$, and $J' \times^\iv J'' \times^\iv J''' \sqsubseteq J$.
	This implies $\tilde{\sigma}'' \sqsubseteq_\sigma \sigma$ and $\tilde{J}'' \sqsubseteq J$.
	By subtyping $\vdash M[V/x] : \exTypeS{\sigma}{J}$, as required.

	\textbf{Case $P = (\mu^\varphi_x. M) V$:} then $P \to_1 M[V/x, (\mu^\varphi_x. M)/\varphi]$. Analogous to the previous case for abstractions.

	\textbf{Case $P = E[P']$ for an evaluation context $E \neq [\cdot]$:} then $P' \to_r P''$ and $P \to_r E[P'']$.
	All such cases follow easily by case analysis on $E$.
	As an example, consider the context $E = [\cdot] N$.
	In this situation, we have $P = E[P'] = P' N$ with $P' \to_r P''$, so $P \to_r P'' N$.
	W.l.o.g., the last step in $\vdash P : \exTypeS{\sigma}{J}$ is
	\begin{prooftree}
		\def\defaultHypSeparation{\hskip .15in}
		\def\ScoreOverhang{1pt}
		\AxiomC{$\vdash P' : \exTypeS{\sigma' \to \exTypeS{\sigma''}{J''}}{J'}$}
		\AxiomC{$\vdash N  : \exTypeS{\sigma'}{J'''}$}
		\RightLabel{\small(\textsc{App})}
		\BinaryInfC{$\vdash P' N: \exTypeS{\sigma''}{J' \times^\iv J'' \times^\iv J'''}$}
		\RightLabel{\small(\textsc{Sub})}
		\UnaryInfC{$\vdash P' N: \exTypeS{\sigma}{J}$}
	\end{prooftree}
	By the inductive assumption for $P' \to_r P''$, we get $\vdash P'' : \exTypeS{\sigma' \to \exTypeS{\sigma''}{J''}}{\tfrac{1}{r} \cdot J'}$ and can then type
	\begin{prooftree}
		\def\defaultHypSeparation{\hskip .15in}
		\def\ScoreOverhang{1pt}
		\AxiomC{$\vdash P'' : \exTypeS{\sigma' \to \exTypeS{\sigma''}{J''}}{\tfrac{1}{r} \cdot J'}$}
		\AxiomC{$\vdash N  : \exTypeS{\sigma'}{J'''}$}
		\RightLabel{\small(\textsc{App})}
		\BinaryInfC{$\vdash P' N: \exTypeS{\sigma''}{\tfrac{1}{r} \cdot J' \times^\iv J'' \times^\iv J'''}$}
		\RightLabel{\small(\textsc{Sub})}
		\UnaryInfC{$\vdash P' N: \exTypeS{\sigma}{\tfrac{1}{r} \cdot J}$}
	\end{prooftree}
	because if $J' \times^\iv J'' \times^\iv J''' \sqsubseteq J$ then $\tfrac{1}{r} \cdot J' \times^\iv J'' \times^\iv J''' \sqsubseteq \tfrac{1}{r} \cdot J$.

	The proof for the other evaluation contexts, i.e., where $P = \ifElse{P'}{M}{N}$, $P = V P'$, $P = \score(P')$, or $P = \lit f(r_1, \dots, r_{i-1},\allowbreak P',\allowbreak N_{i+1}, \dots, N_{|f|})$ for some $P' \to_r P''$, are all analogous to the above.
\end{proof}

\begin{lemma}[Zero-Weighted Subject Reduction]\label{lem:type-system-subject-reduction-zero}
	Let $P$ be any program such that $\vdash P : \exTypeS{\sigma}{J}$ and $P \to_0 P'$.
	Then
	\begin{enumerate}
		\item $\vdash P' : \exTypeS{\sigma}{J'}$ for some $J'$, and\label{lem:type-system-subject-reduction-zero1}
		\item $0 \in J$ \label{lem:type-system-subject-reduction-zero2}
	\end{enumerate}
\end{lemma}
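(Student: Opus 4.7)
The plan is to mirror the structure of the proof of the weighted subject reduction lemma (\cref{lem:type-system-subject-reduction}), proceeding by induction on the structure of $P$. The crucial observation is that the \emph{only} single reduction step producing weight $0$ in the system of \cref{fig:reductionRules} is the score rule applied to $\lit 0$; every other reduction rule preserves the weight and thus produces $\to_1$, not $\to_0$. Hence only two cases require argument: the base case $P = \score(\lit 0)$ (with $P' = \lit 0$), and the congruence case $P = E[R]$ with $E \neq [\cdot]$ and $R \to_0 R'$ (with $P' = E[R']$).

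For the base case, any derivation of $\vdash \score(\lit 0) : \exTypeS{\sigma}{J}$ can, after collapsing consecutive (\textsc{Sub}) applications as in the proof of \cref{lem:type-system-subject-reduction}, be assumed w.l.o.g.\ to end with (\textsc{Lit}), (\textsc{Sub}), (\textsc{Score}), (\textsc{Sub}), yielding an intermediate judgement $\vdash \lit 0 : \exTypeS{I'}{J'}$ with $\myint{0,0} \sqsubseteq I'$, and a final step $\vdash \score(\lit 0) : \exTypeS{I' \sqcap \myint{0,\infty}}{J' \times^\iv (I' \sqcap \myint{0,\infty})} \sqsubseteq_\calA \exTypeS{\sigma}{J}$. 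Since $0 \in I'$, we have $0 \in I' \sqcap \myint{0,\infty}$, and an elementary case analysis on the sign of the endpoints of $J'$ shows that an interval product $I_1 \times^\iv I_2$ contains $0$ whenever one of its factors does; hence $0 \in J' \times^\iv (I' \sqcap \myint{0,\infty})$ and therefore $0 \in J$ by subtyping, establishing part~(\ref{lem:type-system-subject-reduction-zero2}). For part~(\ref{lem:type-system-subject-reduction-zero1}), observe that $\myint{0,0} \sqsubseteq I' \sqcap \myint{0,\infty} \sqsubseteq_\sigma \sigma$, so (\textsc{Lit}) followed by (\textsc{Sub}) gives $\vdash \lit 0 : \exTypeS{\sigma}{\mathbf{1}}$, i.e. we may take $J' = \mathbf{1}$.

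For the inductive case $P = E[R]$, we do case analysis on $E$, paralleling the last case of \cref{lem:type-system-subject-reduction}. We illustrate $E = [\cdot]\, N$: w.l.o.g.\ the derivation ends with (\textsc{App}) followed by (\textsc{Sub}), so we have $\vdash R : \exTypeS{\sigma_1 \to \exTypeS{\sigma_2}{\myint{e,f}}}{J_R}$ and $\vdash N : \exTypeS{\sigma_1}{\myint{c,d}}$, with derived type $\exTypeS{\sigma_2}{J_R \times^\iv \myint{c,d} \times^\iv \myint{e,f}} \sqsubseteq_\calA \exTypeS{\sigma}{J}$. The induction hypothesis applied to $R \to_0 R'$ yields $0 \in J_R$ together with $\vdash R' : \exTypeS{\sigma_1 \to \exTypeS{\sigma_2}{\myint{e,f}}}{J_R'}$ for some $J_R'$. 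Re-applying (\textsc{App}) with $R'$ in place of $R$ and then (\textsc{Sub}) proves~(\ref{lem:type-system-subject-reduction-zero1}); and since $0 \in J_R$, the propagation property of interval multiplication gives $0 \in J_R \times^\iv \myint{c,d} \times^\iv \myint{e,f}$ and hence $0 \in J$ by subtyping, proving~(\ref{lem:type-system-subject-reduction-zero2}). The other evaluation-context sub-cases ($E = V\,[\cdot]$, $\ifSimple{[\cdot]}{N}{P}$, $\score([\cdot])$, and $\lit f(\lit{r_1},\dots,[\cdot],\dots,N_{|f|})$) are entirely analogous: in every relevant typing rule the weight $J_R$ of the reducing subterm appears as a factor in a $\times^\iv$-product that the total weight $J$ refines, so the same bookkeeping applies. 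The main technical obstacle is precisely this bookkeeping; it reduces to the simple lemma that $0 \in I$ implies $0 \in I \times^\iv K$ for any interval $K$, which is immediate from the definition of $\times^\iv$.
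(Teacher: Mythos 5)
Your proposal is correct and follows essentially the same route as the paper's (much terser) proof: the paper likewise notes that the only zero-weight redex is $\score(\lit 0)$, obtains $0 \in J$ from the (\textsc{Score}) rule in the base case, and propagates it through evaluation contexts by induction, while handling part (1) by reusing the weighted subject-reduction argument. Your write-up merely fills in the bookkeeping the paper leaves implicit (in particular the observation that $0 \in I_2$ implies $0 \in I_1 \times^\iv I_2$), and that bookkeeping is sound.
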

\begin{proof}
	The proof that $\vdash P' : \exTypeS{\sigma}{J'}$ for some $J'$ is analogous to the proof of \cref{lem:type-system-subject-reduction} with fewer restrictions on the weight.
	The claim $0 \in J$ follows by observing that $P \to_0 P'$ is only possible if the redex in $P$ is $\score (\lit 0)$.
	In case $P = \score (\lit 0)$, the claim follows directly from (\textsc{Score}).
	If $P = E[\score (\lit 0)]$, it is a simple induction on the structure of the evaluation context $E$.
\end{proof}

\typeSystemSoundness*
\begin{proof}
	Let
	\begin{align*}
		\stdConf{P, \tr, 1} = \stdConf{P_0, \tr_0, w_0} \to \cdots \to \stdConf{P_n, \tr_n, w_n} = \stdConf{\lit r, \langle \rangle, w}
	\end{align*}
	be the reduction sequence of $\stdConf{P, \tr, 1}$.
	By definition of $\to_w$ it is easy to see that we get
	\begin{align*}
		P = P_0 \to_{\tilde{w}_1} P_1 \to_{\tilde{w}_2} \cdots \to_{\tilde{w}_n} P_n = \lit r
	\end{align*}
	for unique $\tilde{w}_1, \dots, \tilde{w}_n$.
	Note that $w_i = \prod_{j=1}^{i} \tilde{w}_j$.

	We first assume that $w \neq 0$ (so $\tilde{w}_i \neq 0$ for all $i$).
	We claim that for each $0 \leq i \leq n$, we can type
	\begin{align}\label{eq:type-system-induction}
		\vdash P_i : \exType{[a,b]}{\tfrac{1}{w_i} \cdot [c, d]}
	\end{align}
	\Cref{eq:type-system-induction} follows by simple induction:
	the base case $i = 0$ holds by the assumption $\vdash P_0 : \exTypeS{[a,b]}{[c, d]}$ and because $w_0 = 1$.
	For the inductive case, we assume that $\vdash P_i : \exTypeS{[a,b]}{\tfrac{1}{w_i} \cdot [c, d]}$.
	We apply \cref{lem:type-system-subject-reduction} to $P_i \to_{\tilde{w}_i} P_{i+1}$ and finish the induction step using $w_{i+1} = w_i \tilde{w}_{i+1}$:
	\begin{align*}
		\vdash P_{i+1} : \exTypeS{[a,b]}{\tfrac{1}{\tilde{w}_{i+1}} \cdot \tfrac{1}{w_i} \cdot [c, d]} = \exTypeS{[a,b]}{\tfrac{1}{w_{i+1}} \cdot [c, d]}
	\end{align*}

	\Cref{eq:type-system-induction} thus implies $\vdash \lit r : \exTypeS{[a, b]}{\tfrac{1}{w}\cdot [c, d]}$.
	W.l.o.g., we can assume that this type judgment has the form

	\begin{prooftree}
		\def\defaultHypSeparation{\hskip .15in}
		\def\ScoreOverhang{1pt}
	\AxiomC{}
	\RightLabel{\small(\textsc{Const})}
	\UnaryInfC{$\vdash \lit{r} : \exTypeS{[r, r]}{\mathbf{1}}$}
	\RightLabel{\small(\textsc{Sub})}
	\UnaryInfC{$\vdash \lit{r} : \exTypeS{[a, b]}{\tfrac{1}{w}\cdot [c, d]}$}
	\end{prooftree}

	This implies $r \in [a, b]$ and $w \in [c, d]$ (because $1 \in \tfrac{1}{w}\cdot [c, d]$), as required.

	Now consider the case $w = 0$, so at least one $\tilde{w}_i = 0$.
	The claim that $r \in [a, b]$ follows easily as above (now using \cref{lem:type-system-subject-reduction-zero1} of \cref{lem:type-system-subject-reduction-zero} to handle the weight 0 reduction steps).
	Let $i^*$ be the smallest index such that $\tilde{w}_{i^*} = 0$.
	Using the same argument as above on the (possibly empty) prefix up to index $i^* - 1$ (where all $\tilde{w}_j$ are non-zero) we find
	\begin{align*}
		\vdash P_{i^* - 1} : \exType{[a,b]}{\tfrac{1}{w_{i^*-1}} \cdot [c, d]}
	\end{align*}
	Note that this is well defined because $\tilde{w}_j > 0$ for $j < i^*$.
	\Cref{lem:type-system-subject-reduction-zero2} of \cref{lem:type-system-subject-reduction-zero} shows $0 \in \tfrac{1}{w_{i^* - 1}} \cdot [c, d]$, which already implies $w = 0 \in [c, d]$, as required.
\end{proof}

\subsection{Weak Completeness}

\typeSystemComp*
\begin{proof}
	For every \emph{simple} type $\alpha$, we define a weighted type $\calA_\alpha$ and weightless type $\sigma_\alpha$ by mutual recursion as follows.
	\begin{align*}
		\sigma_\typeReal &:= [-\infty, \infty]\\
		\sigma_{\alpha \to \beta} &:= \sigma_\alpha \to \calA_\beta\\
		\calA_\alpha &:= \exTypeS{\sigma_\alpha}{[0, \infty]}
	\end{align*}
	That is, we insert $[-\infty, \infty]$ for values and $[0,\infty]$ for weights in all locations.
	We claim that if $\vdash P : \alpha$ in the simple type system, then $\vdash P : \calA_\alpha$ in the weight-aware interval type system.

	For the proof, we strengthen the induction hypothesis by claiming: if $\Gamma \vdash P : \alpha$ in the simple type system then $\Gamma^{\uparrow \sigma} \vdash P : \calA_\alpha$ where
	\begin{align*}
		\Gamma^{\uparrow \sigma} := \{ x : \sigma_\alpha \mid x : \alpha \in \Gamma \}.
	\end{align*}
	The claim can be proved by simple induction on the derivation of $\Gamma \vdash P : \alpha$ using the respective typing rule for the interval type system possibly followed by (\textsc{Sub}) for typing rules that yield proper subtypes of $\calA_\alpha$.

    Note that this typing derivation does not contain any useful information to improve the precision of GuBPI.
\end{proof}

\subsection{Constraint-based Type Inference}

In this section we formalize the constraint-based type inference algorithm and sketch our constraints-solving method based on worklist and widening.
The overarching idea is to substitute intervals with variables $\nu_i$, called interval variables, and to encode typability as a constraint system.
As we work in the restricted interval domain (as opposed to e.g.~full first-order refinements), the resulting constraints can be solved very efficiently, which is crucial to the practicality of our tool.

\paragraph{Symbolic types}

Symbolic types are defined by the following grammar:
\[ \kappa := \nu_i \mid \kappa \to \mathscr{A} \qquad	\mathscr{A} := \exType{\kappa}{\nu_j} \]
where $\nu_i, \nu_j$ are interval variables.
Symbolic types are identical to interval types but use interval variables instead of intervals as first-order types and in the weight bound.

\paragraph{Constraints}

Constraints on interval variables come in three forms:
\[ c := \;\nu_n \equiv \myint{a, b} \Bigm\vert \kappa_1 \sqsubseteq \kappa_2 \Bigm\vert \nu_n \equiv \lit{f}(\nu_{n_1}, \dots, \nu_{n_{|f|}}) \]
where $f$ is a primitive function and $[a, b]$ an interval.
That is, a constraint can either equate an interval variable to a particular interval, require a symbolic type $\kappa_1$ to be a subtype of a type $\kappa_2$, or equate an interval variable to the result of a function applied to interval variables.
Note that due to the compositional nature of our subtype relation $\sqsubseteq$ (which extends to symbolic types) we can restrict ourself to constraints of the form $\nu_1 \sqsubseteq \nu_2$ because each constraint of the form $\kappa_1 \sqsubseteq \kappa_2$ or $\mathscr{A}_1 \sqsubseteq \mathscr{A}_2$ (with identical base types) can be reduced to an equivalent set of constraints on interval variables by the definition of the subtype relation.

\paragraph{Symbolic type system}

In the presentation of our symbolic type inference system, we aim to stay as close as possible to the implementation.
Thus we describe it as an impure type system, meaning that our typing rules have side effects.
In our case, typing rules can generate fresh interval variables.

For a simple type $\alpha$, we write $\mathit{fresh}(\alpha)$ for the symbolic weightless type obtained by replacing every base type $\typeReal$ with a fresh interval variable $\nu_n$ (and adding weights given by fresh interval variables where needed).
We write $\mathit{fresh}()$ for $\mathit{fresh}(\typeReal)$.
For a symbolic type $\kappa$ we write $\mathit{base}(\kappa)$ for the underlying simple type (defined in the obvious way).

Our constraint generation system is given in \cref{fig:symbolicTypeSystem}.
Judgments have the form $\Gamma \vdash M :  \mathscr{A}, \mathcal{C}$ where $\Gamma$ maps variables to weightless symbolic types, $\mathscr{A}$ is a weighted symbolic type and $\calC$ a list of constraints on the interval variables.
The rules follow the structure of the system in \cref{fig:typeSystemSelection} but replace all operations on intervals with interval variables and constraints.
The term structure directly determines the symbolic typing derivation; there are no choices to be made, contrary to \cref{fig:typeSystemSelection}, which requires ``cleverness'', for example to find a suitable interval for an argument in the fixpoint rule.
Note that in our system, we assume that the \emph{simple} types of arguments of abstractions and fixpoints are given.
These types can be determined by a simple prior run of any standard type inference algorithm.

\begin{figure*}[!t]
    \small
    \begin{minipage}{0.25\textwidth}
        \begin{prooftree}
            \AxiomC{$x:\kappa \in \Gamma$}
            \AxiomC{$\nu = \mathit{fresh}()$}
            \BinaryInfC{$\Gamma \vdash x: \exType{\kappa}{\nu}, \{\nu \equiv \mathbf{1}\}  $}
        \end{prooftree}
    \end{minipage}%
    \begin{minipage}{0.45\textwidth}
        \begin{prooftree}
            \AxiomC{$\kappa = \mathit{fresh}(\alpha)$}
            \AxiomC{$\nu = \mathit{fresh}()$}
            \AxiomC{$\Gamma; x:\kappa \vdash M : \mathscr{A}, \calC$}
            \TrinaryInfC{$\Gamma \vdash \lambda x^\alpha. M : \exType{\kappa \to\mathscr{A}}{\nu}, \calC \cup \{\nu \equiv \mathbf{1}\} $}
        \end{prooftree}
    \end{minipage}%
    \begin{minipage}{0.3\textwidth}
        \begin{prooftree}
            \AxiomC{$\Gamma \vdash M : \exType{\nu_1}{\nu_2}, \calC$}
            \AxiomC{$\nu = \mathit{fresh}()$}
            \BinaryInfC{$\Gamma \vdash \score(M) : \exType{\nu_1}{\nu}, \calC \cup \{\nu \equiv \nu_1 \times \nu_2\} $}
        \end{prooftree}
    \end{minipage}

    \begin{minipage}{0.3\textwidth}
        \begin{prooftree}
            \AxiomC{$\nu_1 = \mathit{fresh}()$}
            \AxiomC{$\nu_2 = \mathit{fresh}()$}
            \BinaryInfC{$\Gamma \vdash \lit{r} : \exType{\nu_1}{\nu_2}, \{\nu_1 \equiv [r, r], \nu_2 \equiv \mathbf{1}\} $}
        \end{prooftree}
    \end{minipage}%
    \begin{minipage}{0.7\textwidth}
        \begin{prooftree}
            \AxiomC{$\kappa = \mathit{fresh}(\alpha)$}
            \AxiomC{$\kappa_1 = \mathit{fresh}(\beta)$}
            \AxiomC{$\nu, \nu_1 = \mathit{fresh}()$}
            \AxiomC{$\Gamma; \varphi: \kappa \to \exType{\kappa_1}{\nu_1} ; x:\kappa \vdash M : \exType{\kappa_2}{\nu_2}, \calC$}
            \QuaternaryInfC{$\Gamma \vdash \fixLam{\varphi: \alpha \to \beta}{x} M : \exType{\kappa \to \exType{\kappa_2}{\nu_2}}{\nu}, \calC \cup \{\nu \equiv \mathbf{1}, \kappa_2 \sqsubseteq \kappa_1, \nu_2 \sqsubseteq \nu_1 \} $}
        \end{prooftree}
    \end{minipage}

    \begin{minipage}{0.35\textwidth}
        \begin{prooftree}
            \AxiomC{$\nu, \nu' = \mathit{fresh}()$}
            \UnaryInfC{$\Gamma \vdash \sample : \exType{\nu}{\nu'}, \{\nu \equiv [0, 1], \nu' \equiv \mathbf{1}\} $}
        \end{prooftree}
    \end{minipage}%
    \begin{minipage}{0.65\textwidth}
        \begin{prooftree}
            \AxiomC{$\Gamma\vdash M : \exType{\kappa_1 \to \exType{\kappa_2}{\nu_2}}{\nu_1}, \calC_1$}
            \AxiomC{$\Gamma \vdash N : \exType{\kappa_3}{\nu_3}, \calC_2$}
            \AxiomC{$\nu = \mathit{fresh()}$}
            \TrinaryInfC{$\Gamma \vdash M N : \exType{\kappa_2}{\nu}, \calC_1 \cup \calC_2 \cup \{\kappa_3 \sqsubseteq \kappa_1, \nu \equiv \nu_1 \times \nu_2 \times \nu_3\}$ }
        \end{prooftree}
    \end{minipage}

    \begin{prooftree}
        \AxiomC{$\Gamma \vdash M : \exType{\nu_1}{\nu_2}, \calC_M$}
        \AxiomC{$\Gamma \vdash N :\exType{\kappa_1}{\nu_3}, \calC_N$}
        \AxiomC{$\Gamma \vdash P : \exType{\kappa_2}{\nu_4}, \calC_P$}
        \AxiomC{$\kappa = \mathit{fresh}(\mathit{base}(\kappa_1))$}
        \AxiomC{$\nu, \nu' = \mathit{fresh}()$}
        \QuinaryInfC{$\Gamma \vdash \ifElse M N P : \exType{\kappa}{\nu}, \calC_M \cup \calC_N \cup \calC_P \cup \{\kappa_1 \sqsubseteq \kappa, \kappa_2 \sqsubseteq \kappa, \nu \equiv \nu_2 \times \nu', \nu_3 \sqsubseteq  \nu', \nu_4 \sqsubseteq \nu' \} $}
    \end{prooftree}

    \begin{prooftree}
        \AxiomC{$\Gamma \vdash M_1 : \exType{\nu_1}{\nu'_1}, \calC_1$}
        \AxiomC{$\cdots$}
        \AxiomC{$\Gamma \vdash M_{|f|} : \exType{\nu_{|f|}}{\nu'_{|f|}}, \calC_{|f|}$}
        \AxiomC{$\nu, \nu' = \mathit{fresh}()$}
        \QuaternaryInfC{$\Gamma \vdash f(M_1, \dots, M_{|f|}) : \exType{\nu}{\nu'}, \bigcup_{i=1}^{|f|} \calC_i \cup \{\nu \equiv \lit f(\nu_1, \dots, \nu_{|f|}), \nu' \equiv \prod_{i=1}^{|f|} \nu'_i\} $}
    \end{prooftree}
    \caption{Symbolic weight-aware type system.} \label{fig:symbolicTypeSystem}
\end{figure*}

\paragraph{From symbolic to concrete types}

An assignment $A$ is a mapping from interval variables to concrete intervals.
Given a symbolic type $\kappa$, we define the interval type $\kappa^A$ by replacing every interval variable in $\kappa$ with the concrete interval assigned to it in $A$.
For a weighted symbolic type $\mathscr{A}$, we define $\mathscr{A}^A$ in the same way.
Given a set of constraints $\calC$, we say that $A$ satisfies $\calC$, written $A \models \calC$ if all constraints in $\calC$ are satisfied (defined in the obvious way).

\begin{theorem}
    If $\vdash M : \mathscr{A}, \calC$ and $A \models \calC$ then $\vdash M : \mathscr{A}^A$.
\end{theorem}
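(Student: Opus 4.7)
The plan is to proceed by structural induction on the symbolic typing derivation $\Gamma \vdash M : \mathscr{A}, \calC$, showing that for every rule of \cref{fig:symbolicTypeSystem} the corresponding rule of \cref{fig:typeSystemSelection2} applies once all interval variables are instantiated via $A$ (possibly after inserting one or two applications of (\textsc{Sub})). Strengthening the statement to contexts, I would actually prove: if $\Gamma \vdash M : \mathscr{A}, \calC$ and $A \models \calC$, then $\Gamma^A \vdash M : \mathscr{A}^A$, where $\Gamma^A$ replaces each $x:\kappa$ by $x:\kappa^A$. The key observation is that each symbolic rule was designed so that the constraints it emits precisely mirror the side conditions needed to apply its concrete counterpart.

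For each base case (\textsc{Var}, \textsc{Lit}, \textsc{Sample}) the generated equality constraints of the form $\nu \equiv [a,b]$ or $\nu \equiv \mathbf{1}$ force $\nu^A$ to equal the interval prescribed by the concrete rule, after which the concrete rule applies verbatim. For the inductive cases I would invoke the IH on each sub-derivation to obtain concrete typings of the immediate subterms; these are then combined using the corresponding concrete rule. The subtlety lies in how the emitted constraints align the types. For (\textsc{App}), the constraint $\kappa_3 \sqsubseteq \kappa_1$ together with $A \models \calC$ gives $\kappa_3^A \sqsubseteq_\sigma \kappa_1^A$, so the IH typing of $N$ can be lifted via (\textsc{Sub}) to have argument type $\kappa_1^A$, and $\nu \equiv \nu_1 \times \nu_2 \times \nu_3$ guarantees $\nu^A = \nu_1^A \times^\iv \nu_2^A \times^\iv \nu_3^A$, which is exactly what the concrete (\textsc{App}) produces. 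For (\textsc{Prim}) and (\textsc{Score}) the product and lifted-function constraints play the analogous role. For (\textsc{If}) the constraints $\kappa_1 \sqsubseteq \kappa$, $\kappa_2 \sqsubseteq \kappa$ and $\nu_3, \nu_4 \sqsubseteq \nu'$ let me raise both branches to a common weighted type via (\textsc{Sub}) before applying the concrete (\textsc{If}).

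The main obstacle will be the (\textsc{Fix}) rule: it introduces a fresh symbolic type $\kappa \to \exType{\kappa_1}{\nu_1}$ for the recursion variable $\varphi$ but types the body at $\exType{\kappa_2}{\nu_2}$, relying on the emitted constraints $\kappa_2 \sqsubseteq \kappa_1$ and $\nu_2 \sqsubseteq \nu_1$ to close the loop. Here I would apply the IH in the extended context $\Gamma; \varphi: \kappa \to \exType{\kappa_1}{\nu_1}; x:\kappa$, obtaining a concrete judgment for $M$ at $\exType{\kappa_2^A}{\nu_2^A}$, then use (\textsc{Sub}) with the instantiated subtype constraints to retype $M$ at $\exType{\kappa_1^A}{\nu_1^A}$, and finally invoke the concrete (\textsc{Fix}). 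One has to be careful that $(\kappa \to \mathscr{A})^A = \kappa^A \to \mathscr{A}^A$ and that the auxiliary lemma that $\kappa_1 \sqsubseteq \kappa_2$ (under $\calC$ and $A$) implies $\kappa_1^A \sqsubseteq_\sigma \kappa_2^A$ (proved by an easy induction on symbolic types, noting that $\sqsubseteq$ is defined compositionally in both the concrete and symbolic settings) is available, but once these pieces are in place every case becomes routine.
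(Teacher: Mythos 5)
Your proposal is correct and follows essentially the same approach as the paper: induction on the term structure (which determines the symbolic derivation), instantiating interval variables via $A$, and inserting (\textsc{Sub}) wherever subtyping constraints were emitted. The paper's proof is only a two-sentence sketch, so your worked-out cases (including the generalisation to contexts and the careful treatment of (\textsc{Fix})) are a faithful elaboration of the intended argument.
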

\begin{proof}
This can be shown by induction on the structure of the term, which also determines the symbolic typing derivation.
From this, we obtain a valid interval typing derivation by replacing interval variables in the derivation with the concrete intervals assigned to them in $A$, and by applying the (\textsc{Sub}) rule in places where subtyping constraints are introduced.
\end{proof}

This theorem states that solutions to our constraints directly give us valid judgments in our interval type system, which allows us to invoke \cref{thm:staticRes}.

\paragraph{Solving Constraints}

To solve the resulting constraints, we employ known techniques from abstract interpretation \cite{CousotC77}.
Again, note that due to the simplicity of our constraints, our approach avoids expensive calls to a theorem prover.
When solving a set of constraints $\calC$, we are interested in the smallest solution, i.e.~an assignment $A$ with $A \models \calC$ where the intervals in $\calA$ are smallest possible w.r.t.~$\sqsubseteq$.

\paragraph{Na\"{i}ve algorithm}

A na\"{i}ve attempt to find a satisfying assignment for a set of constraints would be to iterate over the constraints and to extend the current assignment (initially chosen to map all elements to the bottom element $\bot$ in the interval domain, i.e.~an empty interval) whenever needed.
For example, if $\nu_i \sqsubseteq \nu_j$ is not satisfied by assignment $A$, we can update $A$ by mapping $\nu_j$ to $A(\nu_j) \sqcup A(\nu_i)$.
As is well known from abstract interpretation, this na\"{i}ve approach may not terminate because the interval domain is not chain-complete.
For instance, consider the constraints $\calC = \{\nu_1 = \myint{0, 0}, \nu_2 = \myint{1, 1}, \nu_1 \sqsubseteq \nu_3, \nu_3 \equiv \nu_3 + \nu_2 \}$.
The minimal solution is $\{\nu_1 \mapsto \myint{0,0}, \nu_2 \mapsto \myint{1, 1}, \nu_3 \mapsto \myint{0, \infty}\}$, but the algorithm never terminates and instead assigns the ascending chain $\myint{0,0}, \myint{0, 1}, \myint{0, 2}, \dots$ to $\nu_3$.

\paragraph{Widening}

To remedy the above problem, we use \defn{widening}, a standard approach to ensure termination of abstract interpretation on domains with infinite chains \cite{CousotC77}.
Let $\nabla$ be a \emph{widening operator} for intervals.
This means that $I_1 \sqcup I_2 \sqsubseteq I_1 \nabla I_2$ for all intervals $I_1, I_2$ and for every chain $I_0 \sqsubseteq I_1 \sqsubseteq I_2 \sqsubseteq \cdots$, the chain $(I^\nabla_i)_{i \in \mathbb{N}}$ defined by $I^\nabla_0 := I_0$ and $I^\nabla_i := I^\nabla_{i-1} \nabla I_i$ for $i \geq 1$ stabilises eventually.
A trivial widening operator is given by only allowing intervals to extend to infinity, defined as follows:
\begin{align*}
    \bot \nabla I &:= I \nabla \bot := I&&\\
    \myint{a, b} \nabla \myint{c, d} &:= \myint{a, b} \quad &&\text{if } a \leq c \land d \leq b\\
    \myint{a, b} \nabla \myint{c, d} &:= \myint{a, \infty} &&\text{if } a \leq c \land d > b\\
    \myint{a, b} \nabla \myint{c, d} &:= \myint{-\infty, b} &&\text{if } d \leq b \land c < a\\
    \myint{a, b} \nabla \myint{c, d} &:= \myint{-\infty, \infty} &&\text{if } c < a \land d > b \\
\end{align*}
As soon as the upper bound increases or lower bound decreases, the bound is directly set to $=\infty$ or $-\infty$ respectively.
By using the widening operator in each update step of our na\"ive algorithm, we break infinite increasing chains and the resulting algorithm is guaranteed to converge to a satisfying assignment (if one exists).

GuBPI solves constraints by using a standard worklist algorithm \cite{CousotC02, FechtS99}, combined with the previous widening operator.

% !TEX root = ./main.tex

\section{Supplementary Material for \texorpdfstring{\cref{sec:6linear}}{Section~6}}
\label{app:sec6}

\subsection{Extensions to Linear Splitting}
\label{sec:ext-lin-splitting}

In this section we give additional information about how our linear optimisation of the interval-trace semantics can be extended to non-uniform samples and non-linear scoring values.

\paragraph{Beyond uniform samples}

To allow for non-uniform samples, we can combine the standard interval trace semantics with the linear optimisation.
That is, in addition to bounding linear score functions, we also split and bound each non-uniform sample (as in the standard interval trace semantics).
Suppose that $\alpha_i$ is sampled from some continuous distribution $D$.
We then split the real line into chunks (the size and number of which is selected by a heuristic depending on $D$).
For each such chunk $[a, b]$, we compute the volume and multiply by the lower and upper bounds of the pdf of $D$ on $[a, b]$.
In this way, we can even approximate integrals where not all variables are sampled from a uniform distribution (without needing to resort to inverse cumulative distribution functions).
Our experiments show, that as long as the guards are still linear, this approach is advantageous compared to the naive interval-based semantics.

\paragraph{Beyond linear functions}

While guards on conditionals are often linear, this is rarely the case for score expressions (as one usually observes values from some non-uniform distribution with a non-linear pdf).
Consider the pedestrian example again.
While all guards are linear, the score expression has the form $\pdf_{\Normal(1.1, 0.1)}(\calV)$.
We can handle such \emph{non-linear} score values by applying linear optimisation to the linear subexpressions and interval arithmetic for the nonlinear parts.
Formally, we assume that each $\calW_i \in \Xi$ (each symbolic value we score with) has the form $\calW_i = f_i(\calZ_i^1, \dots, \calZ_i^{m_i})$ where $\calZ_i^j$ for $1 \leq j \leq m_i$ denote \emph{linear} functions of the sample variables and $f : \real^{m_i} \to \real$ is a possibly non-linear function.
Every score expression can be written in this way.
For instance, in the pedestrian example, we have $f = \pdf_{\Normal(1.1, 0.1)}$.

Let $\Xi' = \{\calZ_i^j \mid i \in \{1, \dots, n\}, j \in \{1, \dots, m_i\} \}$ be the set of all such linear functions.
We bound each linear function in $\Xi'$ using linear optimisation as before.
We obtain a box $b$ (which now has dimension $|\Xi'|$ instead of $|\Xi|$) and define the weight $\mathit{weight}(b)$ by applying the interval liftings $f_i^\iv$ of the non-linear functions $f_i$ to the bounds for each argument.
Formally, $\mathit{weight}(b) = \prod_{i=1}^{k} f_i^\iv (b_i^1, \dots, b_i^{m_i})$ where $b_i^j$ is the interval bound on $\calZ_i^j$.
Note that this strictly generalizes the approach outlined before since we can choose $f_i$ as the identity if $\calW_i$ is already linear.
The definition of $\mathit{approx}(b)$ with the new weight definition still satisfies \cref{prop:linearSplitting}.
This way, we can even approximate integrals over non-linear functions by means of simple volume computations.
As our experiments (e.g.~on the pedestrian example) show, this approximation is precise enough to obtain useful bounds on the posterior.
It is important to note that while we can deal with non-linear score values, we cannot handle non-linear guards and instead use the standard semantics for such cases.

% !TEX root = ./main.tex

\section{Supplementary Material for \texorpdfstring{\cref{sec:7practical-evaluation}}{Section~7}}
\label{app:sec7}

Our experiments were performed on a server running Ubuntu 18.04 with an 8core Intel(R) Xeon(R) CPU E3-1271 v3 @ 3.60GHz CPU with 32Gbp of RAM.
The current version of GuBPI is not parallelised and makes no use of the additional cores.
The running times on a Macbook Pro with Apple M1 were comparable, and sometimes even faster.

\subsection{Pyro's HMC samples for the pedestrian example}

The HMC samples plotted in \cref{fig:pedestrian-stochastic,fig:pedBounds} were generated with the probabilistic programming system Pyro \cite{pyro19}.
Since the original pedestrian program has infinite expected running time, we introduced a stopping condition in the random walk: if the distance traveled exceeded 10, the loop was exited.
(This has a negligible effect on the posterior distribution because the weight of such a trace is at most $\pdf_{\Normal(1.1, 0.1)}(10) < 10^{-1700}$.)

We used Pyro's HMC sampler to compute 10 Markov chains with 1000 samples each for this program.
We set the hyperparamaters to 0.1 for the step size and 50 for the number of steps.
We also tried the NUTS sampler, which aims to automatically estimate good values for the hyperparameters, but it performed worse than the manually chosen values.
The running time for the chains varied significantly: some took around one minute, others almost an hour.
This depended on whether the Markov chain got ``stuck'' in a long trace.
(The length of the traces varied between 2 and about 200.)

We discarded chains with very low acceptance rates (under 1\%), aggregated the remaining chains, which had acceptance rates of over 50\%, and used their histogram in \cref{fig:pedestrian-stochastic,fig:pedBounds}.

\begin{table*}[!t]
	\caption{Evaluation on selected benchmarks from \cite{SankaranarayananCG13}. We give the times (in seconds) and bounds computed by \cite{SankaranarayananCG13} and GuBPI. The table agrees with \Cref{fig:pldi13results} but spells out the full problem name and the exact query.}\label{fig:pldi13resultsApp}

	\vspace{-3mm}

	\setlength\dashlinedash{0.5pt}
	\setlength\dashlinegap{1pt}
	\setlength\arrayrulewidth{0.5pt}
	\centering
	\small
	\def\arraystretch{1.2}
	\begin{tabular}{lcccccc}
		\toprule
		\multicolumn{3}{@{}c@{\hspace{0mm}}}{} &\multicolumn{2}{@{}c@{}}{\textbf{Tool from \cite{SankaranarayananCG13}}} & \multicolumn{2}{@{}c@{\hspace{0mm}}}{\textbf{GuBPI}}\\
		\cmidrule[0.7pt](l{1mm}r{1mm}){4-5}
		\cmidrule[0.7pt](l{1mm}){6-7}
		\textbf{Program}& \textbf{Q} & \textbf{Query}& $\boldsymbol{t}$ & \textbf{Result} & $\boldsymbol{t}$ & \textbf{Result} \\
		\cmidrule[0.7pt](r{1mm}){1-1}
		\cmidrule[0.7pt](l{1mm}r{1mm}){2-2}
		\cmidrule[0.7pt](l{1mm}r{1mm}){3-3}
		\cmidrule[0.7pt](l{1mm}r{1mm}){4-4}
		\cmidrule[0.7pt](l{1mm}r{1mm}){5-5}
		\cmidrule[0.7pt](l{1mm}r{1mm}){6-6}
		\cmidrule[0.7pt](l{1mm}r{0mm}){7-7}
		tug-of-war & Q1 & $\mathit{total}\_a\_b < \mathit{total}\_t\_s$ & 1.29  & $[0.6126,   0.6227]$  & 0.72  & $[0.6134, 0.6135]$  \\
		\hdashline
		tug-of-war & Q2 & $\mathit{total}\_a\_s < \mathit{total}\_b\_t$ & 1.09  & $[0.5973,   0.6266]$  & 0.79  & $[0.6134, 0.6135]$  \\
		\hdashline
		beauquier-etal-3 & Q1 & $\mathit{count} < 1$ & 1.15 & $[0.5000,   0.5261]$ & 22.5 & $[0.4999, 0.5001]$  \\
		\hdashline
		example-book-simple & Q1 & $\mathit{count} \geq 2$ & 8.48 & $[0.6633,   0.7234]$ & 6.52 & $[0.7417, 0.7418]$   \\
		\hdashline
		example-book-simple & Q2$^\star$  & $\mathit{count} \geq 4$  & 10.3 & $[0.3365,   0.3848]$ & 8.01 & $[0.4137, 0.4138]$   \\
		\hdashline
		example-cart & Q1 & $\mathit{count} \geq 1$  & 2.41 & $[0.8980,   1.1573]$ & 67.3 & $[0.9999, 1.0001]$   \\
		\hdashline
		example-cart & Q2 & $\mathit{count} \geq 2$  & 2.40 & $[0.8897,   1.1573]$ & 68.5 & $[0.9999, 1.0001]$  \\
		\hdashline
		example-cart & Q3 & $\mathit{count} \geq 4$  & 0.15 & $[0.0000,   0.1150]$ & 67.4 & $[0.0000, 0.0001]$   \\
		\hdashline
		example-ckd-epi-simple & Q1$^\star$  & $f_1 \leq 4.4 \land f \geq 4.6$  & 0.15 & $[0.5515,   0.5632]$ & 0.86 & $[0.0003, 0.0004]$   \\
		\hdashline
		example-ckd-epi-simple & Q2$^\star$  & $f_1 \geq 4.6 \land f \leq 4.4$ & 0.08 & $[0.3019,   0.3149]$ & 0.84 & $[0.0003, 0.0004]$   \\
		\hdashline
		example-fig6 & Q1 & $c \leq 1$ & 1.31 & $[0.1619,   0.7956]$ & 21.2 & $[0.1899, 0.1903]$   \\
		\hdashline
		example-fig6 & Q2 & $c \leq 2$  & 1.80 & $[0.2916,   1.0571]$ & 21.4 & $[0.3705, 0.3720]$   \\
		\hdashline
		example-fig6 & Q3 & $c \leq 5$  & 1.51 & $[0.4314,   2.0155]$ & 24.7 & $[0.7438, 0.7668]$   \\
		\hdashline
		example-fig6 & Q4 & $c \leq 8$  & 3.96 & $[0.4400,   3.0956]$ & 27.4 & $[0.8682, 0.9666]$   \\
		\hdashline
		example-fig7 & Q1 & $x \leq 1000$  &  0.04& $[0.9921, 1.0000]$ & 0.18 & $[0.9980, 0.9981]$  \\
		\hdashline
		example4 & Q1  & $x + y > 10$ & 0.02 & $[0.1910,   0.1966]$ & 0.31 & $[0.1918, 0.1919]$   \\
		\hdashline
		example5 & Q1 & $x + y > z + 10$ & 0.06 & $[0.4478,   0.4708]$ & 0.27 & $[0.4540, 0.4541]$   \\
		\hdashline
		herman-3 & Q1 & $\mathit{count} < 1$ & 0.47 & $[0.3750,   0.4091]$ & 124 & $[0.3749, 0.3751]$ \\
		\bottomrule
	\end{tabular}

\end{table*}

\subsection{Details on Probability Estimation}

In \cref{fig:pldi13results} (results of our tool for the probability estimation benchmark), we omitted the query for space reasons.
Complete information including the query can be found in \cref{fig:pldi13resultsApp}.

\subsection{Simulation-based Calibration}
\label{sec:sbc-experiments}

We implemented SBC for both likelihood-weighted importance sampling and Pyro's HMC.
As hyperparameters for SBC, we picked $L=63$ samples per simulation (following the suggestion in \cite{SBC} to take one less than a power of two) and $N$ simulations with $N = 10L$ (also following the paper's suggestion).
Note that the number of samples is much less than the 10000 samples used for \cref{fig:pedestrian-stochastic,fig:pedBounds} (10 chains with 1000 samples each).
But setting $L=1000$ would be at least 100 times slower because $N$ has to increase proportionally to $L$.
Also note that for Pyro, we ran HMC with $L$ warmup steps before generating $L$ samples.
Both importance samples and HMC samples exhibited significant autocorrelation.
As suggested in \cite{SBC}, we applied thinning to reduce its effect, choosing a thinning factor of around $\frac{L}{L_\mathit{eff}}$ where $L_\mathit{eff}$ is the effective sample size.

\begin{figure*}
	\begin{subfigure}[t]{0.24\textwidth}
		\centering
		\includegraphics[width=\textwidth]{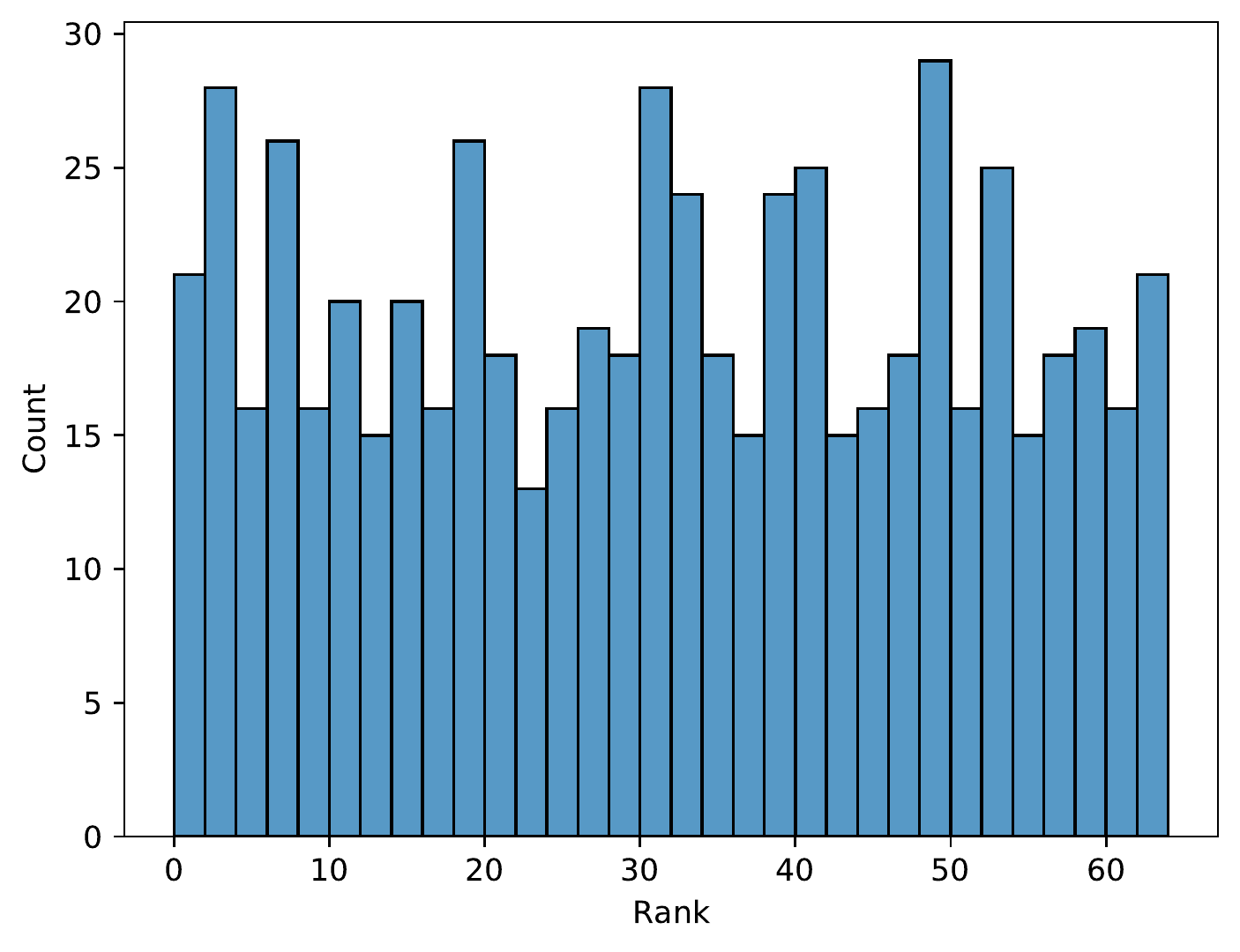}
		\subcaption{Importance sampling for the pedestrian example, thinning factor 100.}
		\label{fig:sbc-pedestrian-is}
	\end{subfigure}\hfill%
	\begin{subfigure}[t]{0.24\textwidth}
		\centering
		\includegraphics[width=\textwidth]{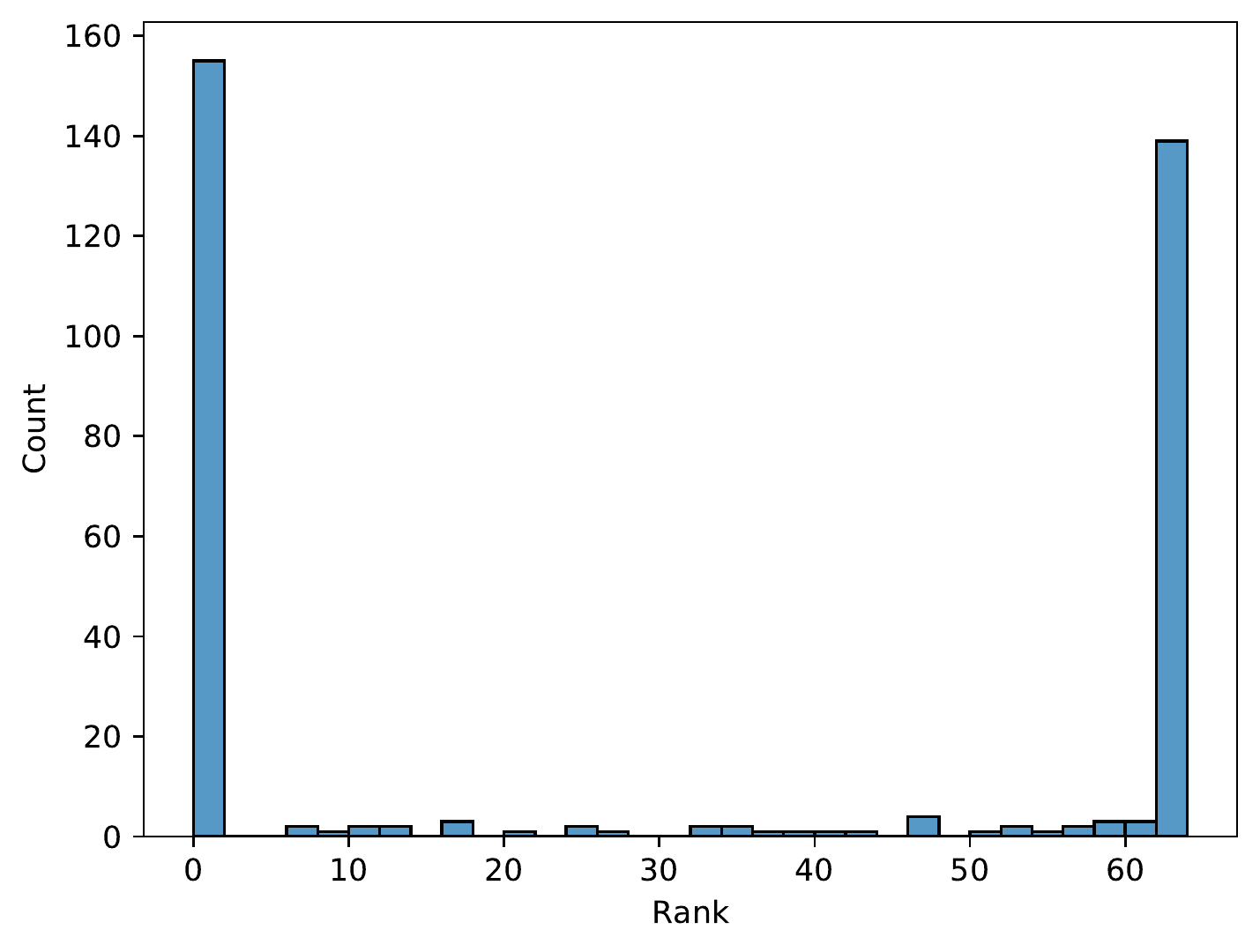}
		\subcaption{HMC for the pedestrian example, no thinning.
		(Only 332 simulations were used because the rest had too low acceptance rates.)}
		\label{fig:sbc-pedestrian-hmc}
	\end{subfigure}\hfill%
	\begin{subfigure}[t]{0.24\textwidth}
		\centering
		\includegraphics[width=\textwidth]{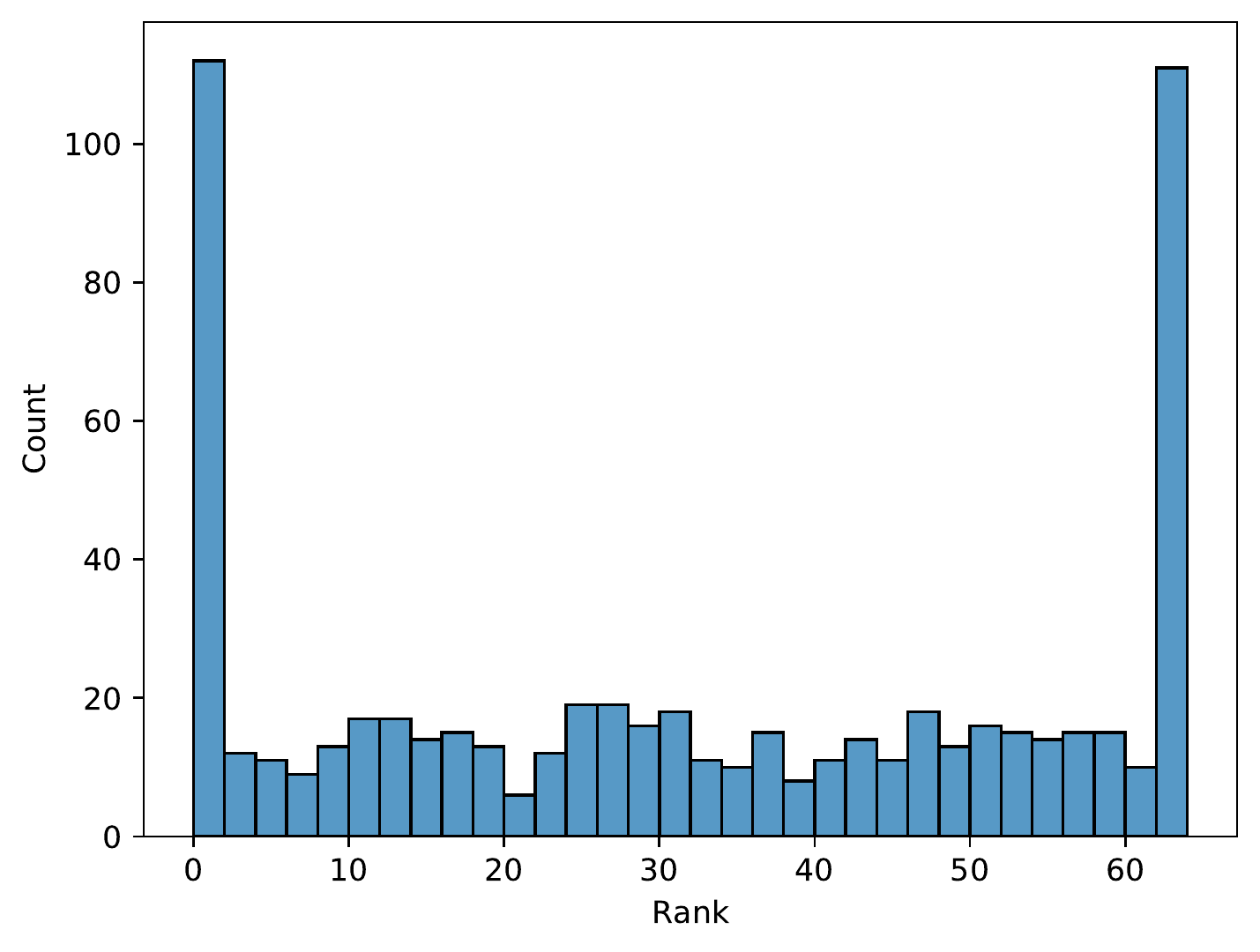}
		\subcaption{NUTS for the binary Gaussian mixture model, thinning factor 10.}
		\label{fig:sbc-bgmm-nuts}
	\end{subfigure}\hfill%
	\begin{subfigure}[t]{0.24\textwidth}
		\centering
		\includegraphics[width=\textwidth]{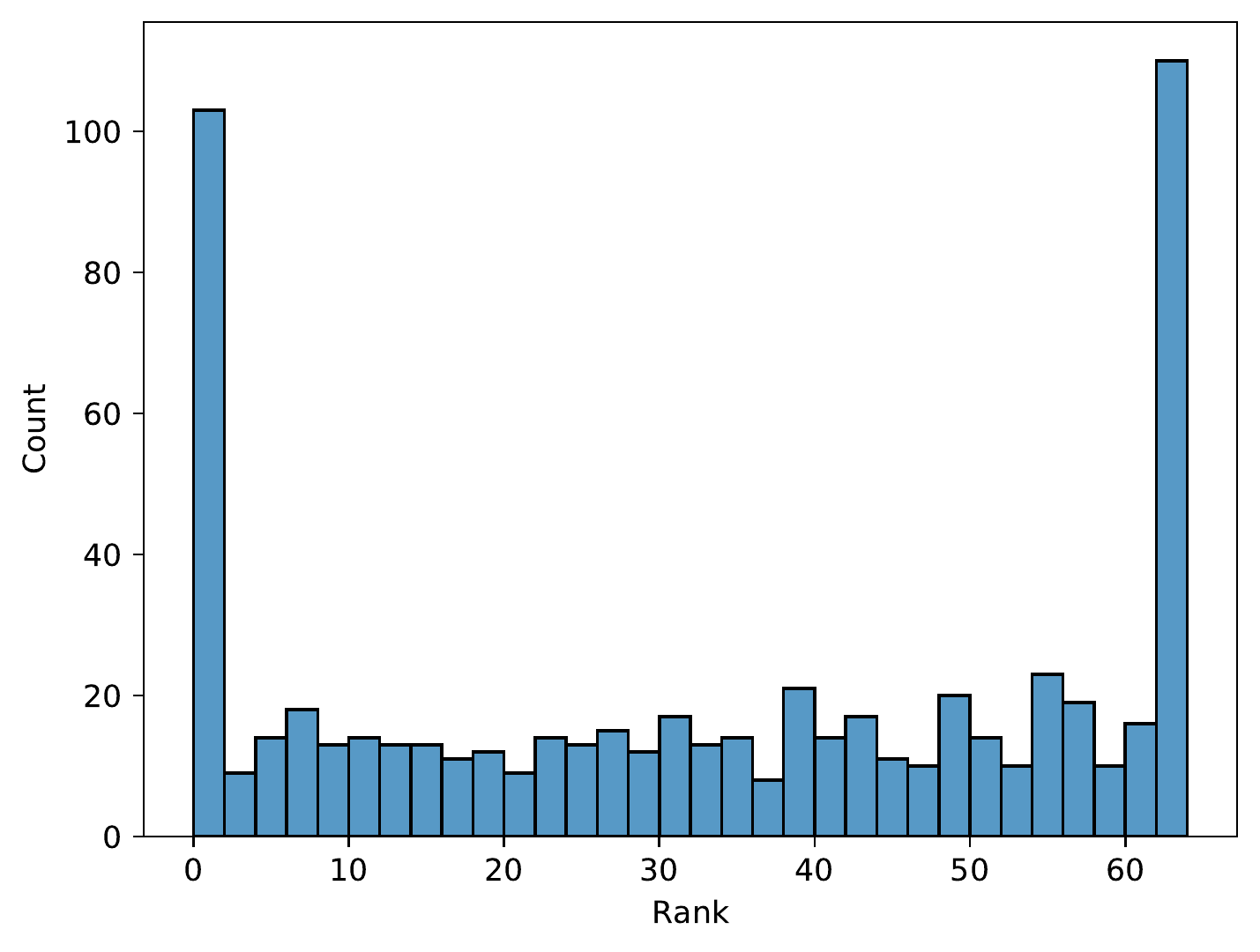}
		\subcaption{NUTS for the \emph{two-dimensional} binary Gaussian mixture model, thinning factor 10.}
		\label{fig:sbc-bgmm2d-nuts}
	\end{subfigure}
	\caption{Simulation-based calibration: rank histogram plots (630 simulations with 63 samples each).}
\end{figure*}

\paragraph{Pedestrian example}

For importance sampling, the rank histogram looks fairly uniform (\cref{fig:sbc-pedestrian-is}), which means that SBC does not detect an issue with the inference and thus increases the confidence in the validity of the importance samples.
For Pyro's HMC, simulation-based calibration is very slow: the rank histogram (\cref{fig:sbc-pedestrian-hmc}) took 32 hours (!) to produce.
(Note that only 332 simulations could be used, the rest were discarded because the acceptance rate was too low.)
The spikes at the boundary indicate that the samples have high autocorrelation and in fact, the effective sample size $L_\mathit{eff}$ was at most $\frac{L}{10}$, often much lower (depending on the chain).
The suggestion in \cite{SBC} is thus to apply thinning, with a factor of $\frac{L}{L_\mathit{eff}}$, which is at least 10 in our case.
This would increase the running time of SBC by the same factor, to at least 300 hours, but probably 600 or more.
We did not consider it a good use of resources to carry out this experiment.

\paragraph{Binary Gaussian Mixture Model}

We also considered the binary GMM (\cref{fig:gmm}) and a two-dimensional version of the same model.
The spikes at the boundary could again be a sign of high autocorrelation, but in this case, we already applied thinning with a factor of 10 (again based on the effective sample size).
Instead, as discussed in \cite{SBC}, this symmetric U-shape indicates that the computed data-averaged posterior is underdispersed relative to the prior distribution.
This interpretation is consistent with our knowledge about the model: HMC only finds one mode in the posterior distribution and misses the others.
Hence SBC successfully detects the issue, and in the case of the higher-dimensional model, it does so in less time than GuBPI (cf.~\cref{tab:SBC}).
For the other models, including the pedestrian example, GuBPI is faster.

\clearpage
\tableofcontents
\fi

\end{document}